\newif\ifdraft \drafttrue
\newif\iffull \fulltrue
\newif\ifproofsout\proofsoutfalse
\newif\ifred \redfalse
\newtheorem{thm}{Theorem}
\newtheorem{lemma}{Lemma}
\newtheorem{cor}{Corollary}
\newtheorem{example}{Example}
\newtheorem{assumpt}{Assumption}
\newtheorem{pro}{Proposition}
\newtheorem{definition}{Definition}
\newcommand{\s}{{\bf s}}
\newcommand{\si}[1]{\ensuremath{{\bf s}_{#1}}}
\newcommand{\spi}[1]{\ensuremath{{\bf s}'_{#1}}}
\newcommand{\sm}[1]{\ensuremath{{\bf s}_{-#1}}}
\newcommand{\nice}{well-behaved}
\newcommand{\maxcarnage}{maximum carnage}
\newcommand{\maxdisrupt}{maximum disruption}
\newcommand{\random}{random attack}
\renewcommand{\c}{\ensuremath{C_{\textrm{E}}\xspace}}
\renewcommand{\b}{\ensuremath{C_{\textrm{I}}\xspace}}
\newcommand{\cm}{\ensuremath{C_{\max}\xspace}}
\newcommand{\CC}{\ensuremath{CC\xspace}}
\newcommand{\D}{\ensuremath{\mathcal{H}}}
\newcommand{\I}{\ensuremath{\mathcal{I}}}
\newcommand{\U}{\ensuremath{\mathcal{U}}}
\newcommand{\V}{\ensuremath{\mathcal{V}}}
\newcommand{\T}{\ensuremath{\mathcal{T}}}
\begin{document}
\title{Strategic Network Formation with Attack and Immunization\footnote{The short version of this paper~\cite{GoyalJKKM16} appears in the proceedings of WINE-16.}}
\author{
Sanjeev Goyal\thanks{\scriptsize{University of Cambridge, Faculty of Economics. sg472@cam.ac.uk}}\hspace{5mm}
Shahin Jabbari\thanks{\scriptsize{University of Pennsylvania, Department of Computer and Information Sciences. jabbari@cis.upenn.edu}}\hspace{5mm}
Michael Kearns\thanks{\scriptsize{University of Pennsylvania, Department of Computer and Information Sciences. mkearns@cis.upenn.edu}}\hspace{5mm}
Sanjeev Khanna\thanks{\scriptsize{University of Pennsylvania, Department of Computer and Information Sciences. sanjeev@cis.upenn.edu}}\hspace{5mm}
Jamie Morgenstern\thanks{\scriptsize{University of Pennsylvania, Department of Computer and Information Sciences. jamiemor@cis.upenn.edu}}
}
\maketitle
\abstract{ 
Strategic network formation arises in settings where agents
receive some benefit from their connectedness to other agents, but
also incur costs for forming these links.  We consider a new network
formation game that incorporates an adversarial attack, as well as
{\em immunization\/} or protection against the attack.  An agent's network
benefit is the expected size of her connected component post-attack,
and agents may also choose to immunize themselves from attack at some
additional cost.  Our framework can be viewed as a stylized model of
settings where {\em reachability\/} rather than centrality is the
primary interest (as in many technological networks such as the
Internet), and vertices may be vulnerable to attacks (such as
viruses), but may also reduce risk via potentially costly measures
(such as an anti-virus software).

The reachability network benefit model has been studied in the setting
without attack or immunization~\cite{BalaG00}, where it is known that
the set of equilibrium networks is the empty graph as well as any
tree.  We show that the introduction of attack and immunization
changes the game in dramatic ways; in particular, many new equilibrium
topologies emerge, some more sparse and some more dense than trees.
Our interests include the characterization of equilibrium graphs, and
the social welfare costs of attack and immunization.

Our main theoretical contributions include a strong bound on the edge
density at equilibrium.  In particular, we show that under a very mild assumption
on the adversary's attack model, every equilibrium
network contains at most only $2n-4$ edges for $n \ge 4$, where $n$
denotes the number of agents and this upper bound is tight.  This
demonstrates that despite permitting topologies denser than trees, the
amount of ``over-building'' introduced by attack and immunization is
sharply limited.  We also show that social welfare does not 
significantly erode:
every non-trivial equilibrium in our model with respect to several adversarial attack models
asymptotically has social welfare at least as that of any equilibrium
in the original attack-free model.

We complement our sharp theoretical results with
simulations demonstrating fast convergence of a bounded
rationality dynamic, {\em swapstable best response\/}, which
generalizes linkstable best response but is considerably more powerful
in our model. The simulations further elucidate the wide variety of
asymmetric equilibria possible and demonstrate topological
consequences of the dynamics, including heavy-tailed degree
distributions arising from immunization.
Finally, we report on a behavioral experiment on
our game with over 100 participants, where 
despite the complexity of the game, the
resulting network was surprisingly close to equilibrium.}
\section{Introduction}
\label{sec:intro}
In network formation games, distributed and strategic agents
receive some benefit from their connectedness to others,
but also incur some cost for forming these links.  Much research in this area 
\cite{BalaG00, BlumeEKKT11, FabrikantLMPS03} studies the structure of equilibrium networks
formed as the result of various choices for the network benefit
function, as well as the social welfare in equilibria.  In many
network formation games, the costs incurred from forming links are 
direct: each edge costs $\c > 0$ for an agent to purchase.  
Recently, motivated by scenarios as diverse as financial crises,
terrorism and technological vulnerability, games with indirect
connectivity costs have been considered: an agent's connections expose
her to negative, contagious shocks the network might endure.

We begin with the simple and well-studied {\em
reachability} network formation game~\cite{BalaG00}, in which players
purchase links to each other, and enjoy a network benefit equal to the
size of their connected component in the collectively formed graph. We
modify this model by introducing an adversary who is allowed to
examine the network, and choose a single vertex or player to
attack. This attack then spreads throughout the entire connected
component of the originally attacked vertex, destroying all of these
vertices. Crucially however, players also have the option of purchasing {\em
immunization\/} against attack.  Thus the attack spreads only to
those non-immunized (or {\em vulnerable\/}) vertices reachable from
the originally attacked vertex.  We examine several natural adversarial attacks 
such as an adversary that seeks to maximize destruction, an adversary that randomly 
selects a vertex for the start of infection  and an adversary that seeks to minimize the social 
welfare of the network post-attack to name a few. A player's overall
payoff is thus the expected size of her post-attack component, minus
her edge and immunization expenditures.\footnote{The spread of the initial
attack to reachable non-immunized vertices is deterministic in our model, and the
protection of immunized vertices is absolute. It is also natural to
consider relaxations such as probabilistic attack spreading and
imperfect immunization, as well as generalizations such as
multiple initial attack vertices. See Section~\ref{sec:discussion} for a discussion.
However, as we shall see, even the basic model we study here
exhibits substantial complexity.}

Our game can be viewed as a stylized model for settings where
reachability rather than centrality is the primary interest in joining
a network vulnerable to adversarial attack. Examples include
technological networks such as the Internet, where packet transmission
times are sufficiently low that being ``central''~\cite{FabrikantLMPS03} or a ``hub'' ~\cite{BlumeEKKT11}
is less of a concern, but in the presence of attacks such as viruses
or DDoS, mere reachability may be compromised. Parties may reduce
risks via costly measures such as anti-virus. In a financial setting, vertices might represent banks and edges credit/debt agreements.
The introduction of an attractive but extremely risky asset is a threat or attack on the network that naturally seeks its largest 
accessible market, but can be mitigated by individual institutions adopting balance sheet requirements or leverage restrictions. 
In a biological setting, vertices could represent humans, and edges physical proximity or contact. The attack could
be an actual biological virus that randomly infects an individual and spreads by physical contact through the network;
again, individuals may have the option of immunization. While our simplified model is obviously
not directly applicable to any of these examples in detail, we do believe our results provide some high-level insights about the 
strategic tensions in such scenarios. See Section~\ref{sec:discussion} for discussion of some variants of our model.

Immunization against attack has recently been studied in games played
on a network where risk of contagious shocks are
present~\cite{CerdeiroDG15} but only in the setting in which the
network is first designed by a centralized party, after which agents
make individual immunization decisions.  We endogenize both these aspects, which leads to a
model incomparable to this earlier work.

The original reachability game~\cite{BalaG00}
permitted a sharp and simple characterization of all equilibrium
networks: any tree as well as the empty graph.  We 
demonstrate that once attack and immunization are introduced, the
set of possible equilibria becomes considerably more complex,
including networks that contain multiple cycles, as well as others
which are disconnected but nonempty. This diversity of equilibrium
topologies leads to our primary questions of interest: How dense
can equilibria become? In particular, does the presence of the
attacker encourage the creation of massive redundancy of connectivity?
Moreover, does the introduction of attack and immunization result in
dramatically lower social welfare compared to the original game?

\medskip {\bf Our Results and Techniques} The main theoretical
contributions of this work are to show that our game still exhibits
edge sparsity at equilibrium, and has high social welfare properties
despite the presence of attacks. First we show that under
a very mild assumption on the adversary's attack model, the equilibrium
networks with $n\geq 4$ players have at most $2n-4$ edges,
fewer than twice as many edges as
any nonempty equilibria of the original reachability game without attack.  We prove
this by introducing an abstract representation of the network and use tools from extremal
graph theory to upper bound the resources globally invested by the players to mitigate 
connectivity disruptions due to any attack, obtaining our sparsity result.

We then show that with respect to several adversarial attack models, in any equilibrium with at least one edge and one
immunized vertex, the resulting network is connected.  These results imply that any \emph{new} equilibrium network 
(i.e. one which was not an equilibrium of the original reachability game) is either a sparse but connected graph, 
or is a forest of unimmunized vertices.  The latter occurs only  in the rather unnatural case where the cost of 
immunization or edges grows with the population size, and in the former case we further show the social 
welfare is at least $n^2 - O(n^{5/3})$, which is asymptotically the maximum possible with a polynomial rate
of convergence.  These results provide us with a complete picture of social welfare in our model. We show the welfare lower
bound by first proving any equilibrium network with both immunization and an edge is connected, then showing that there cannot be many
targeted vertices who are \emph{critical} for global connectivity, where critical is defined formally in terms of both the vertex's
probability of attack and the size of the components remaining after the attack.  Thus players myopically optimizing their own utility
create highly resilient networks in presence of attack.

We complement our theory with simulations demonstrating fast and general convergence of  {\em swapstable\/} best 
response, a type of limited best response which generalizes linkstable best response but is much more powerful in our
game. The simulations provide a
dynamic counterpart to our static equilibrium characterizations and illustrate a number of interesting further features of equilibria,
such as heavy-tailed degree distributions.

We conclude by reporting on a behavioral experiment on our network formation game with over 100 participants, where 
despite the complexity of the game, the resulting network was surprisingly close to equilibrium and echoes
many of the theoretical and simulation analyses. 

\medskip {\bf Organization} We formally present our model and review some related work
in Section~\ref{sec:model}. In Section~\ref{sec:eq-ex} we briefly describe some interesting topologies that arise as equilibria
in our model illustrating the richness of the solution space. We present our sparsity result and  lower bound on welfare in
Sections~\ref{sec:sparsity}~and~\ref{sec:welfare}, respectively.   Sections~\ref{sec:exp}~and~\ref{sec:beh} describe our simulations and 
behavioral experiment, respectively.  We conclude with some directions for future work in Section~\ref{sec:discussion}.
\section{Model}
\label{sec:model}

We assume the $n$ vertices of a graph (network) correspond to individual
players. Each player has the choice to purchase edges to other
players at a cost of $\c>0$ per edge. Each player additionally decides
whether to immunize herself at a
cost of $\b> 0$ or remain \emph{vulnerable}. 

A (pure) \emph{strategy} for player $i$ (denoted by $s_{i}$) is a pair
consisting of the subset of players $i$ purchased an edge to and
her immunization choice.  Formally, we denote the subset of
edges which $i$ buys an edge to as $x_i \subseteq \{1, \ldots, n\}$,
and the binary variable $y_i\in \{0,1\}$ as her immunization choice
($y_i=1$ when $i$ immunizes). Then $s_i = (x_i, y_i)$.  {\em We
assume that edge purchases are unilateral i.e. players do not need
approval or reciprocation in order to purchase an edge to another
but that the connectivity benefits and risks are bilateral.} We
restrict our attention to pure strategy equilibria and  our
results show they exist and are structurally diverse.
  
Let $\s = (s_1,\ldots,s_n)$ denote the strategy profile for all the
players.  Fixing $\s$, the set of edges purchased by all the
players induces an undirected graph and the set of immunization
decisions forms a bipartition of the vertices. We denote a game
\emph{state} as a pair $(G, \I)$, where $G=(V,E)$ is the undirected
graph induced by the edges purchased by the players and
$\I \subseteq V$ is the set of players who decide to immunize.  
We use the notation $\U = V\setminus \I$ to denote the 
vulnerable vertices i.e. the players who decide not to immunize.
We refer to a subset of vertices of $\U$ as a \emph{vulnerable region}
if they form a maximally connected component. We denote the
set of vulnerable regions by $\V=\{\V_1, \ldots, V_k\}$ where each 
$\V_i$ is a vulnerable region.

Fixing a game state $(G, \I)$, the adversary inspects the formed
network and the immunization pattern and chooses to attack some vertex. 
If the adversary attacks a vulnerable vertex $v\in\U$,
then the attack starts at $v$ and spreads, killing $v$ and any other
vulnerable vertices reachable from $v$.  
Immunized vertices act as
``firewalls'' through which the attack cannot spread.~\emph{We point out that in this work we restrict the adversary to only pick one seed to start the attack.}

More precisely, the adversary is specified by a function that defines a probability 
distribution over vulnerable regions. 
We refer to a vulnerable region with non-zero probability of attack as a \emph{targeted region}
and the vulnerable vertices inside of a targeted region as \emph{targeted vertices}. 
We denote the targeted regions by $\T = \{\T_1, \ldots, \T_{k'}\}$ where each $\T'\in\T$ denotes a targeted region.\footnote{Since 
every targeted region is vulnerable, the index $k'\leq k$ in the definition of $\T$ (see $k$ in the definition of $\V$).}

$\T=\emptyset$ corresponds to the adversary making no attack, so player $i$'s \emph{utility} (or \emph{payoff}) is
equal to the size of her connected component minus her expenses (edge purchases and
immunization). When $|\T| > 0$, then player's $i$ expected utility
(fixing a game state) is equal to the expected size of her connected
component\footnote{The size of the connected component of a vertex is defined to be zero in
the event she is killed.} less her expenditures, where the expectation is taken over
the adversary's choice of attack (a distribution on $\T$). 
Formally, let $\Pr[\T']$ denote the probability of attack to 
targeted region $\T'$ and $\CC_i(\T')$ 
the size of the connected component of player $i$ 
post-attack to $\T'$. Then the expected utility of
$i$ in strategy profile $s$ denoted by $u_i(s)$ is precisely
\vspace{-5pt}
\begingroup\makeatletter\def\f@size{8}\check@mathfonts
\def\maketag@@@#1{\hbox{\m@th\large\normalfont#1}}
\[
u_i(\s) =\sum_{\T'\in\T}\Big(\Pr\left[\T'\right]\CC_i\left(\T'\right)\Big)-|x_i| \c-y_i\b.
\]
\endgroup
We refer to the sum of expected utilities of all the
players playing $\s$ as the \emph{(social) welfare} of $\s$.

\noindent{\bf Examples of Adversaries}
We highlight several natural adversaries that fit into our framework. 
We begin with a natural adversary
whose goal is to maximize the number of agents killed.
\begin{definition}
\label{def:max-carnage}
The \emph{\maxcarnage}~adversary attacks the vulnerable region of maximum size. If there are multiple
such regions, the adversary picks one of them uniformly at random. 
Once a targeted region is selected for the attack, the adversary selects a vertex inside of that
region uniformly at random to start the attack.
\end{definition}
Then a targeted region with respect to a \maxcarnage~adversary is a vulnerable region of maximum size and 
the adversary defines a uniform distribution over such regions (see Figure~\ref{fig:attacks}).
We now introduce another natural but less sophisticated adversary
which starts an attack by picking a vulnerable vertex at random.

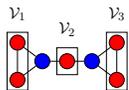
\begin{figure}[h]
\centering
\begin{minipage}[c]{.22\textwidth}
\centering
\scalebox{.55}{
\begin{tikzpicture}
[scale=0.6, every node/.style={circle,draw=black}, red node/.style = {circle, fill = red, draw},  gray node/.style = {circle, fill = blue, draw}]
\node [red node] (1) at  (0, 0){};
\node [red node] (6) at  (-2, 0.75){};
\node [red node] (7) at  (-2, -0.75){};
\node [red node] (4) at  (2, 0.75){};
\node [red node] (5) at  (2, -0.75){};
\node [gray node] (2) at (1,0){};
\node [gray node] (3) at (-1, 0){};
\draw (1) to (2);\draw (1) to (3);\draw (2) to (4);\draw (2) to (5);\draw (4) to (5);\draw (3) to (7);\draw (7) to (6);
\draw (6) to (3);
\node [draw,rectangle,color=black,minimum width=0.5cm,minimum height=1.4cm,label=$\V_3$] at (2, 0) {};
\node [draw,rectangle,color=black,minimum width=0.5cm,minimum height=0.65cm,label=$\V_2$] at (0, 0) {};
\node [draw,rectangle,color=black,minimum width=0.5cm,minimum height=1.4cm,label=$\V_1$] at (-2, 0) {};
\end{tikzpicture}}
\end{minipage}
\begin{minipage}[c]{.75\textwidth}
\caption{Blue and red vertices denote $\I$ and $\U$, respectively.
The probability of attack to the vulnerable regions denoted by $\V_1, \V_2$ and $\V_3$ 
(in that order) for each adversary are as follows.
\maxcarnage: 0.5, 0, 0.5; \random: 0.4, 0.2, 0.4; \maxdisrupt: 0, 1, 0.
\label{fig:attacks}}
\end{minipage}
\end{figure}

\begin{definition}
\label{def:random-attack}
The \emph{\random}~adversary attacks a vulnerable vertex uniformly at random.
\end{definition}
So every vulnerable vertex is targeted with respect to the \random~adversary and the adversary 
induces a distribution over targeted regions such that the probability
of attack to a targeted region is proportional to its size (see Figure~\ref{fig:attacks}). 
Lastly, we define another natural adversary whose goal is to minimize the post-attack social welfare.
\begin{definition}
\label{def:max-disruption}
The \emph{\maxdisrupt}~adversary attacks the vulnerable region which minimizes the post-attack 
social welfare. If there are multiple such regions, the adversary picks one of them uniformly at random. 
Once a targeted region is selected for the attack, the adversary selects a vertex inside of that
region uniformly at random to start the attack.
\end{definition}
This adversary only attacks those vulnerable regions which minimize the post-attack welfare
and the adversary defines a uniform distribution over such regions 
(again see Figure~\ref{fig:attacks}).

\noindent{\bf Equilibrium Concepts}
We analyze the networks formed in our game under
two types of equilibria.  We model each of the $n$ players as
strategic agents who choose deterministically which edges to purchase and whether or not to
immunize, knowing the exogenous behavior
of the adversary defined as above. We say a strategy profile $\s$ is a
\emph{pure-strategy Nash equilibrium} (Nash equilibrium for short) if,
for any player $i$, fixing the behavior of the other players to be
$\sm{i}$, the expected utility for $i$ cannot strictly increase playing any action $\spi{i}$
over $\si{i}$.

In addition to Nash, we study another equilibrium concept that is
closely related to linkstable equilibrium
(see e.g.~\cite{BlockJ06}), a bounded-rationality generalization of Nash.
We refer to this concept
as \emph{swapstable equilibrium}.\footnote{
This equilibrium concept was first introduced by~\citet{Lenzner12} under the name
\emph{greedy equilibrium}.}
A strategy profile is a swapstable equilibrium if no individual agent's expected utility
(fixing other agent's strategies) can strictly improve under 
any of the following \emph{swap deviations:}
(1) Dropping any single purchased edge, (2) Purchasing any single unpurchased edge,
(3) Dropping any single purchased edge and purchasing any single unpurchased edge,
(4) Making any one of the deviations above, and also changing the immunization status.

The first two deviations correspond to the standard
linkstability.  The third permits the more powerful {\em swapping\/}
of one purchased edge for another.  The last additionally allows reversing immunization status. Our interest in swapstable
networks derives from the fact that while they only consider
``simple'' or ``local'' deviation rules, they share several properties
with Nash networks that linkstable networks do not. In that sense,
swapstability is a bounded rationality concept that moves us closer
to full Nash.  Intuitively, in our game (and in many of our proofs),
we exploit the fact that if a player is connected to some other set of
vertices via an edge to a targeted vertex, and that set also
contains an immune vertex, the player would prefer to connect to the
immune vertex instead. This deviation involves a swap not
just a single addition or deletion.  It is worth mentioning explicitly
that by definition every Nash equilibrium is a swapstable
equilibrium and every swapstable equilibrium is a linkstable
equilibrium. The reverse of none of these
statements are true in our game. 
See Appendix~\ref{sec:ne-vs-le} for more details.
We also point out that the set of equilibrium
networks with respect to adversaries defined in Definitions~\ref{def:max-carnage}, \ref{def:random-attack} and 
\ref{def:max-disruption} are disjoint. 
See Appendix~\ref{sec:compare} for more details.
\subsection{Related Work}

 Our paper is a contribution to the study of strategic network design and
defense.  This problem has been extensively studied in economics,
electrical engineering, and computer science
(see~e.g.~\cite{AlpcanB11, Anderson08, Goyal15, Royetal10}).  Most of
the existing work takes the network as given and examines optimal
security choices (see e.g.~\cite{AspnesCY06, Cunningham85, GueyeWA11,
  KearnsOrtiz03, LaszkaSB12}).  To the best of our knowledge, our
paper offers the first model in which both links and defense (immunization) are 
chosen by the players.

Combining  linking and immunization within
a common framework yields new insights.  We start with a discussion of
the network formation literature. In a setting with no attack, our
model reduces to the original model of one-sided reachability network
formation of \citet{BalaG00}.  They showed that a Nash
equilibrium network is either a tree or an empty network. By
contrast, we show that in the presence of a security threat,
Nash networks exhibit very different properties: both networks
containing cycles and partially connected networks can emerge in
equilibrium. Moreover, we show that while networks may contain cycles,
they are sparse (we provide a tight upper bound on the number of links
in any equilibrium network of our game).
 
Regarding security, a recent paper by \citet{CerdeiroDG15} studies optimal design of
networks in a setting where players make immunization choices against a \maxcarnage~adversary but the
network design is given. They show that an optimal network is either a hub-spoke or a
network containing $k$-critical vertices\footnote{Vertex $v$ is
  $k$-critical in a connected network if the size of the largest
  connected component after removing $v$ is $k$.}  or a
partially connected network (observe that a $k$-critical vertex can
secure $n-k$ vertices by immunization).  Our analysis extends this
work by showing that there is a pressure toward the emergence of
$k$-critical vertices even when linking is decentralized. We also contribute 
to the study of welfare costs of decentralization.
\citet{CerdeiroDG15} show that the Price of Anarchy (PoA) is bounded, when
the network is centrally designed while immunization is decentralized
(their welfare measure includes the edge expenditures of the
planner).  By contrast, we show that the PoA
is unbounded when both decisions are decentralized.  
Although we also show that non-trivial equilibrium networks with respect to various adversaries 
have a PoA very near 1. This highlights the key role of linking and resonates with the original results on the
PoA in the context of pure network formation games (see e.g.~\cite{Goyal07}).

Recently \citet{BlumeEKKT11} study network formation where new links generate direct
(but not reachability) benefits, infection can flow through paths of connections and
immunization is not a choice.  They demonstrate a fundamental
tension between socially optimal and stable networks: the former lie
just below a linking threshold that keeps contagion under check, while
the latter admit linking just above this threshold, leading to
extensive contagion and very low payoffs. 

Furthermore, \citet{Kliemann11} introduced a reachability network
formation game with attacks but without defense. In their model, the
attack also happens after the network is formed and the adversary
destroys exactly one \emph{link} in the network (with no spread)
according to a probability distribution over links that can depend on
the structure of the network. They show equilibrium networks in their
model are chord-free and hence sparse. 
We also show an abstract representation of equilibrium networks
in our model corresponds to chord-free graphs and then use this observation
to prove sparsity. While both models
lead to chord-free graphs in equilibria, the analysis of \emph{why} these graphs are
chord-free is quite different. In their model, the deletion of a
single link destroys at most one path between any pair of vertices. 
So if there were two edge-disjoint paths between any pairs of vertices, 
they will certainly remain connected after any attack. In our
model the adversary attacks a vertex and the attack can spread and 
delete many links. This leads to a more delicate analysis. The welfare
analysis is also quite different, since the deletion of an edge can
cause a network to have at most two connected components, while the
deletion of (one or more) vertices might lead to many connected components.

Finally, very recently,~\citet{br16} studied the complexity of computing 
Nash best response for our game with respect to the \maxcarnage~and \random~adversaries.
\section{Diversity of Equilibrium Networks}
\label{sec:eq-ex}
In contrast to the original reachability network formation
game~\cite{BalaG00}, our game exhibits equilibrium networks which
contain cycles, as well as non-empty graphs which are not
connected.\footnote{See Appendix~\ref{sec:no-attack} for more details
on the original reachability network formation game.}
Figure~\ref{fig:eq} gives several examples of specific Nash
equilibrium networks with respect to the \maxcarnage~adversary for small populations, each of which is
representative of a broad family of equilibria for large populations
and a range of values for $\c$ and $\b$ as formalized in 
Appendix~\ref{sec:missing-proofs}.\footnote{Throughout
we represent immunized and vulnerable vertices as blue and red,
respectively. Although we treat the networks as
undirected graphs (since the connectivity
benefits and risks are bilateral), we use directed edges in some
figures to denote which player purchased the edge e.g.
$i\rightarrow j$ means that $i$ has purchased an
edge to $j$. Finally, we use the~\maxcarnage~adversary in many of our illustrations throughout 
  because both the adversary's choice of attack and verifying certain properties are the easiest in this model compared to other natural
  models of Section~\ref{sec:model}.}
These examples show that the tight characterization of the
reachability game, where equilibrium networks are either
empty graph or trees, fails to hold for our more general game.\footnote{The empty
  graph and trees can also form at equilibrium in our
  game.} However, in the
following sections, we show that an approximate version of this
characterization continues to hold for several adversaries.
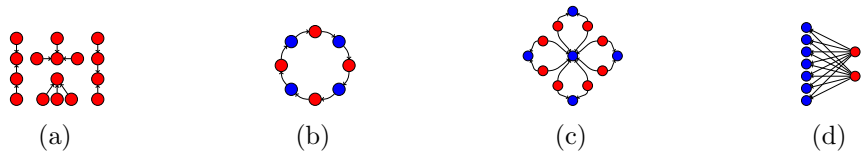
\begin{figure}[h]
\centering
\begin{subfigure}[b]{0.2\textwidth}
\centering
\scalebox{.45}{
\begin{tikzpicture}
[scale = 0.6, every node/.style={circle, fill=red, draw=black}, gray node/.style = {circle, fill = gray, draw}]
\node (1) at  (0, 0){};\node (2) at (-0.7,-1){};\node (3) at (0, -1){};
\node (4) at (0.7, -1){};\node (5) at  (-2, 2){};\node (6) at (-2,1){};
\node (7) at (-2, 0){};\node (8) at (-2, -1){};\node (16) at (2, 2){};
\node (9) at (2,0){};\node (10) at (2, 1){};\node (11) at (2, -1){};
\node (12) at  (0, 1){};\node (13) at  (0, 2){};\node (14) at  (1, 1){};\node (15) at  (-1, 1){};
\draw[->] (2)  to (1);\draw[->] (3)  to (1);\draw[->] (4)  to (1);
\draw[->] (5)  to (6);\draw[->] (7)  to (6);\draw[->] (8)  to (7);
\draw[->] (10)  to (9);\draw[->] (11)  to (9);\draw[->] (16)  to (10);
\draw[->] (13)  to (12);\draw[->] (14)  to (12);\draw[->] (15)  to (12);
\end{tikzpicture}}
\caption{\label{fig:eq-forest}}
\end{subfigure}
\begin{subfigure}[b]{0.2\textwidth}
\centering
\scalebox{.45}{
\begin{tikzpicture}
[scale = 0.50, every node/.style={circle, fill=red, draw=black}, gray node/.style = {circle, fill = blue, draw}]
\node[gray node] (1) at  (0.59, 0.59){};\node[gray node] (3) at  (0.59, 3.41){};
\node[gray node] (5) at  (3.41, 3.41){};\node[gray node] (7) at  (3.41, 0.59){};
\node (2) at  (0, 2){};\node (4) at (2, 4){};
\node (6) at (4, 2){};\node (8) at (2, 0){};
\draw[->] (1) [out=135,in=270] to (2);\draw[->] (2) [out=90,in=225] to (3);
\draw[->] (3) [out=45,in=180] to (4);\draw[->] (4) [out=0,in=135] to (5);
\draw[->] (5) [out=315,in=90] to (6);\draw[->] (6) [out=270,in=45] to (7);
\draw[->] (7) [out=225,in=0] to (8);\draw[->] (8) [out=180,in=315] to (1);
\end{tikzpicture}}
\caption{\label{fig:eq-cycle}}
\end{subfigure}
\begin{subfigure}[b]{0.2\textwidth}
\centering
\scalebox{.36}{
\begin{tikzpicture}
[scale = 0.55, every node/.style={circle, fill= red, draw=black}, gray node/.style = {circle, fill = blue, draw}]
\node[gray node] (1) at  (0, 0){};\node[gray node] (3) at  (3, 0){};
\node[gray node] (6) at  (-3, 0){};\node[gray node] (9) at  (0, 3){};
\node[gray node] (12) at  (0, -3){};
\node (2) at  (2, 1){};\node (4) at  (2, -1){};\node (5) at  (-2, -1){};
\node (7) at  (-2, 1){};\node (8) at  (-1, 2){};\node (10) at  (1, 2){};
\node (11) at  (1, -2){};\node (13) at  (-1, -2){};
\draw[->] (2) [out=180,in=30]  to (1);\draw[->] (4) to [out=0,in=240] (3);
\draw[->] (7) [out=180,in=60] to (6);\draw[->] (5) [out=0,in=220]  to (1);
\draw[->] (13) [out=270,in=150] to (12);\draw[->] (8) [out=270,in=120]  to (1);
\draw[->] (10) [out=90,in=330]  to (9);\draw[->] (2) to [out=0,in=120] (3);
\draw[->] (4) [out=180,in=330]  to (1);\draw[->] (7) [out=0,in=150]  to (1);
\draw[->] (5) [out=180,in=300] to (6);\draw[->] (11) [out=270,in=30]  to (12);
\draw[->] (11) to [out=90,in=300]  (1);\draw[->] (13) [out=90,in=240]  to (1);
\draw[->] (10) [out=270,in=60]  to (1);\draw[->] (8) [out=90,in=210] to (9);
\end{tikzpicture}}
\caption{\label{fig:eq-flower}}
\end{subfigure}
\begin{subfigure}[b]{0.2\textwidth}
\centering
\scalebox{.36}{
\begin{tikzpicture}
[scale=0.45, every node/.style={circle,draw=black}, gray node/.style = {circle, fill = blue, draw}, red node/.style = {circle, fill = red, draw}]
\node [red node] (1) at  (2, 0){};\node [red node] (2) at  (2, 2){};
\node [gray node] (3) at  (-2, 0){};\node [gray node] (4) at  (-2, 1){};
\node [gray node] (5) at  (-2, 3){};\node [gray node] (6) at  (-2, 2){};\node [gray node] (7) at  (-2, 4){};
\node [gray node] (8) at  (-2, -1){};\node [gray node] (9) at  (-2, -2){};
\draw[->] (1) to (3);\draw[->] (1) to (4);\draw[->] (1) to (5);\draw[->] (1) to (6);\draw[->] (1) to (7);\draw[->] (1) to (8);\draw[->] (1) to (9);
\draw[->] (2) to (3);\draw[->] (2) to (4);\draw[->] (2) to (5);\draw[->] (2) to (6);\draw[->] (2) to (7);\draw[->] (2) to (8);\draw[->] (2) to (9);
\end{tikzpicture}}
\caption{\label{fig:eq-biclique}}
\end{subfigure}
\caption{Examples of equilibria with respect to the \maxcarnage~adversary:
(\ref{fig:eq-forest}) Forest equilibrium, $\c=1$ and $\b=9$; (\ref{fig:eq-cycle}) cycle equilibrium, $\c=1.5$ and $\b=3$; 
(\ref{fig:eq-flower}) 4-petal flower equilibrium, $\c=0.1$ and $\b=3$, (\ref{fig:eq-biclique}) Complete bipartite equilibrium, $\c=0.1$ and $\b=4$.
}
\label{fig:eq}
\end{figure}

On the one hand, examples in Figure~\ref{fig:eq} show that equilibrium networks can be denser in
our game compared to the non-attack reachability game. It is thus
natural to ask just how dense they can be. In
Section~\ref{sec:sparsity}, we prove that (under a mild assumption on the adversary) 
the equilibria of our game cannot contain more than $2n-4$ edges when $n\ge 4$. So while these
networks can be denser than trees, they remain quite sparse, and thus the threat of attack does not result 
in too much ``over-building'' or redundancy of connectivity at equilibrium. Our density upper bound is 
tight, as the generalized complete bipartite graph in Figure~\ref{fig:eq-biclique} has exactly $2n-4$ edges.

On the other hand, the examples also show that equilibrium networks can be
disconnected (even before the attack) and this might raise concerns
regarding the welfare compared to the reachability game.  
In Section~\ref{sec:welfare}, we show that for several adversarial attacks, 
all equilibria in our game which contain
at least one edge and at least one immunized vertex (and are thus
{\em non-trivial\/} in the sense that are different than any equilibrium
of the reachability game without attack) are connected
and have immunization patterns such that even
\emph{after} the attack the network remains highly connected.
This allows us to prove that such equilibria in fact enjoy very good welfare.
\section{Sparsity}
\label{sec:sparsity}
We show that despite the existence of equilibria
containing cycles as shown in
Section~\ref{sec:eq-ex}, 
under a very mild restriction on the adversary,
\emph{any} (Nash, swapstable or linkstable) equilibrium network of our game has at most $2n-4$ edges and is
thus quite sparse. 
Moreover, this upper bound is tight as the generalized complete
bipartite graph in Figure~\ref{fig:eq-biclique} has exactly $2n-4$
edges.

The rest of this section is organized as follows.
We start by defining a natural restriction on the adversary. We then 
propose an abstract view of the networks in our game and proceed to show
that the abstract network is chord-free in equilibria with respect to the restricted adversary. 
We finally derive the edge density of the original network by connecting its edge density 
to the density of the abstract network.
We start by defining equivalence classes for networks.
\begin{definition}
\label{def:eq-1}
Let $G_1=(V, E_1)$ and $G_2=(V, E_2)$ be two networks. $G_1$ and $G_2$ are \emph{equivalent}
if for all vertices $v\in V$, the connected component of $v$ is the same in both $G_1$
and $G_2$ for  
every possible choice of initial attack vertex in $V$.
\end{definition}
Based on equivalence, we make the following natural restriction on the adversary.
\begin{assumpt}
\label{assumpt:equiv}
An adversary is \emph{\nice} if on any pair of equivalent networks $G_1=(V,E_1)$ and $G_2=(V,E_1)$, 
the probability that a vertex $v \in V$ is chosen for attack, is the same. 
\end{assumpt}
We point out that the adversaries in Definitions~\ref{def:max-carnage},~\ref{def:random-attack}~and~\ref{def:max-disruption} are all \nice.
We proceed to abstract the network formed by the agents
and argue about the edge density in this abstraction. 

Let $G=(V,E)$ be any network, $\I\subseteq V$ the immunized vertices in $G$
and $\V_1, \ldots, \V_k$ the vulnerable regions in $G$.
In the abstract network every vulnerable region in $G$ is 
contracted to a single vertex.
More formally, let $G'=(V', E')$ be the abstract network.
Define $V'=\I\cup \{u_1, \ldots u_k\}$ where each $u_i$ represents a contracted
vulnerable region of $G$. 
Moreover, $E'$ is constructed from $E$ as follows. 
For any edge $(v_1,v_2)\in E$
such that $v_1, v_2\in\I$ there is an edge $(v_1, v_2)\in E'$. For any edge
$(v_1,v_2)\in E$ such that $v_1\in \V_i$ for some $i$ and $v_2\in \I$ there is an edge $(u_i, v_2)\in E'$
where $u_i$ denotes the contracted vulnerable region of $G$ that $v_1$ belongs to.
For any edge $(v_1, v_2)$ such that $v_1, v_2\in \V_i$ for some $i$ there is no edge in $G'$ (see Figure~\ref{fig:abstract}).
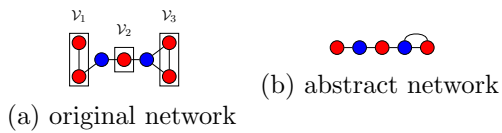
\begin{figure}[h]
\centering
\begin{minipage}[c]{.2\textwidth}
\centering
\scalebox{.5}{
\begin{tikzpicture}
[scale=0.6, every node/.style={circle,draw=black}, red node/.style = {circle, fill = red, draw},  gray node/.style = {circle, fill = blue, draw}]
\node [red node] (1) at  (0, 0){};
\node [red node] (6) at  (-2, 0.75){};
\node [red node] (7) at  (-2, -0.75){};
\node [red node] (4) at  (2, 0.75){};
\node [red node] (5) at  (2, -0.75){};
\node [gray node] (2) at (1,0){};
\node [gray node] (3) at (-1, 0){};
\draw (1) to (2);\draw (1) to (3);\draw (2) to (4);\draw (2) to (5);\draw (4) to (5);\draw (3) to (7);\draw (7) to (6);
\node [draw,rectangle,color=black,minimum width=0.5cm,minimum height=1.4cm,label=$\V_3$] at (2, 0) {};
\node [draw,rectangle,color=black,minimum width=0.5cm,minimum height=0.65cm,label=$\V_2$] at (0, 0) {};
\node [draw,rectangle,color=black,minimum width=0.5cm,minimum height=1.4cm,label=$\V_1$] at (-2, 0) {};
\end{tikzpicture}}
\subcaption{original network}
\end{minipage}
\begin{minipage}[c]{.2\textwidth}
\centering
\scalebox{.5}{
\begin{tikzpicture}
[scale=0.6, every node/.style={circle,draw=black}, red node/.style = {circle, fill = red, draw},  gray node/.style = {circle, fill = blue, draw}]
\node [red node] (1) at  (0, 0){};
\node [red node] (2) at  (-2, 0){};
\node [red node] (3) at  (2, 0){};
\node [gray node] (4) at (1,0){};
\node [gray node] (5) at (-1, 0){};
\draw (1) to (4);\draw (4) to (3);\draw (1) to (5);\draw (2) to (5);
\draw[-] (3) [out=90,in=90] to (4);
\end{tikzpicture}}
\subcaption{abstract network}
\end{minipage}
\begin{minipage}[c]{.45\textwidth}
\centering
\caption{Example of original and abstract network. Blue: immunized vertices in both networks. 
Red: the vulnerable vertices and regions in the original and abstract network, respectively.
\label{fig:abstract}}
\end{minipage}
\end{figure}

We next show that if $G$ is an equilibrium network then $G'$ is a chord-free graph.
\begin{lemma}
\label{lem:achordal-graph}
Let $G=(V,E)$ be a Nash, swapstable or linkstable equilibrium network and $G'=(V', E')$ the
abstraction of $G$. Then $G'$ is a chord-free graph if the adversary is \nice.
\end{lemma}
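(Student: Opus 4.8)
The plan is to argue by contradiction: suppose $G'$ contains a chord, i.e.\ there is a cycle $C$ in $G'$ together with an edge $e=\{a,b\}\in E'$ joining two vertices of $C$ that are non-consecutive on $C$. Let $f\in E$ be the edge of $G$ that projects to $e$ under the abstraction, and let $p$ be the player who purchased $f$. I would show that $p$ can strictly increase her utility by dropping $f$, contradicting equilibrium, since deleting a single purchased edge is an allowed deviation under all three solution concepts (Nash, swapstable, linkstable). A structural observation to record first is that, because the vulnerable regions are the maximal connected components of $\U$, no edge of $G$ joins two distinct regions and intra-region edges disappear in the abstraction; hence \emph{every} edge of $G'$ is incident to a vertex of $\I$, the region-vertices $u_1,\dots,u_k$ form an independent set, and an attack deletes exactly one region-vertex from $G'$.

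The main obstacle is that the adversary is adaptive, so dropping $f$ could in principle change the attack distribution and the comparison of payoffs is not immediately clean. This is exactly where the well-behaved hypothesis enters. I would resolve it by proving that $G$ and $G-f$ are \emph{equivalent} in the sense of Definition~\ref{def:eq-1}; then by Assumption~\ref{assumpt:equiv} the adversary attacks each vertex with the same probability in both states, and since equivalent networks give every vertex the same post-attack component for every seed, player $p$'s expected component size is identical in $G$ and $G-f$. As dropping $f$ lowers $p$'s expenditure by $\c>0$ while leaving her benefit term untouched, the deviation is strictly profitable, yielding the contradiction.

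It remains to establish the equivalence of $G$ and $G-f$. Since $f$ projects to an edge of $G'$, it is either an $\I$--$\I$ edge or an $\I$--region edge; in neither case is it an edge inside $\U$, so the vulnerable subgraph, its maximal components, and therefore the set of vertices killed by seeding an attack at any $v\in V$, are identical in $G$ and $G-f$. Fix any seed and let $R$ be the killed region (possibly empty). If an endpoint of $f$ lies in $R$, then $f$ is destroyed by the attack in both graphs and the surviving graphs coincide. Otherwise both endpoints of $f$ survive; at most one region-vertex (the one corresponding to $R$, if any) is removed from $G'$, and writing $P,Q$ for the two internally disjoint $a$--$b$ arcs of $C$, this single deleted vertex is not an endpoint of $f$ (hence $\neq a,b$) and so lies on at most one of $P,Q$. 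Thus at least one arc survives intact and reconnects $a$ and $b$; lifting it to $G$, routing through the surviving, internally connected regions, shows the endpoints of $f$ remain connected after the attack without using $f$. Hence the post-attack components are unchanged by the deletion of $f$, for every seed, which is precisely equivalence.

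The technical heart, and the step I expect to require the most care, is this arc argument: it hinges on the fact that a single attack erases exactly one region, i.e.\ one vertex of the abstraction, so it can sever at most one of the two cycle-arcs between the chord's endpoints. This is where chord-freeness is genuinely forced by the redundancy a chord creates together with the adversary's restriction to a single seed, and the well-behaved assumption is what converts this purely topological redundancy into a strict payoff improvement despite the adversary's adaptivity.
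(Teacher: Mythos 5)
Your argument for chord-freeness is essentially the paper's: both proofs drop the chord edge, observe that any single attack removes at most one vertex of the abstraction so at least one of the two internally disjoint arcs of the cycle survives, and invoke the well-behaved assumption to keep the attack distribution fixed, so that the saved expenditure $\c>0$ is a strict improvement. If anything your write-up is more careful on one point: you explicitly establish that $G$ and $G-f$ are equivalent in the sense of Definition~\ref{def:eq-1} (via the case split on whether an endpoint of $f$ is killed, and the lifting of the surviving arc through internally connected regions) \emph{before} applying Assumption~\ref{assumpt:equiv}, whereas the paper asserts the equivalence and then gives the arc argument; your ordering is the logically cleaner one.

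You do, however, omit the first half of the paper's proof: showing that $G'$ is a \emph{simple} graph, i.e., that no two vertices of the same vulnerable region are joined to the same immunized vertex (which would create parallel edges at the region-vertex $u_i$ in $G'$). The paper handles this with the same redundancy template you already use: the two vulnerable endpoints live or die together under every attack, so one of the two edges can be dropped without changing any post-attack component, the networks are equivalent, well-behavedness fixes the attack distribution, and the purchaser strictly gains $\c$. This is not cosmetic: the lemma asserts $G'$ is a chord-free \emph{graph}, and the no-parallel-edges property is exactly what licenses the edge count $|E| = |E'| + \sum_{v'\in V'}(k_{v'}-1)$ in the proof of Theorem~\ref{thm:sparse-general}; with parallel edges that identity, and hence the $2n-4$ bound, would fail. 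You should add this case explicitly.
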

\ifproofsout
\else
\ifproofsout
\begin{proof}[Proof of Lemma~\ref{lem:achordal-graph}]
\else
\begin{proof}
\fi
We first show that $G'$ is a graph and not a multi-graph.
By construction, we only need to show that no two vertices in any of the vulnerable regions of $G$
are connected to the same immunized vertex.

Suppose by contradiction that there exist an immunized vertex $v\in\I$ and 
vulnerable vertices $v_1, v_2\in \V_i$ (for some vulnerable region $\V_i$ of $G$) 
such that $(v_1, v)$ and $(v_2, v)$ are both in $E$. Given any attack, $v_1$ and $v_2$ would 
either both survive or die. In the former, one of the edges $(v_1, v)$ or $(v_2, v)$ can be dropped
while maintaining the same connectivity benefit for all the survived vertices post-attack because the adversary
is \nice. In the latter,
neither $(v_1, v)$ nor $(v_2, v)$ provide any connectivity benefit for any of the vertices in $\{v, v_1, v_2\}$ post-attack
and dropping one of this edges would strictly increase the utility of the player who purchased that edge (again note that
the distribution of attack remains unchanged because the adversary is~\nice). 
Therefore, $(v_1, v)$ and $(v_2, v)$ cannot both be in $E$ when $G$ is an equilibrium network; a contradiction.

We next show that $G'$ is chord-free.
  Suppose by contradiction that $G'$ has a chord. Then there exists a cycle of size at least $4$ in $G'$
  that has a chord.  Consider any such cycle. By definition there
  exist vertices $u,v, y$ and $z\in V'$ such that \emph{(i)} there are
  at least two vertex disjoint paths between $u$ and $v$,
  \emph{(ii)} $y$ is on the path from $u$ to $v$, $z$ is on the other path,
  and \emph{(iii)} $(y,z)\in E'$. 
  We show that dropping
  the edge between $y$ and $z$ would be a linkstable deviation (and
  hence a swapstable and Nash deviation) that increases the expected payoff of the vertex that
  purchased this edge.  This would contradict our assumption that $G$ is an
  equilibrium network.

 First observe that dropping the edge $(y,z)$ would result in an equivalent 
 network to $G$. Since the adversary is \nice, the distribution of attack 
 in $G$ before and after the deviation is the same.
 Second, by construction of the abstract graph, at most one of $y$ or $z$ dies in any attack. If they both survive, 
 at least one of the vertex disjoint paths between them survives because at most 
 one vertex in $G'$ would die after any attack. So the edge $(y, z)$ is redundant.
 If one of them say $y$ dies then $z$ would be still connected to to the entirety of this cycle and its
  neighborhood. So the edge $(y, z)$ is redundant in this case too.
\end{proof}

\fi

\ifproofsout
We defer all the proofs in this section to Appendix~\ref{sec:missing-proofs-sparse}.
\fi
As the next step we bound the edge density of chord-free networks
in Theorem~\ref{thm:achordal} using
Theorem~\ref{thm:mader} from the graph theory literature. 
\begin{thm}[Mader~\cite{Mader78}]\label{thm:mader}
  Let $G = (V, E)$ be an undirected graph with minimum degree
  $d$. Then there is an edge $(u,v)\in E$ such that there are $d$
  vertex-disjoint paths from $u$ to $v$.
\end{thm}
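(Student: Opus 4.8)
The plan is to prove this by induction on $|V|$, reducing everything to the case of a $d$-connected graph, where the conclusion is immediate from Menger's theorem. Throughout, "$d$ vertex-disjoint paths from $u$ to $v$" I read as $d$ internally vertex-disjoint paths, i.e.\ local vertex-connectivity $\ge d$.

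First I would dispose of the easy case. If $G$ is $d$-connected (which presupposes $|V|\ge d+1$, and indeed $\delta(G)\ge d$ already forces $|V|\ge d+1$), pick any edge $(u,v)$, which exists since $d\ge 1$. Deleting a single edge drops vertex-connectivity by at most one, so $G-(u,v)$ is $(d-1)$-connected, and in it $u$ and $v$ are non-adjacent; Menger's theorem then yields $d-1$ internally vertex-disjoint $u$--$v$ paths there. Re-adding the edge $(u,v)$, which has no internal vertices and so is disjoint from all of them, gives the desired $d$ internally disjoint $u$--$v$ paths in $G$.

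The substantive case is when $G$ is not $d$-connected. Then there is a separator $S$ with $|S|\le d-1$; I would take $S$ to be a \emph{minimum} separator and let $C$ be one component of $G-S$. The plan is to pass to the side graph $H=G[C\cup S]$ with the vertices of $S$ completed into a clique, apply the inductive hypothesis to a suitably repaired version of $H$, and transport the resulting high-local-connectivity edge back to $G$. The reason for restricting to one side is that every vertex of $C$ keeps all of its $G$-neighbours and hence retains degree $\ge d$, while the clique on $S$ together with the fact that a minimum separator has a neighbour in every component is meant to repair the degrees of the $S$-vertices.

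The hard part --- and where the real content of the theorem lies --- is exactly this reduction across the separator, and I would expect two coupled difficulties. First, one must guarantee that the reduced graph still has minimum degree $\ge d$: an $S$-vertex may have very few neighbours inside $C$ (most of its degree being spent on the other side), so clique-completion of $S$ alone need not restore degree $d$, and the choice of $S$ and $C$ must be made to control this. Second, one must lift the paths back to $G$: the internally disjoint paths returned by induction may traverse the artificial clique edges inside $S$, and realizing each such edge by a genuine detour through another component of $G-S$ requires these detours to be internally disjoint from one another and from the rest. Handling both at once is what forces a careful, strengthened inductive statement that treats the separator vertices as already interconnected and exploits that each vertex of a minimum separator reaches every component. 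I would explicitly avoid the tempting shortcut of first extracting a globally $d$-connected subgraph and invoking the base case, since minimum degree $d$ does \emph{not} in general force a $d$-connected subgraph; it is precisely the \emph{local} high connectivity of a single edge that the theorem isolates, and that is what the separator argument must deliver directly.
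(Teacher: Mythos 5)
First, a point of reference: the paper does not prove this statement at all. Theorem~\ref{thm:mader} is imported as a black box from Mader's work and is used only with $d=3$ inside the proof of Theorem~\ref{thm:achordal}, so there is no in-paper argument to compare yours against; your proposal has to stand on its own as a proof of Mader's theorem. Your base case does stand: when $G$ is $d$-connected, deleting an arbitrary edge $(u,v)$ leaves a $(d-1)$-connected graph in which $u$ and $v$ are non-adjacent, Menger gives $d-1$ internally disjoint $u$--$v$ paths there, and the edge itself supplies a $d$-th.

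The inductive step, however, is where the entire content of the theorem lives, and there you have only named the obstructions rather than overcome them. You correctly observe that (i) after passing to $H=G[C\cup S]$ with $S$ completed to a clique, a vertex of $S$ with a single neighbour in $C$ has degree at most $1+(|S|-1)\le d-1$, so the minimum-degree hypothesis is \emph{not} restored, and (ii) paths returned by the inductive call may traverse artificial clique edges inside $S$, which must be realized by pairwise internally disjoint detours through other components. But you then write that handling both "forces a careful, strengthened inductive statement" without ever stating that statement, verifying it is preserved under the reduction, or ruling out that the edge produced by the inductive call is itself one of the clique edges you added inside $S$ (in which case it is not an edge of $G$ and the conclusion is vacuous). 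These are not routine details: obstruction (i) is precisely why the naive induction fails, and any repair --- say, an inductive hypothesis of the form "every vertex outside a distinguished clique $T$ has degree at least $d$, and the promised edge has at most one endpoint in $T$" --- changes what must be checked in the base case, in the degree accounting, and in the path-lifting step, and it is not evident that all three can be satisfied simultaneously. As written, this is a well-informed outline of a strategy and an accurate diagnosis of its difficulties, but it is not a proof; the induction has not been closed.
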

\begin{thm}\label{thm:achordal}
  Let $G=(V, E)$ be a chord-free graph on $n \ge 4$ vertices. Then $|E|\leq 2n-4$.~
  \footnote{\citet{Kliemann11} proved Theorem~\ref{thm:achordal} with a different technique
for a density bound of $2n-1$ for all $n$.}
\end{thm}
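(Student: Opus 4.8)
The plan is to combine Mader's theorem (Theorem~\ref{thm:mader}) with a short induction on $n$. The crux is the following claim: \emph{every chord-free graph has minimum degree at most $2$}. Granting this, the edge bound follows by a one-vertex deletion argument, using the fact that chord-freeness is hereditary under vertex deletion: any cycle in an induced subgraph is a cycle of the original graph, and any chord of such a cycle in the subgraph is an edge of the original graph, so deleting vertices can never create a new chord.

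To establish the claim, suppose for contradiction that $G$ is chord-free with minimum degree $d \ge 3$. By Theorem~\ref{thm:mader} there is an edge $(u,v) \in E$ together with $d \ge 3$ internally vertex-disjoint paths from $u$ to $v$. At most one of these paths can be the single edge $(u,v)$ itself, since the graph is simple, so at least two of them, say $P_1$ and $P_2$, have length at least $2$. As $P_1$ and $P_2$ share only their endpoints, their union is a cycle $C$ of length at least $4$ on which $u$ and $v$ are non-adjacent (each of the two arcs joining them has length at least $2$). The edge $(u,v)$ is therefore a chord of $C$, contradicting chord-freeness. Hence $d \le 2$.

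With the claim in hand I would argue by induction on $n$. For the base case $n = 4$ I would check directly that a chord-free graph has at most $4 = 2n-4$ edges: on four vertices, any graph with five or six edges contains a four-cycle whose single missing pair forces one of the present edges to be a chord, so at most four edges are possible, and the $4$-cycle itself meets this bound. For the inductive step $n \ge 5$, pick a vertex $w$ of degree at most $2$ (which exists by the claim) and delete it; the resulting graph is chord-free on $n-1 \ge 4$ vertices, so by the induction hypothesis it has at most $2(n-1)-4$ edges, and restoring $w$ adds at most $2$ edges, giving $|E| \le 2(n-1)-4 + 2 = 2n-4$.

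The only delicate point is the base case: the bound genuinely fails for $n = 3$, since the triangle is chord-free with $3 > 2$ edges. This is exactly why the hypothesis $n \ge 4$ is needed and why the induction must bottom out at $n = 4$ rather than recursing further. I expect the main conceptual step to be the minimum-degree claim, and in particular verifying that Mader's $d \ge 3$ vertex-disjoint paths really do yield a chorded cycle rather than, say, degenerating into repeated uses of the edge $(u,v)$; it is precisely the simplicity of $G$ that rules this out and guarantees two paths of length at least $2$.
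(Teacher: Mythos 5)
Your proof is correct and follows essentially the same route as the paper's: both use Mader's theorem to show a chord-free graph with minimum degree at least $3$ would contain a chorded cycle of length at least $4$, both rely on chord-freeness being hereditary under vertex deletion, and both bottom out at the $n=4$ base case with at most $4$ edges. The only difference is presentational — you phrase the argument as an induction on $n$, while the paper iteratively peels off vertices of degree at most $2$ until it reaches four vertices — and your explicit handling of why at most one of the $d$ disjoint paths can be the edge $(u,v)$ itself is a welcome bit of extra care.
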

\ifproofsout
\else
\ifproofsout
\begin{proof}[Proof of Theorem~\ref{thm:achordal}]
\else
\begin{proof}
\fi
While $G$ contains a vertex of degree at most 2, we remove this
  vertex from $G$ and repeat this process until either the number of 
  remaining vertices falls to 
  $4$ or the minimum degree in the residual graph is at least $3$.  
  Let $\tilde{G}(\tilde{V},\tilde{E})$ be the
  resulting graph upon termination of this process, and let  $\tilde{n} \ge 4$ 
  denote the number of vertices in $\tilde{G}$. 
  
If $\tilde{n} = 4$, then the assertion of the theorem follows from the following 
two observations: \emph{(i)} we 
removed at most $2(n - 4)$ edges in the process, and \emph{(ii)} any chord-free 
graph on $4$ vertices contains 
at most $4$ edges. Combining these observations together, we can 
conclude that the total number of 
edges in $G$ is at most $2(n-4) + 4 = 2n -4$.

Otherwise, $\tilde{G}$ is a graph with minimum degree of at least $3$. 
Moreover $\tilde{G}$ is chord-free (since $G$
  is chord-free and vertex deletion maintains the chord-free
  property).  Now by Theorem~\ref{thm:mader}, $\tilde{G}$ contains an edge
  $(u,v)$ such that there are at least $3$ vertex-disjoint paths
  connecting $u$ and $v$. This implies that there are at least two vertex
  disjoint paths connecting $u$ and $v$, other than the edge
  $(u,v)$. So there exists some cycle that contains $u$ and $v$ (but not the edge
  $(u,v)$) with length at least $4$.  However, the edge between $u$
  and $v$ would be a chord for such a cycle. This is a contradiction since
  $\tilde{G}$ is chord-free. So, $\tilde{G}$ must be a graph with $4$ vertices, and hence
  there must be at most $2n-4$ edges in $G$.
\end{proof}

\fi
Theorem~\ref{thm:achordal} implies the edge density of the abstract network $G'=(V', E')$
is at most $2|V'|-4$. To derive the edge density of the original network, 
we first show that any vulnerable region in $G$ (contracted vertices in $G'$) is a tree when $G$ is
an equilibrium network.
\begin{lemma}
\label{lem:tree}
Let $G=(V,E)$ be a Nash, swapstable or linkstable equilibrium network. Then 
all the vulnerable regions in $G$ are trees if the adversary is~\nice.
\end{lemma}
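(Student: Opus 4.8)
The plan is to show that each vulnerable region $\V_i$, viewed as the induced subgraph $G[\V_i]$, is connected and acyclic. Connectivity is immediate, since by definition $\V_i$ is a maximally connected component of the vulnerable vertices, so it suffices to rule out cycles lying entirely inside a vulnerable region. I would argue by contradiction: suppose some $G[\V_i]$ contains a cycle $C$, all of whose vertices and edges are vulnerable, and let $e=(v_1,v_2)$ be any edge of $C$. Since edge purchases are unilateral, $e$ has at least one owner who is free to drop it.

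The heart of the argument is to show that dropping $e$ yields a network $G'=(V, E\setminus\{e\})$ that is \emph{equivalent} to $G$ in the sense of Definition~\ref{def:eq-1}. I would verify this by a case analysis on the location of the initial attack vertex $w$. If $w\in\V_i$, the attack spreads across the whole region and kills all of $\V_i$ in both $G$ and $G'$ (because $\V_i$ is a connected set of vulnerable vertices); since $e$ lies inside $\V_i$, the two graphs coincide once $\V_i$ is deleted, so every surviving vertex keeps the same connected component. If $w\notin\V_i$, then since $\V_i$ is a \emph{maximal} vulnerable component the attack cannot cross into it (immunized vertices act as firewalls, and every vulnerable neighbor of $\V_i$ already lies in $\V_i$), so $\V_i$ survives intact; as $e$ lies on the cycle $C\subseteq\V_i$, the path $C\setminus\{e\}$ still joins $v_1$ and $v_2$, and deleting a single cycle edge preserves all connected components. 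In either case the post-attack component of every vertex is identical in $G$ and $G'$, so $G$ and $G'$ are equivalent.

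Because the adversary is \nice~(Assumption~\ref{assumpt:equiv}), equivalence of $G$ and $G'$ forces the attack distribution to be identical on the two networks. Consequently the owner of $e$ obtains exactly the same expected connected-component size in $G'$ as in $G$, term by term over the targeted regions, while saving the edge cost $\c>0$. Dropping a single purchased edge is a linkstable (hence swapstable and Nash) deviation, so this strictly improves the owner's utility, contradicting that $G$ is an equilibrium. Therefore no vulnerable region contains a cycle, and each is a tree.

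I expect the main obstacle to be the careful justification of the equivalence claim: confirming that when the attack lands inside $\V_i$ the two graphs become literally identical on the surviving vertices, and that when it lands outside $\V_i$ the region (and hence the entire cycle $C$) survives so that the single dropped edge is genuinely redundant. Everything else — the cost saving and the observation that a single-edge drop is a valid linkstable deviation — is routine once equivalence together with the \nice~assumption supplies the invariance of the attack distribution.
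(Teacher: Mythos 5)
Your proposal is correct and follows essentially the same route as the paper's proof: the key observations in both are that a vulnerable region either entirely survives or entirely dies, that deleting a cycle edge leaves an equivalent network so the \nice\ assumption fixes the attack distribution, and that the edge's purchaser then strictly gains by dropping it. Your write-up simply makes explicit the case analysis and the appeal to Definition~\ref{def:eq-1} that the paper's terser argument leaves implicit.
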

\ifproofsout
\else
\ifproofsout
\begin{proof}[Proof of Lemma~\ref{lem:tree}]
\else
\begin{proof}
\fi
Suppose by contradiction that there exists a vulnerable region $\V'$ in $G$ with a cycle. After any attack, the vertices in $\V'$
would either all survive or die. In both cases, any edge beyond a tree is redundant since (1) it 
provides no connectivity benefit and only increases the expenditure (2) the distribution of the attack
would be the same with or without such edge because the adversary is \nice. 
So $\V'$ can't have any cycles and hence is a tree when $G$ is an equilibrium network.
\end{proof}

\fi
We use Lemmas~\ref{lem:achordal-graph},~\ref{lem:tree} and 
Theorem~\ref{thm:achordal} to prove a density bound on the equilibrium networks.
\begin{thm}
\label{thm:sparse-general}
Let $G=(V,E)$ be a Nash, swapstable or linkstable equilibrium network on $n\geq 4$ vertices. Then $|E|\leq 2n-4$ 
for any~\nice~adversary.
\end{thm}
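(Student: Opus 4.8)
The plan is to bound the edges of $G$ by bounding the edges of its abstraction $G'$ and then accounting for the edges that were lost in the contraction. The key structural facts are already in hand: by Lemma~\ref{lem:achordal-graph} the abstract network $G'$ is chord-free, so by Theorem~\ref{thm:achordal} we have $|E'| \le 2|V'| - 4$ (provided $|V'| \ge 4$); and by Lemma~\ref{lem:tree} each vulnerable region $\V_i$, being contracted to a single vertex $u_i$ of $G'$, is itself a tree in $G$.

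\textbf{First step: count the edges lost to contraction.} I would partition $E$ into three types according to the construction of $G'$: edges with both endpoints immunized, edges with one endpoint in some $\V_i$ and the other immunized, and edges internal to a single vulnerable region $\V_i$. The first two types are in bijection with the edges of $G'$ (here the fact from Lemma~\ref{lem:achordal-graph} that $G'$ is a simple graph, not a multigraph, is essential, since it guarantees no two $\V_i$--$\I$ edges collapse onto the same abstract edge). The third type consists precisely of the internal tree edges of the vulnerable regions. Since $\V_i$ is a tree on $|\V_i|$ vertices, it contributes exactly $|\V_i| - 1$ internal edges. Summing over all $k$ regions, the internal edges total $\sum_{i=1}^{k}(|\V_i| - 1) = |\U| - k$.

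\textbf{Second step: assemble the bound.} Writing $|\I| = n - |\U|$ and $|V'| = |\I| + k$, I combine the two counts:
\[
|E| \;=\; |E'| + \big(|\U| - k\big) \;\le\; \big(2|V'| - 4\big) + |\U| - k \;=\; 2\big(|\I| + k\big) - 4 + |\U| - k.
\]
Since $|\I| + |\U| = n$, this simplifies to $2|\I| + k - 4 + |\U| = n + |\I| + k - 4 \le 2n - 4$, using $|\I| + k = |V'| \le n$ (each abstract vertex is either an immunized vertex or a contracted region, and there are at most $n$ of these in total). This yields $|E| \le 2n - 4$.

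\textbf{Main obstacle: the boundary case $|V'| < 4$.} Theorem~\ref{thm:achordal} requires at least $4$ abstract vertices, so the argument above must be supplemented when $|V'| \in \{1,2,3\}$. Here I would argue directly: a chord-free graph on fewer than four vertices has at most as many edges as vertices (at most a triangle on three vertices, so $|E'| \le |V'|$), and since $|V'| \le n$ one checks that $|E| = |E'| + |\U| - k \le |V'| + |\U| - k = |\I| + |\U| = n \le 2n - 4$ for $n \ge 4$. I would also confirm that the degenerate possibility $k=0$ (no vulnerable vertices at all, so $G = G'$ is a chord-free graph on $n \ge 4$ vertices) is handled directly by Theorem~\ref{thm:achordal}. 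These small-case checks are the only genuinely fiddly part; the main body of the proof is a clean bookkeeping argument once the three supporting results are invoked.
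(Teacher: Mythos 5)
Your proposal is correct and follows essentially the same route as the paper's proof: form the abstraction $G'$, apply Lemma~\ref{lem:achordal-graph} and Theorem~\ref{thm:achordal} to get $|E'|\le 2|V'|-4$ when $|V'|\ge 4$, use Lemma~\ref{lem:tree} to count the $|\U|-k$ internal tree edges restored by un-contracting the vulnerable regions, and handle $|V'|\le 3$ separately via $|E'|\le|V'|$, which gives $|E|\le n\le 2n-4$. Your bookkeeping ($|E|=|E'|+|\U|-k$ with $|V'|=|\I|+k$) is just a rewriting of the paper's $n'=n-\sum_{v'}(k_{v'}-1)$ accounting, and your explicit appeal to the simple-graph (non-multigraph) part of Lemma~\ref{lem:achordal-graph} for the edge bijection is a point the paper uses implicitly.
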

\ifproofsout
\else
\ifproofsout
\begin{proof}[Proof of Theorem~\ref{thm:sparse-general}]
\else
\begin{proof}
\fi
  Let $G'=(V', E')$ be the abstract graph composed from $G$ on $n'$ vertices. 
  We consider two cases based on the number of vertices $n'$ in $G'$: (1) $n'\geq 4$ or (2) $n'\leq 3$.
  Observe that each vertex
  $v'\in V'$ actually represents a tree in $G$ because 
  each vertex is either a singleton immunized vertex which is a tree by definition or a 
  contracted vertex which is tree in $G$ by Lemma~\ref{lem:tree} since the adversary is \nice and $G$
  is an equilibrium network.
  
  In case (1), since the adversary is \nice~and $G'$ is a chord-free graph by Lemma~\ref{lem:achordal-graph}, 
  Theorem~\ref{thm:achordal}
  implies $G'$ has
  at most $|E'|\leq 2n'-4$ edges (since $n'\ge 4$).  
  For every $v'\in V'$, if $v'$
  represents $k_{v'}$ vertices in $G$, this implies that
  $n' = n - \Sigma_{v'\in V'} (k_{v'}-1)$. Thus, $G$ can have at most
  \begin{align*}|E| = 2n'-4 &+ \sum_{v'\in V'} (k_{v'} - 1)
  = 2\left(n - \sum_{v'\in V'} (k_{v'}-1)\right)-4 + \sum_{v'\in V'} (k_{v'}-1) 
 \leq 2n-4
  \end{align*}
  edges, as desired.
  
  In case (2), $|E'|\le n'$ since $n'\leq 3$. Again for every $v'\in V'$, if $v'$
  represents $k_{v'}$ vertices in $G$, this implies that $n' = n - \Sigma_{v'\in V'} (k_{v'}-1)$. Hence,
  \begin{align*}
  |E| \leq |E'| + \sum_{v'\in V'} (k_{v'} - 1) \leq n' + \Sigma_{v'\in V'} (k_{v'}-1) = n
  \end{align*}
  which is at most $2n-4$ when $n\ge 4$.
\end{proof}
\fi
\section{Connectivity and Social Welfare in Equilibria}

\label{sec:welfare}
The results of Section~\ref{sec:sparsity} show that despite
the potential presence of cycles at equilibrium, there are still sharp
limits on collective expenditure on edges in our game. However, they
do not directly lower bound the welfare, due to
connectivity concerns: if the graph could become highly fragmented
after the attack, or is sufficiently fragmented prior
to the attack, the reachability benefits to players could be sharply
lower than in the attack-free reachability game.  In this section we
show that when $\b$ and $\c>1$ are both constants with respect to
$n$,\footnote{We view this condition as the most
  interesting regime of our model, since in natural circumstances we do
  not expect the cost of edge formation or immunization to grow with
  the population size.}  none of these concerns are realized
in any ``interesting'' equilibrium network, described precisely below.

In the original reachability game~\cite{BalaG00},
the \emph{maximum} welfare achievable in any equilibrium is $n^2-O(n)$. 
Here we will
show that the welfare achievable in any ``non-trivial'' equilibrium
is $n^2-O(n^{5/3})$.  Obviously with no
restrictions on the adversary and the parameters this cannot be true.
Just as in the original game, for $\c>1$, the empty graph remains an
equilibrium in our game with respect to all the natural adversaries in Section~\ref{sec:model}.
The empty graph has a social welfare of only $O(n)$ (each
vertex has an expected payoff of $1-1/n$).  
We thus assume the
equilibrium network contains at least \emph{one} edge and at least
\emph{one} immunized vertex.  We refer to all equilibrium networks
that satisfy the above assumption as \emph{non-trivial}
equilibria. They capture the equilibria that are new to our game
compared to the original attack-free setting --- the network is not
empty, and at least one player has chosen immunization.

Limiting attention to non-trivial equilibria is \emph{necessary} if we hope to
guarantee that the welfare at equilibrium is $\Omega(n^2)$ when
$\c>1$. As already noted, without the edge assumption, the empty graph is an
equilibrium with respect to several natural adversaries.
Furthermore, without the immunization assumption, $n/3$ disjoint
components where each component consists of 3 vulnerable vertices is
an equilibrium (for carefully chosen $\c$ and $\b$) with respect to e.g. the \maxcarnage~adversary. 
In both cases, the social welfare is only $O(n)$.

Similar to the sparsity section,  
to get any meaningful results for the welfare we need to restrict the adversary's power. 
To simplify presentation, for the most of this section we 
state and analyze our results for the 
\maxcarnage~adversary. At the end of this section, we show how these results
(or their slight modifications) can be extended to several other adversaries.

Consider any connected component that contains an immunized vertex and an edge
 in a non-trivial equilibrium network with respect to
the \maxcarnage~adversary.
We first show that any targeted region in such component (if exists)
has size one when $\c>1$.
\begin{lemma}
\label{lem:singletons}
Let $G$ be a non-trivial Nash or swapstable equilibrium network with respect to the \maxcarnage~adversary. 
Then in any component of $G$ with at least one immunized vertex and at least one edge, the targeted regions (if they exist)
are singletons when $\c>1$.
\end{lemma}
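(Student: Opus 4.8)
The plan is to argue by contradiction: fix a component $C$ of the non-trivial equilibrium that contains an immunized vertex and an edge, and suppose $C$ contains a targeted region $T$ with $|T| = s \ge 2$. Under the \maxcarnage~adversary, a targeted region is a vulnerable region of globally maximum size, so $s$ is the maximum region size and $T$ is attacked with probability $1/m > 0$, where $m$ is the number of maximum-size regions; moreover a vulnerable vertex dies exactly when its own region is attacked, so every vertex of $T$ dies with probability precisely $1/m$. By Lemma~\ref{lem:tree} the region $T$ is a tree, and since $C$ is connected, contains an immunized vertex, and $T$ is a maximal vulnerable set, $T$ must have at least one edge to an immunized vertex. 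I would then exhibit a single profitable swap or drop deviation for some player, contradicting equilibrium.

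The guiding principle is the swap intuition highlighted for swapstability: a player who reaches the rest of its component through a targeted vertex of $T$ can do at least as well by reaching it through a surviving immunized vertex instead, since that connection is not destroyed when $T$ is attacked. Concretely, I would select a leaf $v$ of the tree $T$ with incident tree-edge $e=(v,w)$ and split on who purchased $e$. If $v$ purchased $e$, then $v$ drops $e$ and, if needed to preserve reachability, simultaneously purchases one edge to an immunized vertex of $C$ on $w$'s side; this pulls $v$ out of $T$ into a \emph{singleton} vulnerable region while leaving its total expenditure unchanged, so the only change is in the attack term. If instead $w$ purchased $e$ and $v$ is a true leaf of $C$ (degree one), then $e$ buys $w$ only the single extra vertex $v$, worth at most $1 < \c$ in expected connectivity, while dropping it weakly lowers $w$'s own attack exposure by shrinking its region; hence $w$ strictly profits by dropping $e$.

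The delicate remaining case is when $w$ purchased $e$ but $v$ also has an edge to an immunized vertex, so $v$ reaches $C$ independently of $e$. If $e$ is redundant for $w$'s connectivity, $w$ drops it, saving $\c$, splitting $T$, and lowering attack exposure. If $e$ is a bridge, I would route the argument instead through an immunized vertex $z_0$ adjacent to $T$ whose contact edge $z_0$ itself purchased, applying the swap ``reconnect to an immunized vertex reachable through the targeted vertex''; since this swap only rewires an immunized endpoint, it leaves all vulnerable regions — and therefore, because the \maxcarnage~distribution depends only on region sizes, the entire attack distribution — unchanged, while strictly improving $z_0$'s post-attack component (it now keeps its connection whenever $T$ is attacked). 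In each case that moves a vertex out of $T$, I must re-examine the induced distribution: the deviator's new region has size $s' < s$, so it is targeted only in the degenerate subcase where every region becomes a singleton, which forces $s=2$ and $m=1$ (i.e.\ $T$ was the unique maximum region, attacked with probability one); there the deviator's death probability still drops strictly from $1$ to at most $1/2$, and in all other subcases it drops to $0$, with reachability preserved scenario-by-scenario.

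The main obstacle is not any individual deviation but the bookkeeping that fuses the cases: ensuring the chosen deviator actually owns the edge it must drop or swap, and verifying that after re-attaching through an immunized (hence always-surviving) vertex the deviator's reachable component is no smaller in every attack scenario in which it survives, while its death probability strictly decreases. The cleanest way to control this is to always re-attach to an immunized vertex, so connectivity is preserved attack-by-attack, and to track exactly how removing the deviator from $T$ changes the maximum region size so that its post-deviation region is provably non-targeted outside the explicitly handled $s=2$, $m=1$ degeneracy.
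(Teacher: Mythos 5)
Your overall strategy is the same as the paper's: assume a targeted region $\T$ of size at least two, use Lemma~\ref{lem:tree} to get a tree with leaves, and derive a contradiction from a profitable swap toward an immunized vertex, with the correct bookkeeping that under \maxcarnage\ the deviator's shrunken region is non-targeted except in the $s=2$, all-singletons degeneracy. However, there is a genuine gap in how you anchor the case analysis on a \emph{single} leaf $v$. In your Case A ($v$ purchased $e=(v,w)$ and $e$ is a bridge), the deviation requires an immunized vertex of $C$ ``on $w$'s side,'' and such a vertex need not exist: since distinct vertices of $\T$ cannot share an immunized neighbor (else one of those edges is redundant), it can happen that every immunized vertex of the component attaches to $\T$ only through $v$ itself, in which case $w$'s side is just $\T\setminus\{v\}$ and $v$ has no valid reconnection target. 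In that configuration $v$ has no obviously profitable deviation (dropping $e$ outright may lose $s-1$ vertices of connectivity), and the contradiction must instead come from the \emph{other} leaf of the tree or its parent. This is exactly why the paper's proof takes two leaves $x,y$ and case-splits on which of \emph{them} bought their tree edges: when the first leaf is already attached to the only immunized anchor, the argument passes to the second leaf (who either bought her edge and can swap to that anchor, or was bought into the tree, forcing her --- since $\c>1$ --- to have her own immunized neighbor $z'$ to which her parent can profitably rewire). Your proposal never invokes the second leaf, so this branch is unhandled.

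A second, smaller gap is in your Case B2 with $e$ a bridge: the detour through an immunized vertex $z_0$ ``adjacent to $\T$ whose contact edge $z_0$ itself purchased'' is under-specified --- such a $z_0$ may not exist (all contact edges may be owned by vertices of $\T$), and the target ``immunized vertex reachable through the targeted vertex'' may not exist either, nor is strictness automatic if $z_0$ would not actually be disconnected by the attack. The direct move the paper uses is both simpler and sufficient: since $\c>1$ and $w$ bought the edge to the leaf $v$, the edge must be worth more than one vertex, so $v$ has an immunized neighbor $z'$; $w$ is not already adjacent to $z'$ (no two vertices of $\T$ share an immunized neighbor), so $w$ can swap $(w,v)$ for $(w,z')$, preserving connectivity to $v$ while ejecting $v$ from $\T$ and strictly reducing $w$'s attack exposure. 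Replacing your $z_0$ detour with this parent-swap, and adding the second-leaf fallback in Case A, would close the argument.
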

\begin{proof}
Suppose by contradiction there exist a component $\hat{G}$  with at least one immunized vertex and at least one 
edge and a targeted region $\T$ 
with size strictly bigger than 1 in $\hat{G}$.
Note that $\T$ is a vulnerable region of maximum size in this case.
By Lemma~\ref{lem:tree}, $\T$ is a tree. 
Since $|\T|>1$, then this tree must have at least two leaves $x,y\in \T$.
We claim that there is some vertex in $\T$ who would strictly
prefer to \emph{swap} her edge to some immunized vertex in $\hat{G}$ rather than an edge
which connects her to the remainder of $\T$. 
  
Consider two cases:
(1) one of $x$ or $y$ buys her edge in the tree or (2) neither $x$ nor $y$ buys her edge in the tree.

In case (1), suppose
  without loss of generality that $x$ has bought an edge in the tree. Since $\hat{G}$
  is connected, there exists an immunized vertex $z$ which is connected
  to some vertex in $\T$.  If $x$ is not connected to $z$, then $x$
  would strictly prefer to buy an edge to $z$ over buying her tree
  edge. By this deviation, the probability of
  attack to $x$ is strictly decreased. Furthermore, in any other attack
  outside of $\T$,
  $x$ would at least get the same connectivity benefit. Finally, if the attack happens to the part of $\T$
  that got disconnected from $x$ after the deviation, she would get a non-zero benefit whereas before the
  deviation such attacks would have killed $x$ as well.
  
  So suppose $x$ is connected to $z$. Then if $y$ also bought
  her tree edge, she would also strictly prefer an edge to $z$. 
  So suppose $y$ did not buy her tree edge. Observe that
  $y$ cannot be connected to $z$ because one of the edges $(x,z)$ or $(y,z)$ would 
  be redundant.
  Now consider the edge that connects $y$ to the tree $\T$. Then $y$'s 
  parent in the tree must have bought this edge; since $\c>1$,
  this implies $y$ must be connected to some immunized vertex $z'$ (or
  it would not be worth connecting to $y$); 
  Also observe that $y$'s parent can be connected to $z$ because either the edge
  between $x$ and $z$ or $y$'s parent and $z$ is redundant.
  However, $y$'s parent would
  strictly prefer to buy an edge to $z'$ over an edge to $y$.  
  Thus, $x$ cannot have bought her tree edge; either $y$ or her parent would
  like to re-wire if this were the case.

  In case (2), since $\c>1$, both $x$ and $y$ must have immunized neighbors or
  their edges being purchased by $x$'s parent and $y$'s parent would not be best
  responses by those vertices. 
  Let $z$ and $z'$ denote the immunized vertices connected to $x$ and $y$, respectively.
  Note that $z\ne z'$ otherwise one of the edges $(z,x)$ or $(z', y)$ would be redundant.
  But then, both $x$'s parent and $y$'s parent in the tree $\T$ would
  strictly prefer to buy an edge to $z$ and $z'$
  rather than to $x$ and $y$, respectively.
\end{proof}

We then show that non-trivial equilibrium networks with respect to
the \maxcarnage~adversary are connected when $\c>1$.
We defer the omitted proofs of this section
to Appendix~\ref{sec:missing-connectivity}.
\begin{thm}
\label{thm:connect}
Let $G$ be a non-trivial Nash, swapstable or linkstable equilibrium network with respect to the \maxcarnage~adversary. 
Then, $G$ is a connected graph when $\c>1$.
\end{thm}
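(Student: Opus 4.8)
The plan is to argue by contradiction: suppose $G$ is a non-trivial equilibrium (with respect to the \maxcarnage~adversary) that is disconnected. Since $G$ is non-trivial, it contains at least one component $\hat{G}$ with an immunized vertex and an edge. By Lemma~\ref{lem:singletons}, every targeted region inside $\hat{G}$ is a singleton, so the \maxcarnage~adversary's attack does very little damage within $\hat{G}$: after any attack, $\hat{G}$ loses at most one vulnerable vertex. The key idea is that because $\hat{G}$ remains almost entirely intact post-attack, it offers a large and reliable reachability benefit, and this makes $\hat{G}$ an attractive target for any vertex sitting in a \emph{different}, worse-off component to connect into. I would leverage this to construct a profitable deviation, contradicting equilibrium.

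First I would quantify the benefit of $\hat{G}$: by Lemma~\ref{lem:singletons} the maximum vulnerable region size in $\hat{G}$ is $1$, so after the \maxcarnage~attack (which, globally, targets the largest vulnerable region across the whole graph) essentially all of $\hat{G}$ survives for any vertex that is immunized or not co-located with the single attacked vertex. Next I would examine an arbitrary \emph{other} component $\bar{G}$ and pick a suitable vertex $w$ in it to deviate --- a vertex that would gain by purchasing a single edge into $\hat{G}$ (ideally to an immunized vertex of $\hat{G}$, or to a vertex that is safe from attack). The deviation cost is one edge, $\c$; the benefit is the expected increase in reachable component size, which is at least on the order of $|\hat{G}|$ minus the small post-attack loss. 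The crux is to show this net gain is strictly positive. Here the non-triviality and the singleton structure do the work: $\hat{G}$ has at least two vertices (an immunized one plus an endpoint of its edge), and more care is needed to guarantee $|\hat{G}|$ is large enough, or alternatively to show that \emph{whatever} component $w$ currently sits in, swapping or adding an edge into $\hat{G}$ dominates.

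I expect the main obstacle to be handling the interaction between the deviating vertex's \emph{own} attack risk and the \maxcarnage~adversary's global choice of target. When $w$ connects into $\hat{G}$, it may change which region is the unique largest vulnerable region, thereby shifting the attack distribution; one must ensure the deviation does not inadvertently redirect the attack onto $w$'s new enlarged component in a way that wipes out the gain. The clean way to control this is to route the new edge to an \emph{immunized} vertex of $\hat{G}$ so that $w$ inherits connectivity through a firewall, and to invoke Lemma~\ref{lem:singletons} again to bound how much of the merged structure can ever be destroyed in a single attack. A secondary subtlety is the boundary case where the deviating vertex already immunizes or where $\bar{G}$ is itself a single vulnerable region; I would dispatch these by the same swap/add argument, using $\c>1$ to rule out that a vertex would maintain a redundant or low-value edge rather than re-wiring into the robust component $\hat{G}$. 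Assembling these pieces yields a strictly improving deviation for some player, contradicting that $G$ is an equilibrium and forcing $G$ to be connected.
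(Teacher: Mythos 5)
Your high-level plan --- derive a contradiction by finding a vertex outside the immunized component $\hat{G}$ that profits from wiring into it --- points in the same general direction as the paper's proof, but two essential ingredients are missing, and the second is where the real difficulty of the theorem lives. First, non-triviality only guarantees that $G$ has an edge \emph{somewhere} and an immunized vertex \emph{somewhere}; it does not place them in the same component. The paper must prove this as a separate equilibrium fact (every immunized vertex has an adjacent edge, and all immunized vertices lie in a single component), and that proof already relies on the key device your sketch lacks.

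Second, and more seriously, your ``crux'' --- that the deviator's gain strictly exceeds $\c$ --- cannot be established by arguing that $\hat{G}$ is ``large and reliable'': $|\hat{G}|$ may be as small as two vertices, while $\c$ is only assumed to exceed $1$ and may be arbitrarily large, so a benefit ``on the order of $|\hat{G}|$'' proves nothing. The paper instead extracts the needed bound from the equilibrium condition on a player who has \emph{already} purchased an edge incident to the relevant component: since that player weakly prefers to keep her edge, its marginal connectivity value is at least $\c$ plus her post-drop benefit $\mu'>0$, and this strictly positive surplus transfers to the deviator because the new edge, aimed at an immunized vertex, leaves the attack distribution unchanged. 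Without this transfer argument there is no way to compare the gain to $\c$. Finally, the case in which every targeted region is a singleton and $\hat{G}$ itself contains targeted vertices defeats the generic ``add an edge into $\hat{G}$'' move: the paper must split into several sub-cases, including one in which every candidate edge purchase has marginal benefit \emph{exactly} $\c$ (so the deviation you propose is utility-neutral rather than strictly improving) and an entirely different deviation --- for instance, a leaf of an immunized tree dropping its edge --- has to be exhibited. Your sketch would stall precisely at these tie cases.
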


Together, Lemma~\ref{lem:singletons} and Theorem~\ref{thm:connect} imply that 
any non-trivial equilibrium network with respect to~\maxcarnage~adversary is a connected
network with targeted regions of size 1.
Finally, we state our main result regarding the welfare in such
non-trivial equilibria. 
\begin{thm}
\label{thm:welfare-new}
Let $G$ be a non-trivial Nash or swapstable equilibrium network on $n$ vertices 
with respect to the \maxcarnage~adversary.  If $\c$ and $\b$
are constants (independent of $n$) and $\c>1$ then the welfare of $G$ is
$n^2 - O(n^{5/3})$.
\end{thm}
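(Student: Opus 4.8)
The plan is to write the welfare as (total expected post-attack component size) minus (total spending on edges and immunization), show the spending is negligible, reduce the first term to a single clean ``average damage'' quantity, and then bound that quantity by $O(n^{5/3})$ using equilibrium deviations.

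First I would dispose of the expenditures. By Theorem~\ref{thm:sparse-general} the graph has at most $2n-4$ edges, and since $\c$ and $\b$ are constants the players collectively spend only $\c|E|+\b|\I| = O(n)$. So it suffices to prove that the sum over all players of their expected post-attack component size is $n^2-O(n^{5/3})$. Here the structure of non-trivial \maxcarnage~equilibria does the heavy lifting: by Theorem~\ref{thm:connect} the network $G$ is connected, and by Lemma~\ref{lem:singletons} every targeted region is a singleton, which (as the adversary attacks a maximum vulnerable region) forces \emph{every} vulnerable region to be a single vertex. Hence an attack is a uniformly random vulnerable vertex $v$, each chosen with probability $1/k$ where $k=|\U|$, and it kills only $v$, leaving $G-v$. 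If $c_1^{(v)}\ge c_2^{(v)}\ge\cdots$ are the component sizes of $G-v$, then since $\sum_j c_j^{(v)}=n-1$ the total expected component size equals $\frac1k\sum_{v\in\U}\sum_j (c_j^{(v)})^2 = (n-1)^2 - \frac1k\sum_{v\in\U}D(v)$, where $D(v):=(n-1)^2-\sum_j (c_j^{(v)})^2$ is the number of ordered pairs of survivors that $v$ separates. The whole theorem thus reduces to the single estimate $\frac1k\sum_{v\in\U}D(v)=O(n^{5/3})$.

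Next I would bound the damage of one attack by the amount it disconnects. Writing $r(v):=(n-1)-c_1^{(v)}$ for the number of vertices left outside the largest surviving component, one has $D(v)\le r(v)\bigl(2(n-1)-r(v)\bigr)\le 2(n-1)\,r(v)$. I then call $v$ \emph{critical} when it both is attacked with non-negligible probability and leaves a large remainder, concretely when $r(v)\ge t$ for a threshold I will set to $t=n^{2/3}$. Non-critical vertices are immediately harmless: each contributes at most $2(n-1)t$, so together they add at most $\frac1k\cdot k\cdot 2(n-1)t = O(n^{5/3})$ to the average damage. Everything therefore comes down to showing that critical vertices are \emph{few} and cannot collectively do more than $O(n^{5/3})$ damage, i.e.\ (using only $D(v)\le (n-1)^2$) that the number $N$ of critical vertices is $O(k\,n^{-1/3})$.

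The crux — and the step I expect to be the main obstacle — is bounding $N$ from the equilibrium conditions, and here two deviations work in tandem. The first is a \emph{reconnection}: if attacking $v$ disconnects the graph, pick any vertex $u$ outside the largest surviving component (size $c_1^{(v)}$) and let it buy a single edge to an immunized vertex of that component (one exists, because singleton vulnerable regions force every component of size $\ge 2$ to contain an immunized vertex). This deviation leaves the attack distribution unchanged (it keeps $u$'s region a singleton), only helps $u$, and strictly helps when $v$ is attacked, where $u$ gains $c_1^{(v)}$; equilibrium thus forces $\frac1k\,c_1^{(v)}\le\c$, so the largest surviving component after any cut is $O(\c k)$. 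In particular a critical vertex must shatter the graph into pieces all of size $O(\c k)$, hence must have degree $\Omega(n/(\c k))$, which by the $O(n)$ edge budget already caps $N$ at $O(k)$. The second deviation, needed to sharpen $O(k)$ to $O(k\,n^{-1/3})$, is \emph{self-immunization}: a vulnerable vertex whose own attack is this destructive would rather pay $\b$ to immunize and survive, becoming a firewall, and $\frac1k$ times the resulting gain must be at most the constant $\b$. Balancing how many such high-degree, high-damage vertices can coexist against both the global edge budget and the immunization constraint is what produces the threshold $t=n^{2/3}$ and the final loss $O(n^{5/3})$. The delicate points I anticipate are that an immunization deviation \emph{re-targets} the \maxcarnage~adversary (changing $k$ and the induced distribution), and that a cut vertex may have several branches, so that no single reconnection edge repairs all of its damage at once; reconciling this multi-branch, re-targeting bookkeeping with the sparsity bound is where the real work lies. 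Finally I would note that only the singleton structure and the near-uniform attack probability are used, so the argument — with the analogous connectivity and singleton lemmas — carries over to the other \nice~adversaries treated at the end of the section.
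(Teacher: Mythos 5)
Your reduction is sound and runs parallel to the paper's own: Theorem~\ref{thm:connect} and Lemma~\ref{lem:singletons} give a connected graph with singleton vulnerable regions, so the attack is uniform over the $k=|\U|$ vulnerable vertices and kills only the attacked vertex; expenditures are $O(n)$ by Theorem~\ref{thm:sparse-general}; the welfare equals $(n-1)^2-\tfrac1k\sum_{v\in\U}D(v)-O(n)$; and the non-critical vertices contribute only $O(n^{5/3})$. The gap is exactly the step you flag as ``the main obstacle,'' and it is a missing idea rather than deferred bookkeeping: the two deviations you derive are provably too weak to give $N=O(kn^{-1/3})$. Your reconnection deviation charges the new edge only against the single attack on $v$, yielding $c_1^{(v)}\le \c k$; your self-immunization deviation, once the re-targeting from $1/k$ to $1/(k-1)$ is accounted for, yields $\tfrac{1}{k(k-1)}\sum_{u\ne v}\CC_v(u)\le\b$, i.e.\ a bound of $\b k$ on $v$'s \emph{own} average surviving component --- a constraint on how attacks on others affect $v$, not on how $v$'s attack damages others; and your degree argument gives only $N=O(\c k)$. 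All three become vacuous precisely in the regime $k=\Theta(n)$. Concretely, take a chain of $m=n^{3/4}$ hub--spoke clusters of size $n^{1/4}$ (immunized hub, vulnerable spokes), with consecutive hubs joined by a vulnerable bridge vertex: every bridge has degree $2$, every $c_1^{(v)}\le n-1\le\c k$, and every vertex's average surviving component is at most $n-1\le\b k$ (for, say, $\b\ge 1$), so every inequality you derive is satisfied, yet $\tfrac1k\sum_v D(v)=\Theta(n^{7/4})=\omega(n^{5/3})$. Hence your constraints cannot imply the theorem.

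The missing ingredient is the deviation the paper's proof is built around: on the block-cut tree, a \emph{single} edge purchased between an immunized vertex deep in one branch and an immunized vertex in another branch reaps a benefit of at least $\epsilon n-1$ on \emph{every} attack to \emph{any} of the heavy cut vertices separating the two endpoints, because each such attack would otherwise disconnect the two subtrees. Its expected marginal benefit is therefore $p_v(\epsilon n-1)$, where $p_v$ is the \emph{total} probability of attack along that path, and equilibrium forces $p_v=O(n^{-2/3})$ per branch; since each branch in $\D_{r'}$ has $|T_v|\ge\epsilon n$ there are at most $1/\epsilon=O(n^{1/3})$ branches, so the total probability of attacking a heavy cut vertex is $O(n^{-1/3})$ --- which is your $N=O(kn^{-1/3})$. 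This path-accumulating deviation is exactly what destroys the chain example above (an end hub gains roughly $\tfrac{m}{k}\cdot\tfrac n2\gg\c$ from one edge to the opposite end), and organizing the heavy cut vertices into few such paths is what the paper's rooting and re-rooting of the block-cut tree accomplishes. Without it, your outline does not close.
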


\noindent\textbf{Block-Cut Tree Decomposition:}
Before proving Theorem~\ref{thm:welfare-new}, we describe the 
notion of block-cut tree decomposition of a graph.
The \emph{block-cut tree} decomposition
(see e.g.~\cite{WestGraphTheory2000}) of an undirected graph
$G=(V,E)$, denoted by $T = (B \cup C, E')$, 
is defined
as follows. A vertex
$b\in B$ (called a \emph{block}) corresponds to some subset $V_b$ of
$V$ which is a maximal two-connected component in $G$. A vertex
$v\in C$ (called a \emph{cut} vertex) corresponds to some vertex
$v\in V$, the removal of which would increase the number of connected
components in $G$; an edge $e = (b, v) \in E'$ means that $v \in V_b$,
 and that the removal
of $v$ from $G$ would disconnect $V_b\setminus\{v\}$ from some other
part of $G$. 
In contrast to the standard convention,  we assume throughout that cut vertices are not part 
of the blocks their removal would disconnect.  This is simply to avoid over-counting. 
Also, note that all the leaves in $T$ must be blocks since any cut vertex has degree at least $2$.
The decomposition of any undirected graph $G$ can be
efficiently computed in $O(|E|+|V|)$ time.

We define the \emph{size} of a block $b$ (denoted by $|b|$) to be
number of vertices in $V_b$ (which is $|V_b|$).  Also we define the
size of a subtree $T_v$, rooted at $v\in B\cup C$ (denoted by $|T_v|$) to
be the number of vertices contained in the union of all blocks and cut vertices in
$T_v$. We now sketch the proof of Theorem~\ref{thm:welfare-new} and defer the
full proof, which is rather involved, to Appendix~\ref{sec:missing-connectivity}.\\
\noindent
\emph{Proof Sketch for Theorem~\ref{thm:welfare-new}.}
Theorem~\ref{thm:connect} implies that $G$ is connected.
  Also, Lemma~\ref{lem:singletons} implies 
  that all the targeted regions of $G$
  (if there are any) are singletons.
Furthermore, since
  there are at most $2n-4$ edges in $G$ by
  Theorem~\ref{thm:sparse-general} and the number of immunized
  vertices is at most $n$, the collective expenditure of vertices in
  $G$ is at most $\cm:=(2n-4) \c + n \b$.

Let $T=(B\cup C, E')$ be the block-cut tree decomposition of $G$.  
An attack to a targeted non-cut
vertex in any block leaves $G$ with a single connected
component after attack. However, an attack to a targeted cut vertex 
can disconnect $G$. So to analyze welfare, we only consider
the targeted cut vertices in $T$. Moreover we only focus on
targeted cut vertices of $T$ that an attack on such
vertices sufficiently reduces the size of the largest connected
component post-attack. Let
$\epsilon=2 \sqrt{\c}/n^{1/3}$.  A targeted cut vertex $v$
is \emph{heavy} if after an attack to $v$, the size of
the largest connected component in $G\setminus\{v\}$ is strictly less
than $(1-\epsilon)n$. If $G$ is a non-trivial
equilibrium, we show that the total probability of
attack to heavy cut vertices is small. So with high probability
the network retains a large connected component after attack thus the welfare is high.

Root $T$ arbitrarily on some targeted cut vertex $r\in C$.  If there
is no such cut vertex, then the size of largest connected component in
$G$ after any attack is at least $n-1$. So the social welfare is at least $(n-1)^2-\cm$ and we are done.  So assume $r$ exists
and consider the set of cut vertices $\D_r\subseteq C$ such that for
all $v\in \D_r$
(a) $v$ is targeted,
(b) $|T_v|\ge \epsilon  n$, and
(c) no targeted cut vertex $v' \in T_v \setminus \{ v \}$ has the
property that $|T_{v'}|\ge \epsilon n$ i.e. $v$ is the deepest
targeted cut vertex in $T_v$ satisfying (b).
Each $v\in \D_r$ is a heavy cut vertex (but there might be other 
heavy cut vertices in $T$).  Consider
two cases based on the size of $\D_r$: (1) $|\D_r|=1$
and (2) $|\D_r| > 1$.

In case (1) where $|\D_r|=1$, let $\D_r=\{v\}$. Consider the
following two cases: 1(a) $v=r$ and 1(b) $v\ne r$ where $r$ is the root
of the tree.

\begin{figure}[h]
\centering
\begin{minipage}[c]{.2\textwidth}
\centering
\scalebox{.5}{
\begin{tikzpicture}
[scale=0.65, every node/.style={circle,draw=black, minimum size=0.7cm}, red node/.style = {circle, fill = red, draw},  gray node/.style = {circle, fill = blue, draw}]
\node [red node] (2) at  (0, 10){$v$};
\node [draw,rectangle,color=white,minimum width=1cm,minimum height=0.6cm,label=$$] (4) at (0, 8.4) {$$};
\node [draw,rectangle,color=white,minimum width=1.2cm,minimum height=1cm,label=$$] (7) at (-2, 8.5) {$$};
\node [draw,rectangle,color=white,minimum width=1cm,minimum height=1.5cm,label=$$] (8) at (2, 8.5) {$$};
\draw(2) to (4);\draw(2) to (7);\draw(2) to (8);
\draw (0,8.9)--(0.5,8)--(-0.5,8)--cycle;
\draw (1.3,9.05)--(0.8,7.1)--(1.8,7.1)--cycle;
\draw (-1.1,9.15)--(-0.6,7.2)--(-1.6,7.2)--cycle;
\end{tikzpicture}}
\caption{\label{fig:case2a}}
\end{minipage}
\begin{minipage}[c]{.25\textwidth}
\centering
\scalebox{.5}{
\begin{tikzpicture}
[scale=0.65, every node/.style={circle,draw=black, minimum size=0.7cm}, red node/.style = {circle, fill = red, draw},  gray node/.style = {circle, fill = blue, draw}]
\node [red node] (1) at  (0, 4){$v$};
\node [red node] (2) at  (0, 10){$r$};
\node (3) at  (0, 7){};
\node (9) at (2,7){};
\node [draw,rectangle,color=black,minimum width=1cm,minimum height=0.6cm,label=$$] (4) at (0, 8.5) {$$};
\node [draw,rectangle,color=black,minimum width=1cm,minimum height=0.6cm,label=$$] (7) at (-2, 8.5) {$b$};
\node [draw,rectangle,color=black,minimum width=1cm,minimum height=0.6cm,label=$$] (8) at (2, 8.5) {$$};
\node [draw,rectangle,color=black,minimum width=1cm,minimum height=0.6cm,label=$$] (5) at (0, 5.5) {$$};
\node [draw,rectangle,color=black,minimum width=1cm,minimum height=0.6cm,label=$$] (10) at (2, 5.5) {$$};
\draw (0,4.8)--(-1.2,3.5)--(1.2,3.5)--cycle;
\draw(2) to (4);\draw(4) to (3);\draw(2) to (7);\draw(2) to (8);\draw(3) to (5);\draw(1) to (5);
\draw(9) to (10); \draw(9) to (8);
\end{tikzpicture}}
\caption{\label{fig:case2b2}}
\end{minipage}
\begin{minipage}[c]{0.39\textwidth}
\centering
\scriptsize{
Figure~\ref{fig:case2a}: Case 1(a); $v$ is the only heavy cut vertex
 and is the root of $T$. The triangles denote the subtrees
 rooted at the child blocks of $v$.\\
Figure~\ref{fig:case2b2}: Case 1(b2); $v\ne r$ and either $r$ or a vertex in $b$ has a 
beneficial deviation. The triangle denotes
the subtree rooted at $v$.}
\end{minipage}
\end{figure}

In 1(a), let $p$
be the probability of attack to $v$.
We show that $p$
is small or else $v$
  would immunize.  Also if any vertex other than $v$
is attacked, the size of the largest connected component 
post-attack is at least $(1-\epsilon)n$
(see Figure~\ref{fig:case2a}).  These imply  the claimed welfare.

For $1(b)$, observe that the targeted cut vertices on the path
from $v$ to $r$ (the root) are the only possible heavy cut
vertices (counting both $v$ and $r$).  Let $p_v$ denote the probability that some heavy cut
vertex on this path is attacked. 
Consider two cases: 1(b1) $ p_v \leq \sqrt{\c} n^{-1/3} $, and 1(b2)
$p_v > \sqrt{\c} n^{-1/3}$.  In 1(b1) the welfare is
as claimed because the 
probability of attack to heavy cut vertices is small. Moreover, 
1(b2) cannot happen at equilibrium because an immunized vertex
in a child block of $r$ which is not on the path to $v$ has a profitable deviation (see Figure~\ref{fig:case2b2}).

In case (2), let $r'$ be a cut vertex that is the \emph{lowest common
  ancestor} of vertices in $\D_{r}$. If $r'\ne r$, we root the tree on
$r'$ and repeat the process of finding heavy cut vertices. Note that
$\D_{r}\subseteq \D_{r'}$ since we might add some additional heavy cut
vertices to $\D_{r'}$ (see Figures~\ref{fig:before}~and~\ref{fig:after}).

\begin{figure}[h]
\centering
\begin{minipage}[b]{0.22\textwidth}
\centering
\scalebox{.48}{
\begin{tikzpicture}
[scale=0.7, every node/.style={circle,draw=black, minimum size=0.7cm}, red node/.style = {circle, fill = red, draw},  gray node/.style = {circle, fill = blue, draw}]
\node [red node] (1) at  (0, 4){$v_2$};
\node [red node] (12) at  (-2.5, 4){$v_1$};
\node [red node] (2) at  (0, 10){$r$};
\node (3) at  (0, 7){$r'$};
\node [draw,rectangle,color=black,minimum width=1cm,minimum height=0.6cm,label=$$] (4) at (0, 8.5) {$$};
\node [draw,rectangle,color=black,minimum width=1cm,minimum height=0.6cm,label=$$] (7) at (-2, 8.5) {$$};
\node [draw,rectangle,color=black,minimum width=1cm,minimum height=0.6cm,label=$$] (8) at (2, 8.5) {$$};
\node [draw,rectangle,color=black,minimum width=1cm,minimum height=0.6cm,label=$$] (11) at (-2, 5.5) {$$};
\node [draw,rectangle,color=black,minimum width=1cm,minimum height=0.6cm,label=$$] (5) at (0, 5.5) {$$};
\draw (0,4.8)--(-1.2,3.5)--(1.2,3.5)--cycle;
\draw (-2.2, 4.9)--(-4.1,3.5)--(-1.8,3.5)--cycle;
\draw(2) to (4);\draw(4) to (3);\draw(2) to (7);\draw(2) to (8);\draw(3) to (5);\draw(1) to (5);
\draw(11) to (12); \draw(11) to (3);
\end{tikzpicture}}
\caption{\label{fig:before}}
\end{minipage}
\begin{minipage}[b]{0.24\textwidth}
\centering
\scalebox{.5}{
\begin{tikzpicture}
[scale=0.7, every node/.style={circle,draw=black, minimum size=0.7cm}, red node/.style = {circle, fill = red, draw},  gray node/.style = {circle, fill = blue, draw}]
\node [red node] (1) at  (0, 4){$v_2$};
\node [red node] (12) at  (-2.5, 4){$v_1$};
\node [red node] (15) at  (2.5, 4){$r$};
\node (3) at  (0, 7){$r'$};
\node [draw,rectangle,color=black,minimum width=1cm,minimum height=0.6cm,label=$$] (11) at (-2, 5.5) {$$};
\node [draw,rectangle,color=black,minimum width=1cm,minimum height=0.6cm,label=$$] (5) at (0, 5.5) {$$};
\node [draw,rectangle,color=black,minimum width=1cm,minimum height=0.6cm,label=$$] (10) at (2, 5.5) {$$};
\draw(3) to (5);\draw(1) to (5);\draw(11) to (12); \draw(11) to (3);\draw(10) to (3);\draw(15) to (10);
\draw (0,4.8)--(-1.2,3.5)--(1.2,3.5)--cycle;
\draw (-2.2, 4.9)--(-4.1,3.5)--(-1.8,3.5)--cycle;
\draw (2.2, 4.9)--(1.8,3.5)--(4,3.5)--cycle;
\end{tikzpicture}}
\caption{\label{fig:after}}
\end{minipage}
\begin{minipage}[b]{0.42\textwidth}
  \scriptsize{An example of re-rooting in case 2. Heavy cut vertices in $\D$ are
  in red. The small rectangles and circles denote blocks and cut
  vertices, respectively. The triangles denote the subtrees rooted at
  critical cut vertices. Fig.~\ref{fig:before} is before
  and Fig.~\ref{fig:after} is after re-rooting.}
  \end{minipage}
\end{figure}

Note that the vertices in $\D_{r'}$ and the targeted cut vertices
on the path from a $v\in\D_{r'}$ to $r'$ (new root) are the only
possible heavy cut vertices.  Let $p_v$ denote the
probability that some targeted cut vertex on the path from $v$ to $r'$
is attacked.  
Consider two cases: 2(a)
$\Sigma_{v\in\D_{r'}} p_v \leq n^{-1/3}$, and 2(b)
$\Sigma_{v\in\D_{r'}} p_v > n^{-1/3}$. In 2(a) the
welfare is as claimed because the 
probability of attack to heavy cut vertices is small. Finally 2(b) cannot
happen at equilibrium because an immunized vertex
in a child block of one of vertices in $\D_{r'}$ has a profitable deviation.
\qed

Lastly, although non-trivial linkstable equilibrium networks with respect to the \maxcarnage~adversary
are connected when $\c>1$, the size of targeted regions in such networks can be bigger than 1.
So our proof techniques for Theorem~\ref{thm:welfare-new} might not extend to such networks.
\noindent {\bf Remarks}
We proved our sparsity result with a rather mild restriction on the adversary. 
However, we presented our welfare results with respect to a very specific adversary
-- the \maxcarnage~adversary.
The reader might have noticed that our proofs in this section essentially relied only on
the following two properties of 
the \maxcarnage~adversary: (1) Adding an edge between any 2 vertices (at least 1 of which is immunized)
does not change the distribution of the attack and (2) Breaking a link inside of a
targeted region does not increase the probability of attack to the targeted
region  while at the same time does not decrease the probability of attack to any other vulnerable regions. 
These same properties hold for the \random~adversary and other adversaries that set the probability
of attack to a vulnerable region directly proportional to an increasing function of the size of the vulnerable region.
Thus our welfare results extend to \random~adversary and other such adversaries without any modifications.

However, other natural adversaries might not satisfy these properties (e.g. the \maxdisrupt~adversary does not satisfy the first
property). While the techniques in the welfare proofs are not directly applicable to such adversaries, it is still possible to 
reason about the welfare with respect to such adversaries using different techniques e.g. we
can show that in any non-trivial and \emph{connected} equilibrium with respect to the \maxdisrupt~adversary,
when $\c$ and $\b$ are 
constants (independent of $n$) and $\c>1$, then the welfare is $n^2-O(n^{5/3})$. See Appendix~\ref{sec:max-disruption-welfare} for more details. 
Note that this is
slightly weaker than the statement with respect to the \maxcarnage~adversary, because
we cannot show any non-trivial Nash equilibrium network with respect to the \maxdisrupt~adversary is connected
when $\c>1$.\footnote{In fact, this last statement does not hold when we restrict our attention to non-trivial swapstable equilibrium
networks with respect to the \maxdisrupt~adversary even when $\c>1$.} 
We leave the question of whether arguing about welfare is possible using unified techniques 
for a wide class of adversaries as future work.
\section{Simulations}
\label{sec:exp}

We complement our theory
with simulations investigating various properties of
swapstable best response dynamics. 
Again we focused on the \maxcarnage~adversary and
implemented a simulation allowing the specification of the
following parameters: number of players $n$; edge cost
$\c$; immunization cost $\b$; and initial edge density.  The first
three of these parameters are as discussed before
but the last is new and specific to the simulations. Note that for any
$\c \geq 1$,  empty graph is a Nash equilibrium. Thus to sensibly study
any type of best response dynamics, it is necessary to ``seed'' the
process with at least some initial connectivity.  As for motivation, one could view the initial edge purchases as 
occurred prior to the introduction of attack and immunization. 
We examine simulations starting both from very sparse initial
connectivity  and rather dense initial connectivity, for varying combinations 
of the other parameters.  In all cases the initial connectivity was chosen randomly via the Erd\H{o}s-Renyi model.
 
Our simulations proceed in {\em rounds\/}, where each round consists
of a {\em swapstable best response update\/} for all $n$ players in
some fixed order. More precisely, in the update for player $i$ we fix
the edge and immunization purchases of all other players, and compute
the expected payoff of $i$ if she were to alter her current action
according to swap deviations stated in Section~\ref{sec:model}.
Swapstable dynamics is a rich but ``local'' best response process, and
thus more realistic than full Nash best response dynamics\footnote{The
computational complexity of Nash best response was unknown to us at the time of preparing this document.
Very recently, this question has been studied
by \citet{br16} for our game with respect to the \maxcarnage~and~\random~adversaries.} from a bounded rationality
perspective.  We also note that the phenomena we report on here appear
to be qualitatively robust to a variety of natural modifications of
the dynamics, such as restriction to linkstable best
response instead of swapstable,
changes to the ordering of updates, and so on.  Recall that
all of our formal results hold for swapstable as well as Nash
equilibria, so the theory remains relevant for the simulations.
\begin{figure}[h]
\begin{minipage}[b]{0.39\textwidth}
\centering
\captionsetup{width=0.7\textwidth}
 {\includegraphics[width=0.9\textwidth]{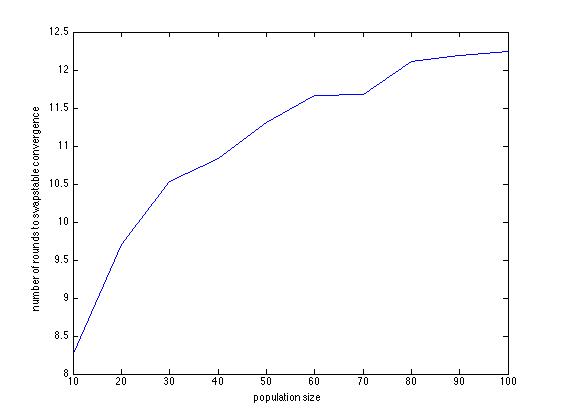}} 
 \caption{
   Average number of rounds for swapstable convergence vs. $n$, for
   $\c = \b = 2$.  \label{fig:convergence}}
\end{minipage}
\begin{minipage}[b]{0.54\textwidth}
\small{
The first question that arises in the consideration of any kind of
best response dynamic is whether and how quickly it will converge to the
corresponding equilibrium notion. Interestingly, empirically it
appears that swapstable best response dynamics {\em always\/}
converges rather rapidly. In Figure~\ref{fig:convergence} we
show the average number of rounds to convergence over many trials,
starting from dense initial connectivity (average degree 5), for
varying values of $n$. The growth in rounds
appears to be strongly sublinear in $n$ (recall that each round
updates all $n$ players, so the overall amount of computation is still
superlinear in $n$).  Thus we conjecture the general  and
 fast convergence of swapstable dynamics. See Appendix~\ref{sec:br-cycles} for more details.
}
\end{minipage}
\end{figure}

In Section~\ref{sec:eq-ex}, we gave a number of formal examples of Nash and swapstable equilibria
with respect to the \maxcarnage~adversary.
These examples tended to exhibit a large amount of symmetry, especially those
containing cycles, due to the large number of cases that need to be considered in the proofs.
Figure~\ref{fig:equil} shows a sampling of ``typical'' equilibria found via simulation for $n = 50$, 
\footnote{In these simulations the initial edge density was only $1/(2n)$, so the initial graph was
very sparse and fragmented.}
which
exhibit interesting asymmetries and illustrate the effects of the parameters.

\begin{figure*}[h]
\centering
\begin{minipage}[c]{.3\textwidth}
{\includegraphics[width=0.8\textwidth]{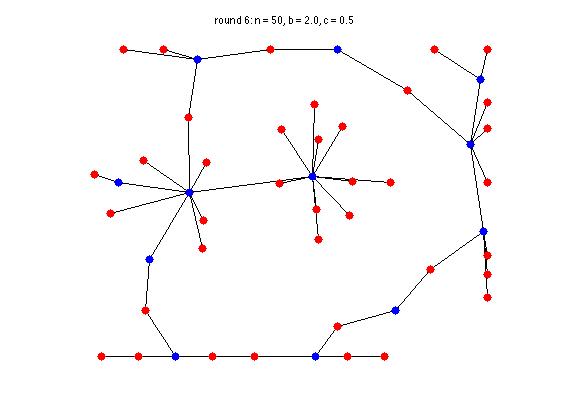}}\qquad
\end{minipage}
\begin{minipage}[c]{.3\textwidth}
{\includegraphics[width=0.8\textwidth]{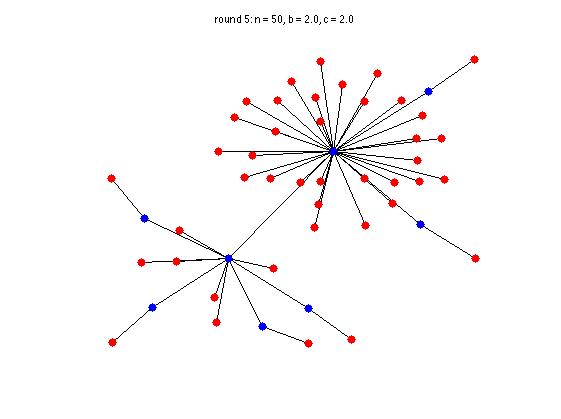}}\qquad
\end{minipage}
\begin{minipage}[c]{.3\textwidth}
{\includegraphics[width=0.8\textwidth]{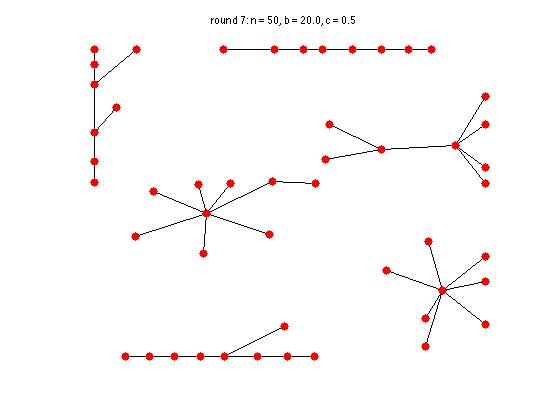}}
\end{minipage}
\caption{Sample equilibria reached by swapstable best response dynamics for $n = 50$.
	Left: $\c = 0.5$, $\b = 2$.
	Middle:  $\c = 2$, $\b = 2$.
	Right: $\c = 0.5$, $\b = 20$.
	\label{fig:equil}}
\end{figure*}

In the left panel of Figure~\ref{fig:equil}, $\c = 0.5$ and $\b = 2$. 
Thus players have an incentive to buy edges even to isolated
vertices as long as they do not increase their vulnerability to the
attack. In this regime, despite the initial disconnectedness of the
graph, we often see equilibria with a long cycle (as shown), with
various tree-like structures attached. In the middle panel we left $\b = 2$ but increased $\c$ to 2.
In this regime cycles are less common due to the higher $\c$.
The equilibrium illustrated is a tree formed by a connected ``backbone'' of immunized players,
each with varying numbers of vulnerable children. Finally, in the
right panel we return to inexpensive edges ($\c = 0.5$), but
greatly increased $\b$ to 20. In this
regime, we see fragmented equilibria with no immunizations.
We note that  unlike the right example which is \emph{trivial}, the examples in the left and
middle are non-trivial equilibria  with high social welfare as predicted by theory.

\begin{figure*}[h]
\centering
\begin{minipage}[c]{.3\textwidth}
{\includegraphics[width=0.8\textwidth]{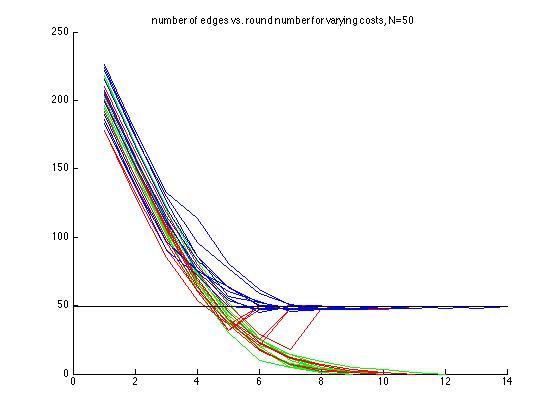}}
\end{minipage}
\begin{minipage}[c]{.3\textwidth}
{\includegraphics[width=0.8\textwidth]{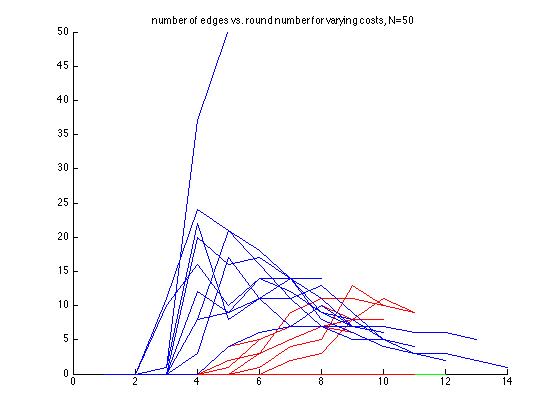}}
\end{minipage}
\begin{minipage}[c]{.3\textwidth}
{\includegraphics[width=0.8\textwidth]{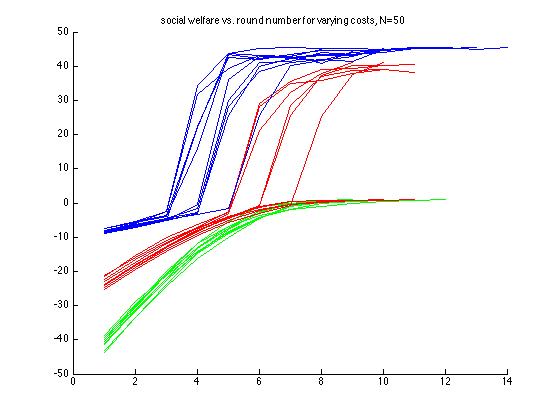}}
\end{minipage}
\caption{Number of edges (left panel), number of immunizations (middle panel), and average welfare (right panel)
vs. number of rounds, for $N = 50$ and varying values for $\b$ and $\c$.
See text for discussion.
\label{fig:dynamics}}
\end{figure*}

Figure~\ref{fig:equil} provides snapshots only at the conclusion
of swapstable dynamics while Figure~\ref{fig:dynamics} examines entire
paths, again at $n = 50$. We started from denser
initial graphs (average degree 5), and each panel
visualizes a different quantity per number of rounds,
for 3 cost regimes:
inexpensive cost $\c = \b = 2$ (blue);
moderate cost $\c = \b = 6$ (red);
and
expensive cost $\c = \b = 10$ (green). In each panel there are 10 simulations for each cost regime.

In the left panel, we show the evolution of the total number of edges ($y$ axis)
in the graph over successive rounds ($x$ axis). In all
regimes, there is initially a precipitous decline in connectivity, as
the overly dense initial graph cannot be supported at equilibrium. 
So in the early rounds all players are
dropping edges. The ultimate connectivity, however,
depends on the cost regime. In the inexpensive regime,
connectivity falls monotonically until it levels out very near the
threshold for global connectivity at $n-1$ (horizontal black line),
resulting in trees or perhaps just one cycle.  In the moderate 
regime, we see a bifurcation; in some trials,
connectivity fall all the way to the empty graph at
equilibrium, while in others fall well below the
$n-1$ tree threshold, but then ``recover'' back to that threshold
(which we discuss shortly). 
In the expensive regime, all trials again
result in a monotonic fall of connectivity all the way
to the empty graph.

For the same cost regimes and trials, the middle panel shows the
number of immunizations purchased over successive rounds. In the
inexpensive regime, immunizations, sometimes many,
are purchased in early rounds. These act as a
``safety net'' that prevents connectivity from falling below the tree
threshold. Typically immunizations grow initially and then decline. 
In the moderate regime, we see that the
explanation for the connectivity bifurcation discussed above can be
traced to immunization decisions. In the trials where connectivity is
recovered, some players eventually choose to immunize and thus provide
the focal points for edge repurchasing. In many trials
resulted in the empty graph, immunizations never occurred (these remain at $y = 0$). In the
expensive regime, no trials are visible because immunizations are
never purchased.

Finally, the right panel shows the evolution of the average social welfare per player
over successive rounds. In the inexpensive regime, welfare increase slowly and modestly
from negative values in the initial graph, then increase dramatically as the benefits
of immunization are realized. In the moderate regime, we see a bifurcation of
welfare corresponding directly to the bifurcation of connectivity. In the expensive regime, all
trials converge from below to the minimum (1-$1/n$) welfare of the empty graph.
Again as theory suggested, the relationship between $\c, \b$ and $n$
is determining whether convergence is to a non-trivial equilibrium and thus
high social welfare, or to a highly fragmented network with no immunizations and low social welfare.

We conclude by noting that for many
parameters, the dynamics above result in
heavy-tailed degree distributions --- a property commonly observed in
large-scale social networks that is easy to capture in
stochastic generative models (such as preferential attachment), but more
rare in strategic network formation.  Across 200 simulations for $n = 100$, $\c = 0.5$ 
and $\b = 2$, we computed the ratio of the maximum
to the average degree in each equilibrium found.  The
lowest, average and maximum ratio observed were 6, 15.8, and 41, respectively (so the highest degree is consistently an order of magnitude
greater than the average or more). Moreover, in all 200 trials the
highest-degree vertex chose immunization, despite the average rate of
immunization of 23\% across the population. Thus an amplification process seems to be 
 at work, where vertices that immunize early become the recipients of many edge purchases, 
 since they provide other vertices connectivity benefits that are relatively secure against
attack without the cost of immunization. 
\section{A Behavioral Experiment}
\label{sec:beh}
To complement our theory
 and simulations, 
we conducted a
behavioral experiment on our game with 118
participants. The participants were students in an undergraduate
survey course on network science at the University of Pennsylvania. As
training, participants were given a detailed document and lecture on
the game, with simple examples of payoffs for players on small graphs
under various edge purchase and immunization decisions.  (See
\url{http://www.cis.upenn.edu/~mkearns/teaching/NetworkedLife/NetworkFormationExperiment2015.pdf}
for the training document provided to participants.)  Participation
was a course requirement, and students were instructed that their
grade on the assignment would be exactly equal to their payoffs
according to the rules of the game. Students thus had strong
motivation to think carefully about the game. There was a 2-day gap
between the training lecture and the experiment, so subjects had time
to contemplate strategies; they were instructed not to discuss
the experiment with each other during this period, and that doing so
would be considered cheating on the assignment.

We again focused on the \maxcarnage~adversary in this section. 
Also costs of $\c = 5$ and
$\b = 20$ were used for the following 
twofold reasons.  First, with $n = 118$ participants (so a
maximum connectivity benefit of 118 points), it felt that these
values made edge purchases and immunization significant expenses and
thus worth careful deliberation.  
Second, running swapstable best response simulations using
these values generally resulted in non-trivial
equilibria with high welfare, whereas raising $\c$ and $\b$
significantly generally resulted in empty or fragmented graphs with
low welfare.

In a game of such complexity, with so many participants, it is 
unreasonable and uninteresting to formulate the experiment as a
one-shot simultaneous move game. Rather, some form of 
communication must be allowed. We chose to conduct the
experiment in an open courtyard
with the single ground rule that {\em all conversations be
  quiet and local\/} i.e. in order to
hear what a participant was saying to others, one should have to stand
next to them. The goal was to permit communication amongst small
groups of participants but to prevent global coordination
via broadcasting. The ``quiet rule'' was enforced by several proctors
for the experiment.

\begin{figure}
\centering
\captionsetup{width=0.8\textwidth}
{\includegraphics[width=0.3\textwidth]{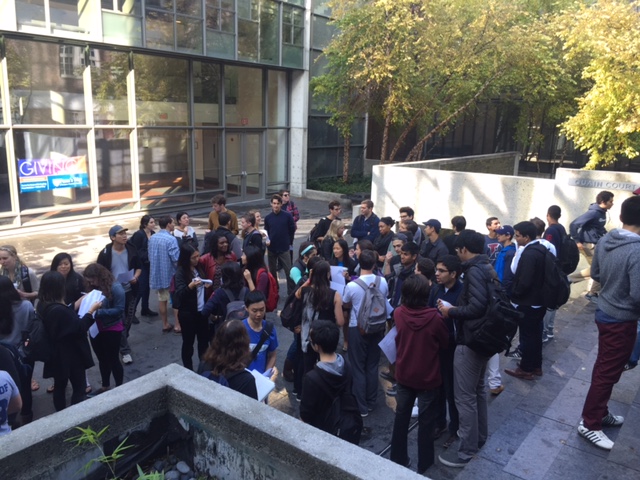}}\qquad
{\includegraphics[width=0.3\textwidth]{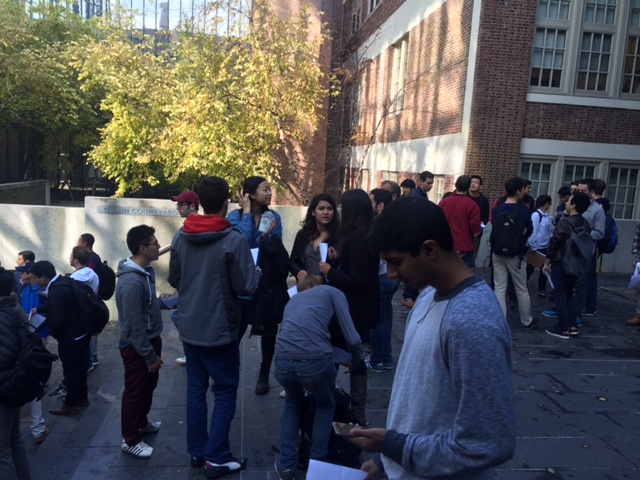}}
\caption{\label{fig:courtyard}\small Experiment participants gathered
  quickly in small groups that reformed frequently during the
  experiment.  }
\end{figure}

Other than the quiet rule, participants were told there were no
restrictions on the nature of conversations. In particular, they
were free to enter agreements, make promises or threats  and move freely in the courtyard.
However, it was also made clear to them that any agreements or
bargains struck would {\em not\/} be enforced by the rules of the
experiment and thus were non-binding. Each subject was given a handout
that simply required them to indicate which other subjects they chose
to purchase edges to (if any), and whether or not they chose to
purchase immunization. The handout contained a list of subject names,
along with a unique identification number for each subject that was
used to indicate edge purchases. Thus subjects knew the actual names
of the others as well as their assigned ID numbers.  
An entire class session was devoted to the experiment, but subjects were
free to (irrevocably) turn in their handout at any time and leave the
experiment. Thus subjects committed to their actions and exited
sequentially, and the entire duration was approximately 30 minutes.
During the experiment, subjects tended to gather quickly in small
discussion groups that reformed frequently, with subjects moving
freely from group to group. (See Figure~\ref{fig:courtyard}). It is clear from the outcome that despite adherence
to the quiet rule, the subjects engaged in widespread coordination via
this rapid mixing of small groups.

In the left panel of Figure~\ref{fig:finalbeh}, we show the final undirected network formed by the
edge purchases and immunization decisions. The graph is clearly anchored by two main immunized
hub vertices, each with many spokes who purchased their single edge to the respective hub.
These two large hubs are both directly connected, as well as by a longer ``bridge'' of 
three vulnerable vertices. There is also a smaller hub with just a handful of spokes, again connected
to one of the larger hubs via a chain of two vulnerable vertices.

In terms of the payoffs, inspection of the behavioral network reveals
that there are two groups of three vertices that are the largest
vulnerable connected components, and thus are the targets of the
attack.  These 6 players are each killed with probability $1/2$ for an expected 
payoff that is only half that of the wealthiest
players (the vulnerable spokes of degree 1). In between are the players 
who purchased immunization including the
three hubs as well as two immunized spokes.  The immunized spoke of
the upper hub is unnecessarily so, while the immunized spoke in the
lower hub is in fact best responding --- had they not purchased
immunization, they would have formed a unique largest vulnerable
component of size 4 and thus been killed with certainty.
\begin{figure}[h]
\centering
\begin{minipage}[c]{.19\textwidth}
\centering
\includegraphics[width=0.9\textwidth]{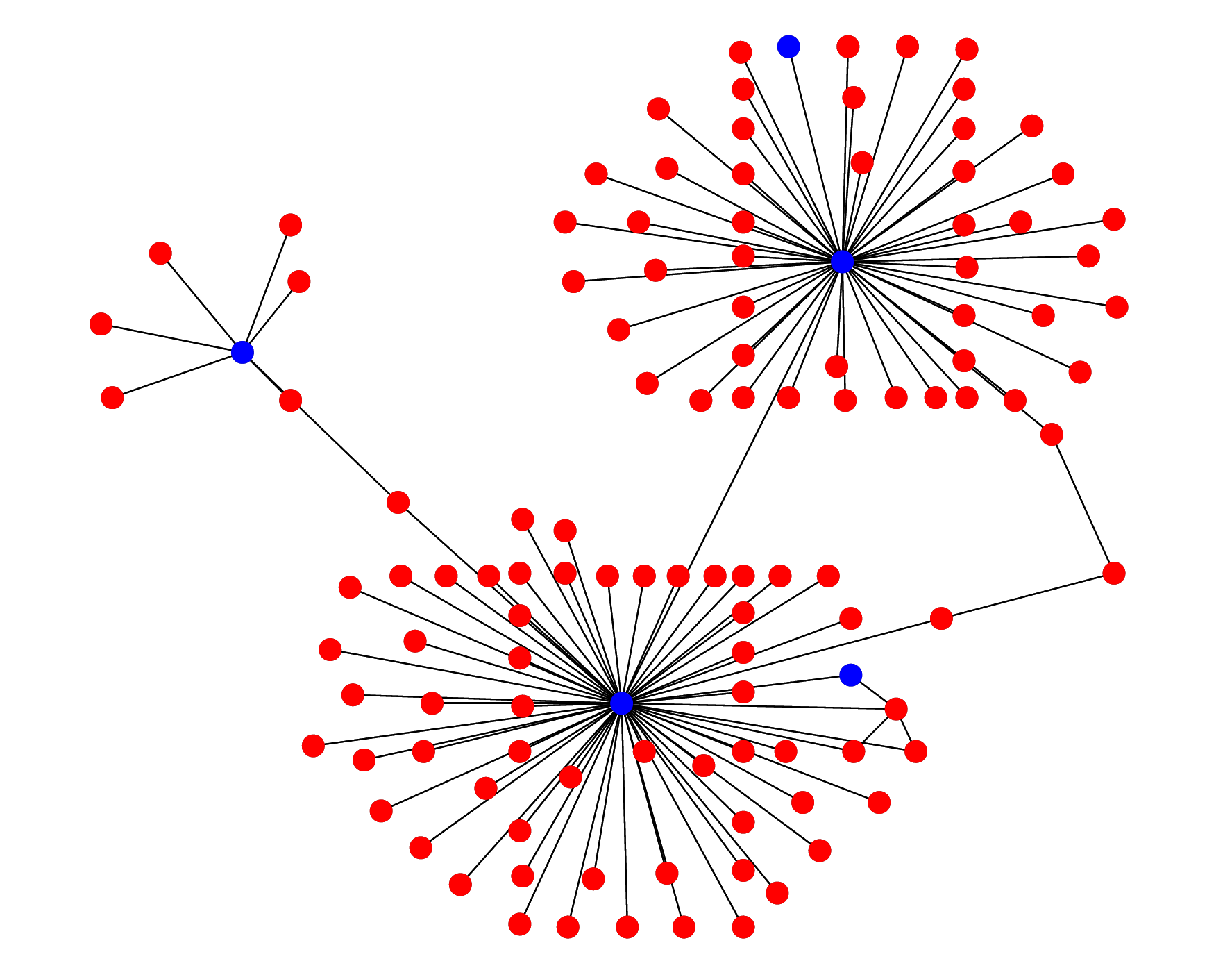}
\end{minipage}
\begin{minipage}[c]{.19\textwidth}
\includegraphics[width=0.9\textwidth]{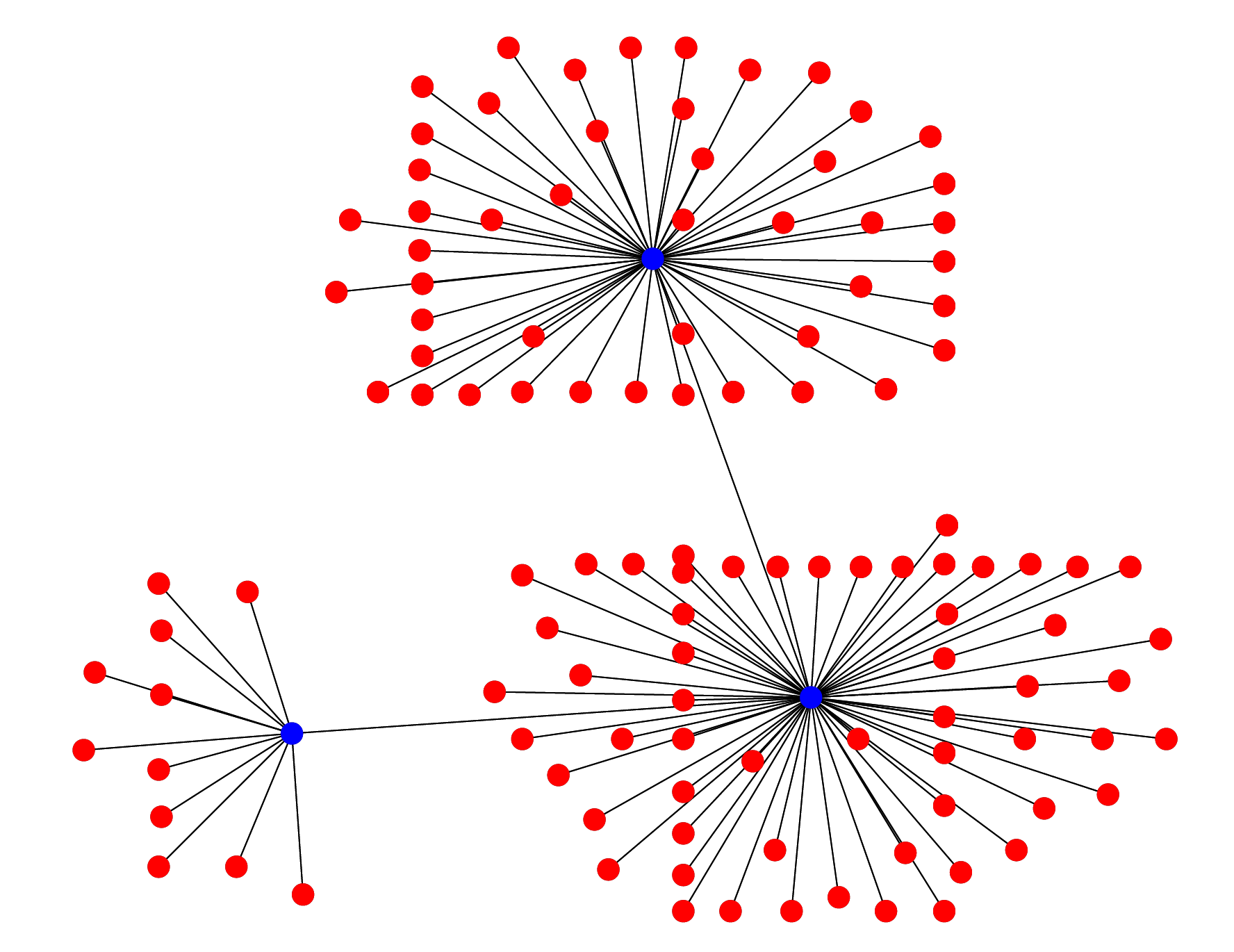}
\end{minipage}
\begin{minipage}[c]{.55\textwidth}
\caption{ Left: the final undirected network formed
	by the edge purchases and immunization decisions (blue for immunized, red for vulnerable).
	Right: a ``nearby'' Nash network.}
\label{fig:finalbeh}
\end{minipage}
\end{figure}

It is striking how many properties the behavioral network shares with
the theory in
Sections~\ref{sec:sparsity} and~\ref{sec:welfare}
 and the simulations in
Section~\ref{sec:exp}: 
multiple hub-spoke structures with sparse
connecting bridges, resulting in high welfare and a
heavy-tailed degree distribution; a couple of cycles; and multiple
targeted components. 
We can quantify how far the behavioral network is
from equilibrium by using it as the starting point for swapstable best
response dynamics and running until convergence. In the right panel of
Figure~\ref{fig:finalbeh}, we show the resulting Nash 
network reached from the behavioral network in only 4 rounds
of swapstable dynamics, and with only 15 of 118 vertices updating their
choices. The dynamics simply ``clean up'' some suboptimal behavioral
decisions --- the vulnerable bridges between hubs are replaced by
direct edges, the other targeted group of three spokes drops theirs
fatal edges, and immunizing spokes no longer do so.

Participants
were also required to complete an email survey shortly after the experiment, 
in which they were asked to comment
on any strategies they contemplated prior to the experiment; whether and how those strategies
changed during the experiment; and what strategies or behaviors they observed in other participants. 
The responses to these surveys are quite illuminating regarding both individual and collective behavior.
We point out some of these responses in the rest of this section.

Many subjects reported entering the experiment with not just a strategy for their own behavior, but
in fact some kind of ``master plan'' they hoped to convince others to join. One frequently reported
master plan involved variations on simple cycles:
\begin{small}
\begin{itemize}
\item
\emph{Going into the experiment my goal was to have everyone connect with one another and take the immunity in a circle.}
\item
\emph{I tried to create a cycle to start. Then I wanted to convince everyone to join our cycle. I figured this would work well because if any of us got immunity it would decrease the probability of being infected.}
\end{itemize}
\end{small}
Interestingly, little thought seems to have been given to how to actually quickly coordinate a global ordering of the participants in a cycle
via only quiet conversation in small groups.

Another frequently cited plan involved the hub and spoke structure that was largely realized:
\begin{small}
\begin{itemize}
\item
\emph{I thought about the possibility of everybody in the class connecting to one person and that person getting immunized.}
\item
\emph{I thought of the hub and spoke model with a hub being immunized. If everyone agreed to this plan then one individual would take a 20 point hit, and every other player would take a 5 +(x-1)/x*number of vertices in component hit. This would maximize everyone's points. Thankfully we had some volunteers to make use of this method. I believe if everyone did as they should, then we will all have our maximum scores.}
\end{itemize}
\end{small}

The strategies above are largely based on mathematical abstractions, but others reported planning to
use real-world social relationships in their strategies:

\begin{small}
\begin{itemize}
\item
\emph{Beforehand, I planned to connect to other students that I knew personally, to have around 3-4 connections. As 
	well, I was going to make one long-distance ``random edge" to increase the size/diversity of my component.}
\item
\emph{I thought of trying to connect with the decently large group of friends who sat in front as well as one or two other random people and then giving myself immunity in order to ensure that I would be safe if the large component I was connected to got compromised by the virus.}
\item
\emph{Before the experiment my strategy was to create a cycle with my friends in the class and then connect with one person that has immunity [who] could potentially connect me with another component.}
\item
\emph{Being a freshman, and knowing that the freshman class of NETS is pretty tight-knit I thought that it would be a good idea to choose someone from that freshman group, because I thought that the component would be of a decent size, but not the giant component itself.}
\end{itemize}
\end{small}

Of course, of particular interest are the surveys of the two large hubs, one of whom was male (who we shall thus refer to as Hub M, and had ID number 128)
and the other female (Hub F, ID number 127). Hub F reports:\\

\begin{small}
\noindent
\emph{The idea was that there would be one hub, with immunity, that everyone would connect to. This seemed like the best idea, because any other idea would pit communities against each other, and the highest payoff that could be expected from any fight between communities would be 49\%. I was willing to be the hub with a lower score of n-20 so everyone else could get a score of (n-1/n)(n-5) rather than have everyone fight...
I went to my friends and told them to spread the word for no one to buy immunity and for everyone to put down 127 as the only person to whom they'd connect, and explained to my friends my reasoning so they could explain to everyone else. A few minutes in, I realized someone else had had the same idea as me (conveniently, number 128) and so I had to adapt my plan because so many people had already put down 128...
We decided to team up and connect with each other, and both opt for immunity, because that would give us the same results as we'd both originally intended. Technically, only one had to connected to the other, and originally, 128 connected with me, but I decided for diplomatic reasons to connect back, so that it would be somewhat more fair to both sides...
A few people who approached me and 128 ended up switching from 128 to me because, apparently, ``girls are more trustworthy".}\\
\end{small}

Hub F thus seems to report an altruistic motivation for purchasing immunization, hoping to maximize social welfare.
In contrast, Hub M displays a more Machiavellian attitude:\\

\begin{small}
\noindent
\emph{I planned to form a radial network with one person being at the center and all other people will connect to this person and this person only. 
And this person at the central position will buy vaccine and no edges... 
However, the catch of this strategy is that the person at the center will score at least 15 points lower than all other nodes that surround him or her. Assuming that the end result is curved, I theorized that if we partition all the NETS students into two groups, those who are inside this radial network (Group A) and those who are not (Group B). The person at the center will still be better off...
the person at the center of the radial network will not end up being the last person in the class and will achieve a decent score by absolute value (thus there will be less of a disincentive for him to be at the center).}\\
\end{small}

Thus Hub M was willing to immunize in the hopes of actually creating three distinct groups of participants: the ``winners''
who would connect to Hub M; Hub M himself, with slightly lower payoff; and then a large group of ``losers'' who would be deliberately left out of the
immunized hub and spoke structure.

It is clear from the surveys that word quickly spread during the experiment to connect to Hubs F and M, and that many participants
joined this strategy --- though not without some reported mistrust, hesitation and manipulation:

\begin{small}
\begin{itemize}
\item
\emph{Some bought in immediately, some are skeptical. Nature of the conversation is usually about: is the person in the middle going to keep the his promise? What if he screws everyone else?}
\item
\emph{During the experiment I tried to convince as many people as possible to connect with me.  I lied about how many people I was actually connected with.}
\item
\emph{Something I found interesting with this experiment was how people found ways to create artificial trust in others. Something that I myself did and something that others did as well, was to put trust in number 127 by saying that since she marked her paper in pen, circling the option to buy immunity we could trust her. She very easily could have crossed that option off later, the fact that she used pen doesn't mean she for sure was trustworthy, but it is comforting to find a way to trust someone and that was one of the strategies that people, including myself, used. Other ways I overheard others create artificial trust was by saying, lets mark 127 or 128 based on their gender. Some girls marked 127 because she was a girl and some guys marked 128 because he was a guy.}
\end{itemize}
\end{small}

Finally, perhaps the most poignant remarks came from the participant who was responsible for creating one of the two
targeted components, only realizing so in hindsight:\\

\begin{small}
\emph{I missed the collaboration part of the experiment where the entire class came up with a plan so I pretty much used a similar strategy to what I had previously thought up.  I wrote down the name of the girl in my main group (that would have created a chain) and then wrote down two other people that would be my ``random'' connections.  However, I didn’t get immunity because I assumed that one of the three people I had connected to bought immunity...
Upon speaking with someone else about my strategy, I saw how flawed it was.  They explained to me what the class had done...
there is a high chance that I may have killed myself and the three people I connected to.  
If the three people I connected to die, it is because I did not buy immunity.  
If they die, it will be my fault --- not theirs.  I regret implementing my strategy.}
\end{small}

\section{Discussion and Future Work}
\label{sec:discussion}
We mention some areas for further study.
Within our model, the question of whether swapstable best response provably 
converges (as seen empirically) is open.
The benefit function considered here is one of many possible natural choices.
It would be interesting to consider other functions. Another extension includes
\emph{imperfect} immunization which fails with some probability e.g. as a
function of the amount of investment.

We mention two natural variants.
The first is the combination of
our original model with a standard diffusion model for the spread of attack. 
For example, combining with the independent cascade model~\cite{KempeKT03},
 when a targeted vertex is attacked, 
the infection spreads with probability $p$ 
along the  edges from the attacked point for which both endpoints
are unimmunized. This spread then continues until we reach
immunized vertices which again act as firewalls.
Again different adversaries can have different objectives e.g.
the \maxcarnage~adversary will pick an attack point which maximizes the expected spread. 
\citet{WangCW12} showed that computing the spread
in the independent cascade model is  \#P-complete.
This suggests that even before considering the complexity of analysis,
agents' reasoning  about the choice of attack by the adversary can become quite complicated.
Furthermore, due to the probabilistic nature of the spread, it is nontrivial to  establish any 
sparsity properties of the equilibria, because additional overbuilding might
occur to hedge against uncertainty of how the infection will spread. Welfare is yet more difficult
to analyze; unlike the deterministic spread, it is no longer obvious that
a vertex likely to end up in a small component post-attack has a
\emph{single} fixed edge purchase that would greatly improve her
utility, since different spread patterns can disconnect her from
different regions.

The second variant is identical to our current model except that it requires edge purchases 
to be \emph{bilateral}. In this variant, the concept of equilibrium might be replaced by the
notion of \emph{pairwise stability} (see e.g.~\cite{Jackson08}). As a majority of our results hinge upon the analysis of 
unilateral deviations, our current 
analysis cannot be easily modified to accommodate this change. As a first step towards 
this goal, the game we study could be modified by adding a \emph{blocking} action with 0 cost while 
maintaining the unilateral edge formation. Namely
for any edge purchased from player $i$ to $j$, player $j$ can block the edge with no cost.
The blocking action removes both the potential connectivity benefit or risk of contagion from edge 
$(i, j)$ for \emph{both} $i$ and $j$. The first observation is that there are equilibrium networks in our game 
which are not equilibria in this new game with blocking.  Moreover, we can show that in any equilibrium of 
the new game, no player blocks any of edges purchased to her. Finally we can show that all the properties 
of our game (sparsity, connectivity and social welfare) hold in the new game with blocking as well.
\section*{Acknowledgments}
We thank Chandra Chekuri, Yang Li and Aaron Roth for useful suggestions.
We also thank anonymous reviewers for detailed suggestions regarding
some of the proofs. Sanjeev Khanna is supported in part by National Science 
Foundation grants CCF-1552909, CCF-1617851, and IIS-1447470.
\appendix
\bibliographystyle{ACM-Reference-Format-Journals}
\bibliography{bib}


\begin{thebibliography}{00}


\ifx \showCODEN    \undefined \def \showCODEN     #1{\unskip}     \fi
\ifx \showDOI      \undefined \def \showDOI       #1{{\tt DOI:}\penalty0{#1}\ }
  \fi
\ifx \showISBNx    \undefined \def \showISBNx     #1{\unskip}     \fi
\ifx \showISBNxiii \undefined \def \showISBNxiii  #1{\unskip}     \fi
\ifx \showISSN     \undefined \def \showISSN      #1{\unskip}     \fi
\ifx \showLCCN     \undefined \def \showLCCN      #1{\unskip}     \fi
\ifx \shownote     \undefined \def \shownote      #1{#1}          \fi
\ifx \showarticletitle \undefined \def \showarticletitle #1{#1}   \fi
\ifx \showURL      \undefined \def \showURL       #1{#1}          \fi

\bibitem[\protect\citeauthoryear{Alpcan and Baar}{Alpcan and Baar}{2010}]%
        {AlpcanB11}
{Tansu Alpcan} {and} {Tamer Baar}. 2010.
\newblock {\em Network Security: A Decision and Game-Theoretic Approach\/} (1st
  ed.).
\newblock Cambridge University Press.
\newblock
\showISBNx{0521119324, 9780521119320}


\bibitem[\protect\citeauthoryear{Anderson}{Anderson}{2008}]%
        {Anderson08}
{Ross Anderson}. 2008.
\newblock {\em Security Engineering: A Guide to Building Dependable Distributed
  Systems\/} (2nd ed.).
\newblock Wiley Publishing.
\newblock
\showISBNx{9780470068526}


\bibitem[\protect\citeauthoryear{Aspnes, Chang, and Yampolskiy}{Aspnes
  et~al\mbox{.}}{2006}]%
        {AspnesCY06}
{James Aspnes}, {Kevin Chang}, {and} {Aleksandr Yampolskiy}. 2006.
\newblock \showarticletitle{Inoculation Strategies for Victims of Viruses and
  the Sum of Squares Partition Problem}.
\newblock {\it J. Comput. System Sci.} {72}, 6 (2006), 1077--1093.
\newblock


\bibitem[\protect\citeauthoryear{Bala and Goyal}{Bala and Goyal}{2000}]%
        {BalaG00}
{Venkatesh Bala} {and} {Sanjeev Goyal}. 2000.
\newblock \showarticletitle{A Noncooperative Model of Network Formation}.
\newblock {\em Econometrica\/} {68}, 5 (2000), 1181--1230.
\newblock


\bibitem[\protect\citeauthoryear{Bloch and Jackson}{Bloch and Jackson}{2006}]%
        {BlockJ06}
{Francis Bloch} {and} {Matthew Jackson}. 2006.
\newblock \showarticletitle{Definitions of Equilibrium in Network Formation
  Games}.
\newblock {\em International Journal of Game Theory\/} {34}, 3 (2006),
  305--318.
\newblock
\showISSN{0020-7276}


\bibitem[\protect\citeauthoryear{Blume, Easley, Kleinberg, Kleinberg, and
  Tardos}{Blume et~al\mbox{.}}{2011}]%
        {BlumeEKKT11}
{Larry Blume}, {David Easley}, {Jon Kleinberg}, {Robert Kleinberg}, {and}
  {{\'{E}}va Tardos}. 2011.
\newblock \showarticletitle{Network Formation in the Presence of Contagious
  Risk}. In {\em Proceedings of the 12th {ACM} Conference on Electronic
  Commerce}. 1--10.
\newblock


\bibitem[\protect\citeauthoryear{Cerdeiro, Dziubinski, and Goyal}{Cerdeiro
  et~al\mbox{.}}{2014}]%
        {CerdeiroDG15}
{Diego Cerdeiro}, {Marcin Dziubinski}, {and} {Sanjeev Goyal}. 2014.
\newblock \showarticletitle{Contagion Risk and Network Design}.
\newblock {\em Working Paper\/} (2014).
\newblock


\bibitem[\protect\citeauthoryear{Cunningham}{Cunningham}{1985}]%
        {Cunningham85}
{William Cunningham}. 1985.
\newblock \showarticletitle{Optimal Attack and Reinforcement of a Network}.
\newblock {\em Journal of ACM\/} {32}, 3 (1985), 549--561.
\newblock
\showISSN{0004-5411}


\bibitem[\protect\citeauthoryear{Fabrikant, Luthra, Maneva, Papadimitriou, and
  Shenker}{Fabrikant et~al\mbox{.}}{2003}]%
        {FabrikantLMPS03}
{Alex Fabrikant}, {Ankur Luthra}, {Elitza Maneva}, {Christos Papadimitriou},
  {and} {Scott Shenker}. 2003.
\newblock \showarticletitle{On a Network Creation Game}. In {\em Proceedings of
  the 22nd {ACM} Symposium on Principles of Distributed Computing}. 347--351.
\newblock


\bibitem[\protect\citeauthoryear{Goyal}{Goyal}{2007}]%
        {Goyal07}
{Sanjeev Goyal}. 2007.
\newblock {\em Connections: {A}n Introduction to the Economics of Networks}.
\newblock Princeton University Press.
\newblock


\bibitem[\protect\citeauthoryear{Goyal}{Goyal}{2015}]%
        {Goyal15}
{Sanjeev Goyal}. 2015.
\newblock \showarticletitle{Conflicts and Networks}.
\newblock {\em The Oxford Handbook on the Economics of Networks.\/} (2015).
\newblock


\bibitem[\protect\citeauthoryear{Goyal, Jabbari, Kearns, Khanna, and
  Morgenstern}{Goyal et~al\mbox{.}}{2016}]%
        {GoyalJKKM16}
{Sanjeev Goyal}, {Shahin Jabbari}, {Michael Kearns}, {Sanjeev Khanna}, {and}
  {Jamie Morgenstern}. 2016.
\newblock \showarticletitle{Strategic Network Formation with Attack and
  Immunization}. In {\em Proceedings of 12th International Conference on Web
  and Internet Economics}.
\newblock


\bibitem[\protect\citeauthoryear{Gueye, Walrand, and Anantharam}{Gueye
  et~al\mbox{.}}{2011}]%
        {GueyeWA11}
{Assane Gueye}, {Jean Walrand}, {and} {Venkat Anantharam}. 2011.
\newblock \showarticletitle{A Network Topology Design Game: {H}ow to Choose
  Communication Links in an Adversarial Environment}. In {\em Proceedings of
  the 2nd International ICTS Conference on Game Theory for Networks}.
\newblock


\bibitem[\protect\citeauthoryear{Ihde, Ke{\ss}ler, Neubert, Schumann, Lenzner,
  and Friedrich}{Ihde et~al\mbox{.}}{2016}]%
        {br16}
{Sven Ihde}, {Christoph Ke{\ss}ler}, {Stefan Neubert}, {David Schumann},
  {Pascal Lenzner}, {and} {Tobias Friedrich}. 2016.
\newblock \showarticletitle{Efficient Best-Response Computation for Strategic
  Network Formation under Attack}.
\newblock {\em CoRR\/}  {abs/1610.01861} (2016).
\newblock


\bibitem[\protect\citeauthoryear{Jackson}{Jackson}{2008}]%
        {Jackson08}
{Matthew Jackson}. 2008.
\newblock {\em Social and Economic Networks}.
\newblock Princeton University Press.
\newblock


\bibitem[\protect\citeauthoryear{Jordan}{Jordan}{1869}]%
        {Jordan1869}
{Camille Jordan}. 1869.
\newblock \showarticletitle{Sur les Assemblages de Lignes.}
\newblock {\em Journal f\"{u}r die reine und angewandte Mathematik\/}  {70}
  (1869), 185--190.
\newblock


\bibitem[\protect\citeauthoryear{Kearns and Ortiz}{Kearns and Ortiz}{2003}]%
        {KearnsOrtiz03}
{Michael Kearns} {and} {Luis Ortiz}. 2003.
\newblock \showarticletitle{Algorithms for Interdependent Security Games}. In
  {\em Proceedings of the Advances in Neural Information Processing Systems
  16}. 561--568.
\newblock


\bibitem[\protect\citeauthoryear{Kempe, Kleinberg, and Tardos}{Kempe
  et~al\mbox{.}}{2003}]%
        {KempeKT03}
{David Kempe}, {Jon Kleinberg}, {and} {{\'{E}}va Tardos}. 2003.
\newblock \showarticletitle{Maximizing the Spread of Influence through a social
  Network}. In {\em Proceedings of the 9th {ACM} {SIGKDD} International
  Conference on Knowledge Discovery and Data Mining}. 137--146.
\newblock


\bibitem[\protect\citeauthoryear{Kliemann}{Kliemann}{2011}]%
        {Kliemann11}
{Lasse Kliemann}. 2011.
\newblock \showarticletitle{The Price of Anarchy for Network Formation in an
  Adversary Model}.
\newblock {\em Games\/} {2}, 3 (2011), 302--332.
\newblock


\bibitem[\protect\citeauthoryear{Laszka, Szeszl{\'{e}}r, and
  Butty{\'{a}}n}{Laszka et~al\mbox{.}}{2012}]%
        {LaszkaSB12}
{Aron Laszka}, {D{\'{a}}vid Szeszl{\'{e}}r}, {and} {Levente Butty{\'{a}}n}.
  2012.
\newblock \showarticletitle{Linear Loss Function for the Network Blocking Game:
  An Efficient Model for Measuring Network Robustness and Link Criticality}. In
  {\em Proceedings of the 3rd International Conference on Decision and Game
  Theory for Security}. 152--170.
\newblock


\bibitem[\protect\citeauthoryear{Lenzner}{Lenzner}{2012}]%
        {Lenzner12}
{Pascal Lenzner}. 2012.
\newblock \showarticletitle{Greedy Selfish Network Creation}. In {\em
  Proceedings of the 8th International Workshop on Internet and Network
  Economics}. 142--155.
\newblock


\bibitem[\protect\citeauthoryear{Mader}{Mader}{1978}]%
        {Mader78}
{Wolfgang Mader}. 1978.
\newblock \showarticletitle{\"{U}ber die Maximalzahl Kreuzungsfreier H-Wege}.
\newblock {\em Archiv der Mathematik (Basel)\/}  {31} (1978), 387--402.
\newblock


\bibitem[\protect\citeauthoryear{Roy, Ellis, Shiva, Dasgupta, Shandilya, and
  Wu}{Roy et~al\mbox{.}}{2010}]%
        {Royetal10}
{Sankardas Roy}, {Charles Ellis}, {Sajjan Shiva}, {Dipankar Dasgupta}, {Vivek
  Shandilya}, {and} {Qishi Wu}. 2010.
\newblock \showarticletitle{A Survey of Game Theory as Applied to Network
  Security}. In {\em Proceedings of the 43rd Hawaii International Conference on
  System Sciences}. 1--10.
\newblock
\showISBNx{978-0-7695-3869-3}


\bibitem[\protect\citeauthoryear{Wang, Chen, and Wang}{Wang
  et~al\mbox{.}}{2012}]%
        {WangCW12}
{Chi Wang}, {Wei Chen}, {and} {Yajun Wang}. 2012.
\newblock \showarticletitle{Scalable Influence Maximization for Independent
  Cascade Model in Large-Scale Social Networks}.
\newblock {\em Data Mining and Knowledge Discovery\/} {25}, 3 (2012), 545--576.
\newblock


\bibitem[\protect\citeauthoryear{West}{West}{2000}]%
        {WestGraphTheory2000}
{Douglas West}. 2000.
\newblock {\em Introduction to Graph Theory\/} (2 ed.).
\newblock Prentice Hall.
\newblock
\showISBNx{0130144002}


\end{thebibliography}
\section*{APPENDIX}
\section{Difference Between Solution Concepts}
\label{sec:ne-vs-le}
As we mentioned in Section~\ref{sec:model}, linkstable equilibria and
swapstable equilibria are both generalizations of Nash equilibria. In
particular, this implies that \emph{any} Nash equilibrium is also a
swapstable and linkstable equilibrium.  Furthermore, since linkstable
equilibria is also a generalization of swapstable equilibria, then
\emph{any} swapstable equilibrium is also a linkstable equilibrium.

In this section we show that these solutions concepts are in fact
different in our game. To do so, 
we focus on the \maxcarnage~adversary.
We first show an
example of a swapstable equilibrium
with respect to the \maxcarnage~adversary
 which is not a Nash equilibrium
(Example~\ref{ex:ne-vs-swap-1}). We then show an example of a
linkstable equilibrium \maxcarnage~adversary which is neither a swapstable equilibrium nor a
Nash equilibrium (Example~\ref{exp:ne-vs-ls-2}).

First, we state the following useful Lemma.
\begin{lemma}
\label{lem:tree2}
Let $G=(V,E)$ be a Nash, swapstable or linkstable equilibrium network
with respect to the \maxcarnage~adversary. 
The number of targeted regions cannot be one when $|V|>1$.
\end{lemma}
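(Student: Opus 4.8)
The plan is to assume for contradiction that there is exactly one targeted region $T$ and to exhibit, in every configuration, a single player with a strictly improving deviation. For the \maxcarnage~adversary the unique targeted region is the unique vulnerable region of maximum size, and it is attacked with probability $1$; since $T$ is a connected set of vulnerable vertices, an attack started at any vertex of $T$ destroys all of $T$. The natural case split is on whether $|T|\ge 2$ or $|T|=1$.

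For $|T|\ge 2$, every vertex of $T$ is killed with certainty and hence has expected reachability benefit $0$. Because the \maxcarnage~adversary is \nice, Lemma~\ref{lem:tree} applies and $T$ is a tree; as $|T|\ge 2$ it contains at least one edge, purchased by some vertex $a\in T$. I would then argue that $a$ dropping this edge is strictly profitable: it saves the cost $\c$, and since $a$'s benefit was already $0$ it can only weakly increase (whatever region $a$ lands in after the split, its expected component size is nonnegative). This is a single-edge deviation, contradicting equilibrium.

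For $|T|=1$, write $T=\{v\}$. Uniqueness of the maximum forces every other vertex to be immunized (a second vulnerable vertex would form a region of size $\ge 1$, tying or beating $v$), and a redundancy argument makes $v$ isolated, since any edge incident to $v$ yields zero expected benefit as $v$ dies with probability $1$. The heart of the proof is then to play $v$'s incentive to protect itself against some immunized player's incentive to expose itself. If $v$ can profitably immunize, either alone (payoff $1-\b$) or by attaching to a largest immunized component of size $c^{*}$ (payoff $c^{*}+1-\c-\b$), we are done; ruling these out leaves only the regime $\b\ge 1$ and $c^{*}+1\le \c+\b$. In that regime I would take a leaf $\ell$ of a largest immunized tree and compare its payoff $c^{*}-\c-\b$ to the deviation in which $\ell$ becomes vulnerable, dropping its own tree edge if it owns it. After this move $\ell$ ties with $v$ as a maximum vulnerable region, so it is attacked with probability $1/2$ and its expected benefit is roughly $c^{*}/2$ (or $1/2$ if it also dropped its edge); the bound $c^{*}\le \c+\b-1$ is exactly what makes this strictly beat the current payoff, giving the final contradiction.

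I expect the $|T|=1$ case to be the main obstacle, and within it the bookkeeping over which endpoint purchased each tree edge. The cleanest uniform argument uses a genuine \emph{swap} — simultaneously un-immunizing and dropping a purchased edge — for a leaf that bought its own edge, which is immediate for Nash and swapstable; for a leaf whose edge was bought by its neighbor one first shows $\c\le 1$ (otherwise the neighbor profitably drops a size-$1$ edge) and then a plain un-immunization of the leaf already suffices. I would finally double-check that each invoked deviation is admissible in the intended notion of stability, since the combined un-immunize-and-drop move is precisely the kind of swap that separates swapstable from linkstable.
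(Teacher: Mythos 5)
Your proposal is correct and takes essentially the same route as the paper: rule out $|T|\ge 2$ because a doomed edge-buyer strictly gains by dropping her edge, and for $|T|=1$ play the vulnerable singleton's incentive to immunize against an immunized vertex's incentive to un-immunize, with the $1/2$ tie probability after un-immunization supplying the contradiction. The paper realizes the second step by splitting on whether $G$ is empty and using an arbitrary immunized vertex that purchased an edge (deriving $|B|-\b-\c\ge 1/2$, after which the targeted vertex immunizes and attaches for payoff at least $3/2$), while you use a leaf of the largest immunized tree; the two are interchangeable, though your version should also cover the degenerate case where that largest component is an edgeless immunized singleton, where the comparison collapses to $1-\b\ge 1/2$ against $\b\ge 1$.
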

\begin{proof}
Suppose by contradiction that there exists only
  one targeted region $\T$.  Then $\T$ must be a singleton vertex; otherwise, 
 some player in $\T$ must have purchased an edge and the utility of this player is negative as the attack uniquely targets 
 $\T$ (killing it with probability $1$). Let $u$ denote the singleton vertex in $\T$.
 Then the expected utility of $u$ is 0, and neither $u$ nor any other vertex will purchase an edge incident
  to $u$.

  Now since the number of vertices is strictly bigger than 1, there exists
  some other vertex $v\in V$; this other vertex (and any other
  vertices besides $u$) must be immunized since $\T$ (which is a singleton) is the unique
  targeted region. First suppose that $G$ is the empty graph (i.e., $E=\emptyset$), 
  then for $v$'s behavior to be a best response, it
  must be that $\b \leq 1/2$, since she could drop her
  immunization for expected payoff $1/2$ rather than her current expected payoff of
  $1-\b$. Then, $u$'s expected payoff after immunizing would be
  $1-\b \geq 1/2 > 0$, higher than her current expected payoff; a contradiction to $G$ being an equilibrium. 
  Now suppose $G$ is not an empty graph.
  So there is some immunized vertex $v\in V\setminus \{u\}$ who purchases an
  edge to some other $v'\in V\setminus \{u, v\}$. Let $B$ denote the connected component in $G$ that
  contains $v, v'$. Since $v$ is best responding, it
  must be that $|B| - \b - \c \geq 1/2$, since $v$ could have
  chosen to not buy and edge to $v'$ and not to immunize for an expected utility of at least
  $1/2$. This implies that $u$ cannot be best responding in this case, since buying an
  edge to $v$ and immunizing would give $u$ an expected utility of 
  $(|B|+1) - \b - \c \geq 3/2 > 0$, a contradiction to $G$ being an equilibrium.
\end{proof}

We then show that the set of swapstable
equilibria is indeed larger than the set of Nash equilibria in our 
game (see Figure~\ref{fig:ex-ne-vs-swap-1}).
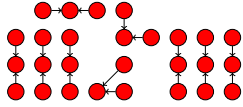
\begin{figure}[h]
\centering\begin{minipage}[c]{.3\textwidth}
\scalebox{.6}{
\centering
\begin{tikzpicture}
[scale = 0.6, every node/.style={circle, fill = red, draw=black}, gray node/.style = {circle, fill = blue, draw}]
\node (1) at  (0, 0){};\node (2) at (-1,-1){};\node (3) at (0, -1){};\node (25) at  (-3, 2){};
\node (26) at (-2,2){};\node (27) at (-1, 2){};\node (6) at (-2,1){};\node (7) at (-2, 0){};
\node (8) at (-2, -1){};\node (16) at (-3,1){};\node (17) at (-3, 0){};\node (18) at (-3, -1){};
\node (19) at (-4,1){};\node (20) at (-4, 0){};\node (21) at (-4, -1){};\node (11) at (1, 1){};
\node (12) at  (0, 1){};\node (13) at  (0, 2){};\node (4) at (2,1){};\node (5) at (2, 0){};
\node (9) at (2, -1){};\node (10) at (3,1){};\node (14) at (3, 0){};
\node (15) at (3, -1){};\node (22) at (4,1){};\node (23) at (4, 0){};\node (24) at (4, -1){};
\draw[->] (1)  to (2);\draw[->] (3)  to (2);\draw[->] (6)  to (7);
\draw[->] (8)  to (7);\draw[->] (4)  to (5);\draw[->] (9)  to (5);
\draw[->] (10)  to (14);\draw[->] (15)  to (14);\draw[->] (11)  to (12);
\draw[->] (13) to (12);\draw[->] (16)  to (17);\draw[->] (18)  to (17);
\draw[->] (19)  to (20);\draw[->] (21)  to (20);\draw[->] (22)  to (23);
\draw[->] (24)  to (23);\draw[->] (25)  to (26);\draw[->] (27)  to (26);
\end{tikzpicture}}
\end{minipage}
\begin{minipage}[c]{.4\textwidth}
\centering
\caption{An example of swapstable equilibrium with respect to the \maxcarnage~adversary which is not a Nash equilibrium. $\c=1$ and $\b=4$.
\label{fig:ex-ne-vs-swap-1}}
\end{minipage}
\end{figure}

\begin{example}
\label{ex:ne-vs-swap-1}
Let $n= 3k$ and consider $k$ disjoint trees of size $3$.  In each tree,
there exists a root vertex that both other vertices purchase an edge to the root.  
When $c\in(0,3/2)$, $\b\in[4,6]$ and $k\ge 9$, the mentioned
network is a swapstable equilibrium with respect to the \maxcarnage~adversary
but is not a Nash equilibrium.
\end{example}
\begin{proof}
  Due to the symmetry in this network, we only need to consider the
  deviations for two types of vertices: the root vertex, and the vertex
  that purchases an edge to the root.

  Let's consider the root vertex first. Her
  utility is $3(1-1/k)$ and her swapstable deviations are as follows:
\begin{enumerate}
\item adding one edge.
\item adding one edge and immunizing.
\item immunizing.
\end{enumerate} 

We show that none of these deviations are beneficial by showing that
the utility before the deviation is always (weakly) bigger than the
utility after deviation.

Case 1 trivially does not happen, because if the root adds an edge
to a different tree, she would be a part of the unique targeted region
which cannot happen in any equilibrium by Lemma~\ref{lem:tree2}. She
also does not want to purchase an edge to any other vertex in her tree, since she is
connected to all the other vertices already.

In case 2, she will survive with probability 1 after immunization. As
far as the edge purchasing decision, she would get the maximum utility
if she purchases an edge to a different tree.
\begin{align*}
\b \ge 4 \text{ and } \c \ge 0  \implies &\b+\c\geq 4 \text{ and } k\geq 9
\implies \\ &3\left(1-\frac{1}{k}\right) \ge \left(3 + 3(1-\frac{1}{k-1})\right) - \c - \b.
\end{align*}

In case 3, she will survive with probability one but she has to pay
for immunization.
\begin{align*}
\b \ge 4 \text{ and } k\geq 9 \implies 3(1-\frac{1}{k}) \ge 3 - \b.
\end{align*}

Next, consider a vertex that purchased an edge. Such
vertex has a utility of $3(1-1/k)-\c$ and her swapstable deviations
are as follows:
\begin{enumerate}
\item dropping her purchased edge.
\item dropping her purchased edge and immunizing.
\item adding one more edge.
\item adding one more edge and immunizing.
\item immunizing.
\item swapping her edge.
\item swapping her edge and immunizing.
\end{enumerate}

We again show that none of these deviations are beneficial.

In case 1, she would survive with probability 1 after dropping her
edge but the size of her connected component will also decrease.
\[
\c\in(0,\frac{3}{2})\text{ and } k\geq 9 \implies 3(1-\frac{1}{k}) - \c \geq 1.
\]

In case 2, once she drops her purchased edge, she is no longer a
targeted vertex. So immunization has no benefits in this case and as
long as case 1 is not beneficial, case 2 cannot be beneficial either.

The analysis of Case 3 is exactly the same as the analysis of case 1 of the
root vertex.

In case 4, she will survive with probability 1 after immunization. As
far as the edge purchasing decision, she would get the maximum utility
if she purchases an edge to a different tree.
\begin{align*}
\b \ge 4 \text{ and } \c \ge 0  \implies & \b+\c\geq 4 \text{ and } k\geq 9
 \\ 
 \implies & 3(1-\frac{1}{k}) - \c \ge
\left(3 + 3(1-\frac{1}{k-1})\right) - 2\cdot\c - \b.
\end{align*}

In case 5, she will survive with probability one but she has to pay
for immunization which is costly.
\begin{align*}
\b \ge 4 \text{ and } k\geq 9 \implies 3(1-\frac{1}{k}) - \c \ge 3 - \c - \b.
\end{align*}

In case 6, swapping her purchased edge to a vertex in another tree
will cause her to be a part of the unique targeted region which cannot
happen in any equilibrium by Lemma~\ref{lem:tree2}.  Also, obviously,
swapping to another vertex in her tree will leave her utility
unchanged.

In case 7, swapping her purchased edge to a vertex in the same tree
and immunizing is not beneficial as long case 5 is not beneficial (she
has the same expenditure and benefit as in case 5).  So we only need
to consider the case when she swaps her edge to a vertex in a
different tree and immunizes.
\[
\b\ge 4 \implies
3(1-\frac{1}{k}) \geq \left(1 + 3(1-\frac{1}{k})\right) - \b.
\]

Finally, it is easy to come up with a strictly beneficial Nash deviation which implies that the above network is not a Nash equilibrium
with respect to the \maxcarnage~adversary.
Consider any vertex that did not purchase an edge. She has a utility of $3(1-1/k)$. Consider
her utility when she buys one edge to all the other trees and immunizes herself.
\begin{align*}
\b \leq 5 \text{ and } & k\geq 9 \text{ and } \c\leq \frac{3}{2} < 2 \implies \\
&(3k-3)-\c\cdot(k-1)-\b \ge(3k-3)-2(k-1)-5
= k-6\geq 3.
\end{align*}
So her utility strictly increases by this deviation, implying that such network cannot be a Nash equilibrium.
\end{proof}

Finally, we show that the set of linkstable equilibria is indeed larger than 
the set of swapstable equilibria (which itself is larger than the set of Nash equilibria).  See Figure~\ref{fig:ex-ne-vs-swap-2}.
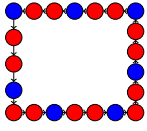
\begin{figure}[h]
\centering
\begin{minipage}[c]{.25\textwidth}
\scalebox{.6}{
\centering
\begin{tikzpicture}
[scale = 0.45, every node/.style={circle, fill = red, draw=black}, gray node/.style = {circle, fill = blue, draw}]
\node (1) at  (-3, 0){};\node (2) at (-2,0){};\node (4) at (0,0){};
\node (5) at (1, 0){};\node (7) at (3,0){};\node (8) at (3,1){};
\node (10) at (3,3){};\node (11) at (3, 4){};\node (13) at  (2, 5){};
\node (14) at (1, 5){};\node (16) at (-1,5){};\node (17) at (-2, 5){};
\node (19) at (-3,3.7){};\node (20) at (-3, 2.4){};
\node [gray node] (3) at (-1, 0){};\node [gray node] (6) at (2,0){};
\node [gray node] (9) at (3,2){};\node [gray node] (12) at  (3, 5){};
\node [gray node] (15) at (0, 5){};\node [gray node] (18) at (-3, 5){};
\node [gray node] (21) at (-3, 1.1){};
\draw[->] (1)  to (2);\draw[->] (2)  to (3);\draw[->] (3)  to (4);\draw[->] (4)  to (5);\draw[->] (5)  to (6);\draw[->] (6)  to (7);
\draw[->] (7)  to (8);\draw[->] (8)  to (9);\draw[->] (9)  to (10);\draw[->] (10)  to (11);\draw[->] (11)  to (12);\draw[->] (12)  to (13);
\draw[->] (13)  to (14);\draw[->] (14)  to (15);\draw[->] (15)  to (16);\draw[->] (16)  to (17);\draw[->] (17)  to (18);\draw[->] (18)  to (19);
\draw[->] (19)  to (20);\draw[->] (20)  to (21);\draw[->] (21)  to (1);
\end{tikzpicture}}
\end{minipage}
\begin{minipage}[c]{.5\textwidth}
\centering
\caption{An example of linkstable equilibrium with respect to the \maxcarnage~adversary 
which is not a swapstable equilibrium. $\c=2$ and $\b=4$.
\label{fig:ex-ne-vs-swap-2}}
\end{minipage}
\end{figure}

\begin{example}
  \label{exp:ne-vs-ls-2}
Consider a cycle consisting of $n=3k$ vertices. If
\emph{(i)} every player buys the edge in her counter clockwise direction on the cycle,
\emph{(ii)} every third vertex in the cycle immunizes (so there are $k$
immunized vertices in the cycle) and \emph{(ii)}
$\c\in(0,n/2-5), \b\in(3, n/2+3)$ and $k\geq 7$, then the cycle
is a linkstable equilibrium with respect to the \maxcarnage~adversary 
but not a swapstable equilibrium.
\end{example}
\begin{proof}
  First, consider any immunized vertex. This vertex clearly cannot
  change her immunization, regardless of how she changes her edge
  purchases. Since she is always connected to the vulnerable vertex to
  her counter clockwise direction, changing the immunization, will
  result in forming the unique largest targeted region which by
  Lemma~\ref{lem:tree2} cannot happen in any equilibrium. So as long as
  the payoff of the immunized vertex is greater than zero before the
  deviation, she will not change her immunization decision.
\begin{align*}
&\c < \frac{n}{2}-5 \text{ and } \b < \frac{n}{2}+3 \implies
\c + \b < n-2
\implies  (n-2) - \c - \b > 0.
\end{align*}
Furthermore, fixing the immunization decision, the linkstable edge
purchasing deviations for any of the immunized vertices are as
follows.
\begin{enumerate}
\item adding one more edge.
\item dropping her purchased edge.
\end{enumerate}

In each case we consider the utilities after and before the deviation
and show that given the conditions in the statement of the example,
the deviation in not beneficial.

In case 1, before the deviation the immunized vertex remains connected
to any vertex that survives.  So adding more edges will only increase
the expenditure while the connectivity benefit is the same. Since $\c>0$,
this deviation is not beneficial.

In case 2, dropping her edge might cause the network to become disconnected 
after the attack and, hence, it decreases the connectivity benefit of the vertex.
\begin{align*}
& \c < \frac{n}{2}-5 < \frac{n}{2}-\frac{3}{2}=\frac{3}{2}(k-1)
\implies  (n-2) - \c - \b >
\frac{1}{k}\Big(1 + 4 + \ldots + \left(3k-2\right)\Big) - \b.
\end{align*}


Now, we consider the vulnerable vertices. Any such vertex (if survives) will
remain connected to any other survived vertex. So no vulnerable vertex
wants to add more edges (while keeping her purchased edge).  So the
possible deviations of such vertices that we need to consider are as
follows.
\begin{enumerate}
\item immunizing.
\item dropping her purchased edge.
\item dropping her purchased edge and immunizing.
\end{enumerate}

Similar to the case of the immunized vertex, we compare the utilities after
and before the deviation and show that given the conditions in the
statement of the example, the deviation is not beneficial.

We divide the vulnerable vertices into two disjoint categories:
\emph{(i)} one that purchases an edge to an immunized vertex (type i)
and \emph{(ii)} one that purchases an edge to a vulnerable vertex
(type ii).  For the 2nd and 3rd deviation, we need to distinguish
between these two types.

In case 1, the vertex who immunizes survives. Also her vulnerable neighbor is not
targeted anymore.
\begin{align*}
\b > 3 & \implies \b > 3-\frac{6}{n} = 3-\frac{2}{k}
\implies (1-\frac{1}{k})(n - 2) - \c > (n-2) - \c - \b.
\end{align*}

In case 2, dropping her edge might cause the network to become disconnected 
after the attack and, hence, it decreases the connectivity benefit of the vertex. For a type i vertex,
\begin{align*}
& \c <  \frac{n}{2}-5 < \frac{n}{2}- \frac{7}{2} + \frac{6}{n}
\implies
(1-\frac{1}{k})(n - 2) - \c > \frac{1}{k}\Big(3 + 6 + \ldots +(3k-3)\Big).
\end{align*}
For a type ii vertex, the analysis is slightly different since after dropping her purchased 
edge, the vertex will not be a part of any targeted region anymore.
\begin{align*}
& \c < \frac{n}{2}-5 < \frac{n}{2} - 4 + \frac{6}{n}
\implies (1-\frac{1}{k})(n - 2) - \c >
\left(2 + \frac{1}{k-1}\Big(0+3 + 6 +\ldots +(3k-6)\Big)\right).
\end{align*}

In case 3, a type i vertex always survives if she immunizes but dropping the edge
will decrease her connectivity benefit.
\begin{align*}
\c &< \frac{n}{2}-5 \text{ and } \b > 3 
\implies \c - \b <  \frac{n}{2}-8 <\frac{n}{2} - 5 + \frac{6}{n}\\
&\implies (1-\frac{1}{k})(n - 2) - \c >
\frac{1}{k-1}\Big(3 + 6 + \ldots +\left(3k-3\right)\Big) - \b.
\end{align*}

Note that a type ii vertex survives if she drops her edge. So
immunization will only increase her expenditure.  So a type ii vertex
always prefers the deviation in case 2 to the deviation in case 3. And since 
we showed that the deviation in case 2 is not beneficial for a type ii vertex,
the deviation in case 3 cannot be beneficial either.

Now it is easy to see that this network cannot form in any swapstable
equilibrium. Consider two vulnerable vertices (denoted by $u$ and
$v$) that are connected to each other with an edge. Suppose without
loss of generality that $u$ has purchased the edge between $u$ and
$v$. Now, it is to see that $u$ can get a strictly higher payoff by
dropping the edge purchased to $v$ and instead buying an edge to the
immunized vertex that $v$ is connected to. This way, $u$ is not a part
of any targeted region anymore (so she survives with probability $1$
instead of $1-1/k$) but she is still connected to every vertex she was
connected to before the deviation. Since her 
expenditure is exactly the same this deviation will increase her utility and show
that such network cannot form in any swapstable equilibrium.
\end{proof}
\section{Comparison of Different Attack Models}
\label{sec:compare}
Throughout we showed that the equilibrium networks in the three stylized models introduced in Section~\ref{sec:model}
exhibit similar behaviors regarding sparsity and welfare.
So in this section we ask whether the equilibrium
networks in these models are the same?
%
We answer this question negatively. In particular, in
Example~\ref{example:dis-ne-car} we show that there exist (connected) Nash equilibrium
networks with respect to the~\maxdisrupt~adversary that are not equilibria with respect to the 
\maxcarnage~adversary.
We then in Example~\ref{example:car-ne-dis} show
that there exist (connected) Nash equilibrium networks with respect to the \maxcarnage~adversary that
are not equilibria with respect to the \maxdisrupt~adversary.
While we can also show the set of equilibrium networks with respect to the random attack adversary
is also disjoint from the the set of equilibrium networks with respect to the other two adversaries, 
we omit the details due to similarities.
\begin{example}
\label{example:dis-ne-car}
There exists a Nash equilibrium network with respect to the~\maxdisrupt~adversary which is not a Nash equilibrium with respect to the \maxcarnage~adversary.
\end{example}
\begin{proof}
Consider a complete binary tree with 
 $n=2^{n'+1}-1$ vertices (so the height of the tree is $n'$). Suppose
 every player in height $i > 0$ 
 buys the two edges to the players in height $i-1$ and every vertex in height $i>0$ 
 immunizes (see Figure~\ref{fig:max-carnage-ne-disruption}). 
 Then for $\b \in  (2, n-2)$, $\c \in (0, 1- 1/2^{n'})$ and $n'\geq 3$ this tree is a Nash equilibrium 
 with respect to the \maxdisrupt~adversary but not with respect to the \maxcarnage~adversary.

\begin{figure}[ht]
\centering\begin{minipage}[c]{.4\textwidth}
\scalebox{.5}{
\begin{tikzpicture}
[scale = 0.50, every node/.style={circle, fill=red, draw=black}, gray node/.style = {circle, fill = blue, draw}]
\node[gray node] (0) at  (0, 0){};
\node[gray node] (1) at  (-4, -1){};\node[gray node] (2) at  (+4, -1){};
\node[gray node] (3) at  (-6, -2){};\node[gray node] (4) at  (-2, -2){};\node[gray node] (5) at  (2, -2){};\node[gray node] (6) at  (6, -2){};
\node (7) at  (-7, -3){};\node (8) at  (-5, -3){};\node (9) at  (-3, -3){};\node (10) at  (-1, -3){};
\node (11) at  (1, -3){};\node (12) at  (3, -3){};\node (13) at  (5, -3){};\node (14) at  (7, -3){};
\draw[->] (0)  to (1);\draw[->] (0) to (2);
\draw[->] (1)  to (3);\draw[->] (1) to (4);\draw[->] (2)  to (5);\draw[->] (2) to (6);
\draw[->] (3)  to (7);\draw[->] (3) to (8);\draw[->] (4)  to (9);\draw[->] (4) to (10);
\draw[->] (5)  to (11);\draw[->] (5) to (12);\draw[->] (6)  to (13);\draw[->] (6) to (14);
\end{tikzpicture}}
\end{minipage}
\begin{minipage}[c]{.5\textwidth}
\centering
\caption{An example of Nash equilibrium with respect to the \maxdisrupt~adversary which is not a Nash equilibrium with respect to the \maxcarnage~adversary. $\c=15/16$ and $\b=33/16$.
\label{fig:max-carnage-ne-disruption}}
\end{minipage}
\end{figure}
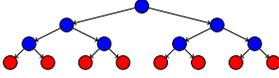

We first show that the tree in Figure~\ref{fig:max-carnage-ne-disruption} is an equilibrium with respect to the~\maxdisrupt~adversary. First, consider any immunized
vertex other than the root of the tree. If such vertex changes her immunization decision (regardless of 
how she changes her edge purchasing decision), she would deterministically get killed by the~\maxdisrupt~adversary. 
So she would not change her immunization decision as long as her utility is positive
which means $(n-1) - \b - 2\c > 0$. The same applies to the root as long as the root maintains one of her
purchased edges. It is easy to see that the deviation that she drops both of her edges and changes
her immunization is also not beneficial as long as $(n-1) - \b - 2\c > 1$ since she would not get attacked by the
adversary in this case.

Any vertex who survives after an attack will remain connected to any other surviving vertex; which implies no vertex
would strictly prefer to add another edge. We now show that no vertex would like to drop any of her purchased edges
and leaf vertices in height 0 would prefer to remain targeted. To show the former, note that the benefit
from each edge purchase is at least $1- 1/2^{n'}$ and this comes from edges purchased to the targeted leaves.
So no vertex would prefer to drop any of her edges since the benefit is strictly more than $\c$. To show the latter, the 
utility of a leaf in the tree is $(1-1/{2^{n'}})(n-1)$. If she immunizes her connectivity benefit would become $n-1$ but she
has to pay a price of $\b$ which is strictly higher than $(n-1)/2^{n'}$. So no leaf vertex would prefer to immunize.

To show that the tree in Figure~\ref{fig:max-carnage-ne-disruption} is not a Nash equilibrium with respect to the \maxcarnage~adversary, we show that any vertex in 
height 2 can strictly increase her utility by changing her immunization. Her utility before the deviation
is $(n-1)-2\c-\b$ and her utility after the deviation is $(1-1/(2^{n'}+1))(n-1) -2\c$. So as long as 
$\b > (n-1)/(2^{n'}+1) = 2^{n'+1}/(2^{n'}+1) = 2 - 2/(2^{n'}+1)$, then this deviation is beneficial for 
a height 2 vertex.
\end{proof}

\begin{example}
\label{example:car-ne-dis}
There exists a Nash equilibrium network with respect to the \maxcarnage~adversary which is not a Nash equilibrium 
network with respect to the~\maxdisrupt~adversary.
\end{example}
\begin{proof}
Consider two cycles of alternating immunized and vulnerable vertices of size $2k$ each which are connected
to each other through a vulnerable vertex outside of the cycles (so $n=4k+1$). 
Suppose the edges are purchased clockwise in the cycles and the outside vulnerable vertex purchases an edge to an 
immunized vertex in each of the cycles (see Figure~\ref{fig:max-disruption-ne-carnage}). 
 Then for $\b \in  (2, n/4)$, $\c \in [1, n/4-2)$ and $k\geq 3$ this configuration is a Nash equilibrium 
with respect to the \maxcarnage~adversary but not with respect to the \maxdisrupt~adversary.

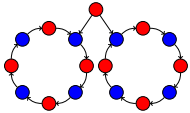
\begin{figure}[ht]
\centering\begin{minipage}[c]{.25\textwidth}
\scalebox{.5}{
\centering
\begin{tikzpicture}
[scale = 0.50, every node/.style={circle, fill=red, draw=black}, gray node/.style = {circle, fill = blue, draw}]
\node[gray node] (1) at  (0.59, 0.59){};\node[gray node] (3) at  (0.59, 3.41){};
\node[gray node] (5) at  (3.41, 3.41){};\node[gray node] (7) at  (3.41, 0.59){};
\node (2) at  (0, 2){};\node (4) at (2, 4){};
\node (6) at (4, 2){};\node (8) at (2, 0){};
\draw[->] (1) [out=135,in=270] to (2);\draw[->] (2) [out=90,in=225] to (3);
\draw[->] (3) [out=45,in=180] to (4);\draw[->] (4) [out=0,in=135] to (5);
\draw[->] (5) [out=315,in=90] to (6);\draw[->] (6) [out=270,in=45] to (7);
\draw[->] (7) [out=225,in=0] to (8);\draw[->] (8) [out=180,in=315] to (1);
\node[gray node] (9) at  (5.59, 0.59){};\node[gray node] (11) at  (5.59, 3.41){};
\node[gray node] (13) at  (8.41, 3.41){};\node[gray node] (15) at  (8.41, 0.59){};
\node (10) at  (5, 2){};\node (12) at (7, 4){};
\node (14) at (9, 2){};\node (16) at (7, 0){};
\draw[->] (9) [out=135,in=270] to (10);\draw[->] (10) [out=90,in=225] to (11);
\draw[->] (11) [out=45,in=180] to (12);\draw[->] (12) [out=0,in=135] to (13);
\draw[->] (13) [out=315,in=90] to (14);\draw[->] (14) [out=270,in=45] to (15);
\draw[->] (15) [out=225,in=0] to (16);\draw[->] (16) [out=180,in=315] to (9);
\node (17) at  (4.5, 5){};
\draw[->] (17) to (5);\draw[->] (17) to (11);
\end{tikzpicture}}
\end{minipage}
\begin{minipage}[c]{.55\textwidth}
\centering
\caption{An example of Nash equilibrium network with respect to the \maxcarnage~adversary 
which is not a Nash equilibrium with respect to the \maxdisrupt~adversary. $\c=1$ and $\b=3$. 
\label{fig:max-disruption-ne-carnage}}
\end{minipage}
\end{figure}

We first show that the configuration in Figure~\ref{fig:max-disruption-ne-carnage} is an equilibrium
with respect to the \maxcarnage~adversary. We consider three types of vertices: (1) immunized vertices
(2) the targeted vertex that connects the two cycles (henceforth the \emph{connecting vertex}) 
and (3) other targeted vertices.

First, consider an immunized vertex.\footnote{It suffices to only consider 
the immunized vertex adjacent to the connecting vertex because
every deviation of such vertex would result in at least the same utility as
the same deviation for any other immunized vertex.}
The utility of such vertex is $(n-1)2k/(2k+1)+(n-1)/(2(2k+1))-\b-\c$.
Her possible deviations are as follows.
\begin{enumerate}
\item dropping her purchased edge.
\item keeping her purchased edge and adding new edge(s).
\item swapping her purchased edge.
\item swapping her purchased edge and adding new edge(s).
\item changing her immunization.
\item dropping her purchased edge and changing her immunization.
\item keeping her purchased edge, adding new edge(s) and changing her immunization.
\item swapping her purchased edge and changing her immunization.
\item swapping her purchased edge, adding new edge(s) and changing her immunization.
\end{enumerate}

Comparing the utility before and after deviation shows that the deviation in case 1 is not beneficial.
\begin{align*}
k&\ge 3\implies k^2>k\implies \left(\frac{2k}{2k+1}\left(n-1\right)+\frac{1}{2k+1}\left(\frac{n-1}{2}\right)\right)-\b-\c \\
&>
\left(\frac{k}{2k+1}\left(n-1\right)+\frac{1}{2k+1}\left(\frac{n-1}{2}\right)+\frac{1}{2k+1}\left(\sum_{i=1}^{k}\left(\frac{n-1}{2}+1+2i-1\right)\right)\right)-\b.
\end{align*}

Consider case 2. An immunized  vertex might lose
connectivity to some part of the surviving network in case the connecting vertex gets attacked. But this happens with probability of only $1/(2k+1)$.
However, the extra connectivity benefit from purchasing one additional edge is at most $2k$ (which happens when the edge
is bought from an immunized vertex in one cycle to another immunized vertex in the other cycle). Since $\c>1$, 
purchasing this additional edge would never be beneficial. Similarly, purchasing more than one additional edges
would not be beneficial as well because it is easy to observe that with only one additional edge each vertex (when survived)
will remain connected to every other surviving vertex.

In case 3, swapping her purchased edge to any vertex in her cycle cannot be strictly better than her current edge purchase because
the current edge purchase guarantees that she remains connected to every surviving vertex in her cycle after any attack.
Swapping her purchased to the other cycle will not be beneficial either because
\begin{align*}
k&\ge 3\implies \left(\frac{2k}{2k+1}\left(n-1\right)+\frac{1}{2k+1}\left(\frac{n-1}{2}\right)\right)-\b-\c \\
&>
\left(\frac{k+1}{2k+1}\left(n-1\right)+\frac{1}{2k+1}\left(\sum_{i=1}^{k}\left(\frac{n-1}{2}+2i-1\right)\right)\right)-\b-\c.
\end{align*}

In case 4, note that two edge purchases (one in each cycle) are sufficient to connect the immunized vertex
to any other surviving vertex. So it suffices to consider the sub-case that the immunized vertex swaps her 
current edge and buys one more edge. So the deviation in case 4 cannot more beneficial than the deviation
in case 2 which is not beneficial itself.

Note that no immunized vertex with positive utility would change her immunization (regardless
of her edge purchases) because otherwise she would form the unique targeted region. 
So the deviations in cases 5, 6, 7, 8 and 9 are not beneficial because
\[
\b < \frac{n}{4} \text{ and } \c < \frac{n}{4}-2 \implies
\left(\frac{2k}{2k+1}\left(n-1\right)+\frac{1}{2k+1}\left(\frac{n-1}{2}\right)\right)-\b-\c > 0.
\]

Next, consider the connecting vertex. Her utility is $(n-1)2k/(2k+1)-2\c$. Her deviations are as follows. 
\begin{enumerate}
\item dropping any of her purchased edges.
\item keeping her purchased edges and adding new edge(s).
\item dropping any of her purchased edges and adding new edge(s).
\item swapping any of her purchased edges.
\item swapping her purchased edges and adding new edge(s).
\item immunizing.
\item dropping any of her purchased edges and immunizing.
\item keeping her purchased edges, adding new edge(s) and immunizing.
\item dropping any of her purchased edges, adding new edge(s) and immunizing.
\item swapping any of her purchased edges and immunizing.
\item swapping her purchased edges, adding new edge(s) and immunizing.
\end{enumerate}

In case 1, due to symmetry, we consider the deviation that the connecting vertex drops one of her edges. This is 
not beneficial because
\[
\c < \frac{n}{4}-2 < \frac{4k^2}{2k+1} \implies \frac{2k}{2k+1}\left(n-1\right)-2\c > \frac{2k}{2k+1}\frac{n-1}{2}-\c.
\]

In case 2, the connecting vertex 
(when survived) would remain connected to other surviving vertices. So she would not purchase any extra edges. 
Also Similar to her current strategy, with two edge purchases (one to each cycle), the connecting vertex 
would remain connected to any other surviving vertex. 
So the deviations in cases 3, 4 and 5 (which all involve buying at least two edges) are not beneficial.

In case 6,
\[
\b > 2 \geq \frac{4k}{2k+1} \implies \frac{2k}{2k+1}\left(n-1\right)-2\c > (n-1)-2\c-\b,
\]
implies that changing the immunization is not beneficial.

In case 7, the deviation is not beneficial because
\[
\c < \frac{n}{4}-2 \text{ and } \b > 1\implies \c-\b < \frac{n}{4}-3 < \frac{n-1}{2} \implies \frac{2k}{2k+1}\left(n-1\right)-2\c > \frac{n-1}{2}-\c-\b.
\]

In case 8, the connecting vertex 
(when survived) would remain connected to other surviving vertices. So she would not purchase any extra edges. 
Also as we showed in case 6 immunization is not helpful either (with two purchased edge).
Lastly, the deviations in cases 9, 10 and 11 are all dominated by the deviation in case 8. So none of them are beneficial.

Finally, consider any other targeted vertex (that is not the connecting vertex).\footnote{It suffices to only consider 
the targeted vertex that the immunized vertex adjacent to the connecting vertex have purchased an edge to.
This is because every deviation of such vertex would result in at least the same utility as
the same deviation for any other targeted vertex that is not the connecting vertex.}
Her utility is $(n-1)2k/(2k+1)-\c$. Her deviations are as follows. 
\begin{enumerate}
\item dropping her purchased edge.
\item keeping her purchased edge and adding new edge(s).
\item swapping her purchased edge.
\item swapping her purchased edge and adding new edge(s).
\item immunizing.
\item dropping her purchased edge and immunizing.
\item keeping her purchased edge, adding new edge(s) and immunizing.
\item swapping her purchased edge and immunizing.
\item swapping her purchased edge, adding new edge(s) and immunizing.
\end{enumerate}

Comparing the utility before and after deviation shows that the deviation in case 1 is not beneficial.
\begin{align*}
k&\ge 3 > 0\implies \left(\frac{2k}{2k+1}\left(n-1\right)\right)-\c \\
&>
\left(\frac{k}{2k+1}\left(n-1\right)+\frac{1}{2k+1}\left(\sum_{i=1}^{k-1}\left(\frac{n-1}{2}+1+2i\right)\right)\right).
\end{align*}

In case 2, the targeted vertex might lose
connectivity to some part of the surviving network in case the connecting vertex gets attacked. Again this happens with probability of only $1/(2k+1)$.
However, the extra connectivity benefit from purchasing one additional edge is at most $2k$ (which happens when the edge
is bought from an immunized vertex in one cycle to another immunized vertex in the other cycle). Since $\c>1$, 
purchasing this additional edge would never be beneficial. Similarly, purchasing more than one additional edges
would not be beneficial as well because it is easy to observe that with only one additional edge each vertex (when survived)
will remain connected to every other surviving vertex.

In case 3, swapping her purchased edge to any vertex in her cycle cannot be strictly better than her current edge purchase because
the current edge purchase guarantees that she remains connected to every surviving vertex in her cycle after any attack.
Swapping her purchased to the other cycle will not be beneficial either because
\begin{align*}
k&\ge 3\implies \left(\frac{2k}{2k+1}\left(n-1\right)\right)-\c >
\left(\frac{k+1}{2k+1}\left(n-1\right)+\frac{1}{2k+1}\left(\sum_{i=1}^{k-1}\left(\frac{n-1}{2}+1+2i\right)\right)\right)-\c.
\end{align*}

In case 4, note that two edge purchases (one in each cycle) are sufficient to connect the targeted vertex
to any other surviving vertex. So it suffices to consider the sub-case that the targeted vertex swaps her 
current edge and buys one more edge. So the deviation in case 4 cannot more beneficial than the deviation
in case 2 which is not beneficial itself.

Changing the immunization is not beneficial in case 5 because
\begin{align*}
\b & > 2 \geq \frac{4k-1}{2k+1} \implies\left(\frac{2k-1}{2k+1}(n-1)+\frac{1}{2k+1}\frac{n-1}{2}\right)-\c\\
 &> \left(\frac{2k-1}{2k}(n-1)+\frac{1}{2k}\frac{n-1}{2}\right)-\c-\b.
\end{align*}

In case 6, 
\begin{align*}
\c < \frac{n}{4}-2 \text{ and } \b &> 1\implies \c-\b < \frac{n}{4}-3 \\
\left(\frac{2k}{2k+1}\left(n-1\right)\right)-\c &>
\left(\frac{k}{2k}\left(n-1\right)+\frac{1}{2k}\left(\frac{n-1}{2}\right)+\frac{1}{2k}\left(\sum_{i=1}^{k-1}\left(\frac{n-1}{2}+1+2i\right)\right)\right)-\b,
\end{align*}
so the deviation is not beneficial.

In case 7, with only one additional edge (to the other cycle) she remains connected to every surviving vertex. But this deviation is not beneficial
since 
\[
\c+\b >3 \implies \left(\frac{2k}{2k+1}(n-1)\right)-\c > (n-1)-2\c-\b.
\]

Cases 8 and 9 are not beneficial because their analog cases 3 and 4 were not beneficial and the cost of 
immunization dominates the extra connectivity benefit achieved after immunization.

Finally, it is easy to see that the configuration in Figure~\ref{fig:max-disruption-ne-carnage} is not an equilibrium with respect to the~\maxdisrupt~adversary since
the vertex connecting the two cycles would be the unique targeted region; a property that does not hold in equilibria
as stated in Lemma~\ref{lem:tree3}.
\end{proof}
\section{Original Reachability Network Formation Game}
\label{sec:no-attack}
In this section, for completeness, we prove properties of the
equilibrium networks in the \emph{original reachability network
  formation game}~\cite{BalaG00}.  This case coincides with setting
$\b$ equal to zero with respect to the \maxcarnage~adversary in our formulation.  
Obviously, when $\b=0$, it is a
\emph{dominant} strategy for any player to immunize.
\citet{BalaG00} state that for for a wide range of edge
purchasing cost $\c$, any equilibrium network is either a tree or the
empty network.
\begin{pro}[Proposition 4.1 of \citet{BalaG00}]
\label{thm:no-attack-tree-forward}
When~$\c\in(0, n-1)$, every network that forms in an equilibrium of
the original reachability game is either a tree or the empty network.
\end{pro}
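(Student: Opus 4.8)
Since this is the $\b=0$ case with respect to the \maxcarnage~adversary, immunizing is a dominant strategy for every player, so in any equilibrium all vertices lie in $\I$, $\U=\emptyset$, and no attack ever occurs. Consequently each player's payoff reduces to the reachability benefit minus edge cost: $u_i(\s)=|\CC_i|-|x_i|\,\c$, where $\CC_i$ is $i$'s connected component. The plan is to show that a \emph{non-empty} equilibrium network must be both \emph{acyclic} and \emph{connected}, hence a spanning tree; the empty network is the only remaining possibility. I expect the connectivity step to be the main obstacle, since acyclicity follows from a one-line redundancy argument.

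\textbf{Acyclicity.} First I would argue no component contains a cycle. Suppose some component had a cycle and let $e=(i,j)$ be any edge on it, purchased (by our unilateral convention) by exactly one endpoint, say $i$. Dropping $e$ keeps $i$ and $j$ connected through the remainder of the cycle, so the \emph{vertex set} of $i$'s component is unchanged and her reachability benefit is unchanged, while her expenditure falls by $\c>0$. This is a strictly beneficial (link-)deviation, contradicting equilibrium. Hence every component is a tree and the whole network is a forest.

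\textbf{Connectivity (the crux).} Now suppose toward a contradiction that the equilibrium forest has at least two components. Let $C^{*}$ be a component of maximum size $M:=|C^{*}|$. Non-emptiness forces $M\ge 2$, so $C^{*}$ (being a tree on $\ge 2$ vertices) contains at least one edge, purchased by some player $i\in C^{*}$; call it $e=(i,k)$. I would then derive two opposing inequalities. For the first, pick any vertex $v$ lying in some other component (one exists): $v$ purchasing a single edge into $C^{*}$ merges the two components and increases $v$'s benefit by exactly $M$ at cost $\c$, so equilibrium requires $M\le \c$. For the second, since $e=(i,k)$ is a bridge in the forest, dropping it severs $C^{*}$ into the subtree $T_i\ni i$ and $T_k\ni k$; were $i$ to drop $e$ her benefit would fall by $|T_k|$, so equilibrium requires $\c\le|T_k|$, and since $|T_i|\ge 1$ we have $|T_k|\le M-1$. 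Combining gives $M\le\c\le M-1$, a contradiction. Therefore the network is connected, and together with acyclicity it is a spanning tree.

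\textbf{Remarks on the plan.} The squeeze $M\le\c\le M-1$ is the heart of the argument: the ``add an edge from outside'' deviation caps $\c$ from above in terms of the largest component, while the ``a purchaser won't drop'' deviation caps it from below, and these cannot be simultaneously satisfied once a second component exists. I note that this implication in fact only uses $\c>0$; the stated range $\c\in(0,n-1)$ simply marks the regime in which non-trivial (tree) equilibria are the relevant objects, since a single edge yields benefit at most $n-1$ and for $\c\ge n-1$ the characterization collapses to the empty network alone. Concluding, every equilibrium network is either a tree (connected and acyclic on all $n$ vertices) or the empty network, as claimed.
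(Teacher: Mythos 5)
Your proof is correct and follows essentially the same route as the paper's: acyclicity by dropping a redundant cycle edge, and connectivity by playing the purchaser's unwillingness to drop an edge (which forces components with edges to be large relative to $\c$) against an outsider's incentive to link in. Your squeeze $M\le\c\le M-1$ is just a slightly different packaging of the paper's observation that any component containing an edge has size at least $\c+1$, so an outside vertex would profitably attach to it.
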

\begin{proof}
  First observe that when $\c\in(1, n-1)$, the empty network
  is an equilibrium. Next, consider any equilibrium that is not the
  empty network when $\c\in(0, n-1)$. We claim such network
\begin{itemize}
\item cannot have any cycles.
\item cannot have more than one connected component.
\end{itemize}
These two properties imply that 
the equilibrium is indeed a tree as claimed. 

So, first, suppose there is an equilibrium network that has a cycle.
Pick any edge on this cycle.  Clearly, the player who purchased this
edge would remain connected to any other player that she was connected
to if she drops her edge. Since $\c>0$ she would strictly increase her
utility by dropping this edge, contradicting the assumption that the
network was an equilibrium.

Second, assume the equilibrium has more than one connected
component. Now, note that the size of each connected component that
has an edge should be at least $\c+1$. Otherwise, any vertex that is
in a component with size strictly less than $\c+1$ who purchased an
edge would strictly increase her utility by dropping all her purchased
edges.  Since the graph is non-empty there exists a connected
component in that network.  It is easy to see that any vertex in any
other connected component (which can be a singleton vertex) would
strictly increase her utility by purchasing an edge to the mentioned
connected component (since the size of the mentioned connected
component is at least $\c+1$). This contradicts the assumption that
the network was an equilibrium.
\end{proof}

Proposition~\ref{thm:no-attack-tree-forward} implies the following
about the welfare of the reachability game at equilibrium.
\begin{cor}
\label{cor:welfare-no-attack}
In the original reachability game when $\c\in(0,n-1)$, the maximum social
welfare achieved in \emph{any} equilibrium is $n^2-\c (n-1)$.
\end{cor}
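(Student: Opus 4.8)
The plan is to combine the structural characterization of Proposition~\ref{thm:no-attack-tree-forward} with a direct computation of social welfare. Recall that in the original reachability game we have $\b = 0$, so immunization is free and every player's utility is simply the size of her connected component minus $\c$ times the number of edges she purchases. Since edge purchases are unilateral, summing over all players shows that the social welfare of a state whose graph has connected components of sizes $n_1, \ldots, n_m$ equals $\sum_{j=1}^m n_j^2 - \c\,|E|$, where $|E| = \sum_i |x_i|$ is the total number of purchased edges. The key algebraic observation is that each vertex in a component of size $n_j$ contributes $n_j$ to the reachability benefit, and there are $n_j$ such vertices, yielding the $n_j^2$ term.

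With this formula in hand, I would invoke Proposition~\ref{thm:no-attack-tree-forward}: for $\c \in (0, n-1)$ every equilibrium network is either a spanning tree or the empty graph. A spanning tree is connected (one component of size $n$) and has exactly $n-1$ edges, so its welfare is $n^2 - \c(n-1)$; crucially this value is independent of the shape of the tree. The empty graph has $n$ singleton components and no edges, so its welfare is $n$. Comparing the two, $n^2 - \c(n-1) > n$ exactly when $n(n-1) > \c(n-1)$, i.e.\ when $\c < n$, which holds throughout the range $\c < n-1$. Hence no equilibrium can have welfare exceeding $n^2 - \c(n-1)$.

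It remains to show this upper bound is attained, i.e.\ that some tree is actually an equilibrium for every $\c \in (0, n-1)$. I would exhibit the star in which each of the $n-1$ leaves purchases a single edge to a common center. The center has utility $n$ and cannot improve. Each leaf has utility $n - \c$; dropping her edge leaves her isolated with utility $1$, which is worse precisely because $\c < n-1$, and buying any additional edge only adds cost without increasing her (already maximal) component. Thus the star is a Nash equilibrium with welfare $n^2 - \c(n-1)$, certifying that the maximum is achieved.

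I expect the only genuinely delicate point to be confirming attainment: the characterization alone bounds the welfare from above, but one must verify that a tree equilibrium exists across the entire parameter range $(0, n-1)$ rather than only for small $\c$. The star check above dispatches this, since the binding constraint $n - \c > 1$ is exactly the hypothesis $\c < n-1$.
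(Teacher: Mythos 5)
Your proposal is correct and follows essentially the same route as the paper: invoke Proposition~\ref{thm:no-attack-tree-forward} to restrict equilibria to trees and the empty graph, compute the welfare of each ($n^2-\c(n-1)$ versus $n$), and compare. Your explicit verification that the star is an equilibrium across the full range $\c\in(0,n-1)$ is a worthwhile addition the paper leaves implicit, since its Proposition~\ref{thm:no-attack-tree-backward} only guarantees tree equilibria for $\c\in(0,n/2)$.
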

\begin{proof}
  By proposition~\ref{thm:no-attack-tree-forward}, there are only two
  types of equilibrium networks. The empty graph equilibrium has a social
  welfare of $n$. Any tree equilibrium has a social welfare of
  $n^2-\c (n-1)$ which is straitly bigger than the social welfare
  of the empty graph equilibrium when $\c\in(0, n-1)$.
\end{proof}

We complement Proposition~\ref{thm:no-attack-tree-forward} by showing
that, indeed for \emph{any} tree, there exist a wide range of 
edge purchasing cost $\c$ and a specific edge
purchasing pattern that make that tree an equilibrium in the original reachability game.
\begin{pro}
\label{thm:no-attack-tree-backward}
For any tree on $n$ vertices and $c\in(0, n/2)$, there exists an edge
purchasing pattern which makes that tree an equilibrium of the original
reachability game.
\end{pro}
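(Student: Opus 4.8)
The plan is to exhibit an explicit orientation of the tree's edges (an assignment of a buyer to each edge) and verify it is a Nash equilibrium by reducing the no-deviation condition to a simple size condition on the two sides of each edge. Since a tree is connected, in the proposed profile every player lies in a single component of size $n$ and so enjoys benefit $n$; thus the only conceivable source of a profitable deviation is reducing one's edge expenditure without losing too much connectivity. I will make this precise through a best-response characterization.

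First I would establish this characterization for a player $i$ in a connected tree $T$, with the other players' purchases fixed. Deleting $i$ from $T$ splits it into subtrees $B_1,\dots,B_{d_i}$, one hanging off each neighbor; because $T$ is a tree these subtrees are pairwise disconnected once $i$ is removed, and each $B_\ell$ is internally connected by edges that other players buy. Consequently $i$ is automatically (and freely) connected to every $B_\ell$ whose incident edge was purchased by the neighbor, and can connect to any remaining $B_\ell$ only by buying an edge into it at cost $\c$; moreover buying an edge to \emph{any} vertex of $B_\ell$ yields the same gain, namely all of $B_\ell$. Hence adding edges beyond reaching a new subtree is never useful, and $i$'s best response is exactly to connect via a purchased edge to those subtrees $B_\ell$ with $|B_\ell|\ge \c$. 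This reduces the equilibrium condition to the following: for every edge that $i$ purchased, the component on the far side of that edge (the side $i$ is buying toward) has size at least $\c$.

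The construction is then to choose a centroid $c$ of $T$, root $T$ at $c$, and let every non-root vertex buy the edge to its parent. I would invoke the standard centroid property that every component of $T\setminus\{c\}$ has at most $\lfloor n/2\rfloor$ vertices, so the subtree rooted at any non-root vertex $u$ has size at most $\lfloor n/2\rfloor$; the far side of the edge $u$ purchases (the side containing $c$) therefore has size at least $\lceil n/2\rceil > \c$, using $\c\in(0,n/2)$. By the characterization above, no player gains from dropping any subset of her purchased edges (the far-side subtrees are disjoint, so the foregone benefits add and each strictly exceeds the saved cost $\c$), nor from adding edges, nor from any combined swap. Thus the profile is a Nash equilibrium, establishing the proposition.

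The step I expect to require the most care is the best-response characterization, in particular ruling out clever combined deviations that simultaneously drop purchased edges and buy edges to non-neighbors. The crucial structural observation that makes this manageable is that, after removing $i$, the tree decomposes into subtrees reachable only ``one at a time,'' each at the same marginal cost $\c$ regardless of the entry point; once this is pinned down, every deviation decomposes additively over subtrees and the uniform bound $\lceil n/2\rceil > \c$ closes the argument.
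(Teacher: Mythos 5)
Your proposal is correct and follows essentially the same route as the paper: root the tree at a centroid (so that every subtree hanging below a non-root vertex has size at most $n/2$), have each non-root vertex buy the edge to its parent, and observe that dropping that edge forfeits at least $n/2 > \c$ in connectivity benefit while any additional or swapped purchase is redundant since one edge already yields the full component of size $n$. The extra care you take in decomposing deviations additively over the subtrees $B_1,\dots,B_{d_i}$ is a slightly more explicit justification of the step the paper dispatches with ``it is easy to see,'' but it is the same argument.
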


\begin{proof}
  Given any tree, pick any vertex which satisfies the following
  property as the root of the tree: no sub-tree of this root has size
  bigger than $n/2$ (see Lemma~\ref{lem:tree-root} for a proof that
  such vertex exists).  We claim that the pattern that every vertex
  buys an edge to its parent in this tree is an equilibrium. First of
  all it is easy to see that in the construction, the root does not
  purchase any edges and she is also connected to any other vertex in
  the network, so she does not want to purchase any edges. Now
  consider any other vertex that is not the root.  This vertex
  purchases \emph{exactly} one edge in the construction. If she drops
  that edge, her expenditure decreases by $\c$. However, she loses
  connectivity to at least $n/2$ vertices. Since $\c < n/2$, this
  deviation will strictly decrease her payoff. Finally, it is easy to
  see that such vertex does not benefit by purchasing more edges or
  changing her edge purchasing decision (i.e., dropping her currently
  purchased edge and buying one or more edges to other vertices in the
  network) because she is already connected to every other vertex in
  the network using a single edge. This completes the proof.
\end{proof}

Note that the range of edge purchasing cost in
Proposition~\ref{thm:no-attack-tree-backward} is a strict subset of
the range in Proposition~\ref{thm:no-attack-tree-forward}. This is
because for higher edge purchasing costs, only specific (and not all)
trees can form in an equilibrium. 
We wrap up this section by presenting Lemma~\ref{lem:tree-root}, which
we used in the proof of Proposition~\ref{thm:no-attack-tree-backward}.
\begin{lemma}[\citet{Jordan1869}]
\label{lem:tree-root}
Consider a graph $G=(V,E)$ where $|V|=n$. If $G$ is a tree, then there
exists a vertex $v\in V$ such that rooting the tree on $v$, no sub-tree 
has size more than $n/2$.
\end{lemma}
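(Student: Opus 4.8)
The plan is to prove this classical tree-centroid property via an extremal (potential-minimizing) argument. For each vertex $v$, I would define $f(v)$ to be the maximum size over all connected components of $G \setminus \{v\}$. The first observation is that when the tree is rooted at $v$, the sub-trees hanging below $v$ are in bijection with the components of $G \setminus \{v\}$: each child's sub-tree is exactly one such component, since $G$ is a tree. Hence $f(v)$ is precisely the size of the largest sub-tree of the root, and it suffices to exhibit a vertex $v$ with $f(v) \le n/2$. I would simply take $v$ to be any vertex minimizing $f$, and argue that this minimizer already has the desired property.

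To that end I would suppose, for contradiction, that the minimizer $v$ satisfies $f(v) > n/2$. Let $C$ be a component of $G \setminus \{v\}$ witnessing this maximum, so $|C| = f(v) > n/2$, and let $u$ be the unique neighbor of $v$ lying in $C$ (unique because $G$ is a tree). The crux is to show $f(u) < f(v)$, which contradicts the minimality of $f(v)$. For this I would analyze the components of $G \setminus \{u\}$: one of them is the ``$v$-side,'' consisting of $v$ together with every component of $G \setminus \{v\}$ other than $C$, which has size exactly $n - |C|$; the remaining components are the sub-trees of $C$ hanging off $u$ in the direction away from $v$, each of which is contained in $C \setminus \{u\}$ and hence has size at most $|C| - 1$.

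The main (and essentially the only) obstacle is verifying the two resulting size inequalities, both of which follow from $|C| > n/2$. Since $|C| > n/2$ forces $|C| \ge (n+1)/2$ (whether $n$ is even or odd), I get $n - |C| \le |C| - 1 < |C|$ for the $v$-side, and trivially $|C| - 1 < |C|$ for each sub-tree inside $C$. Therefore every component of $G \setminus \{u\}$ has size strictly less than $|C| = f(v)$, so $f(u) < f(v)$, the desired contradiction. I expect this inequality bookkeeping to be routine.

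As a cleaner alternative I could instead give a counting proof by edge orientation: orient each of the $n-1$ edges toward the larger of the two components it separates, breaking ties consistently. Because the out-degrees sum to $n - 1 < n$, some vertex $v$ must be a sink; then for every edge incident to $v$ the component on the far side is the non-larger side, so it has size at most $n/2$, which gives $f(v) \le n/2$ directly. I would likely present the extremal argument as the main proof since it is self-contained, and note the orientation argument as a remark.
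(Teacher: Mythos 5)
Your proof is correct, but note that the paper itself does not prove this statement at all: it is stated as a classical result and attributed to Jordan (1869), so there is no in-paper argument to compare against. Your extremal argument is the standard centroid proof and it is complete: minimizing $f(v)=\max$ component size of $G\setminus\{v\}$, assuming $f(v)>n/2$, and moving to the unique neighbor $u$ inside the largest component $C$ yields components of $G\setminus\{u\}$ of sizes $n-|C|$ (the $v$-side) and at most $|C|-1$ (the pieces of $C$ below $u$), and the integrality step $|C|>n/2\Rightarrow n-|C|\le |C|-1$ is exactly the right bookkeeping, giving $f(u)<f(v)$ and the contradiction. The identification of root-subtrees with components of $G\setminus\{v\}$ is valid precisely because $G$ is a tree, so the conclusion $f(v)\le n/2$ is the statement of the lemma. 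Your alternative orientation argument (out-degrees sum to $n-1$, hence a sink exists, and every far-side component of a sink has size at most $n/2$) is also sound, with the tie case handled since a tied component still has size exactly $n/2$. Either version would serve as a self-contained proof where the paper only offers a citation.
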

\section{Diversity in Equilibrium}
\label{sec:missing-proofs}
We showed in Section~\ref{sec:eq-ex} that the equilibria of our
network formation game can be quite diverse. In this section we
present the examples of this diversity more formally. 
We point out that we focus on \maxcarnage~adversary 
throughout this section and 
all the results hold for Nash, swapstable and linkstable
equilibria. Also similar to Section~\ref{sec:eq-ex} blue and red
vertices denote immunized and targeted vertices in the figures,
respectively. Moreover, directed edges are used in the figures to determine the
players who purchased the edges in the network. We proceed to present some showcases of the equilibria of
our game in the remainder of this section.
 
\subsection{Empty Graphs}
We first show that, when we focus on the \maxcarnage~adversary, an empty graph with all immunized vertices
or an empty graph with all targeted vertices can form in equilibria of our game.
Furthermore, we show that these are the only empty equilibrium networks of our game.
\begin{lemma}
There exists a range of values for parameters $\c$ and $\b$ such that the empty graph
is a (Nash, swapstable or linkstable) equilibrium network
with respect to the \maxcarnage~adversary.
\end{lemma}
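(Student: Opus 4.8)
The plan is to exhibit a concrete empty-graph configuration together with an explicit parameter range, and then verify directly that no unilateral deviation of any single player is strictly profitable. Since every Nash equilibrium is automatically swapstable and linkstable, it suffices to rule out arbitrary Nash deviations. I would take the configuration in which all $n$ vertices are immunized and no edges are purchased, and claim it is an equilibrium whenever $\c \ge 1$ and $0 < \b \le 1$.

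First I would record the baseline: with every vertex immunized we have $\U = \emptyset$, so the \maxcarnage~adversary makes no attack, each player sits in her own singleton component, and her utility is exactly $1 - \b$. Then I would partition a deviating player's options by whether she keeps immunizing. If she stays immunized and purchases $k \ge 0$ edges, every endpoint is immunized, so her component has size $k+1$ and her utility is $(k+1) - \b - k\c = (1-\b) + k(1-\c)$, which is at most $1-\b$ precisely because $\c \ge 1$; hence no such deviation helps. If instead she drops immunization (with or without buying edges), she becomes the unique vulnerable vertex in the whole graph, hence the unique vulnerable region of maximum size; the \maxcarnage~adversary attacks it with probability $1$ and kills her, leaving her a post-attack component of size $0$ and utility $-k\c \le 0$. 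Because $\b \le 1$ gives $1-\b \ge 0$, this is never strictly better. Combining the two cases shows no deviation is profitable, so the configuration is a Nash (hence swapstable and linkstable) equilibrium on the stated range.

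The main thing to be careful about is that, in principle, a Nash deviation allows arbitrarily many simultaneous edge purchases together with a combined immunization flip, so the number of deviations is unbounded. The key simplification I would highlight is that the \maxcarnage~structure collapses all of these into the two cases above: any immunized endpoint yields only one extra unit of component size per edge, which is dominated by $\c \ge 1$; and any drop of immunization makes the deviator the unique maximum vulnerable region and therefore a certain casualty, so the number of edges she buys is irrelevant. This is where the argument really rests, and it is short only because \maxcarnage~always concentrates all attack probability on a largest vulnerable region.

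As a complement (not needed for the bare existence claim, but matching the surrounding discussion), I would also note that the empty graph with all vertices \emph{vulnerable} is an equilibrium for $\c \ge 1$ and $\b \ge 1/n$: the baseline utility is $1 - 1/n$, immunizing alone is blocked by $\b \ge 1/n$, and buying edges while vulnerable creates a unique largest region and certain death. The only genuinely new case there is the combined buy-and-immunize deviation, which a one-line expected-component-size estimate rules out, since purchasing $k$ edges and immunizing gives expected size $1 + k(n-2)/(n-1)$ and hence utility at most $1 - \b - k/(n-1) < 1 - 1/n$ when $\c \ge 1$.
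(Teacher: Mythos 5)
Your proof is correct and follows essentially the same route as the paper: both exhibit the all-immunized and all-vulnerable empty configurations for $\c \ge 1$ and check that no unilateral deviation is profitable. The only differences are that you carry out the verification explicitly (the paper merely asserts it is easy to check) and that you state the all-immunized case for the broader range $\b \le 1$ rather than the paper's $\b \le 1/n$; either range suffices for the existence claim.
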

\begin{proof}
First, it is easy to check that the empty network with all targeted
vertices is an equilibrium when $\c\ge1$ and $\b \geq 1/n$.  No player
would strictly prefer to purchase an edge, immunize or do them
both. Also, when $\c\ge1$ and $\b \leq 1/n$, the empty network with
all immunized vertices is an equilibrium. This shows that regardless
of value of $\b$ when $\c\ge1$, the empty network is an
equilibrium. 
\end{proof}

\begin{lemma}
Let $G$ be a (Nash, swapstable or linkstable) equilibrium network
with respect to the \maxcarnage~adversary. If $G$
is the empty network, then the vertices in $G$ are either all immunized or all targeted.
\end{lemma}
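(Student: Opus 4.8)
The plan is to compute the equilibrium payoffs in an empty graph under the \maxcarnage~adversary and show that the two deviations ``a vulnerable vertex immunizes'' and ``an immunized vertex de-immunizes'' impose contradictory constraints on \b~whenever both vertex types are present. First I would observe that since $G$ has no edges, every vulnerable vertex is its own vulnerable region of size $1$; as these are all of maximum size, the \maxcarnage~adversary attacks one of them uniformly at random. Letting $k$ denote the number of vulnerable vertices, each such vertex is attacked with probability $1/k$ and, its component being a singleton, has expected payoff $1-1/k$, while each immunized vertex always survives in its singleton component and has payoff $1-\b$.

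Next, assuming toward a contradiction that $G$ has at least one vulnerable vertex (so $k\ge 1$) and at least one immunized vertex, I would extract two inequalities. Considering a vulnerable vertex who contemplates immunizing---an available deviation under Nash, swapstable, and linkstable, exactly as ``immunize'' is treated in the empty-graph existence lemma above---her post-deviation payoff is $1-\b$, so equilibrium forces $1-1/k \ge 1-\b$, i.e.\ $\b\ge 1/k$. Symmetrically, an immunized vertex who contemplates dropping immunization becomes the $(k{+}1)$-st vulnerable singleton region and is attacked with probability $1/(k+1)$, yielding post-deviation payoff $1-1/(k+1)$, so equilibrium forces $1-\b \ge 1-1/(k+1)$, i.e.\ $\b \le 1/(k+1)$. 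These combine to $1/k \le \b \le 1/(k+1)$, which is impossible since $1/k > 1/(k+1)$ for every $k\ge 1$; hence $G$ cannot contain both a vulnerable and an immunized vertex, and the claim follows.

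The step needing the most care---and the crux of the whole argument---is tracking how the attack distribution shifts under a deviation: flipping a single immunization bit changes the number of size-$1$ vulnerable regions by one, which is precisely what moves the relevant attack probability between $1/k$ and $1/(k+1)$ and opens the forbidden gap. I would also explicitly justify that a pure immunization flip is a legal deviation for swapstable and linkstable and not only for Nash: in the empty graph there is no edge to drop or swap, so the only relevant deviation is the change of immunization status, which the paper already treats as available in the preceding existence lemma. This is what lets a single, short argument cover all three equilibrium notions simultaneously.
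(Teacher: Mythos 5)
Your proposal is correct and follows exactly the paper's own argument: both derive $\b \ge 1/k$ from a targeted vertex's refusal to immunize and $\b \le 1/(k+1)$ from an immunized vertex's refusal to de-immunize, and conclude by the incompatibility of these bounds. The additional care you take in justifying the attack-probability shift from $1/k$ to $1/(k+1)$ and the legality of a pure immunization flip under all three equilibrium notions is sound but does not change the substance of the argument.
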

\begin{proof}
For contradiction, assume an empty equilibrium network with both targeted 
and immunized vertices.
  Let $k > 0$ denote the number of players that are targeted.  Since we are in an
  equilibrium, any immunized player (weakly) prefers immunization to remaining
  targeted:
\[
1 - \b \geq \left(1-\frac{1}{k+1}\right) \implies \b \leq \frac{1}{k+1}.
\]
Similarly, any targeted player (weakly) prefers to remain
targeted compared to immunizing herself:
\[
\left(1-\frac{1}{k}\right) \geq 1 - \b \implies \b \geq \frac{1}{k}.
\]
which contradicts with the range of $\b$ in the previous equation.
\end{proof}
\subsection{Trees}
As we pointed out earlier, all the nonempty equilibria of the original reachability
game are tress.  In this section we show that trees can also form in
the equilibria of our game when we focus on the \maxcarnage~adversary.  In particular, we show two specific tree
constructions: one in which all the vertices are immunized and the
other in which all the leaves are targeted.  Note that there are
indeed tree equilibria that fall outside of these two categories (as
some of the leaves can be targeted and some can be immunized).

We start by showing that a tree with all immunized vertices can form in  equilibria.
\begin{lemma}
\label{lem:eq-tree-1}
Consider any tree on $n$ vertices. Suppose $\c\in(0,n/2)$ and $\b\in(0,n/2)$. Then, 
there exists an edge 
purchasing pattern which makes that tree an equilibrium with respect to the \maxcarnage~adversary when all the vertices are immunized.
\end{lemma}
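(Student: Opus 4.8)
The plan is to reuse the centroid-based edge-purchasing pattern from the proof of Proposition~\ref{thm:no-attack-tree-backward} and augment it with an analysis of immunization deviations, which are the only genuinely new deviations in this setting. Since every vertex is immunized, there are no vulnerable regions, so $\T=\emptyset$ and the adversary makes no attack; each player's utility is therefore simply the size of her connected component (which is $n$, as the tree is connected) minus her edge expenditures minus $\b$. Concretely, I would invoke Lemma~\ref{lem:tree-root} to root the given tree at a vertex $r$ such that no subtree has size more than $n/2$, and let every non-root vertex purchase the single edge to its parent (the root purchases nothing). This is exactly the pattern used for the attack-free game.

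First I would dispose of every deviation that keeps the immunization status fixed. These are precisely the edge deviations analyzed in Proposition~\ref{thm:no-attack-tree-backward}: each vertex is already connected to all $n-1$ others, so adding an edge only wastes $\c>0$, and any swap keeps the reachability benefit at most $n$ at the same cost, hence is not strictly profitable. Dropping the single purchased edge disconnects a non-root vertex $v$ from the component containing $r$; by the centroid property the subtree rooted at $v$ has size at most $n/2$, so $v$ forfeits at least $n/2$ units of reachability benefit while saving only $\c<n/2$, a strict loss. The uniform fixed immunization cost $\b$ at every vertex does not affect these edge-deviation incentives, so this step is immediate from the earlier proposition.

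The new ingredient is to rule out any deviation that drops immunization. The key observation is that because the other $n-1$ vertices remain immunized and act as firewalls, a lone vertex $v$ who becomes vulnerable forms a vulnerable region of size exactly one, regardless of which edges she simultaneously adds, drops, or swaps. Hence $v$ is the unique vulnerable (and thus unique maximum) region, and by Definition~\ref{def:max-carnage} the \maxcarnage~adversary attacks it with probability $1$, killing $v$ and reducing her post-attack component to size $0$. Her utility after any such deviation is therefore $-|x_v|\c\le 0$, whereas her current utility is $n-|x_v|\c-\b>0$ since $\c+\b<n$ (both $\c$ and $\b$ being less than $n/2$). This inequality holds uniformly for the root ($|x_v|=0$) and for the non-root vertices ($|x_v|=1$), so no immunization-changing deviation is profitable for any vertex.

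The only genuine subtlety, and the step I expect to require the most care, is verifying the combined deviations in category (4) of swapstability, namely dropping immunization together with an edge change. Rather than a case analysis over the accompanying edge move, this is handled uniformly by the firewall argument above: whatever edges the deviating vertex retains or acquires, she remains a killed singleton because all her neighbors are immunized, so her post-attack component is empty and her payoff cannot exceed $0$. Collecting the edge-deviation and immunization-deviation analyses then shows the prescribed profile is a (Nash, and hence swapstable and linkstable) equilibrium, completing the proof.
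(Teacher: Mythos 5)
Your proposal is correct and follows essentially the same route as the paper: root the tree at a centroid as in Proposition~\ref{thm:no-attack-tree-backward} so that the edge deviations are handled by the attack-free analysis, and rule out any immunization change by observing that the deviator becomes the unique (singleton) targeted region, is killed with probability $1$, and thus earns at most $0$ versus her current strictly positive payoff ($n-\b$ for the root, $n-\c-\b$ otherwise, both positive since $\c,\b<n/2$). Your write-up is somewhat more explicit than the paper's about the combined edge-plus-immunization deviations, but the argument is the same.
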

\begin{proof}
Since all the vertices are immunized and have a payoff of $n-\c-\b > 0$, no player would
change her immunization decision (regardless of how she changes her edge purchases), because 
if so, she would form the unique largest targeted region and will be killed by the adversary.

Now that immunization decision are fixed, it is easy to see that rooting the tree as in 
Proposition~\ref{thm:no-attack-tree-backward}
and the pattern of purchasing an edge towards the root will result in an equilibrium network.
\end{proof}

We then show that the equilibria proposed in Lemma~\ref{lem:eq-tree-1} can be used as 
a black-box to prove new tree equilibria in our game.
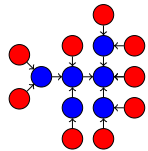
\begin{figure}[h]
\centering
\begin{minipage}[c]{.3\textwidth}
\scalebox{.75}{
\begin{tikzpicture}
[scale=0.55, every node/.style={circle,fill=red, draw=black}, gray node/.style = {circle, fill = blue, draw}]
\node[gray node] (1) at  (0, 0){};\node[gray node] (2) at  (1, 0){};\node[gray node] (3) at  (2, 0){};
\node[gray node] (4) at  (2, 1){};\node[gray node] (5) at  (1, -1){};\node[gray node] (6) at  (2, -1){};
\node (7) at  (-.71, .71){};\node (8) at  (-.71, -.71){};\node (9) at  (1, 1){};
\node (10) at  (2, 2){};\node (11) at  (3, 1){};\node (12) at  (3, 0){};
\node (13) at  (3, -1){};\node (14) at  (2, -2){};\node (15) at  (1, -2){};
\draw[->] (7) to (1);\draw[->] (1) to (2);\draw[->] (2) to (3);
\draw[->] (13) to (6);\draw[->] (6) to (3);\draw[->] (8) to (1);
\draw[->] (9) to (2);\draw[->] (12) to (3);\draw[->] (10) to (4);
\draw[->] (4) to (3);\draw[->] (12) to (3);\draw[->] (11) to (4);
\draw[->] (14) to (6);\draw[->] (15) to (5);\draw[->] (5) to (2);
\end{tikzpicture}}
\end{minipage}
\begin{minipage}[c]{.4\textwidth}
\caption{An example of tree equilibrium with respect to the \maxcarnage~adversary when $\c=2$ and $\b=1.9$. 
\label{fig:tree-2}}
\end{minipage}
\end{figure}

\begin{lemma}
\label{lem:eq-tree-2}
Consider any tree on $k\le n/2$ immunized vertices. Add $n-k$ targeted leaf vertices to this tree such that
every immunized vertex has at least one targeted neighbor (see Figure~\ref{fig:tree-2}). 
Then, for $\c\in(0, k/2)$, $\b\in\left((n-1)/(n-k), k/2-1\right)$ and 
$k\ge 7$, there exists an edge purchasing pattern the makes this network an equilibrium
with respect to the \maxcarnage~adversary.
\end{lemma}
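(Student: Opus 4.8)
The plan is to exhibit the edge-purchasing pattern explicitly and then verify that no single player has a profitable \emph{Nash} deviation; since Nash deviations subsume the swap and link deviations, this simultaneously establishes the result for all three solution concepts. I would first root the $k$-vertex immunized backbone using Proposition~\ref{thm:no-attack-tree-backward} so that no subtree has size exceeding $k/2$, let every non-root immunized vertex purchase the edge to its parent, and let every targeted leaf purchase the single edge to its immunized neighbor; this yields a spanning tree with $n-1$ edges. The first observation is that, because each targeted leaf's only neighbor is immunized (a firewall), every vulnerable region is a singleton, so the \maxcarnage~adversary attacks a uniformly random targeted leaf, each with probability $1/(n-k)$, and after any attack exactly one leaf dies while the remaining $n-1$ vertices stay connected. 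Consequently each immunized vertex always lies in a component of size $n-1$ (payoff $(n-1)-\b-\c$, or $(n-1)-\b$ for the root), while each targeted leaf has payoff $\tfrac{n-k-1}{n-k}(n-1)-\c$.

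Next I would check the targeted leaves. Staying vulnerable, a leaf's only sensible alternatives are to drop its edge (becoming an isolated singleton with benefit $\tfrac{n-k-1}{n-k}$) or to re-wire to another targeted leaf, which merges two singletons into a size-$2$ region that becomes the unique largest target and is destroyed with probability $1$; both are ruled out using $\c<k/2\le n/4$. The crucial case is immunization: immunizing and keeping the edge gives payoff $(n-1)-\b-\c$, so the current strategy is preferred exactly when $\b\ge\tfrac{n-1}{n-k}$, which is precisely the stated lower bound on $\b$. For the immunized vertices the central deviations are (i) dropping the purchased backbone edge and (ii) dropping immunization. For (i), the rooting guarantees that the parent side contains at least $k/2$ immunized vertices, each carrying at least one targeted leaf, so disconnecting loses at least $k-1$ units of expected benefit while saving only $\c<k/2$; this is where $k\ge 7$ makes $k-1>k/2>\c$. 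One can equivalently invoke Lemma~\ref{lem:eq-tree-1} on the $k$-backbone as a black box, since attaching leaves only increases the connectivity at stake. For (ii), the point is that an immunized vertex cannot unilaterally shed the edges its leaf-neighbors purchased, so upon un-immunizing it is forced into a vulnerable region of size at least $2$, which is the unique largest region and is killed with probability $1$ (cf.\ Lemma~\ref{lem:tree2}); since $\b<k/2-1$ and $\c<k/2$ keep its payoff $(n-1)-\b-\c$ strictly positive, this is never profitable. Additions and swaps are wasteful because an immunized vertex is already connected to every surviving vertex.

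Finally I would verify that the parameter window is non-degenerate: because $k\le n/2$ forces $n-k\ge k$, we get $\tfrac{n-1}{n-k}<2$, while $k\ge 7$ gives $k/2-1\ge 5/2>2$, so the interval $\left(\tfrac{n-1}{n-k},\,k/2-1\right)$ for $\b$ is nonempty. The main obstacle I anticipate is the immunized-vertex analysis, specifically the interaction between the \emph{adversarially chosen} distribution of targeted leaves and the edge-drop deviation: the benefit-loss bound must hold no matter how the $n-k$ leaves are attached, which is exactly why the subtree-size control from the rooting, rather than a naive count, is required. The un-immunization step is conceptually simple but hinges on the subtle ``ownership'' point that it is $w$'s neighbors, not $w$, who paid for the dangerous edges, so $w$ cannot avoid becoming the sole target.
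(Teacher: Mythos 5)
Your proposal is correct and follows essentially the same route as the paper's proof: the identical edge-purchasing pattern (backbone rooted via Proposition~\ref{thm:no-attack-tree-backward} with edges bought toward the root, leaves buying their own edges), the same observation that all vulnerable regions are singletons so every survivor sits in a component of size $n-1$, and the same case analysis pinning $\b>\tfrac{n-1}{n-k}$ to the leaves' immunization deviation and the unique-largest-region argument to de-immunization. You in fact supply more detail than the paper does (the explicit subtree-size bound for the edge-drop deviation and the check that the parameter window is nonempty, which is where $k\ge 7$ is really used), so no gaps to report.
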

\begin{proof}
Root the tree of immunized vertices as described in Proposition~\ref{thm:no-attack-tree-backward} and 
let any immunized vertex to purchase an edge towards the root. Also let all the targeted vertices to buy
the edge that connects them to the immunized tree. 

Consider any immunized vertex. She is connected to all the surviving vertices after any attack and she has
done so with (at most) one edge purchase. She would not change her immunization decision because she 
would form the unique targeted vertex. She would not add any more edges either and it is easy to see 
that her current edge purchase is the best in case she can only purchase one edge.

Consider any targeted vertex. She is connected to all the surviving vertices after any attack and she has
done so with only edge purchase. She would not change her immunization decision either because the 
cost of immunization dominate the benefit she would get for surviving with probability one.  
She would not add any more edge either and it is easy to see 
that her current edge purchase is the best in case she can only purchase one edge.
\end{proof}

\paragraph{Hub-Spoke}
$\newline$
We now prove that a hub-spoke network can form in the equilibria of our game when focusing on to the \maxcarnage~adversary. Although
a hub-spoke network is just a special case of the tree networks described in Lemma~\ref{lem:eq-tree-2},
we provide a separate proof for this network which applies to a wider range of parameters in 
comparison to Lemma~\ref{lem:eq-tree-2}.

Hub-spoke network is an interesting equilibrium because it satisfies all of the following properties at the same: \emph{(i)} it is also
an equilibrium in the original reachability game, \emph{(ii)} the network is very efficient because there are 
no more linkage than a tree (a minimum required to connect all the players) and the number of immunized vertices is \emph{only} one and 
\emph{(iii)} the welfare at equilibrium is high, because all the surviving vertices remain connected after each attack.
\begin{figure}[h]
\centering\begin{minipage}[c]{.3\textwidth}
\scalebox{.75}{
\begin{tikzpicture}
[scale=0.55, every node/.style={circle,fill=red, draw=black}, gray node/.style = {circle, fill = blue, draw}]
\node[gray node] (1) at  (0, 0){};
\node (2) at  (-2, 0){};\node (3) at  (-1.41, 1.41){};\node (4) at  (0, 2){};
\node (5) at  (1.41, 1.41){};\node (6) at  (2, 0){};\node (7) at  (1.41, -1.41){};
\node (8) at  (0,-2){};\node (9) at  (-1.41, -1.41){};
\draw[->] (2) to (1);\draw[->] (3) to (1);\draw[->] (4) to (1);
\draw[->] (5) to (1);\draw[->] (6) to (1);\draw[->] (7) to (1);
\draw[->] (8) to (1);\draw[->] (9) to (1);
\end{tikzpicture}}
\end{minipage}
\begin{minipage}[c]{.4\textwidth}
\caption{An example of hub-spoke equilibrium with respect to the \maxcarnage~adversary when $\c=1$ and $\b=1$
\label{fig:eq-star}}
\end{minipage}
\end{figure}
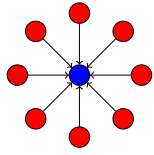
\begin{lemma}
\label{lem:eq-star}
If $\c\in(0,n-3]$ and $\b\in[1, n-1]$ then a hub-and-spoke network is an
equilibrium with respect to the \maxcarnage~adversary when the hub immunizes and the spokes buy the edges to the
hub.
\end{lemma}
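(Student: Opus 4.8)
The plan is to prove directly that this hub-and-spoke profile is a Nash equilibrium; since every Nash equilibrium is swapstable and every swapstable equilibrium is linkstable, this establishes the claim for all three solution concepts at once. First I would pin down the attack distribution. Because the hub is immunized it acts as a firewall, so each of the $n-1$ spokes is its own vulnerable region of size $1$; all of these are simultaneously maximum-size vulnerable regions, so the \maxcarnage~adversary attacks each spoke with probability exactly $1/(n-1)$. Every attack kills precisely one spoke and leaves a surviving component of size $n-1$ (the hub together with the other $n-2$ spokes). From this I would read off the equilibrium payoffs: the hub buys no edge and pays only immunization, so its utility is $(n-1)-\b$; a spoke pays $\c$ for its single edge, survives with probability $(n-2)/(n-1)$ inside a component of size $n-1$ and is killed otherwise, so its expected utility is $\frac{n-2}{n-1}\cdot(n-1)-\c=(n-2)-\c$.

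Next I would rule out deviations for the hub, which is the easy half. The hub owns no edges (the spokes purchased them all) and it is already adjacent to everyone, so the only substantive deviation is to drop immunization. Doing so merges the hub and all spokes into a single vulnerable region of size $n$, which the adversary then attacks with certainty, killing everyone and yielding utility $0$ regardless of any edges the hub might add. Hence no hub deviation is profitable precisely when $(n-1)-\b\ge 0$, i.e. when $\b\le n-1$, which holds by hypothesis.

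The heart of the argument is the spoke's deviations, and this is where I expect the main work to lie. The key structural observation I would establish first is that whenever a vulnerable spoke becomes linked (directly, or through a chain of vulnerable vertices) to any second vulnerable vertex, it then lies in a vulnerable region of size at least $2$, which is the unique largest region and is therefore attacked with probability $1$, killing the spoke and leaving it with nonpositive utility. This single observation disposes of every deviation in which the spoke stays vulnerable and either swaps its edge to another spoke or adds edges to other spokes. It then remains to compare the equilibrium utility $(n-2)-\c$ against the two genuinely different options. Dropping the edge entirely leaves the spoke an isolated vulnerable singleton with expected benefit $\frac{n-2}{n-1}$, so non-profitability requires $\c\le (n-2)-\frac{n-2}{n-1}=\frac{(n-2)^2}{n-1}$; here I would note that $n-3<\frac{(n-2)^2}{n-1}$ (their difference is $\frac{1}{n-1}>0$), so the hypothesis $\c\le n-3$ suffices. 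Immunizing (whether keeping the hub edge or rewiring through the hub) makes the spoke permanently safe in a component of size $n-1$ at cost $\c+\b$, giving utility $(n-1)-\c-\b$, so non-profitability requires exactly $\b\ge 1$. The residual case of immunizing and dropping the edge yields $1-\b\le 0$, which is dominated because $\c\le n-3$ forces $(n-2)-\c\ge 1>0$. Assembling these constraints, the profile is a Nash (hence swapstable and linkstable) equilibrium for all $\c\in(0,n-3]$ and $\b\in[1,n-1]$, as claimed.
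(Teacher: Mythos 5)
Your proof is correct and follows essentially the same route as the paper's: the same attack distribution and payoff computations, the same hub argument (requiring $\b\le n-1$), and the same spoke analysis, with deviations that link two vulnerable vertices dismissed because they create the unique maximum vulnerable region (the paper invokes its lemma that a unique targeted region cannot occur at equilibrium for exactly this step). The only difference is organizational---you consolidate the paper's seven enumerated spoke deviations under that single structural observation---and your arithmetic for the drop-edge threshold $\c\le n-3$ and the immunization threshold $\b\ge 1$ matches the paper's.
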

\begin{proof}
  First observe that the expected size of the connected component of
  immunized and targeted vertices are $n-1$ and $(1-1/(n-1))(n-1)=n-2$,
  respectively.  Also the expenditure of immunized and targeted
  vertices are $\b$ and $\c$, respectively.

  Let us first consider the hub. The hub does not want to buy another
  edge because she already has an edge to any other vertex in the
  graph. So her only possible deviation is to change her immunization.
  If she changes her immunization decision, she will be the part of
  the unique targeted region. So her utility becomes zero while her
  expenditure becomes zero as well. So to prefer to not change her
  immunization decision her utility should be at least zero in the current strategy.
\[
\b \leq n-1 \implies (n-1) - \b \geq 0.
\]

Now consider any spoke vertex. Since the network is symmetric, it only
suffices to consider the possible deviations of one spoke vertex. The
spoke has the following deviations which we analyze one by one.
\begin{enumerate}
\item immunizing.
\item dropping her purchased edge.
\item dropping her purchased edge and immunizing.
\item dropping her purchased edge and adding new edge(s).
\item dropping her purchased edge, adding new edge(s) and immunizing.
\item adding more edges.
\item adding more edges and immunizing.
\end{enumerate}
In each case we consider the utilities before and after the deviation and show that
given the conditions in the statement of Lemma~\ref{lem:eq-star}, the deviation in not beneficial.

In case 1, the spoke survives after immunizing but has to pay the price of immunization.
\[
\b \geq 1\implies (n-2) - \c \geq (n-1) - \b - \c.
\]

In case 2, the spoke becomes disconnected from the rest of the network.
\[
\c \leq n-3 < n-3+\frac{1}{n-1} \implies (n-2)- \c \geq 1-\frac{1}{n-1}.
\]

In case 3, the spoke survives after immunization but similar to case 2 she becomes disconnected
from the rest of the network.
\begin{align*}
\c \leq n-3 \text{ and } \b \geq 1 &\implies \c - \b \leq n-4 < n-3
\implies (n-2) - \c \geq 1 - \b.
\end{align*}

In case 4, adding more edges will cause the vertex to form the unique
targeted region so as long as the her current utility before
deviation is non-negative, she would not drop her edge and add any new
edges.
\[
\c \leq n-3 < n-2 \implies (n-2) - \c > 0.
\]

In case 5, the spoke definitely survives after immunizing
but her payoff in case of survival is capped at $n-1$ (which will
happen if she purchases at least 2 edges to 2 other spokes).  
Suppose she adds
$i \geq 1$ edges. Then,
\[
\c\geq 0 \text{ and }\b \geq 1 \implies (n-2) - \c \geq n-1 - i \c - \b.
\]

In case 6, adding more edges will only cause the vertex to form the
largest targeted region which cannot happen in any equilibrium by Lemma~\ref{lem:tree2}. 
So similar to case 4, she does not add
more edges if her current payoff is strictly greater than zero.

In case 7, after the vertex immunizes, she becomes connected to every vertex that survives. 
So adding more edges strictly
decrease her payoff since $\c>0$. So as long as the condition for case 1 holds, the
spoke does not deviate in case 7 either.
\end{proof}
\subsection{Forest}
Nonempty networks with multiple connected components (with no immunized vertex) 
can form in the equilibria of our game when we focus on the \maxcarnage~adversary.
We start by showing that a forest consisting of targeted trees of equal size can form in equilibria.
\begin{lemma}
  Let $n=k F$. Then $k$ disjoint targeted trees of size $F$ can form in the
  equilibrium with respect to the \maxcarnage~adversary when $\c\in(0, F/4]$, $\b\geq (k-7/4) F$,
  $k\geq 4$ and $F\geq 2$.
\label{lem:eq-forest}
\end{lemma}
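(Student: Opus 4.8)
The plan is to exhibit an explicit edge-purchasing pattern and then verify that no player has a profitable unilateral deviation; since I check \emph{all} Nash deviations, this simultaneously establishes the Nash, swapstable, and linkstable claims. First I would fix the construction: root each of the $k$ trees at a centroid vertex guaranteed by Lemma~\ref{lem:tree-root}, so that \emph{every} subtree $T_v$ hanging below a non-root vertex $v$ has $|T_v|\le F/2$, and let each non-root vertex buy the single edge to its parent. Since all $k$ trees are fully vulnerable connected components of equal (hence maximum) size $F$, the \maxcarnage~adversary (Definition~\ref{def:max-carnage}) attacks one of them uniformly and kills it entirely. Thus each vertex dies with probability $1/k$ and otherwise keeps a component of size $F$, giving baseline payoff $\tfrac{k-1}{k}F-\c$ for a non-root and $\tfrac{k-1}{k}F$ for a root.

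The heart of the argument is to isolate the two binding deviations. The key recurring observation is that any deviation by $v$ that deletes $v$'s tree edge or immunizes $v$ fragments $v$'s own tree into vulnerable pieces of size $<F$, so the remaining $k-1$ intact trees become the \emph{unique} maxima and the adversary attacks only among them, leaving $v$ certain to survive. Applied to an edge-drop, $v$ retains exactly her subtree $T_v$, survives, and earns $|T_v|\le F/2$; this is unprofitable precisely when $\c\le\tfrac{k-1}{k}F-|T_v|$, whose worst case ($|T_v|=F/2$, $k=4$) is exactly $\c\le F/4$. Applied to immunization, if $v$ immunizes and buys edges to $m$ of the other trees while staying attached to her own, a short computation gives expected component $F(m+1)-\tfrac{mF}{k-1}$, which (using $\c\le F/4<F\tfrac{k-2}{k-1}$) is increasing in $m$, so the most tempting such move connects to all $k-1$ trees for a component of size $(k-1)F$ at edge cost $(k-1)\c$. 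This is unprofitable iff $\b\ge\tfrac{(k-1)^2}{k}F-(k-1)\c=(k-2+\tfrac1k)F-(k-1)\c$; since $k\ge4$ makes $\tfrac1k-\tfrac14\le0$, the hypothesis $\b\ge(k-\tfrac74)F$ dominates this threshold for every $\c\ge0$.

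It then remains to dispatch the other deviations, all of which I expect to be strictly dominated. A vulnerable player who buys \emph{any} edge into another tree merges two vulnerable regions into one of size $\ge2F>F$, making herself the unique target and killing her with certainty (component $0$); edges added inside her own tree are redundant and only raise cost; a swap to another tree is fatal for the same merging reason, while a swap inside the tree leaves payoff unchanged. Immunizing without the connect-all move yields component at most $F$ and is defeated by the far weaker bound $\b\ge F/k$, which $\b\ge(k-\tfrac74)F$ implies for $k\ge4$; combined ``drop-immunize-connect'' variants are dominated by the pure connect-all analysis, since detaching from one's own tree only shrinks the component. Having ruled out every unilateral deviation, the forest is a Nash (hence swapstable and linkstable) equilibrium.

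The main obstacle is the bookkeeping of \emph{how the adversary's distribution moves under each deviation}: because \maxcarnage~depends on the global multiset of vulnerable-region sizes, one must argue carefully that an edge-drop or immunization demotes $v$'s own tree out of the maximum class (so $v$ becomes safe and the surviving-component sizes are exactly as claimed), whereas adding a new vulnerable edge instead promotes $v$'s region to the unique maximum. The second delicate point is the optimization over $m$, which pins down connect-all as the worst immunization move and makes the $\b$-threshold bind tightly at $k=4$; getting both thresholds to bind simultaneously at $k=4$ is precisely what fixes the constants $F/4$ and $k-\tfrac74$ in the statement.
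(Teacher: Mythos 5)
Your proposal is correct and follows essentially the same route as the paper's proof: the same centroid-rooted construction with parent-directed edge purchases, the same identification of the two binding deviations (dropping the tree edge, which pins $\c\le F/4$ via $|T_v|\le F/2$ and $k=4$; and immunize-plus-connect-to-all-trees, which pins $\b\ge(k-\tfrac{7}{4})F$ via $\tfrac{(k-1)^2}{k}\le k-\tfrac{7}{4}$ for $k\ge4$), and the same dismissal of the remaining deviations as dominated or fatal. The only cosmetic difference is that you unify the immunization cases through the parameter $m$ and a monotonicity argument, where the paper treats $m=0$ and $m=k-1$ separately and invokes Lemma~\ref{lem:tree2} to kill the edge-to-another-tree deviation rather than computing the zero payoff directly.
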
	
\begin{proof}
Similar to the construction proposed in Proposition~\ref{thm:no-attack-tree-backward} in each tree, 
we can fix a root and guarantee that each player only purchases one edge (towards the root). 
Hence, the possible deviations of any player who purchased an edge are as follows:
\begin{enumerate}
\item dropping her purchased edge.
\item dropping her purchased edge and immunizing.
\item dropping her purchased edge and adding more edge(s).
\item dropping her purchased edge, adding more edge(s) and immunizing.
\item immunizing.
\item adding more edges.
\item adding more edge(s) and immunizing.
\end{enumerate}
By construction, the root of the tree is the only vertex that does not purchase any edge. 
Hence, the deviations of the root is 
only limited to cases 5-7. So in the first four cases we only consider a vertex who purchased 
an edge.

In case 1, dropping the edge will cause the vertex to survive but her connectivity benefit
decreases but it is at most $F/2$ due to Lemma~\ref{lem:tree-root}.
\begin{align*}
\c \leq  \frac{F}{4} \text{ and } k\ge 4 \implies (1-\frac{1}{k})F - \c \ge \frac{F}{2}.
\end{align*}

In case 2, when the vertex drops her purchased edges, she is not part of any targeted region anymore,
so she survives with probability $1$ even without immunization. So as long as the deviation in case 1
is not beneficial, the deviation in case 2 is not beneficial either. 

In case 3, if the vertex drops her purchased edge and buys any edge(s) to any other connected component
she will form the unique targeted region which cannot happen in any equilibrium by Lemma~\ref{lem:tree2}. 
So the only other case to consider is when she drops her edge and buys edge(s) to the same 
connected component she was a part of. Since 
she only requires one edge to connect to the component,
she does not have a better deviation in this case either.

In case 4, 
note that the payoff of a vertex who \emph{does not} purchase an edge and deviates according to case 6 is strictly better
than the deviation in this case. So showing that the deviation in case 6 is not beneficial (as we do shortly) is sufficient to show that the deviation
in case 4 is not beneficial either.

For cases 5-7, first consider a vertex who purchased one edge.
In case 5, when she immunizes, she survives with probability one.
\begin{align*}
\b&\ge (k-\frac{7}{4})F \text{ and } k \geq 4 \implies \b> \frac{3F}{4}
\implies (1-\frac{1}{k}) F - \c >  F-\c-\b.
\end{align*}

In case 6, adding more edges to other components will only result in forming the unique targeted region 
which cannot happen in any equilibrium by Lemma~\ref{lem:tree2}. Also, the vertex is already connected 
to the vertices in her component and any edge beyond a tree is redundant.

Finally, in case 7, when the vertex immunizes, she survives with probability of 1. Furthermore, since $\c<F/4$
she would benefit the most by buying an edge to any connected component she is not connected to.
\begin{align*}
\c &> 0 \text{ and } \b \ge (k-\frac{7}{4})F
\implies (1-\frac{1}{k}) F - \c \ge (k-1)F - k \c - \b.
\end{align*}

Now we consider cases 5-7 for  the vertex that did not purchase any edge. 
It's easy to verify that the argument in case 6 still holds. The argument in cases 5 and 7 also
hold with the only difference that now we have to subtract a $\c$ from both sides of the final 
inequalities in both cases, since the vertex initially has not purchased any edges.
\end{proof}

The forest network in Lemma~\ref{lem:eq-forest} was symmetric (all the trees have the 
same size). We now assert an example of non-symmetric forests that can form in equilibria.

\begin{lemma}
\label{lem:eq-forest-2}
  Let $n=k F+n'$. Then $k$ disjoint targeted trees of size $F$ along with $n'$ 
  vulnerable singleton vertices can form in the equilibrium with respect to the \maxcarnage~adversary if $\c\in(1, F/4)$, $\b\geq (k-1) (F-1)$,
  $k\ge 4$, $F\ge 5$ and $n'\ge 0$.
\end{lemma}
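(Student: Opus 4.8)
The plan is to exhibit an explicit strategy profile realizing the claimed forest and to verify it is a Nash (hence swapstable and linkstable) equilibrium by a deviation analysis that reuses the structure of Lemma~\ref{lem:eq-forest} and adds the treatment of the singletons. \textbf{Construction.} Root each of the $k$ trees as in Proposition~\ref{thm:no-attack-tree-backward} so that no subtree has size exceeding $F/2$, let every non-root tree vertex buy the single edge toward its parent, let the $n'$ singletons buy no edges, and let no vertex immunize. Since each tree is a vulnerable region of size $F\ge 5>1$ and each singleton is a vulnerable region of size $1$, the \maxcarnage~adversary attacks each tree with probability $1/k$ and never attacks a singleton. Consequently a tree vertex survives exactly when some \emph{other} tree is hit, so its expected component size is $\frac{k-1}{k}F$ and its baseline payoff is $\frac{k-1}{k}F$ (root) or $\frac{k-1}{k}F-\c$ (non-root), while each singleton has baseline payoff $1$.

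\textbf{Two structural facts for the tree vertices.} I would transplant the case analysis of Lemma~\ref{lem:eq-forest}, which rests on two observations that survive the addition of the singletons. First, any deviation keeping the deviator vulnerable either leaves her in a component of size below $F$, in which case each purchased edge buys only $+1$ of reachability at cost $\c>1$ and is a net loss, or merges her into a region of size at least $F$, which then ties or beats the trees as a largest region; forming the \emph{unique} largest region has her killed with probability $1$ (cf.\ Lemma~\ref{lem:tree2}), and a tie forces her to have bought enough extra edges that $\c>1$ already makes the move unprofitable. Second, dropping her tree edge detaches a subtree of size at most $F/2$ that is no longer targeted, so she survives with certainty but her component shrinks to at most $F/2$; since $k\ge 4$ and $\c<F/4$ give $\frac{k-1}{k}F-\c \ge \frac{k-1}{k}F - \frac{F}{4} > \frac{F}{2}$, the drop is unprofitable. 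These facts dispose of all deviations that leave immunization unchanged.

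\textbf{Singletons and immunization.} The genuinely new work is governed exactly by $\c>1$ and $\b\ge (k-1)(F-1)$. A singleton that remains vulnerable and buys $j\ge 1$ edges either stays below size $F$, paying $j\c>j$ for only $j$ extra reachability and thus falling below its baseline $1$, or joins a region of size at least $F$ and is attacked with positive probability; so $\c>1$ alone rules out every vulnerable singleton deviation. The only remaining candidate for \emph{any} vertex is to immunize, and among immunizing strategies the payoff is monotone in the number of trees attached to: immunizing and attaching to $m$ trees yields expected component size $1+mF\frac{k-1}{k}$ at cost $m\c+\b$, whose increment in $m$ is $F\frac{k-1}{k}-\c>0$ since $\c<F/4$. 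Hence the extremal immunization deviation is the \emph{hub} move of immunizing and buying one edge into each of the $k$ trees, with payoff $1+(k-1)F-k\c-\b$. I would then verify that $\b\ge (k-1)(F-1)$ makes this at most the baseline: absorbing the $k\c$ of edge cost via $\c>1$, the hub payoff for a singleton is bounded by $k(1-\c)<0<1$, and the analogous bound against the tree-vertex baseline reduces to $(k-1)\bigl(\tfrac{F}{k}+\c\bigr)\ge k$, which $F\ge 5$, $k\ge 4$ and $\c>1$ supply with room to spare.

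\textbf{Main obstacle.} The delicate point is the combined ``drop, rewire, and immunize'' deviations of a tree vertex, where she abandons her parent edge, attaches to several components, and immunizes simultaneously. One must confirm that the hub move is genuinely extremal even after crediting the deviator both for the dropped edge and for any singleton she might absorb, and that the constant $(k-1)(F-1)$ — rather than the cruder $(k-1)F$ one obtains by ignoring edge costs — is exactly what the calculation needs once the $\c>1$ savings on the $k$ hub edges are counted. Pinning down that this single bound on $\b$ simultaneously defeats the hub deviation from the root, from a non-root tree vertex, and from a singleton is the main bookkeeping burden; the roles of $F\ge 5$, $k\ge 4$ and $\c<F/4$ are precisely to keep every baseline payoff comfortably above both $F/2$ and $1$ so these inequalities hold.
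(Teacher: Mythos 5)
Your proposal is correct and follows essentially the same route as the paper's proof: the identical construction (rooting each tree via Proposition~\ref{thm:no-attack-tree-backward}), the same dichotomy between vulnerable deviations (killed off by $\c>1$, the unique-targeted-region argument, and the $F/2$ subtree bound from Lemma~\ref{lem:tree-root}) and immunizing deviations, and the same identification of the all-trees hub move by the root or by a singleton as the extremal case, defeated by $\b\ge(k-1)(F-1)$ together with $\c>1$. The only cosmetic differences are your explicit monotonicity-in-$m$ argument for why the hub move dominates all other immunizing deviations (the paper simply asserts this and notes the root weakly dominates any edge-purchasing tree vertex) and a harmless slip in writing the reduction as $(k-1)\bigl(\tfrac{F}{k}+\c\bigr)\ge k$ where $\ge k-1$ is what the calculation actually requires.
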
	

\begin{proof}
Similar to the construction proposed in Proposition~\ref{thm:no-attack-tree-backward}, for each of the trees, 
we can fix a root and guarantee that each player only purchases one edge (towards the root). 

First, observe that no vertex would like to add or drop an edge without changing her immunization decision.
The singleton vertices, would not like to buy an edge to another singleton vertex (since $\c>1$) nor buy an
edge to a vertex in the trees (since they would form the unique targeted region). Vertices in
of a tree would not want to add an edge to another vertex either, they are already connected to other vertices in
their tree and would form the unique targeted region if they buy an edge to a vertex outside of their tree.
Furthermore, they would not drop their only purchased edge because
\begin{align*}
\c \leq  \frac{F}{4} \text{ and } k\ge 4 \implies (1-\frac{1}{k})F - \c \ge \frac{F}{2},
\end{align*}
where again $F/2$ is the maximum connectivity benefit they can receive according to Lemma~\ref{lem:tree-root}.

Now we show that no vertex in any of the trees could immunize and add more edges to strictly increase her 
expected payoff. We only consider the root vertices, since they are the only vertices who has not
purchased an edge yet (so given the same deviation, the payoff of no vertex who purchased an edge can be higher than the root).
We also point out that the most beneficial deviation is when the root buys an edge to every other tree (since
she pays $\c$ for an edge but gets a benefit of $(1-1/(k-1))F>\c$). So we have,
\begin{align*}
\c &\geq 1 \text{ and } \b \ge (k-1)(F-1)\implies (1-\frac{1}{k})F\geq (k-1)F-(k-1)\c-\b.
\end{align*}
Similarly, a singleton vulnerable vertex would like to buy an edge to every tree after immunization in her best deviation
which is still not beneficial.
\begin{align*}
\c &\geq 1 \text{ and } \b \ge (k-1)(F-1)\implies 1\geq (k-1)F-(k-1)\c-\b.
\end{align*}
\end{proof}

We conclude this section by pointing out that other non-symmetric forest equilibria can form
in the equilibria of our game e.g., when
there are 
some targeted trees along with vulnerable trees of smaller size and singleton vertices.
\subsection{Cycles}
We now assert that unlike the original reachability game, cycles can form in the equilibria of our game
when we focus on the \maxcarnage~adversary. 
Indeed, we show that an alternating cycle of immunized and targeted vertices can form in equilibria.
\begin{lemma}
\label{lem:eq-cycle}
A cycle of $n=2k$ alternating immunized and targeted vertices can form in
equilibria with respect to the \maxcarnage~adversary when \emph{(i)} every vertex buys an edge to the vertex in her clockwise
direction in the cycle, and \emph{(ii)} $\c \in (1, n/2-2)$, $\b\in (2, n/2+1)$ and $k\ge 4$.
\end{lemma}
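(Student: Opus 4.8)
The plan is to show the alternating cycle is a \emph{Nash} equilibrium with respect to the \maxcarnage~adversary; since every Nash equilibrium is also swapstable and linkstable, this establishes the claim for all three solution concepts at once. First I would record the post-attack geometry. Because immunized and vulnerable vertices alternate, every vulnerable vertex is flanked by two immunized vertices and hence forms its own vulnerable region, so there are exactly $k$ singleton regions, all of maximum size, and the \maxcarnage~adversary attacks each with probability $1/k$. The crucial structural fact is that deleting a single vertex from a cycle on $n=2k$ vertices leaves a connected path on $n-1$ vertices. Consequently an immunized vertex (who always survives) is connected to all $n-1$ survivors after \emph{every} attack, giving connectivity benefit $n-1$ and utility $(n-1)-\c-\b$; a vulnerable vertex is killed with probability $1/k$ and otherwise lies in the surviving $(n-1)$-vertex path, giving utility $(1-1/k)(n-1)-\c$. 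Under the stated ranges one has $\c+\b<(n/2-2)+(n/2+1)=n-1$, so both utilities are strictly positive, which I would verify at the outset.

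By the rotational symmetry of the construction there are only two deviating roles to analyze, an immunized vertex $w$ and a vulnerable vertex $u$, and for each I would argue by best response, splitting into the ``stay immunized'' and ``become vulnerable'' cases. For the immunized vertex $w$: the best she can do while immunized is to secure connectivity $n-1$, and since the other players' purchased edges already form a path through all remaining vertices, one edge (closing the cycle) suffices; moreover with no edge she sits at the end of a path and attains expected component size only $k<n-1$, so keeping one edge is worth its cost because $\c<k-1$. Thus her optimal immunized utility is exactly her current $(n-1)-\c-\b$. The key observation is that dropping her immunization is disastrous: her counterclockwise neighbor keeps its purchased edge into $w$ and is itself vulnerable, so $w$ necessarily belongs to a vulnerable region of size at least $2$ --- the \emph{unique} maximum region --- which is attacked with probability $1$, killing her for nonpositive utility. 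Hence no deviation helps $w$.

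For the vulnerable vertex $u$ the analysis is symmetric, but this is where the parameter conditions bite. Staying vulnerable, $u$ is one of $k$ singleton regions, so her attack probability is pinned at $1/k$ and her best attainable utility is the current $(1-1/k)(n-1)-\c$; in particular, dropping her edge leaves her on a path where one computes the expected surviving component shrinks to $k-1$, so this deviation is unprofitable whenever $\c<k-2+1/k$, which follows from $\c<n/2-2=k-2$. If instead $u$ immunizes and keeps her edge she gains sure survival but pays \b; comparing $(n-1)-\c-\b$ with $(1-1/k)(n-1)-\c$ shows this is unprofitable exactly when $\b\ge (n-1)/k=2-1/k$, which follows from $\b>2$. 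Immunizing while dropping or rewiring her edge is dominated by analogous computations (an immunized vertex joined to the rest through a single neighbor on a path attains expected connectivity $k$, and the inequality again holds comfortably under $\b>2$ and $\c<k-2$), and any edge to another vulnerable vertex only enlarges her region and raises her attack probability.

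The main obstacle is the careful bookkeeping of the two ``drop-the-edge'' deviations, where a vertex moves off the robust cycle onto a fragile path and one must compute the expected size of her surviving component as a function of which random singleton is attacked; both computations reduce to summing an arithmetic progression and yield the clean values $k-1$ and $k$, after which the stated ranges make every inequality strict. A secondary but essential point is the asymmetry between the two roles under an immunization flip: the immunized vertex is \emph{trapped} into the unique maximum vulnerable region by her neighbor's purchased edge, whereas the vulnerable vertex is free to remain a singleton; this is precisely why alternation (as opposed to, e.g., the every-third-vertex pattern of Example~\ref{exp:ne-vs-ls-2}) yields a genuine Nash equilibrium rather than merely a linkstable one. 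Finally I would note that $k\ge 4$ is needed only to keep the interval $(1,n/2-2)$ for \c~nonempty.
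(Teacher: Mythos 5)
Your proposal is correct and follows essentially the same route as the paper's proof: rotational symmetry reduces the analysis to one immunized and one vulnerable vertex, the immunization flip is ruled out by the neighbor's purchased edge trapping the deviator in the unique maximum vulnerable region, and the two edge-drop deviations reduce to the same arithmetic-progression computations (expected surviving components $k$ and $k-1$ at the end of the resulting path), closed by the same parameter inequalities. The only difference is organizational --- you compress the paper's exhaustive seven-case enumeration for the vulnerable vertex into cap arguments ($n-1$ while immunized, $(1-1/k)(n-1)$ while a vulnerable singleton) plus the explicit drop-edge calculations --- but the underlying argument is identical.
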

\begin{proof}
  The proof is by case analysis for immunized and targeted
  vertices, respectively. First observe that the expected size of the connected
  component of immunized and vulnerable vertices are $n-1$ and
  $(1-1/k)(n-1)$, respectively.  Also the expenditure of immunized and
  vulnerable vertices are $\c+\b$ and $\c$, respectively.

  Let's start with an immunized vertex.  Note that if an immunized vertex
  changes her immunization decision, she will deterministically get
  killed by the adversary regardless of her edge purchasing decision
  because the immunized vertex is already connected to a targeted
  vertex. So as long as the payoff of an immunized vertex is greater than
  zero, she cannot change her immunization decision.
\begin{align*}
&  \c< \frac{n}{2}-2 \text{ and }  \b <  \frac{n}{2}+1
\implies (n-1)-\b-\c > 0.
\end{align*}
So for an immunized vertex, given that she never benefits by changing
her immunization decision, the possible deviations are as follows.
\begin{enumerate}
\item dropping her purchased edge and adding new edge(s).
\item dropping her purchased edge.
\item adding more edge(s).
\end{enumerate}
We compare the utilities after and before any of these
deviation and show that given the conditions in the 
statement of Lemma~\ref{lem:eq-cycle}, none of these deviations are beneficial.

In case 1, after purchasing $i \ge 1$ edge(s) the expected size of the connected 
component of the immunized vertex is (trivially) at most $n-1$. 
So the deviation is not beneficial because she is currently achieving the 
same expected size using only edge purchase.

In case 2, 
\begin{align*}
&\c < \frac{n}{2}-2 < \frac{n}{2} - 1 
\implies (n-1)-\c-\b > \frac{1}{k}\Big(1+3+\ldots + (2k-1)\Big)-\b.
\end{align*}
In case 3, no matter which new edge(s) the immunized vertex purchases,
her expected connected component has size at most $n-1$. So
adding more edges will only increase the expenditure.

Now consider a targeted vertex. Since the network is symmetric, it
suffices to consider one such vertex. Her deviations are as follows.
\begin{enumerate}
\item dropping her purchased edge and adding new edge(s).
\item dropping her purchased edge.
\item adding more edge(s).
\item immunizing.
\item dropping her purchased edge, adding new edge(s) and immunizing.
\item dropping her purchased edge and immunizing.
\item adding more edge(s) and immunizing.
\end{enumerate}
Again, we compare the utilities after and before each deviation and
show that given the conditions in the statement of
Lemma~\ref{lem:eq-cycle}, the deviations are not beneficial.

Case 1 is similar to the analysis of case 1 for the immunized vertex with the only difference
that $n-1$ should be replaced by $(1-1/k)(n-1)$.

In case 2,
\begin{align*}
&\c < \frac{n}{2} - 2 < \frac{n}{2} - 2 + \frac{2}{n}
\implies (1-\frac{1}{k})(n-1) - \c > \frac{1}{k}\Big(2 + \ldots + (2k-2)\Big).
\end{align*}

Similar to the case 3 for the immunized vertex, in case 3 
no matter which new edge(s) the targeted vertex purchases, 
her expected connected component is at most $(1-1/k)(n-1)$. So adding more edges 
will only increase her expenditure.

In case 4, the vertex will survive with probability 1, but she has to pay for immunization.
\begin{align*}
\b &> 2 > 2-\frac{1}{k}=\frac{1}{k}(n - 1)
\implies (1-\frac{1}{k})(n-1) - \c > (n-1) - \c - \b.
\end{align*}

In case 5, when 
the targeted vertex immunizes, she survives with probability one. 
Suppose she adds $i\geq 1$ edges, then
the size of her connected component after the attack is at most $n-1$.
\begin{align*}
&   \c > 1 \text{ and }  \b > 2 
\implies \b + i \c > 3 > \frac{1}{k}(n-1)
\implies (1-\frac{1}{k})(n-1) - \c > (n-1) - i \c - \b.
\end{align*}

In case 6, 
\begin{align*}
 \c < \frac{n}{2}-2 \text{ and } \b> 2
&\implies \c - \b < \frac{n}{2}-4 <  \frac{n}{2}-3+\frac{1}{k}\\
 &\implies (1-\frac{1}{k})(n-1) - \c > 
\frac{1}{k-1}\Big(2 + \ldots + (2k-2)\Big)-\b .
\end{align*}

In case 7, the targeted vertex survives after immunization. 
Suppose she adds $i\geq 1$ edges, then
the size of her connected components 
is at most $n-1$.
\begin{align*}
 \c > 1 \text{ and } \b > 2
&\implies \b + i \c > 3 > \frac{1}{k}(n-1)\\
&\implies
(1-\frac{1}{k})(n-1) - \c > (n-1) - (i+1) \c - \b.
\end{align*}
\end{proof}
\subsection{Flowers}
We next show that multiple cycles can form in equilibria when 
we focus on the \maxcarnage~adversary. The flower equilibrium
in this section is an illustration of such phenomenon. In the flower equilibrium each
petal has the same pattern of immunization as the cycle in Lemma~\ref{lem:eq-cycle}.

\begin{lemma}
\label{lem:eq-flower}
Let $n = F (2k-1)+1$. Consider a flower network containing $F$
petals (cycles) of size $2k$ where all the cycles share exactly one vertex.
Assume each petal is composed of alternating immunized and targeted
vertices, and the shared vertex is immunized. Then the flower network 
can form in the 
equilibrium with respect to the \maxcarnage~adversary when \emph{(i)} in each petal, the targeted vertices buy
both of the edges to their immunized neighbors, and \emph{(ii)}
$\b\in\left(2, (2k-1)F\right)$, $\c \in\left(0, \min\left\{(k-1)F-2, \left((k-1)^2+5\right)/(2kF)\right\}\right)$, $k\geq 2$ and
$F\geq 3$.
\end{lemma}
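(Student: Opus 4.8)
The plan is to verify directly that in this configuration no single player has a profitable deviation, organizing the argument around the three types of vertices: the shared center (immunized), the other immunized vertices, and the targeted vertices. First I would record the two structural facts that drive every computation. Because the colors alternate around each even cycle and the shared vertex is immunized, no two targeted vertices are adjacent, so every vulnerable region is a singleton; there are exactly $Fk$ of them and hence the \maxcarnage~adversary attacks each targeted vertex with probability $1/(Fk)$. Moreover, deleting any single targeted vertex turns its petal from a cycle into a path that is still attached to the center, so every other vertex stays connected. Consequently an immunized vertex always lies in a component of size $n-1$, giving utility $(n-1)-\b$, while a targeted vertex (which owns both of its incident edges and dies with probability $1/(Fk)$) has expected utility $\left(1-\tfrac{1}{Fk}\right)(n-1)-2\c$.

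For the immunized vertices, adding edges is never useful since such a vertex already reaches every survivor, so the only deviations to rule out change the immunization bit. If an immunized vertex drops immunization it merges with its targeted neighbours into a vulnerable region that is strictly the largest in the graph (size $2F+1$ for the center, size $3$ for a non-center immunized vertex), which the adversary then attacks with certainty, killing the deviator; this is unprofitable exactly when its current utility $(n-1)-\b$ is nonnegative, i.e. $\b\le(2k-1)F$, the stated upper bound on $\b$. The lower bound $\b>2$ comes from the symmetric calculation on the targeted side: a targeted vertex that immunizes trades away its death probability $1/(Fk)$ for the cost $\b$, and this is unprofitable once $\b$ exceeds $(n-1)/(Fk)<2$; any accompanying edge purchase only wastes $\c$, since immunizing while keeping both edges already secures a component of size $n-1$.

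The real work is in the targeted vertices' edge deviations, and this is where I expect the main obstacle. A targeted vertex $t$ owns the two cycle edges joining it to its immunized neighbours. Dropping the edge to the center is catastrophic: once a nearby targeted vertex of the same petal is attacked, $t$ is severed from the center and hence from every other petal, so that deviation is dominated. The delicate case is dropping the \emph{outer} edge, to the non-center immunized neighbour; then $t$ itself stays attached to the center, but its petal is now a bare path, so an attack on any other targeted vertex of that petal strands the whole chain of vertices lying beyond the attacked one. I would compute the induced expected loss in component size by summing, over the $k-1$ other targeted vertices of the petal, the number of vertices stranded when each is attacked; this sum telescopes to a loss of order $(k-1)^2/(Fk)$, and requiring that it exceed the saved edge cost $\c$ produces the upper bound on $\c$. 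The crux is precisely this chain-stranding count: unlike the single-cycle case (Lemma~\ref{lem:eq-cycle}), the shared center couples the petals and changes exactly which vertices are cut off after a redundancy edge is removed, so the accounting must be done petal by petal and position by position.

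Finally I would dispatch the remaining targeted deviations against the same baseline. Adding an edge is useless, since a single edge to an immunized neighbour already gives full reachability whenever $t$ survives, so the original two-edge placement already attains the maximal expected component $\left(1-\tfrac1{Fk}\right)(n-1)$ and no swap can strictly improve it; swapping the outer edge toward another petal is forbidden because it would create a unique largest region, which cannot occur at equilibrium by Lemma~\ref{lem:tree2}. The combined immunize-and-rewire deviations are checked likewise, and the second term $(k-1)F-2$ in the minimum defining the admissible range of $\c$ is exactly what is needed to kill the most attractive such combined deviation. Since every deviation I consider is a single link change, a single immunization flip, or one of each, the argument simultaneously certifies the configuration as a Nash, swapstable, and linkstable equilibrium with respect to the \maxcarnage~adversary.
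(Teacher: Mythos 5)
Your setup and roughly half of the case analysis track the paper's proof: the observation that all vulnerable regions are singletons so the attack is uniform over the $kF$ targeted vertices, the baseline utilities $(n-1)-\b$ and $\left(1-\tfrac{1}{kF}\right)(n-1)-2\c$, the argument that an immunized vertex de-immunizing becomes the unique largest targeted region (giving $\b<(2k-1)F$), and the immunization-flip calculation for targeted vertices (giving $\b>2-\tfrac1k$) are all exactly the paper's steps.

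The gap is in the targeted vertices' edge deviations, and it is a real one. For $k\ge 3$ most targeted vertices are \emph{not} adjacent to the shared center, so your dichotomy ``drop the edge to the center'' versus ``drop the outer edge'' only describes two vertices per petal, and the deviation you analyze is not the binding one. The binding deviation, which the paper's Case~2 isolates, is a targeted vertex deep in the petal dropping \emph{both} of its cycle edges and purchasing a single new edge directly to the shared center. This keeps every vulnerable region a singleton (so the attack distribution is unchanged), saves a full $\c$, and splits the petal into \emph{two} arcs hanging off the center, so the expected stranding loss is the sum of two roughly balanced odd-number series rather than your single series: your telescoping count for the center-adjacent vertex gives a loss of order $(k-1)^2/(kF)$, whereas the paper's worst-case deviator (the vertex about $k-1$ hops from the center) suffers only about $(k-1)^2/(2kF)$ --- which is precisely the second term in the lemma's bound on $\c$. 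For $\c$ between these two thresholds your argument would certify the flower as an equilibrium while a profitable deviation exists. Your dismissal of swaps (``toward another petal creates a unique largest region'') does not cover this swap toward the center; and since it drops two edges and adds one, it is a Nash deviation outside the swap class, so your closing claim that single-link changes plus an immunization flip certify full Nash is exactly where the coverage fails --- the paper instead first argues that two purchased edges already give maximal reachability and then exhaustively checks all configurations with $0$, $1$, or $2$ edges. A smaller inaccuracy: the term $(k-1)F-2$ is used in the paper to kill the ``drop both edges and become isolated'' deviations (with or without immunizing), not a combined immunize-and-rewire deviation.
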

\begin{proof}
First note that the expected size of the connected component for
  immunized and targeted vertices are $(2k-1)F$ and
  $(1-1/(kF))(2k-1)F$, respectively.  Also the expenditure of immunized 
  and targeted vertices are $\b$ and $2 \c$, respectively.

First, consider any immunized vertex. Any such vertex is connected to
  every survived vertex in the network after any attack. So no such vertex
  wants to add any edges. Furthermore, she does not want to
  change her immunization decision because (regardless of her edge
  purchases) she will form the unique largest targeted region. So as
  long as her current payoff is bigger than zero, she would not change
  her action. This means
\[
\b < (2k-1)F \implies (2k-1)F - \b > 0.
\]

Now, consider any targeted vertex. Such vertex (when survives) is also
connected to to every survived vertex in the network after any
attack. Since she managed to do so with only two edge purchases, it
suffices for us to only consider deviations such that the number of
edges purchased by the targeted vertex is at most $2$.  So her
possible deviations are as follows.
\begin{enumerate}
\item buying two edges and immunizing.
\item buying one edge.
\item buying one edge and immunizing.
\item buying no edges.
\item buying no edges and immunizing.
\end{enumerate}
We compare the utilities before and after each deviation and show that
given that given the conditions in Lemma~\ref{lem:eq-flower} that none of the
deviations are beneficial.

Remind that the current edge purchases of any targeted vertex connect
her to any survived vertex. So if an targeted vertex is buying two
edges, she can do no better than her current purchases. So in case 1,
it suffices to check only the deviation in the immunization decision.
\begin{align*}
&\b > 2 > 2 - \frac{1}{k} 
\implies (1-\frac{1}{kF})(2k-1)F -2\c > (2k-1)F -2\c -\b.
\end{align*}

In case 2, first observe that if a targeted vertex is going to buy a
single edge, she will buy it to the central immunized vertex if she
wants to maximize her expected size of the connected component after
attack.\footnote{She would remain connected to at least $(F-1)$ of the petals if she survives.} 
Second, among all the targeted vertices in a petal, the targeted vertex with the maximum
expected size of the connected component is the vertex who is $k-1$ hops away from the 
central immunized vertex.\footnote{
Fix a petal and consider a targeted vertex who purchased an edge to the central immunized vertex.
If the attack happens outside of this petal, then the
  size of connected is the same for all the targeted vertices in the petal.  If
  the attack happens in the petal and the vertex survives, her expected utility is 
  at least half of her petal size (and sometimes more) regardless of the attack.}  
So it suffices to consider the deviation of such vertex.
\begin{align*}
\c &< \min\{(k-1)F-2,\frac{(k-1)^2+5}{2kF}\} <\frac{(k-1)^2+5}{2kF}
\implies\\
 &(1-\frac{1}{kF})(2k-1)F -2\c >\\ 
& (1-\frac{1}{kF})(2k-1)F
-\frac{1}{kF}\left(1+3+\ldots+(k-3)+1+3+\ldots+(k-1)\right)-\c.
\end{align*}

The same argument holds for case 3, with the only difference than the vertex will
survive with probability of $1$.
\begin{align*}
\c &< \min\{(k-1)F-2,\frac{(k-1)^2+5}{2kF}\} <\frac{(k-1)^2+5}{2kF}
\text{and }\b > 2-\frac{1}{kF}
\implies\\ 
&(1-\frac{1}{kF})(2k-1)F -2\c > \\
&(2k-1)F-\frac{1}{kF}\left(1+3+\ldots+(k-3)+1+3+\ldots+(k-1)\right)
-\c-\b.
\end{align*}

In case 4, she still survives with the same probability but her size of connected 
component is only 1 anytime she survives.
\begin{align*}
\c \leq \min\{(k-1)F-2,\frac{(k-1)^2+5}{2kF}\} & \leq (k-1)F-2
\leq  kF-\frac{F}{2}-\frac{3}{2} + \frac{1}{2k}+\frac{1}{2kF}\\
\implies &(1-\frac{1}{kF})(2k-1)F -2 \c \geq (1-\frac{1}{kF}).
\end{align*}

In case 5, she survives with probability 1, but the size of her connected component is 1.
\begin{align*}
\c \leq \min\{(k-1)F-2,\frac{(k-1)^2+5}{2kF}\} &\leq (k-1)F-2
\leq kF-\frac{F}{2}-\frac{1}{2} + \frac{1}{2k}\text{ and }\b > 2\\
\implies &(1-\frac{1}{kF})(2k-1)F -2\c > 1-\b.
\end{align*}
\end{proof}

The number of edges in the flower equilibrium is $n+F-1$. So to get the densest 
flower equilibrium, it suffices to set $F$ as large as possible or $k$ as small as possible. Setting $k=2$
will result in a flower equilibrium with $4n/3-O(1)$ edges.
We finally point out that among all the examples of equilibrium in this section with strictly more than $n$ edges, 
the flower is the only example that remains an equilibrium with respect to the \maxcarnage~adversary even when $\c>1$.
\subsection{Complete Bipartite Graph}
We finally show that a specific form of complete bipartite graph can form in equilibria
when we focus on the \maxcarnage~adversary.
The equilibria presented in Lemma~\ref{lem:eq-biclique} have $2n-4$ edges which shows
that our upper bound on the density of equilibria (Theorem~\ref{thm:sparse-general}) is tight.
\begin{lemma}\label{lem:eq-biclique}
  Consider a complete bipartite graph $G=(U\cup V, E)$ with $|U|=2$
  and $|V|\ge1$.  $G$ can form in the equilibrium 
  with respect to the \maxcarnage~adversary if all the vertices
  in $U$ are targeted, all the vertices in $V$ are immunized, the
  vertices in $U$ purchase all the edges in $E$, $\c\in(0,1/2]$ and
  $\b\in\left((n-1)/2, n-1\right)$.
\end{lemma}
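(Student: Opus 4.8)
The plan is to establish this is a Nash equilibrium (which is the strongest of the three notions, so swapstability and linkstability follow) by ruling out every profitable unilateral deviation for the two vertex types: the two targeted vertices $u_1,u_2\in U$ and the $n-2$ immunized vertices $v\in V$. The first step is to pin down the attack distribution. Since $U$ has no internal edges and every edge of $G$ runs between $U$ and $V$, the two vertices of $U$ lie in \emph{distinct} vulnerable regions, each a singleton, and there are no other vulnerable vertices. Hence the \maxcarnage~adversary faces two tied regions of size $1$ and attacks each with probability $1/2$. From this I read off the baseline payoffs: an immunized $v$ survives every attack inside the component $\{u_j\}\cup V$ of size $n-1$ and pays only \b, giving utility $(n-1)-\b$; a targeted $u_i$ is killed with probability $1/2$ and otherwise sits in a component of size $n-1$, paying $(n-2)\c$, giving utility $(n-1)/2-(n-2)\c$. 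Both are positive in the stated parameter ranges.

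Next I would dispose of the immunized vertices. Staying immunized, $v$ is already connected to every surviving vertex after any attack, so buying extra edges only raises cost and is unprofitable since $\c>0$. Dropping immunization is the interesting case: a newly vulnerable $v$ is joined, through the edges $u_1,u_2$ have already bought to it, to \emph{both} targeted vertices, forming the unique largest vulnerable region $\{u_1,u_2,v\}$ of size $3$; this region is attacked with probability $1$ and is destroyed entirely, killing $v$ with certainty for utility at most $0<(n-1)-\b$. Crucially, no edge purchases can extract $v$ from this doomed region (the incoming edges from $U$ are outside $v$'s control), so every deviation that abandons immunization fails, using only $\b<n-1$.

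Then I would treat the targeted vertices. Dropping any $t$ of $u_i$'s edges strands the corresponding $V$-neighbors whenever $u_i$ survives, since their only other link is the then-dead partner in $U$; the expected component size falls from $(n-1)/2$ to $(n-1-t)/2$ while cost falls by $t\c$, so the deviation changes utility by $t(\c-1/2)\le 0$ and is never strictly profitable precisely because $\c\le 1/2$. Adding the single missing edge, to the other $U$-vertex, merges the two singletons into a size-$2$ region destroyed with certainty, which is strictly worse. The crux is the immunization deviation: if $u_i$ immunizes and drops $t$ edges its utility is $(n-1-t)-(n-2-t)\c-\b$, which is monotone decreasing in $t$ (each dropped edge contributes $-1+\c<0$), so the best variant keeps all edges and yields $(n-1)-(n-2)\c-\b$; this exceeds the baseline exactly when $\b<(n-1)/2$, and since we assume $\b>(n-1)/2$ it is strictly unprofitable.

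The main obstacle I anticipate is the bookkeeping for the combined immunize-and-rewire deviations of the targeted vertices: I must argue cleanly that once $u_i$ immunizes, shrinking the edge set only lowers payoff and enlarging it is wasteful, so that the single threshold comparison $\b>(n-1)/2$ dismisses the entire family at once. Everything else reduces to the two inequalities $\b<n-1$ (which keeps immunized vertices immunized) and $\c\le 1/2$ (which keeps targeted vertices from shedding edges), both supplied by the hypotheses, and I would close by noting that the graph has $|U|\cdot|V|=2(n-2)=2n-4$ edges, matching the bound of Theorem~\ref{thm:sparse-general}.
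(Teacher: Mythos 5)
Your proposal is correct and follows essentially the same route as the paper's proof: compute the baseline payoffs under the uniform attack on the two singleton targeted regions, rule out un-immunization for $V$-vertices via certain death and $\b<n-1$, and dispatch all edge/immunization deviations of the $U$-vertices by parametrizing over the number of retained edges, with the two threshold inequalities $\c\le 1/2$ and $\b>(n-1)/2$ doing the work. The only cosmetic differences are that you count dropped edges $t$ where the paper counts purchased edges $k$, and you argue directly that buying the $u_1$--$u_2$ edge is fatal or wasted where the paper also cites its Lemma on non-unique targeted regions.
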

\begin{proof}
  The proof is by case analysis for immunized and targeted
  vertices, respectively. First observe that the expected size of the connected
  component of immunized and targeted vertices are $n-1$ and
  $(n-1)/2$, respectively.  Also the expenditure of immunized and
  targeted vertices are $\b$ and $(n-2)\c$, respectively. 
  
   Consider an immunized vertex first. If she changes her immunization decision, 
   she will deterministically get killed by the adversary regardless of her edge purchasing 
   decision because the immunized vertex is already connected to a targeted vertex and hence she will
   form the unique largest targeted region. So as long as her payoff is greater than zero,
    she would not change her immunization decision.
    \[
    \b\le n-1\implies (n-1)-\b\geq 0.
    \]
   Also the immunized vertex remains connected to every vertex that survives, regardless of the attack. So she 
   would not want to by any edges.
   
   Now consider a targeted vertex. First, note that since $\c\leq 1/2$, the current utility of a targeted vertex is 
   at least $1/2$. Next, it is easy to observe that no deviation of a targeted vertex can be beneficial if she purchases
   an edge to the other targeted vertex, regardless of her choice of immunization or her other edge purchases. 
   If she does not immunize, buying an edge to the other targeted vertex will result in forming the largest unique targeted 
   region which cannot happen in any equilibrium by Lemma~\ref{lem:tree2}. If she immunizes, the other targeted
   vertex becomes the unique largest targeted region, so she would not benefit by purchasing an edge in this case
   either.
   
   This, together with the symmetry of the network with respect to immunized vertices imply that the deviations 
   of a targeted vertex that we need to consider are as follows.
   \begin{enumerate}
   \item purchasing $k\in\{0, \ldots, n-3\}$ edges to immunized vertices.
   \item purchasing $k\in\{0, \ldots, n-3\}$ edges to immunized vertices and immunizing.
   \end{enumerate}
   
   We compare the utilities of the targeted vertex before and after the deviation and show that
   none of the above deviations are beneficial.
   
   In case 1,
   \begin{align*}
   k\leq n-3 < n-2 \text{ and } \c \leq \frac{1}{2}
    &\implies (n-k-2) \c \le \frac{n-k-2}{2}\\
    &\implies\frac{n-1}{2} - (n-2)\c\geq \frac{k+1}{2}-k\c.
   \end{align*}
   
In case 2,
\begin{align*}
\c \leq \frac{1}{2} \text{ and } \b\geq \frac{n-1}{2}\ge\frac{k+1}{2}
&\implies (n-k-2)\c-\b\leq \frac{n-k-2}{2} - \frac{k+1}{2}\\
&\implies \frac{n-1}{2}-(n-2) \c\geq (k+1)-k \c-\b.
\end{align*}
\end{proof}
\section{Missing Proofs from Section~\ref{sec:welfare}}
\label{sec:missing-connectivity}

To prove Theorem~\ref{thm:connect} with respect to the \maxcarnage~adversary,
first in Lemma~\ref{lem:edge-immunized} we show that in a non-trivial equilibrium network
 every immunized vertex has an adjacent edge. Then in Lemma~\ref{lem:no-isolated-immunized}
we show that all the immunized vertices are in the same connected component of the non-trivial
equilibrium network.
\begin{lemma}\label{lem:edge-immunized}
  Let $G = (V,E)$ be a non-trivial
  Nash, swapstable or linkstable equilibrium network with respect to the \maxcarnage~adversary. 
  Then, for all $u\in \I$, there is an edge $(u,v)\in E$.
\end{lemma}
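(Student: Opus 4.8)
The plan is to argue by contradiction: suppose some $u\in\I$ has no incident edge, so $u$ is an isolated immunized vertex with payoff exactly $1-\b$. The only leverage $u$ has is its wasted immunization, so the first move is to compare $u$ against the deviation in which it simply drops immunization (a standalone immunization flip, which is a legal deviation under all three equilibrium notions). If $u$ becomes vulnerable it forms a vulnerable region of size $1$, and under the \maxcarnage~adversary such a region is attacked only if $1$ is the maximum vulnerable-region size. Hence, if $G$ contained any vulnerable region of size $\ge 2$, this deviation would leave $u$ unattacked with payoff $1>1-\b$, a contradiction. So I would first conclude that in such a $G$ every vulnerable region must be a singleton.

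Next I would count the singletons. Writing $t$ for the number of vulnerable vertices, each is its own targeted region, so by Lemma~\ref{lem:tree2} we cannot have $t=1$; thus $t=0$ or $t\ge 2$. The case $t=0$ (all vertices immunized, no attack) I would dispatch by reduction: with immunization fixed the per-player payoff is component size minus edge cost minus the \emph{constant} $\b$, so edge best responses coincide with those of the attack-free reachability game. By Proposition~\ref{thm:no-attack-tree-forward} its equilibria are only the empty graph or a spanning tree; but a non-trivial $G$ is not empty, and a spanning tree is connected with no isolated vertex — either way contradicting that $u$ is isolated while $G$ has an edge.

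The remaining case $t\ge 2$ is where the real work lies. Here all $t$ vulnerable vertices are singletons, each attacked with probability $1/t$. Re-using that $u$ does not want to drop immunization — now that deviation makes $u$ one of $t+1$ tied singletons, giving expected payoff $\tfrac{t}{t+1}$ — forces $1-\b\ge \tfrac{t}{t+1}$, i.e. $\b\le \tfrac{1}{t+1}<\tfrac1t$. I would then show that under this bound \emph{any} vulnerable vertex $v$ strictly prefers to immunize while keeping its edges. Letting $S_x\ge 1$ denote the size of $v$'s surviving component when another singleton $x$ is killed (a quantity independent of $v$'s own immunization status, since $v$ survives in both scenarios), immunizing removes $v$'s own $\tfrac1t$ chance of death and re-weights the surviving outcomes from $\tfrac1t$ to $\tfrac1{t-1}$, so the net gain is $\tfrac{1}{t(t-1)}\sum_{x\ne v}S_x-\b\ge \tfrac1t-\b>0$ (using $S_x\ge1$ over the $t-1$ surviving outcomes). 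This standalone-immunization deviation is legal and strictly profitable — the final contradiction.

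I expect the $t\ge 2$ case to be the main obstacle, and the idea that unlocks it is to extract the inequality $\b\le \tfrac{1}{t+1}$ from $u$'s own indifference to unimmunizing and to observe that it is exactly strong enough ($\b<\tfrac1t$) to make every remaining vulnerable vertex want to immunize. The one subtlety to verify is that $v$'s immunization neither merges nor splits any vulnerable region — which it cannot, since singleton regions have no vulnerable neighbors — so that the attack distribution behaves as computed.
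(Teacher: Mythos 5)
Your proof is correct, but it reaches the conclusion by a genuinely different route than the paper. The paper's argument is short and case-free: take any purchased edge $(x,y)$ (one exists by non-triviality); since its buyer $x$ does not drop it, her expected post-attack component size $\mu$ satisfies $\mu-\c\ge 1-p>0$, where $p<1$ is the probability that $x$ is killed; and since an edge bought by an isolated immunized vertex leaves the attack distribution of the \maxcarnage~adversary unchanged, $u$ can buy $(u,x)$ and pocket exactly $\mu-\c>0$, a contradiction. You instead mine $u$'s own reluctance to un-immunize for global structure --- all vulnerable regions are singletons and $\b\le 1/(t+1)$ --- and then turn that bound against a vulnerable vertex, which now strictly profits from immunizing; your computation $\frac{1}{t(t-1)}\sum_{x\ne v}S_x\ge\frac{1}{t}>\b$ is right, as is the check that a singleton's immunization neither merges nor splits regions, so the $t\ge 2$ case is sound. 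The paper's route buys brevity and uniformity; yours extracts extra structural information along the way. The one spot to tighten is your $t=0$ case: Proposition~\ref{thm:no-attack-tree-forward} is stated only for $\c\in(0,n-1)$ and characterizes Nash equilibria of the attack-free game, whereas the lemma covers all $\c>0$ and also swapstable and linkstable equilibria (whose attack-free analogue is not what that proposition proves, since its connectivity step drops all of a player's edges at once). Rather than citing it, argue directly, exactly as the paper does: the buyer of the existing edge certifies that her component has size at least $\c+1>\c$, so the isolated immunized $u$ strictly profits by buying a single edge into that component. That single-edge argument is a linkstable deviation, needs no bound on $\c$, and in fact subsumes your entire case analysis.
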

\begin{proof}
  Suppose not. Then there exists an immunized vertex $u$ with no adjacent edge.
  Since $G$ is non-trivial, there exists an edge $(x,y)\in E$. Without loss of generality
  assume that $x$ has purchased the edge $(x,y)$ and 
   let $p$ denote the probability of attack to $x$ in $G$. We know $p<1$, otherwise
  $x$ would benefit by dropping her edge to $y$.
   
  Since we are in an equilibrium $x$ does not strictly prefer to 
  drop any of her edges. Let $\mu$ and $\mu'$ denote the 
  expected connectivity benefit of $x$ before and  after the deviation that she drops her edge to $y$. Then $\mu-\c\geq \mu'\geq 1-p$.
  The last inequality comes from the fact that the size of the connected component of $x$
  after the deviation is at least $1$ and the probability of attack to
  $x$ after deviation is at most $p$. Remind that with respect to the \maxcarnage~adversary
  the attack is characterized by the size of the maximum vulnerable region. So if $x$ is targeted then
  the size of the targeted region she belongs to does not increase by dropping an edge. So the 
  probability of attack to $x$ does not increase after the deviation.
  
  Consider the deviation that $u$ purchases an edge to $x$. Since $u$ 
  is immunized, this deviation would not change the distribution of the attack.
  Therefore, the change in $u$'s expected utility after the deviation is $\mu-\c\geq 1-p$ which is strictly
  bigger than 0 since $p<1$; a contradiction.
\end{proof}

\begin{lemma}
\label{lem:no-isolated-immunized}
Suppose $G= (V,E)$ is a non-trivial Nash, swapstable or linkstable equilibrium network with respect to
the \maxcarnage~adversary.
Then all the immunized vertices of $G$ are in the same connected component.
\end{lemma}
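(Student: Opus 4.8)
The plan is to argue by contradiction. Suppose the immunized vertices of $G$ are spread across at least two distinct connected components, and pick immunized vertices $u_1$ and $u_2$ lying in different components $C_1$ and $C_2$. For a vertex $u$ write $\mathbb{E}[\CC_u]=\sum_{\T'\in\T}\Pr[\T']\CC_u(\T')$ for its expected post-attack component size. The deviation I would analyze is for $u_1$ to purchase a single edge to $u_2$. Because both endpoints are immunized, this edge cannot merge any two vulnerable vertices, so the set of vulnerable regions---and hence the \maxcarnage~adversary's attack distribution---is unchanged (this is exactly the first of the two robustness properties of \maxcarnage noted in the Remarks). Since $u_1$ is immunized she survives every attack, and $C_1,C_2$ are vertex-disjoint, so after the deviation $u_1$'s component under any attack $\T'$ simply gains the entire surviving component of $u_2$. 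Thus $u_1$'s connectivity benefit increases by exactly $\mathbb{E}[\CC_{u_2}]$ while her expenditure increases by $\c$.

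It therefore suffices to show $\mathbb{E}[\CC_{u_2}]>\c$; in fact I would establish the stronger bound $\mathbb{E}[\CC_{u_2}]\ge \c+1$. By Lemma~\ref{lem:edge-immunized} the immunized vertex $u_2$ has an adjacent edge $e=(w,u_2)$, purchased by one of its two endpoints. Whoever bought $e$ is best responding, so dropping $e$ (a linkstable, hence swapstable and Nash, deviation) is not profitable; i.e.\ the purchaser's marginal connectivity benefit from $e$ is at least $\c$. The key observation is that this marginal benefit is at most $\mathbb{E}[\CC_{u_2}]-1$: under any attack $\T'$ in which the purchaser survives, $u_2$ also survives (being immunized) and lies in the purchaser's component, so the benefit of $e$ in that scenario equals the size of the $u_2$-side of $e$, which is at most $\CC_{u_2}(\T')-1$ because the purchaser's own side contributes at least one surviving vertex; and under attacks in which the purchaser dies, $e$ yields benefit $0$ while still $\CC_{u_2}(\T')\ge 1$. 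Summing over attacks and using $\CC_{u_2}(\T')\ge 1$ throughout gives $\c\le \mathbb{E}[\CC_{u_2}]-1$, as claimed. The argument is identical whether $e$ was purchased by $u_2$ or by $w$.

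Combining the two steps, $u_1$'s deviation changes her utility by $\mathbb{E}[\CC_{u_2}]-\c\ge 1>0$, contradicting the assumption that $G$ is an equilibrium; since the deviation is a single edge addition, the contradiction holds simultaneously for Nash, swapstable and linkstable equilibria (and notably requires no hypothesis on the magnitude of $\c$). The main obstacle---and the step requiring the most care---is the sub-claim $\mathbb{E}[\CC_{u_2}]\ge \c+1$: one must correctly account for the fact that the immunized endpoint $u_2$ always contributes one surviving vertex ``for free,'' so that no single purchased edge can be credited with more than $\mathbb{E}[\CC_{u_2}]-1$ of expected connectivity, and one must verify this uniformly over the attacks in which the edge's purchaser does or does not survive. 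A secondary point to check is that adding $(u_1,u_2)$ genuinely leaves both the attack distribution and the within-component survival patterns of $C_1$ and $C_2$ untouched, which follows because the new edge joins two immunized (firewall) vertices in previously separate components.
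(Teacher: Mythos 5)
Your proof is correct and follows essentially the same route as the paper's: both invoke Lemma~\ref{lem:edge-immunized} to find an edge incident to $u_2$, use the non-profitability of dropping that edge to certify that $u_2$'s expected post-attack component is worth strictly more than $\c$, and then let $u_1$ buy a single (attack-distribution-preserving) edge to $u_2$ for a strict gain. The only cosmetic difference is how the lower bound is extracted — you bound the edge's per-attack marginal benefit by $\CC_{u_2}(\T')-1$ to get $\mathbb{E}[\CC_{u_2}]\ge \c+1$, while the paper shows the purchaser's residual benefit $\mu'$ after dropping is positive and uses $\mathbb{E}[\CC_{u_2}]\ge\mu\ge\mu'+\c>\c$ — but the argument is the same in substance.
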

\begin{proof}
Suppose not. Then the immunized vertices are in multiple connected components. 
Let $G_1$ and $G_2$ be two such components. Pick immunized vertices $u_1\in G_1$ and $u_2\in G_2$ arbitrarily. 

Consider an edge $(u, u_2)$ adjacent to $u_2$ which exists due to Lemma~\ref{lem:edge-immunized}. 
Let $v\in\{u, u_2\}$ denote the vertex who purchased the edge $(u, u_2)$.
Consider the deviation that $v$ drops that edge and let $\mu$ and $\mu'$ denote the connectivity benefit of $v$ before and after
the deviation, respectively. Since we are in an equilibrium, then $v$ (weakly) prefers to 
maintain this edge. So
$
\mu -\c \geq \mu'
$.
We show that $\mu' > 0$. In the case that $v=u_2$, clearly $\mu'\geq 1 > 0$ since $v$ is immunized. 
In the case that $v=u$, let $p$ denote the probability of attack to $v$ pre-deviation. 
In this case $\mu'\geq 1-p$ since the size of the connected component of $v$ after the deviation is at least 1 
and the probability of attack to $v$ would not increase after dropping an edge (see the proof of Lemma~\ref{lem:edge-immunized} for more discussion).
Finally, we know that 
$p<1$ otherwise dropping the edge to $v_2$ would improve $v$'s utility. So $\mu'\geq 1-p > 0$ in this case as well.

Finally, consider the deviation that $u_1$ purchases an edge to $u_2$. The expected utility of $u_1$ 
after the deviation will increase by at least $\mu-\c \geq \mu'> 0$ since the distribution of attack after
this deviation remains unchanged; 
a contradiction.
\end{proof}

We are now ready to prove Theorem~\ref{thm:connect}.
\begin{proof}[Proof of Theorem~\ref{thm:connect}]
Lemma~\ref{lem:edge-immunized} implies that all the immunized vertices are 
in the same connected component. 
Let $\hat{G}$ denote this component.
In the rest of the proof we show
that all the vulnerable vertices are also part of this connected component.
Hence, $\hat{G}=G$ and $G$ 
is connected

By the way of the contradiction assume there exists a vulnerable vertex $w$ outside of $\hat{G}$.
We consider two cases: (1) $w$ is not targeted or (2) $w$ is targeted.

In case (1), pick any immunized vertex $u\in\hat{G}$. $u$ has an adjacent edge $(u_1, u)$
by Lemma~\ref{lem:edge-immunized}.
Let $v\in\{u_1, u\}$ be the vertex who purchased the edge $(u_1, u)$.
Consider the deviation that $v$ drops that edge and let $\mu$ and $\mu'$ denote the connectivity benefit of $v$ before and after
the deviation, respectively. Since we are in an equilibrium, then $v$ (weakly) prefers to 
maintain this edge. So $\mu -\c \geq \mu'$.
Also $\mu' > 0$ with the exact same argument as in the proof of Lemma~\ref{lem:no-isolated-immunized}.
Now, consider the deviation that $w$ purchases an edge to $u$. The expected utility of $w$ 
after deviation will increase by at least $\mu-\c \geq \mu'> 0$ because the distribution of attack is 
unchanged after this deviation; a contradiction.

In case (2), we consider two sub-cases: 
2(a) there exists a targeted region with size strictly bigger than 1 in $G$
or 2(b) the size of all the targeted regions are exactly 1 in $G$.

In case 2(a), again consider the targeted vertex $w$
in a connected component of size bigger than 1 
and let $(w, w_1)$ denote the adjacent edge 
to $w$ which exists by the assumption of the case.
Let $v\in\{w, w_1\}$ be the vertex who purchased this edge.
Consider the deviation that $v$ drops this edge and let $\mu$ and $\mu'$
denote $v$'s connectivity benefit before and after the deviation, respectively.
Since we are in an equilibrium $v$ (weakly) prefers to keep this edge which implies $\mu-\c\geq \mu'$. 
We show $\mu'>0$. Let $p$ denote the probability
 of attack to $v$. Then $\mu'\geq 1-p$ because the size of connected component of $v$
 after the deviation is at least $1$ and the probability of attack to $v$ after the deviation is at most 
 $p$ since dropping an edge would not increase the probability of attack to $v$. 
 Finally observe that $p<1$ otherwise $v$ would not have purchased any edge. Hence, $\mu'\geq 1-p>0$.
Now consider an immunized vertex $u$ in $\hat{G}$ and a deviation that $u$ purchases
an edge to $v$. $u$'s expected utility after this deviation is increased by at least $\mu-\c\geq \mu' > 0$
since the distribution of attack is unchanged after this deviation; a contradiction.

In case 2(b), observe that the adversary's attack distribution is uniform over 
all the targeted vertices. Furthermore, all the vulnerable vertices are targeted
because all the targeted regions have size 1.
So let $0<k < |V|$ denote the number of targeted vertices in $G$. Then the 
expected connectivity benefit (and utility) of $w$ which is a singleton vertex outside of $\hat{G}$ is $1-1/k$. 
We will show that some vertex in $G$ has a beneficial deviation by
considering the following sub-cases:
2(b$\alpha$) there is no targeted vertex in $\hat{G}$ and 2(b$\beta$) there is at least one targeted vertex in $\hat{G}$.

In case 2(b$\alpha$) all the vertices in $\hat{G}$ are immunized. Furthermore, there is an edge 
in $\hat{G}$ by Lemma~\ref{lem:edge-immunized}. 
As a result, $\hat{G}$
is a tree of immunized vertices because any edge beyond the tree would be redundant.
Pick a leaf vertex $v\in \hat{G}$. $v$ has purchased an edge in $\hat{G}$ since $\c>1$.
Then the connectivity benefit of $v$ is $|\hat{G}|$ which is at least $\b+\c+1/2$ (otherwise $v$ would better off
dropping the edge and immunization in which case she would die with probability of at most $1/2$
because $w$ is also targeted and the adversary would attack them with equal probability). 
Now consider the deviation
that $w$ immunizes and buys an edge to $u$. The change in $w$'s utility is
$|\hat{G}|+1-\b-\c-(1-1/k)>0$ since $|\hat{G}|\geq\b+\c+1/2$ and $k>0$; a contradiction.

The analysis of case 2(b$\beta$) is more delicate. 
Remind that by the assumptions so far, all the sub-cases below share the following common assumptions: there exists a vulnerable vertex $w$ outside of $\hat{G}$
which is targeted. Furthermore, targeted regions are all singletons and there is at least one targeted vertex in $\hat{G}$.
We consider the following exhaustive sub-cases.
\begin{enumerate}
\item[(i)] All the edges in $\hat{G}$ are purchased by immunized vertices.\\
Since $\c>1$ no immunized vertex in $\hat{G}$ would buy an edge to a targeted vertex in $\hat{G}$ unless that targeted 
vertex itself is connected to some other vertices (immunized in this case since targeted regions are singletons in case 2(b)). 
At first glance, the immunized vertex would be better off swapping the edge that connects her to this
targeted vertex to any of the immunized vertices that the targeted vertex itself is connected to. If so,
even when the targeted vertex gets attacked, the immunized vertex would remain connected to other neighboring 
vertices of the targeted vertex. However, the immunized vertex might be indifferent between her current action and swapping
which means there is another path that she has to any of the immunized vertices she would have lost connectivity
to when this particular targeted vertex is attacked. Note that every such path will also get disconnected in some other attack 
(otherwise there is no need to purchase the edge to the targeted vertex at the first place). This implies that every targeted vertex in $\hat{G}$ is a part of a cycle in $\hat{G}$.
Let $p\in\{1/k, \ldots,(k-1)/k\}$ denote the total probability of attack to any of the targeted vertices inside of $\hat{G}$.~\footnote{Note that this 
probability is in the increments of $1/k$ because the attack distribution is uniform over $k$ targeted vertex. So the total probability
of attack to targeted vertices in $\hat{G}$ is simply the number of targeted vertices in $\hat{G}$ times $1/k$.}
Then the expected
connectivity benefit of an immunized vertex who purchased an edge in $\hat{G}$ is $(1-p)|\hat{G}|+p(|\hat{G}|-1)=|\hat{G}|-p$.
\footnote{With probability $1-p$ the attack happens outside of $\hat{G}$ in which case the connectivity benefit is $|\hat{G}|$.
With probability $p$ the attack happens inside of $\hat{G}$ in which case the connectivity benefit is $|\hat{G}|-1$ because
the attack kills exactly one targeted vertex and that vertex is a part of a cycle in $\hat{G}$.}
Consider the deviation that an immunized vertex that purchased an edge drops her purchased edge. She decreases her expenditure by $\c$
and her connectivity benefit is at least $(1-p)|\hat{G}|+p$ after the deviation. 
\footnote{With probability $1-p$ the attack happens outside of $\hat{G}$ in which case the connectivity benefit is $|\hat{G}|$.
With probability $p$ the attack happens inside of $\hat{G}$ in which case the connectivity benefit is at least $1$ since the vertex is immunized.}
Since the immunized vertex, (weakly) prefers her current strategy then, $\c\leq  p|\hat{G}|-2p$. Finally, consider the deviation
that $w$ (which is outside of $\hat{G}$) purchases an edge to any immunized vertex in $\hat{G}$. 
Note that the attack distribution remains unchanged after this deviation.
Hence the change in $w$'s expected utility
is at least $p(|\hat{G}|)+(1-p-1/k)(|\hat{G}|+1)-\c -(1-1/k) \geq p > 0$; a contradiction.
Note that after the deviation, with probability $1-p-1/k$ the attack happens outside of $\hat{G}$ (which now contains $w$) in which case the connectivity benefit is $|\hat{G}|+1$.
With probability $p$ the attack happens inside of $\hat{G}$ and does not kill $w$ in which case the connectivity benefit is $|\hat{G}|$.

\item[(ii)] There exists a targeted vertex in $\hat{G}$ which purchased an edge.
Define the marginal benefit for an edge purchase by a vertex to be the difference between the 
expected utility of the vertex with and without the purchased edge.
\begin{enumerate}
\item[(I)] There exists a targeted vertex $u\in\hat{G}$ 
which has marginal benefit
of strictly bigger than $\c$ for one of her edge purchases.\\
Suppose $(u, v)$ is the edge purchased by $u$ which has a marginal
benefit strictly higher than $\c$. We know $v$ is immunized because targeted regions are singletons.
So the deviation that $w$ also purchases an edge to $v$ would have a marginal benefit of 
strictly bigger than $\c$ as well because the attack distribution remains unchanged after the deviation; a contradiction.
\item[(II)] For all the targeted vertices $u\in\hat{G}$ the marginal benefit is exactly $\c$ for all of the edge purchases made by $u$.
\begin{enumerate}
\item There is exactly one targeted vertex in $\hat{G}$.\\
Let $u$ be the sole targeted vertex in $\hat{G}$ and assume $u$ purchased $i>0$ edges. 
All the edge purchases of $u$ are to immunized vertices.
Furthermore, each of these immunized vertices are themselves connected 
to other immunized vertices otherwise $u$ would not have bought an edge to any of such vertices 
(remind that $\c>1$). So we can think of an edge purchased by $u$ as
an edge that connects $u$ to a fully immunized component. Finally, note that any such immunized component
is a tree because no vertex in that component can get attacked so any edge beyond a tree is redundant
in that component.

Remind that $k$ denote the total number of targeted vertices in $G$.
If $k>2$, $w$ can buy an edge to any vertex in one of these immunized components 
and get a marginal benefit of strictly bigger than $\c$. \footnote{The benefit of strictly bigger than $\c$ happens
when a vertex other than $u$ and $w$ are attacked. Such vertex exists when $k>2$.} So suppose $u$ and $w$
are the only targeted vertices in the network and therefore $k=2$.
Consider one immunized component that $u$ has purchased an edge to and let $X$ denote the size of this immunized component. 
Since the marginal utility of $u$ from this purchase is exactly $\c$
then $(1-1/k)X = \c$ or $X = (k)\c/(k-1)$. Replacing $k=2$ we get $X=2\c$. 
And this equation should hold for each of 
the immunized components that $u$ has purchased an edge to because $u$'s marginal benefit
for each edge purchase is $\c$ based on the assumption of this case. 
Furthermore, $u$ gets attacked with probability $1/2$ so for her to not immunize $\b\geq (iX+1)/2$.
As we mentioned before, all of the immunized components that $u$ has purchased an edge to are trees. Finally, consider a leaf
in one such tree and the deviation that the leaf drops her edge and becomes targeted. 
Her change in the utility is $2/3-\left((iX+1)/2+X/2-\b-\c\right)\geq 2/3$ because the adversary
now attacks each of the three targeted vertices with probability of $1/3$; a contradiction.
\item There are strictly more than one targeted vertex in $\hat{G}$.\\
Suppose $(u, v)$ is the edge purchased by a targeted vertex $u\in\hat{G}$. $v$ is immunized since
targeted regions are singletons.
Consider the deviation that $w$ purchases an edge to $v$. The marginal benefit of
this purchase is strictly bigger than $\c$ because (1) $w$ would get strictly higher benefit when any 
other targeted vertex besides $w$ and $u$ is attacked and such vertex exists by the assumption of this case
(since we assumed there are more than one targeted vertex in $\hat{G}$) and (2) 
the distribution of attack will not change after the deviation.
\end{enumerate}
\end{enumerate}
\end{enumerate}

\end{proof}

\begin{proof}[Proof of Theorem~\ref{thm:welfare-new}]
  First, Theorem~\ref{thm:connect} implies that $G$ is connected.
  Furthermore, the application of Lemma~\ref{lem:singletons} implies 
  that all the targeted regions of $G$
  (if there are any) are singletons.
Finally the number of immunized
  vertices is (trivially) at most $n$ and by Theorem~\ref{thm:sparse-general},
  there are at most $2n-4$ edges in $G$. So the collective expenditure of vertices in
  $G$ is at most $\cm:=(2n-4) \c + n \b$.

  Let $T=(B\cup C, E')$ be the block-cut tree decomposition of $G$.
  An attack to targeted non-cut vertices in
  any block of $T$ leaves $G$ with a single connected component after
  attack. However, an attack to targeted cut vertices of $T$ can
  disconnect $G$. So to analyze the welfare, we only consider the
  targeted cut vertices in $T$ and in particular we only focus on
  targeted cut vertices of $T$ with the property that the attack on
  such a vertex sufficiently reduces the size of the largest connected
  component in the resulting graph. More precisely, let
  $\epsilon=2 \sqrt{\c}/n^{1/3}$.  We refer to a targeted cut vertex
  $v$ as a \emph{heavy cut vertex} if after an attack to $v$, the size
  of the largest connected component in $G\setminus\{v\}$ is strictly
  less than $(1-\epsilon)n$.  We then show that the total probability
  of attack to heavy cut vertices is small if $G$ is a non-trivial
  equilibrium. This implies that with high probability (which we
  specify shortly) the network retains a large connected component
  after the attack, hence, the welfare is high.

  We root $T$ arbitrarily on some targeted cut vertex $r\in C$.  If there
  is no such cut vertex, then the size of the largest connected component
  in $G$ after any attack is at least $n-1$. So the social welfare in
  this case is at least $(n-1)^2-\cm$ and we are done.  So assume $r$
  exists. For any vertex $v$, let $T_v$ denote the subtree of $T$ rooted at $v$. 
  Consider the set of cut vertices $\D_r\subseteq C$ such
  that for all $v\in \D_r$:
(a) $v$ is targeted,
(b) $|T_v|\ge \epsilon  n$, and
(c) no targeted cut vertex $v' \in T_v \setminus \{ v \}$ has the property that
$|T_{v'}|\ge \epsilon  n$ i.e. $v$ is the deepest vertex in the
tree $T_v$ that satisfies property (b).

Note that each $v\in \D_r$ is a heavy cut vertex (but there might be other 
heavy cut vertices in $T$ that are not in $\D_r$).  We consider two
cases: (1) $|\D_r|=1$ and (2)
$|\D_r| > 1$.

Consider case (1) where $|\D_r|=1$. Let $\D_r=\{v\}$. Consider the
following two cases: 1(a) $v=r$ and 1(b) $v\ne r$ where $r$ is the root
of the tree.

\begin{figure}[h]
\centering
\begin{minipage}[c]{.22\textwidth}
\centering
\scalebox{.6 }{
\begin{tikzpicture}
[scale=0.7, every node/.style={circle,draw=black, minimum size=0.7cm}, red node/.style = {circle, fill = red, draw},  gray node/.style = {circle, fill = blue, draw}]
\node [red node] (2) at  (0, 10){$v$};
\node [draw,rectangle,color=white,minimum width=1cm,minimum height=0.6cm,label=$$] (4) at (0, 8.4) {$$};
\node [draw,rectangle,color=white,minimum width=1.2cm,minimum height=1cm,label=$$] (7) at (-2, 8.5) {$$};
\node [draw,rectangle,color=white,minimum width=1cm,minimum height=1.5cm,label=$$] (8) at (2, 8.5) {$$};
\draw(2) to (4);\draw(2) to (7);\draw(2) to (8);
\draw (0,8.9)--(0.5,8)--(-0.5,8)--cycle;
\draw (1.3,9.05)--(0.8,7.1)--(1.8,7.1)--cycle;
\draw (-1.1,9.15)--(-0.6,7.2)--(-1.6,7.2)--cycle;
\end{tikzpicture}}
\caption{\label{fig:newcase2a}}
\end{minipage}
\begin{minipage}[c]{.27\textwidth}
\centering
\scalebox{.6}{
\begin{tikzpicture}
[scale=0.7, every node/.style={circle,draw=black, minimum size=0.7cm}, red node/.style = {circle, fill = red, draw},  gray node/.style = {circle, fill = blue, draw}]
\node [red node] (1) at  (0, 4){$v$};
\node [red node] (2) at  (0, 10){$r$};
\node (3) at  (0, 7){};
\node (9) at (2,7){};
\node [draw,rectangle,color=black,minimum width=1cm,minimum height=0.6cm,label=$$] (4) at (0, 8.5) {$$};
\node [draw,rectangle,color=black,minimum width=1cm,minimum height=0.6cm,label=$$] (7) at (-2, 8.5) {$b$};
\node [draw,rectangle,color=black,minimum width=1cm,minimum height=0.6cm,label=$$] (8) at (2, 8.5) {$$};
\node [draw,rectangle,color=black,minimum width=1cm,minimum height=0.6cm,label=$$] (5) at (0, 5.5) {$$};
\node [draw,rectangle,color=black,minimum width=1cm,minimum height=0.6cm,label=$$] (10) at (2, 5.5) {$$};
\draw (0,4.8)--(-1.2,3.5)--(1.2,3.5)--cycle;
\draw(2) to (4);\draw(4) to (3);\draw(2) to (7);\draw(2) to (8);\draw(3) to (5);\draw(1) to (5);
\draw(9) to (10); \draw(9) to (8);
\end{tikzpicture}}
\caption{\label{fig:newcase2b2}}
\end{minipage}
\begin{minipage}[c]{0.35\textwidth}
\centering
\scriptsize{
Figure~\ref{fig:newcase2a}: Case 1(a); $v$ is the only heavy cut vertex
 and is the root of $T$. The triangles denote the subtrees
 rooted at the child blocks of $v$.\\
Figure~\ref{fig:newcase2b2}: Case 1(b2); $v\ne r$ and either $r$ or a vertex in $b$ has a 
beneficial deviation. The triangle denotes
the subtree rooted at $v$.}
\end{minipage}
\end{figure}

In case 1(a), let $p$ be the probability of attack to $v$.
Consider the deviation that $v$ immunizes but maintains the same edge purchases as in her current strategy. 
Since $G$ is an equilibrium, $v$ (weakly) prefers her current strategy to the deviation.  The
connectivity of $v$ after an attack to any targeted vertex other than $v$ is at least $n-\epsilon
n$.  So for $v$ to not prefer immunizing: $p \left(n-\epsilon n\right)\leq \b$.  Moreover, if any vertex other than
$v$ is attacked, the size of the largest connected component after the
attack is at least $(1-\epsilon)n$ (see Figure~\ref{fig:newcase2a}). This implies the welfare is at least
\begin{align*}
\left(1-p\right)\left(\left(1-\epsilon\right)n\right)^2-\cm &> (1-\frac{\b}{(1-\epsilon)n})\left(1-2\epsilon\right)n^2-\cm\\
&> n^2-4\sqrt{\c} n^{5/3} - \frac{\b n^{4/3}}{n^{1/3}-2\sqrt{\c}}-\cm = n^2-O(n^{5/3}).
\end{align*}

For case $1(b)$, observe that the targeted cut vertices on the path
from $v$ to $r$ (the root of $T$) are the only possible heavy cut
vertices in the network (counting both $v$ and $r$ to be on
  the path).  So let $p_v$ denote the probability that some heavy cut vertex
on the path from $v$ to $r$ is attacked.  We consider
two cases: 1(b1) $ p_v \leq \sqrt{\c} n^{-1/3} $, and 1(b2)
$p_v > \sqrt{\c} n^{-1/3}$.  We show that in case 1(b1) the welfare is
as claimed in the statement of Theorem~\ref{thm:welfare-new} and case
1(b2) cannot happen.

In case 1(b1), with probability $1-p_v$, the size of the largest
connected component after the attack is at least
$(1-\epsilon)n$. Hence the welfare in case 1(b1) is at least
\begin{align*}
&\left(1-p_v\right)\left(\left(1-\epsilon\right)n\right)^2-\cm 
\geq \left(1-\sqrt{\c} n^{-1/3}\right)\left(1-2\epsilon\right)n^2-\cm\\
&=\left(1-\sqrt{\c} n^{-1/3}\right)\left(1-\frac{4 \sqrt{\c}}{n^{1/3}}\right)n^2-\cm
> n^2-5\sqrt{\c} n^{5/3}-\cm = n^2-O(n^{5/3}).
\end{align*}

In case 1(b2), since $r$ is a cut vertex, $r$ has at least two child
blocks. Consider any child block of $r$ that is not in the same
subtree of $r$ as $v$ (e.g. $b$ in Figure~\ref{fig:newcase2b2}) and call
this child block $b$. Since all the targeted regions are singletons, $r$
is only connected to immunized vertices in $b$ -- let $w$ be one such
immunized vertex. Now consider the deviation that $w$ purchases an
edge to an immunized vertex $w'$ in $T_v$ ($w'$ exists because by the
choice of $\epsilon$, $|T_v|\geq 2$, $v$ is targeted and targeted regions are singletons).  
Note that this deviation does not change the distribution of the attack.
Thus, the deviation will give $w$ additional benefit of at least
$|T_v| -1 \geq \epsilon n - 1$ (for the entirety of $T_v$ other than
$v$) whenever the attack occurs on the path from $v$ to $r$ (which
happens with probability $p_v$). So, $w$'s marginal increase in her expected
utility for this purchase will be at least
\[
p_v(\epsilon n -1) > \left(\sqrt{\c} n^{-1/3}\right)\left(2 \sqrt{\c} n^{2/3}-1\right) = 2 \c  n^{1/3} - \sqrt{\c} n^{-1/3} > \c n^{1/3} >\c,
\]
which shows that the $w$ can \emph{strictly} increase her expected
utility in the deviation; a contradiction.

In case (2), let $r'$ be a cut vertex that is the \emph{lowest common
  ancestor} of vertices in $\D_{r}$. If $r'\ne r$, we root the tree on
$r'$ and repeat the process of finding heavy cut vertices. Note that
$\D_{r}\subseteq \D_{r'}$ since we might add some additional heavy cut
vertices to $\D_{r'}$ (vertices that used to be ancestors of $r'$ in $T$). See
Figures~\ref{fig:newbefore}~and~\ref{fig:newafter} for an example.

\begin{figure}[h]
\centering
\begin{minipage}[b]{0.22\textwidth}
\centering
\scalebox{.6}{
\begin{tikzpicture}
[scale=0.7, every node/.style={circle,draw=black, minimum size=0.7cm}, red node/.style = {circle, fill = red, draw},  gray node/.style = {circle, fill = blue, draw}]
\node [red node] (1) at  (0, 4){$v_2$};
\node [red node] (12) at  (-2.5, 4){$v_1$};
\node [red node] (2) at  (0, 10){$r$};
\node (3) at  (0, 7){$r'$};
\node [draw,rectangle,color=black,minimum width=1cm,minimum height=0.6cm,label=$$] (4) at (0, 8.5) {$$};
\node [draw,rectangle,color=black,minimum width=1cm,minimum height=0.6cm,label=$$] (7) at (-2, 8.5) {$$};
\node [draw,rectangle,color=black,minimum width=1cm,minimum height=0.6cm,label=$$] (8) at (2, 8.5) {$$};
\node [draw,rectangle,color=black,minimum width=1cm,minimum height=0.6cm,label=$$] (11) at (-2, 5.5) {$$};
\node [draw,rectangle,color=black,minimum width=1cm,minimum height=0.6cm,label=$$] (5) at (0, 5.5) {$$};
\draw (0,4.8)--(-1.2,3.5)--(1.2,3.5)--cycle;
\draw (-2.2, 4.9)--(-4.1,3.5)--(-1.8,3.5)--cycle;
\draw(2) to (4);\draw(4) to (3);\draw(2) to (7);\draw(2) to (8);\draw(3) to (5);\draw(1) to (5);
\draw(11) to (12); \draw(11) to (3);
\end{tikzpicture}}
\caption{\label{fig:newbefore}}
\end{minipage}
\begin{minipage}[b]{0.27\textwidth}
\centering
\scalebox{.6}{
\begin{tikzpicture}
[scale=0.7, every node/.style={circle,draw=black, minimum size=0.7cm}, red node/.style = {circle, fill = red, draw},  gray node/.style = {circle, fill = blue, draw}]
\node [red node] (1) at  (0, 4){$v_2$};
\node [red node] (12) at  (-2.5, 4){$v_1$};
\node [red node] (15) at  (2.5, 4){$r$};
\node (3) at  (0, 7){$r'$};
\node [draw,rectangle,color=black,minimum width=1cm,minimum height=0.6cm,label=$$] (11) at (-2, 5.5) {$$};
\node [draw,rectangle,color=black,minimum width=1cm,minimum height=0.6cm,label=$$] (5) at (0, 5.5) {$$};
\node [draw,rectangle,color=black,minimum width=1cm,minimum height=0.6cm,label=$$] (10) at (2, 5.5) {$$};
\draw(3) to (5);\draw(1) to (5);\draw(11) to (12); \draw(11) to (3);\draw(10) to (3);\draw(15) to (10);
\draw (0,4.8)--(-1.2,3.5)--(1.2,3.5)--cycle;
\draw (-2.2, 4.9)--(-4.1,3.5)--(-1.8,3.5)--cycle;
\draw (2.2, 4.9)--(1.8,3.5)--(4,3.5)--cycle;
\end{tikzpicture}}
\caption{\label{fig:newafter}}
\end{minipage}
\begin{minipage}[b]{0.3\textwidth}
  \scriptsize{An example of re-rooting in case 2. Heavy cut vertices in $\D$ are
  in red. The small rectangles and circles denote blocks and cut
  vertices, respectively. The triangles denote the subtrees rooted at
  critical cut vertices.~\ref{fig:newbefore} is before
  and~\ref{fig:newafter} is after re-rooting.}
\end{minipage}
\end{figure}

Observe that the vertices in $\D_{r'}$ and the targeted cut vertices
on the path from some $v\in\D_{r'}$ to $r'$ (the new root) are the
only possible heavy cut vertices in the tree.  Let $p_v$ denote the
probability that some targeted cut vertex on the path from $v$ to $r'$
is attacked when $v\in\D_{r'}$.  We consider two cases: 2(a)
$\Sigma_{v\in\D_{r'}} p_v \leq n^{-1/3}$, and 2(b)
$\Sigma_{v\in\D_{r'}} p_v > n^{-1/3}$. We show that in case 2(a) the
welfare is as claimed in the statement of
Theorem~\ref{thm:welfare-new} and case 2(b) cannot happen.

In case 2(a), with probability of at least
$1-\Sigma_{v\in\D_{r'}} p_v$, the attack does not occur on a path from
any $v\in \D_{r'}$ to $r'$. Thus, in these cases, the size of the
largest connected component after an attack is at least
$(1-\epsilon)n$. Hence the welfare in this case is at least
\begin{align*}
&(1-\sum_{v\in\D_{r'}} p_v)\left(\left(1-\epsilon\right)n\right)^2-\cm
\geq
\left(1-n^{-1/3}\right)\left(1-2\epsilon\right)n^2-\cm\\
&=
\left(1-n^{-1/3}\right)\left(1-\frac{4 \sqrt{\c}}{n^{1/3}}\right)n^2-\cm
> n^2-\left(1+4\sqrt{\c}\right)n^{5/3}-\cm= n^2-O\left(n^{5/3}\right).
\end{align*}

Next, we consider case 2(b), namely
$\Sigma_{v\in\D_{r'}} p_v > n^{-1/3}$.  Since $|T_v| \geq n\epsilon$,
we know that $|\D_{r'}|\le 1/\epsilon$. Therefore, there exists a
$v^*\in\D_{r'}$ such that
\begin{equation}
\label{eq:delta}
p_{v^*} > \frac{n^{-1/3}}{|\D_{r'}|} \geq n^{-1/3}\epsilon = 2\sqrt{\c} n^{-2/3},
\end{equation}
by the pigeonhole principle. Also, since each $v'\in \D_{r'}$ is a cut
vertex, and is unimmunized, $v'$ must have a child block. Since
unimmunized vertices are singletons, $v'$ must be connected to her
child blocks through an immunized vertex.

By the choice of the root in $\D_{r'}$ (i.e. the least common ancestor
in $\D_{r}$ before re-rooting), there exists a $v'\in \D_{r'}$ such that every time a
heavy cut vertex on the path from $v^*$ to the root is attacked then
$T_{v^*}$ and $T_{v'}$ end up in different connected components.  Now,
consider the deviation that an immunized vertex $w$ in $T_{v^*}$
purchases an edge to an immunized vertex in $T_{v'}$.  
Note that this deviation does not change the distribution of the attack.
So after this
deviation, $w$ would get an additional connectivity benefit of at
least $p_{v^*}(|T_{v'}|-1)$ -- this benefit occurs whenever there is
an attack to a cut vertex on the path from $v^*$ to the root (which
happens with probability of $p_{v^*}$) and the connectivity benefit in
this case is at least $|T_{v'}|-1\geq(\epsilon n-1)$. Moreover, the extra
expenditure of $w$ in this deviation is $\c$. However,
\[
p_{v^*}\left(\epsilon n-1\right) \geq \left(2\sqrt{\c} n^{-2/3}\right)\left(2 \sqrt{\c} n^{-1/3}n-1\right) = 4 \c - 2\sqrt{\c} n^{-2/3} \geq 2\c > \c,
\]
which shows that $w$ can increase her expected utility strictly in the deviation; a contradiction.
\end{proof}
\section{Connectivity and Social Welfare in Equilibria -- Maximum Disruption Adversary}
\label{sec:max-disruption-welfare}
In this section we prove the analog results of Section~\ref{sec:welfare} regarding the social welfare with respect to the \maxdisrupt~adversary.
While the the results in this section look almost identical to the statements in Section~\ref{sec:welfare}, 
we explicitly point out some of the differences. First, in Section~\ref{sec:welfare} we could show that when $\c>1$,
any non-trivial\footnote{Remind that an equilibrium network is non-trivial if it contains at least one immunized vertex and one edge.}  
Nash or swapstable equilibrium network with respect to \maxcarnage~adversary is connected, has targeted regions of size at most 1 and enjoys high social welfare.
While we suspect that all these statements hold for Nash equilibrium networks with respect to~\maxdisrupt~adversary,
we can only show that when $\c>1$ every non-trivial and connected 
Nash equilibrium network with respect to the \maxdisrupt~adversary has targeted regions of size at most 1 and 
enjoys high social welfare. Hence, we leave the question of whether non-trivial Nash equilibrium networks with respect to
\maxdisrupt~adversary are connected when $\c>1$ as an open question. 
Second, while in the welfare results of Section~\ref{sec:welfare} also hold for non-trivial swapstable equilibrium networks with respect to
\maxcarnage~adversary, we show that when $\c>1$,
non-trivial swapstable equilibrium networks with respect to~\maxdisrupt~adversary can be disconnected, can have targeted regions of size bigger than one
and in general can have pretty low social welfare.
\begin{thm}\label{lem:singles-disruption}
  Let $\c>1$, and consider a Nash equilibrium network
  $G$ with respect to the \maxdisrupt~adversary.  If $G$ is non-trivial
  and connected, then the size of
  all targeted regions, if there are any, is
  exactly $1$.
\end{thm}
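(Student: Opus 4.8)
The plan is to argue by contradiction, following the skeleton of the proof of Lemma~\ref{lem:singletons} for the \maxcarnage~adversary but adding the extra care that the \maxdisrupt~adversary demands. Suppose $G$ is connected, non-trivial, and a Nash equilibrium with respect to the \maxdisrupt~adversary with $\c>1$, yet some targeted region $R$ has $|R|\ge 2$. Since the \maxdisrupt~adversary is \nice, Lemma~\ref{lem:tree} shows $R$ is a tree, so it has at least two leaves. Because $G$ is connected and $R$ is a maximal vulnerable region, $R$ attaches to the rest of $G$ only through immunized ``gateway'' vertices, and since $\c>1$ every internal tree edge and every gateway edge must be justified by the connectivity it buys its purchaser. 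I would first record these structural facts, since they constrain which vertex can have purchased each leaf edge.

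Next I would hunt for a profitable rewiring of a leaf of $R$ onto an immunized gateway, organized by which endpoint purchased each leaf edge, exactly as in the case analysis of Lemma~\ref{lem:singletons}. The candidate deviation for a leaf $x$ of $R$ is to drop its internal tree edge and instead purchase an edge to an immunized gateway $z$. This keeps $x$ connected to the giant component while detaching it from the large region $R$, so that $x$ becomes a singleton vulnerable vertex (or, in the degenerate sub-cases where $x$ already has immunized neighbors, these branch off exactly as in Lemma~\ref{lem:singletons}) and $R$ shrinks to $R\setminus\{x\}$.

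The crux, and the place where \maxdisrupt~genuinely departs from \maxcarnage, is to show this rewiring strictly helps $x$ even though adding an edge to an immunized vertex \emph{can} change which region the adversary attacks: the \maxdisrupt~adversary minimizes the post-attack potential $\sum_{K}|K|^2$ over surviving components $K$, a quantity that depends on the full connectivity structure rather than on region sizes alone. Here I would exploit the connectivity hypothesis directly. After the rewiring, attacking the singleton $\{x\}$ (a degree-one vertex hanging off the immunized $z$) removes a single non-cut vertex and leaves the graph connected on $n-1$ vertices, with disruption value $(n-1)^2$; whereas attacking the residual region $R\setminus\{x\}$, which has size at least $2$, removes at least two vertices and yields disruption at most $(n-2)^2<(n-1)^2$. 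Hence, when $|R|\ge 3$, the adversary strictly prefers $R\setminus\{x\}$ (or an even more disruptive region) over $\{x\}$, so $x$'s probability of death strictly decreases while her benefit in every surviving scenario is at least as large, making the deviation strictly profitable and contradicting equilibrium.

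The main obstacle is the boundary regime $|R|=2$, where $R\setminus\{x\}$ is itself a singleton and the single swap above need not reduce $x$'s death probability — which is exactly why the statement fails for swapstable equilibria. I would handle this case with a genuinely Nash deviation unavailable to swapstable players: $x$ drops the unique edge of $R$ and purchases edges to immunized gateways in two distinct blocks of $G$ (a drop-one-add-two move), so that $x$ is redundantly connected and her resulting singleton is strictly less disruptive to attack than $R$ was, once again forcing the adversary to target a region not containing $x$ and yielding a strict improvement. The real work, in every case, lies in verifying the adversary's re-optimized choice by comparing $\sum_{K}|K|^2$ across the finitely many candidate regions and checking it strictly favors some region disjoint from the deviator; the assumption that $G$ is connected is precisely what makes these disruption comparisons clean and is the reason the connectivity hypothesis cannot be dropped here as it could be for \maxcarnage.
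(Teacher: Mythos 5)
Your skeleton is the same as the paper's: assume a targeted region $R$ with $|R|\ge 2$, invoke Lemma~\ref{lem:tree} to get a tree with two leaves, case-split on who bought the leaf edges, and rewire a leaf $x$ onto an immunized gateway $z$. You also correctly identify the crux, namely controlling the \maxdisrupt~adversary's re-optimization after the deviation. But your execution of that crux has two genuine gaps. First, your claim that attacking the post-swap singleton $\{x\}$ yields post-attack welfare exactly $(n-1)^2$ assumes $x$ is a degree-one non-cut vertex; nothing guarantees this, since a leaf of the vulnerable tree may carry additional edges to immunized vertices and may itself be a cut vertex of $G$. The comparison that works --- and the one the paper makes --- is relative rather than absolute: attacking $\{x\}$, or $R\setminus\{x\}$, after the swap kills strictly fewer vertices and fragments the graph strictly less than attacking all of $R$ did before, so both yield strictly higher post-attack welfare than the old value $w_R$; meanwhile, by Lemma~\ref{lem:tree3} there is at least one other targeted region $W^*$ disjoint from $R$ whose post-attack welfare is unchanged by the swap (the swap only re-routes $x$'s attachment within the component containing $R\setminus\{x\}$ and its gateways) and still equals the old minimum $w_R$. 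Hence the new targeted set is exactly the old one minus $R$, the attack probabilities merely rescale, and $x$'s expected utility strictly increases. Your pointwise observation that ``her benefit in every surviving scenario is at least as large'' does not by itself suffice, because the adversary's distribution changes; you need this characterization of the new targeted set, and you never invoke Lemma~\ref{lem:tree3}, which is the linchpin.

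Second, your treatment of $|R|=2$ is both unnecessary and unsound. It is unnecessary because the same relative comparison goes through: post-swap, attacking $\{y\}=R\setminus\{x\}$ leaves the survivors of $G\setminus R$ plus $x$ attached to $z$'s component, so its post-attack welfare strictly exceeds $w_R=w_{W^*}$, and likewise for $\{x\}$; neither singleton is targeted, and no size-two-versus-size-one comparison is ever needed. It is unsound because the drop-one-add-two deviation incurs an additional cost of $\c>1$ that you never recover: the death probability being avoided can be as small as $O(1/n)$, so the gross connectivity gain need not exceed a constant $\c$. Moreover, adding edges --- unlike swapping --- can strictly raise the post-attack welfare of \emph{other} regions (by bridging components their removal used to separate), so you cannot conclude that some region disjoint from $x$ remains the adversary's strict preference without redoing the welfare comparison for every region. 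The paper avoids all of this by using only the swap, uniformly over all $|R|\ge 2$.
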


\begin{thm}
\label{thm:welfare-new-disruption}
Let $\c>1$, and consider a Nash equilibrium network
$G=(V,E)$ with respect to the \maxdisrupt~adversary over $n$ vertices.  If $G$ is non-trivial and connected and $\c$ and $\b$
are constants (independent of $n$), then the welfare of $G$ is
$n^2 - O(n^{5/3})$.
\end{thm}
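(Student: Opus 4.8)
The plan is to mirror the proof of Theorem~\ref{thm:welfare-new}, substituting immunization deviations for the edge-bridging deviations used there. First I would gather the facts that transfer for free. The \maxdisrupt~adversary is \nice, so Theorem~\ref{thm:sparse-general} bounds $|E|\le 2n-4$ and the total expenditure by $\cm=(2n-4)\c+n\b=O(n)$ since $\c,\b$ are constant. Theorem~\ref{lem:singles-disruption} applies verbatim under our hypotheses and shows every targeted region is a singleton, so an attack deletes a single vulnerable vertex all of whose neighbors are immunized. Because the \maxdisrupt~adversary deterministically realizes the most disruptive attack, the expected total connectivity benefit is exactly $W^*=\min_{v}\sum_i |C_i(v)|^2$, the minimum over targeted $v$ of the sum of squared component sizes of $G\setminus\{v\}$ (uniform tie-breaking leaves this value unchanged). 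Hence the welfare equals $W^*-O(n)$, and it suffices to prove $W^*\ge n^2-O(n^{5/3})$.

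Put $\epsilon=2\sqrt{\c}/n^{1/3}$. I would reduce the goal to showing that an attacked (minimum-disruption) vertex $v^*$ is not \emph{heavy}, i.e. that $G\setminus\{v^*\}$ still has a component of size at least $(1-\epsilon)n$; then $W^*\ge((1-\epsilon)n)^2\ge n^2-4\sqrt{\c}\,n^{5/3}$. Arguing by contradiction I assume $W^*<(1-2\epsilon)n^2$, which forces every minimum-disruption target to be heavy (a non-heavy vertex has $\sum_i|C_i|^2\ge((1-\epsilon)n)^2\ge(1-2\epsilon)n^2$). The crucial observation that makes \emph{immunization} the right move --- rather than the bridging deviation of the \maxcarnage~proof, which \maxdisrupt~neutralizes because adding an edge raises the disruption of exactly the attacks it would help against, letting the adversary switch targets --- is that for any vulnerable $u\ne v^*$ the graph $G\setminus\{u\}$ is unchanged when $v^*$ merely changes its immunization label. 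So immunizing $v^*$ alters neither the disruption value of any other attack nor the adversary's ranking of them; it only deletes $v^*$ from the admissible targets, and I can therefore compute $v^*$'s post-deviation payoff exactly.

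In the clean case that $v^*$ is the unique minimum-disruption target, it is attacked with probability $1$ and currently earns connectivity benefit $0$. After immunizing, it survives every remaining attack; and since the attacked vertex $u$ lies inside a single component of $G\setminus\{v^*\}$, the cut vertex $v^*$ stays connected to all the other components, so its component has size at least $n-\max_i|C_i(v^*)|>n-(1-\epsilon)n=\epsilon n$. Thus its benefit jumps from $0$ to more than $\epsilon n=2\sqrt{\c}\,n^{2/3}$ at the extra cost $\b$; as $\b$ is constant this deviation is strictly profitable, contradicting equilibrium. So $v^*$ cannot be heavy, and $W^*\ge n^2-O(n^{5/3})$, giving welfare $W^*-O(n)=n^2-O(n^{5/3})$.

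The hard part will be the case of many tied minimum-disruption targets $v^*_1,\dots,v^*_{k'}$, each heavy and each attacked with probability $1/k'$. Immunizing one of them dilutes its gain to $\tfrac{1}{k'(k'-1)}\sum_{j\ne i}\bigl|\mathrm{comp}_{v^*_i}(G\setminus\{v^*_j\})\bigr|$, and unlike in the single-target case the component containing $v^*_i$ after deleting another target $v^*_j$ need not be large, so a naive bound can fall below $\b$. To force a profitable immunization I would import the block-cut-tree machinery of Theorem~\ref{thm:welfare-new}: root the decomposition at a targeted cut vertex, collect the deepest heavy targets $\D_r$ (those whose subtree has size at least $\epsilon n$ but contain no deeper such target), and re-root at their lowest common ancestor, using a pigeonhole step to \emph{select} one target $v^*$ that lies in a component of size at least $(1-\epsilon)n$ under every other target's attack. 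For this $v^*$ the averaged post-immunization benefit is at least $\epsilon n$ and again beats $\b$. Locating this one good target against the $1/k'$ dilution --- i.e. proving the selection really yields a target that survives the other targets' attacks inside a large component --- is the delicate core of the argument and the reason, as the remark after Theorem~\ref{thm:welfare-new} notes, that the \maxcarnage~proof does not transfer verbatim.
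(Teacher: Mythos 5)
Your preliminary reductions (sparsity from Theorem~\ref{thm:sparse-general}, singleton targeted regions from Theorem~\ref{lem:singles-disruption}, the threshold $\epsilon=2\sqrt{\c}/n^{1/3}$, and the reduction to bounding the attack mass on heavy vertices) match the paper, and your single-target case is essentially the paper's case 1(a), where the deviation used is indeed immunization. The gap is the multi-target case, and it is not merely a delicate step you have deferred --- the deviation you commit to cannot close it. With $k'$ tied minimizers each attacked with probability $1/k'$, the gain to $v^*_i$ from immunizing is $\frac{1}{k'(k'-1)}\sum_{j\ne i}\bigl|\mathrm{comp}_{v^*_i}(G\setminus\{v^*_j\})\bigr|\le n/k'$, which falls below the constant $\b$ once $k'$ grows, and nothing bounds $k'$ a priori: heavy targets can be \emph{nested} along a single root-to-leaf path of the block-cut tree (the paper's case 1(b2), where their number is at least $2\sqrt{\c/\b}\,n^{1/3}$ and can be far larger). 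Worse, in that nested configuration the target your pigeonhole step is supposed to select does not exist --- every $v^*_i$ lands in a component of size strictly less than $(1-\epsilon)n$ under the attack on an adjacent $v^*_j$ in the nesting, precisely because $v^*_j$ is heavy. Immunization deviations are used in the paper only to bound the per-target probability (yielding $p\,\epsilon n\le\b$); they cannot by themselves rule out a large aggregate mass of heavy targets.

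The related error is that your reason for discarding the bridging deviation is backwards. Adding an edge between two immunized vertices does change the \maxdisrupt~distribution, but in the deviator's favor: Lemma~\ref{lem:same-max} shows that every post-attack welfare drops by at most $\c$, and strictly less for exactly those attacks the new edge mitigates, so the new target set is a \emph{subset} of the old one, consisting of the old targets the edge does not help against. The heavy attacks the deviator bridges therefore leave the support entirely, and the probability mass renormalizes onto attacks under which she already had a large component; the paper's cases 1(b2) and 2(b) compute the profit of exactly such purchases (a single bridge when the heavy targets lie on one root path, and up to $2n^{1/6}$ bridges to neutralize all ``troublesome'' subtrees in general). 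Without Lemma~\ref{lem:same-max} or an equivalent control on how the adversary re-targets after an edge purchase, your argument has no mechanism for the case of many tied or nested heavy targets, which is where the entire difficulty of this theorem lives.
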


\begin{lemma}
\label{ex:swap-disconnected}
When $\c>1$, there exists a non-trivial swapstable (and hence linkstable) equilibrium network $G$ with respect to the \maxdisrupt~adversary such
that $G$ has more than one connected component and some targeted regions have size strictly bigger than 1.
\end{lemma}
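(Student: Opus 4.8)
The plan is to give an explicit network together with a parameter regime and to verify swapstability by an exhaustive (but routine) case analysis, in the style of the examples in Appendix~\ref{sec:missing-proofs}. The construction has two kinds of pieces. First, two identical \emph{purely vulnerable} trees $B_1,B_2$, each a star on $m$ vulnerable vertices whose $m-1$ leaves each purchase their single edge to the center; these will be the two nontrivial targeted regions. Second, a single large \emph{all-immunized} component $A$: any tree on $a$ immunized vertices in which every vertex buys one edge toward a centroid root, so that each purchased edge carries at least $a/2$ vertices (as in Proposition~\ref{thm:no-attack-tree-backward}). I will take $m$ a large constant with $m>2\c+4$, then $a>(m^2+1)/2$, and finally $\b$ in the nonempty window $(a+1-m/2,\; a-\c-1)$. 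Since $A$ has no vulnerable vertex, the only vulnerable regions of $G=A\cup B_1\cup B_2$ are $B_1,B_2$; attacking either removes an entire size-$m$ component and lowers $\sum_C|C|^2$ by exactly $m^2$, so by symmetry the \maxdisrupt~adversary attacks each with probability $1/2$. Thus $G$ is disconnected and both targeted regions have size $m>1$, which are the two asserted properties; it remains to show no single swap deviation is profitable.

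The verification rests on two \maxdisrupt-specific mechanisms, and it is worth stressing that each is exactly the point where \maxdisrupt~departs from \maxcarnage. \emph{(1) The ``escape to an immunized anchor'' deviation is blocked.} Under \maxcarnage, a vulnerable vertex in a targeted region can drop its region edge and buy an edge to an immunized vertex, becoming safe without changing the attack distribution; this is precisely the swap driving Lemma~\ref{lem:singletons} and Theorem~\ref{thm:connect}. Under \maxdisrupt~the same swap \emph{does} change the attack: a vulnerable vertex attaching as a spoke to the size-$a$ component creates a new singleton region whose destruction now costs $(a+1)^2-a^2=2a+1>m^2$ in welfare (using $a>(m^2+1)/2$), so this spoke becomes the \emph{unique} most-disruptive target and dies with probability $1$. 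Hence the escaping vertex gains nothing and the deviation is strictly harmful; the only safe way to reach $A$ is to also immunize, which costs $\b$ and is unprofitable precisely because $\b>a+1-m/2$. \emph{(2) The immunized vertices of $A$ stay immunized.} A small immunized component would simply drop immunization to save $\b$; here $A$ is large enough that de-immunizing a vertex (keeping its edge) again produces a singleton with disruption $2a-1>m^2$, making the deviator the unique target, while $\b<a-\c-1$ ensures that an immunized vertex's payoff $a-\b-\c$ strictly exceeds the payoff $1$ it could secure by de-immunizing and detaching.

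Given these mechanisms, the proof proceeds by checking every swap deviation for the two representative vertices (an immunized vertex of $A$, and a leaf or the center of $B_1$, all others following by symmetry), together with the ``reaching-out'' deviations in which an immunized vertex of $A$ tries to buy an edge into some $B_i$ -- these fail because bridging $A$ to a vulnerable region raises that region's disruption above $m^2$, so it is attacked with probability $1$ and the new edge yields no benefit (this is the failure of Property~(1) of the Remarks that also explains why \maxcarnage~equilibria are connected). One also uses that in equilibrium every vulnerable region is a tree (Lemma~\ref{lem:tree}), consistent with the star/tree shapes chosen. The main obstacle is the parameter balancing: the two ``escape'' deviations (attach-vulnerable versus attach-and-immunize) push $a$ and $\b$ in opposite directions, and both must be reconciled with the requirement that $A$'s own vertices remain immunized. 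This forces $a,\b=\Theta(m^2)$ with $a-\b$ pinned in the narrow interval $(\c+1,\,m/2-1)$, and the bulk of the work is confirming that \emph{within} this window each of the remaining swap deviations (dropping a region edge, immunizing in place, swapping to the other $B_i$, etc.) is non-improving. I only claim swapstability (hence linkstability); unlike the \maxcarnage~case, I do not expect to rule out the richer multi-edge Nash deviations, which is exactly why the statement is confined to swapstable equilibria.
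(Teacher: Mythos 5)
Your two asserted mechanisms (attaching to a large component makes you the unique most-disruptive target; bridging changes the attack distribution) are exactly the right intuitions, and they do drive most of the case analysis in the paper's verification. But your specific construction is not a swapstable equilibrium: the parameter window you need is empty. The killer deviation belongs to the \emph{center} $c_1$ of the star $B_1$, who currently buys no edges and has utility $m/2$. Under swap deviation (2)+(4) she may purchase one new edge to a vertex of $A$ \emph{and} immunize. After this, $A\cup B_1$ is a single component of size $a+m$, the $m-1$ former leaves of $B_1$ become singleton vulnerable regions each causing disruption $2(a+m)-1>m^2$ (using your $a>(m^2+1)/2$), so one of \emph{them} is attacked, not $B_2$ and not $c_1$; hence $c_1$ survives with certainty in a component of size $a+m-1$, for utility $a+m-1-\b-\c$. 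Blocking this requires $\b\ge a+m/2-1-\c$. But keeping the immunized vertices of $A$ from defecting forces $\b\le a-\c-1$ (a non-root vertex of $A$ can drop its single purchased edge and de-immunize, becoming an untargeted isolated singleton with utility $1$, so you need $a-\b-\c\ge 1$). Since $a-\c-1<a+m/2-1-\c$ for every $m>0$, the two constraints cannot be met simultaneously; within your window the center's deviation is strictly profitable. The same tension resurfaces (only slightly weakened) if you replace the star $B_i$ by a path: an interior vertex immunizing and bridging to $A$ retains about half of $B_i$ plus all of $A$, again beating any $\b\le a-\c-1$.

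The structural lesson, and the way the paper's example avoids this, is that you cannot use an \emph{all-immunized} component as the anchor: it is too attractive, and its own members need $\b$ small enough to stay immunized, which is exactly small enough to make ``immunize and hook in'' profitable for outsiders. The paper's construction (Figure~\ref{fig:example-c-1-swap}, $\c=1.5$, $\b=6.5$) instead uses four all-vulnerable stars of size $7$ together with a $9$-vertex mixed component containing a \emph{vulnerable cut vertex} (type (c)) that is itself one of the targeted regions. That cut vertex acts as a fuse: any deviation that merges a star (or part of one) with the mixed component makes (c) the unique most-disruptive target, and the resulting guaranteed attack severs a chunk of the merged component, capping the connectivity gain of every immunize-and-bridge deviation at a small constant ($11$ in the paper's computation) that $\b+\c=8$ renders unprofitable. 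Moreover the immunized vertices there purchase no edges at all, so they have no ``detach and de-immunize'' escape, and they stay immunized merely because their utility is positive. If you want to salvage your approach, you need to abandon the purely immunized anchor and build a comparable fuse into the component that carries the immunization.
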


The example in the proof Lemma~\ref{ex:swap-disconnected} immediately implies the following corollary.
\begin{cor}
When $\c>1$, there exists a non-trivial swapstable (and hence linkstable) equilibrium network $G=(V,E)$ with respect to the \maxdisrupt~adversary such
that the welfare of $G$ is $O(n)$ where $n=|V|$.
\end{cor}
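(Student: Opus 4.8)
The plan is to exhibit, for any fixed $\c>1$, an explicit disconnected network that is swapstable against the \maxdisrupt~adversary and whose targeted regions have size $2$. The building block is a \emph{double-star} gadget: two adjacent vulnerable vertices $a,b$ (joined by the edge $(a,b)$) together with $p$ immunized leaves attached to $a$ and $q$ immunized leaves attached to $b$, so the gadget has $g=p+q+2$ vertices and its unique vulnerable region is $\{a,b\}$. I would take $G$ to be $m$ disjoint copies $H_1,\dots,H_m$ of this gadget, with each immunized leaf buying its single edge to its center and one center (say $a_i$) buying the edge $(a_i,b_i)$. The parameters are chosen in order: first $p=q=\lceil\c\rceil+1$, so that $g>\c+1$ and the central edge's marginal value $\approx q+1$ exceeds $\c$; then a small constant $\b$ (e.g.\ $\b=1$); and finally $m$ large (it suffices that $m>(g-1)/(\c-1)$ and $m>g/\b$).

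\textbf{Attack distribution.} First I would pin down the \maxdisrupt~behaviour on $G$. The only vulnerable regions are the $m$ pairs $\{a_i,b_i\}$, and since the gadgets are isomorphic, attacking any one of them yields the same post-attack welfare (the struck gadget collapses into $p+q$ isolated immunized leaves while the other $m-1$ gadgets stay intact). Hence all $m$ regions are targeted, each with probability $1/m$; in particular $G$ is disconnected, non-trivial (it has immunized vertices and edges), and every targeted region has size $2$, which already delivers the three structural claims. An immunized leaf then has expected component size $\tfrac{(m-1)g+1}{m}$ and each vulnerable center $\tfrac{m-1}{m}\,g$, the baselines against which every deviation is compared.

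\textbf{Verifying swapstability.} I would then run the case analysis over the swap deviations for the two vertex types; the ``local'' cases reduce to inequalities met by the parameter choice. Dropping a leaf edge loses connectivity $\approx g-1\ge\c$; dropping the central edge splits $\{a_i,b_i\}$ into two size-$1$ regions (which the adversary ignores in favour of the surviving size-$2$ regions) and loses $\approx q+1\ge\c$; a vulnerable center that immunizes gains only $g/m\le\b$; and a leaf that switches to vulnerable is punished immediately, since it enlarges its own region to size $3$, which becomes the \emph{unique} most-disruptive target, so the deviator is then killed with probability $1$.

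\textbf{The crucial case and main obstacle.} The only dangerous deviations are cross-component edge additions and swaps by an immunized leaf, and handling them is the heart of the argument; here I exploit exactly the feature separating \maxdisrupt~from \maxcarnage, namely that adding an edge \emph{does} change the attack distribution. When an immunized leaf of $H_i$ buys one edge into $H_j$, the two gadgets merge, but then attacking $\{a_i,b_i\}$ or $\{a_j,b_j\}$ fragments this merged pair and is strictly more disruptive than attacking any untouched gadget, so the adversary re-targets to these two regions. Under either of them the deviator lands in a component of size only $g+1$ (her surviving side plus one intact gadget), versus an intact $g$ almost always before, for a net gain of $1+\tfrac{g-1}{m}<\c$ by the choice of $m$; the same re-targeting caps every merging swap at size $<g$. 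The difficulty is thus intrinsically twofold and must be balanced at once: the gadget must be large enough ($g>\c+1$) for its own edges to be worth buying, yet there must be enough copies ($m\gg g$) that a single cross-edge reaches too little to justify $\c$ once the adversary responds. The step I expect to be most delicate is verifying that the adversary's welfare-minimizing choice really is the merge-severing attack rather than some attack that would leave the deviator better connected, which requires careful accounting of the post-attack component sizes across all $m$ gadgets.
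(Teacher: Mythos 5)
Your proposal takes a genuinely different route from the paper (which exhibits a fixed configuration of four all-vulnerable stars plus one component whose immunized vertices purchase no edges, at $\c=1.5$, $\b=6.5$), but your construction contains a fatal gap: the double-star gadget is not swapstable, because you have overlooked the \emph{within-gadget} swap available to an immunized leaf. Let $\ell'$ be an immunized leaf that purchased its edge to the center $a_i$, and consider the swap in which $\ell'$ drops $(\ell',a_i)$ and instead buys an edge to a sibling immunized leaf $\ell$ of $a_i$. The component $H_i$ and all vulnerable regions are unchanged, but an attack on $\{a_i,b_i\}$ now leaves $\ell$ and $\ell'$ as a surviving component of size $2$ rather than two singletons, so the post-attack welfare from attacking $H_i$ exceeds that from attacking any untouched gadget by exactly $2$. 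The \maxdisrupt~adversary therefore strictly prefers the other $m-1$ gadgets, $H_i$ is attacked with probability $0$, and $\ell'$'s expected component size rises from $\frac{(m-1)g+1}{m}$ to $g$ with no change in expenditure --- a strict gain of $\frac{g-1}{m}>0$. This is precisely the re-targeting phenomenon you exploit in your ``crucial case,'' but here it works against the construction: any edge-purchasing immunized vertex whose connectivity suffers when its own region is attacked can de-target that region for free.

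The paper's example is engineered to avoid exactly this. Its immunized vertices purchase no edges at all (every edge incident to an immunized hub is bought by a vulnerable neighbor), so they have no swap deviations; and the vulnerable leaves who do purchase edges gain nothing from the analogous re-wiring because their own regions are never targeted to begin with (the targeted regions are the large all-vulnerable stars and the single cut vertex joining the two hubs). If you wish to salvage your gadget you must rearrange edge ownership so that no immunized vertex both owns an edge and stands to gain from de-targeting its own gadget; note that simply having the centers buy the leaf edges fails as well, since the marginal value of such an edge to a center is below $1<\c$ and the center would drop it.
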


\subsection{Proof of Theorem~\ref{lem:singles-disruption}}
We first prove the following useful result which is the analog of Lemma~\ref{lem:tree2}
for the \maxdisrupt~adversary.
\begin{lemma}
\label{lem:tree3}
Let $G=(V,E)$ be a Nash, swapstable or linkstable equilibrium network
with respect to the \maxdisrupt~adversary. 
The number of targeted regions cannot be one when $|V|>1$.
\end{lemma}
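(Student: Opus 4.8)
The plan is to mirror the proof of Lemma~\ref{lem:tree2}, replacing the one place where the \maxcarnage~structure was used (``a singleton maximum region forces every other vertex to be immunized'') by an argument tailored to the disruption objective. So I would suppose toward a contradiction that there is exactly one targeted region $\T_1$. First I would show $\T_1$ is a singleton: since it is the only targeted region it is attacked with probability $1$, so every vertex of $\T_1$ dies surely and has connectivity benefit $0$. If $|\T_1|>1$ then $\T_1$ is a tree (Lemma~\ref{lem:tree}) with an internal edge, and its owner $x\in\T_1$ has utility $\le-\c<0$; dropping that edge keeps her benefit at $\ge 0$ while saving $\c$, a strictly profitable link deviation, a contradiction. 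Hence $\T_1=\{u\}$ with $u$'s utility $0$, and $u$ is isolated: $u$ buys no edge (she never survives), and no immunized vertex buys an edge to $u$ (such an edge pays off only when $u$ survives and, as $u$ remains the unique target, does not alter the attack distribution).

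The crux is to show every other vertex is immunized, and here the disruption objective is used directly. Because $u$ is isolated, attacking $\{u\}$ lowers social welfare by exactly $1$, whereas attacking any other vulnerable region $\V'$ kills $|\V'|\ge 1$ vertices each of component size $\ge 1$ and therefore lowers welfare by at least $1$. Thus attacking $\V'$ is at least as disruptive as attacking $\{u\}$, so $\V'$ also minimizes post-attack welfare and is targeted, forcing $|\T|\ge 2$; a contradiction. Consequently every vertex other than $u$ is immunized, and by the redundancy argument of Lemma~\ref{lem:tree} every immunized component is a tree (its vertices never die, so any edge beyond a spanning tree is droppable).

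It remains to extract a contradiction from $u$'s and a leaf's best responses. If $G$ is empty, every immunized vertex is isolated; an isolated immunized $w$ that drops immunization becomes a singleton tied with $\{u\}$ (each lowers welfare by $1$), is attacked with probability $1/2$, and secures utility $1/2$, forcing $\b\le 1/2$, after which $u$ immunizing earns $1-\b\ge 1/2>0$, a contradiction. Otherwise let $B^*$ be a largest immunized component ($|B^*|\ge2$); the deviations where $u$ immunizes, and where $u$ immunizes and buys one edge into $B^*$ (which eliminates all vulnerability, so no attack occurs), yield $\b\ge 1$ and $|B^*|\le\b+\c-1$. Taking a leaf $v$ of $B^*$: if $v$ owns her incident edge, dropping that edge and immunization isolates $v$ and guarantees her $1/2$, forcing $|B^*|\ge\b+\c+1/2$ and contradicting the bound; if instead $v$'s neighbor owns the edge, then $v$ dropping immunization makes $\{v\}$ strictly more disruptive than $\{u\}$ (so $v$ dies surely, giving $|B^*|\ge\b$), while the neighbor dropping that single leaf-edge forces $\c\le 1$, pinning $\c=1$ and $|B^*|=\b$; a final swap deviation (the neighbor, or $v$ itself when $|B^*|=2$, dropping that edge together with immunization) is then strictly profitable.

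I expect the main obstacle to be the second step: replacing the clean \maxcarnage~combinatorics with the welfare comparison that rules out a second vulnerable region, since under \maxdisrupt~an isolated singleton is a priori the \emph{least} disruptive target. The secondary difficulty is the tight boundary case $\c=1,\ |B^*|=\b$ in the last step, where every first-order deviation is only weakly improving and one must fall back on a combined drop-edge-and-unimmunize swap to break the tie.
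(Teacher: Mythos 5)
Your proposal is correct and follows the same skeleton as the paper's proof: reduce to a unique, isolated, singleton targeted vertex $u$ of utility $0$, argue that every other vertex is immunized, and then play $u$'s ``immunize and attach to an immunized component'' deviation against a drop deviation by some immunized edge-buyer. Where you differ is in completeness and in the endgame. The paper's own proof of Lemma~\ref{lem:tree3} is extremely terse --- it silently imports the reduction from Lemma~\ref{lem:tree2} and never justifies the step you rightly call the crux, namely that no second vulnerable region can coexist with the unique isolated target $\{u\}$ under the \maxdisrupt~objective; your welfare comparison (any other region kills at least one vertex of component size at least one, hence is at least as disruptive as the isolated $u$, hence would also be targeted) is exactly the missing argument, so on this point your write-up is more complete than the paper's. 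Your endgame, however, is more laborious than necessary. Instead of taking a leaf of the largest immunized component and case-splitting on who owns its edge (which drives you into the $\c=1$, $|B^*|=\b$ boundary analysis), take any immunized vertex $v$ that purchased $k\ge 1$ edges in its component $B$: the linkstable deviation ``drop one edge and drop immunization'' yields utility at least $-(k-1)\c$, so equilibrium forces $|B|\ge \b+\c$, and then $u$'s deviation of immunizing and buying one edge into $B$ yields $|B|+1-\b-\c\ge 1>0$, beating her current utility of at most $0$. That is the paper's one-line finish, and it makes the ownership cases unnecessary.

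The one step whose stated justification is too quick is the isolation of $u$: you claim that an immunized $w$ who bought the edge $(w,u)$ can drop it because the edge ``does not alter the attack distribution.'' Under \maxdisrupt~that is not automatic. Once the edge is gone, attacking $u$ becomes much less disruptive, and a previously untargeted vulnerable region could take over as the target; a priori that region could sit inside $w$'s own component (for instance if $u$ is a cut vertex whose attack is uniquely most disruptive even though other vulnerable regions exist), in which case the redirected attack could cost $w$ more connectivity than the $\c$ she saves. Closing this requires an extra argument --- e.g., showing that no vulnerable region other than $\{u\}$ can lie in $u$'s component to begin with, or exhibiting a different profitable deviation in the cut-vertex configuration --- and since your subsequent ``all others are immunized'' step leans on isolation, the argument as ordered is mildly circular. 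To be fair, the paper's proof does not address this either, but it is the genuinely delicate point of adapting Lemma~\ref{lem:tree2} to the \maxdisrupt~adversary and deserves an explicit treatment.
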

\begin{proof}
So consider the case that there exists a unique
singleton targeted vertex $u$. When $G$ is an empty graph then the
same argument shows that $u$ cannot exist. So suppose $G$ is non-empty
and $u$ is the unique singleton targeted vertex.  So there is some
immunized vertex $v\in V\setminus \{u\}$ who purchases an edge to some
other $v'\in V\setminus \{u, v\}$. Let $B$ be the partition that
$v, v'$ belong to. Since $v$ is best responding, it must be that
$|B| - \b - \c \geq 0$, since $v$ could choose to not buy $(v, v')$
and not to immunize for expected utility of at least $0$. This implies
that $u$ cannot be best responding in this case, since buying an edge
to $v$ and immunizing would give $u$ an expected utility of
$(|B| +1) - \b - \c \geq 1 > 0$, a contradiction to $G$ being an equilibrium.\\
\end{proof}
We are now ready to prove Theorem~\ref{lem:singles-disruption}.
\noindent\emph{Proof of Theorem~\ref{lem:singles-disruption}.}
Suppose not. Then there exists some targeted region $\T$ with
$|\T| >1$. 
Note that $\T$ is a vulnerable region such that an attack to $\T$
will minimize the social welfare in this case.
By Lemma~\ref{lem:tree}, the subgraph of $G$ on $\T$
forms a tree. Then, this tree must have at least two leaves
$x,y\in \T$.  We claim that there is some vertex in $\T$ who would
strictly prefer to \emph{swap} her edge to some immunized vertex in
$G$ rather than an edge which connects her to the remainder of $\T$.
  
Since $G$ contains some immunized vertex (since $G$ is non-trivial),
any connection between $\T$ and the rest of $G$ is through immunized
vertices.  We consider two cases and show that none of them is
possible.
\begin{enumerate}
\item {\it One of $x$ or $y$ buys her edge in the tree.} Suppose
  without loss of generality $x$ buys an edge in the tree. Since $G$
  is connected, there exists an immunized vertex $z$ which is
  connected to some vertex in $\T$.  If $x$ is not connected to $z$,
  then $x$ would strictly prefer to buy an edge to $z$ over buying her
  tree edge. This is because the targeted regions are not unique in
  equilibria by Lemma~\ref{lem:tree3} and after the deviation all the
  previous targeted regions remain targeted, no new targeted region
  would be added and the targeted region that $x$ was a part of would
  become non-targeted. So $x$'s utility would only strictly increase
  by this deviation; a contradiction.  
  So suppose $x$ is connected to
  $z$. Then if $y$ also bought her tree edge, she would also strictly
  prefer an edge to $z$ for the same reason. 
  Observe that
  $y$ cannot be connected to $z$ because one of the edges $(x,z)$ or $(y,z)$ would 
  be redundant.
  So suppose $y$ did not buy her tree edge. Observe that
  $y$ cannot be connected to $z$ because one of the edges $(x,z)$ or $(y,z)$ would 
  be redundant.
  Now consider the edge that connects $y$ to the tree $\T$. Then $y$'s 
  parent in the tree must have bought this edge; since $\c>1$,
  this implies $y$ must be connected to some immunized vertex $z'$ (or
  it would not be worth connecting to $y$); 
  Also observe that $y$'s parent can be connected to $z$ because either the edge
  between $x$ and $z$ or $y$'s parent and $z$ is redundant.
  However, $y$'s parent would
  strictly prefer to buy an edge to $z'$ over an edge to $y$.  
  Thus, $x$ cannot have bought her tree edge; either $y$ or her parent would
  like to re-wire if this were the case.
\item {\it Neither $x$ nor $y$ buys her connecting edge in the tree.}
  Since $\c>1$, both $x$ and $y$ must have immunized neighbors (or
  their edges being purchased by $x$'s targeted parent and $y$'s
  targeted parent would not be best responses by those vertices). 
    Let $z$ and $z'$ denote the immunized vertices connected to $x$ and $y$, respectively.
  Note that $z\ne z'$ otherwise one of the edges $(z,x)$ or $(z', y)$ would be redundant.
  But then, both $x$'s parent and $y$'s parent in the tree $\T$ would
  strictly prefer to buy an edge to $z$ and $z'$
  rather than to $x$ and $y$, respectively.
\end{enumerate}
\qed

\subsection{Proof of Theorem~\ref{thm:welfare-new-disruption}}
Before proving
Theorem~\ref{thm:welfare-new-disruption} we state Lemma~\ref{lem:same-max}
that would be useful in the proof.

\begin{lemma}
\label{lem:same-max}
Let $G=(V,E)$ be a connected graph with at least two immunized
vertices. Suppose the welfare is at least $W$ with respect to the \maxdisrupt adversary 
when an attack starts at
any vertex $v\in V$. Consider the graph $G'=(V', E')$ where $V'=V$ and
$E'=E\cup (v_1, v_2)$ and $v_1$ and $v_2$ are any two immunized
vertices i.e., $G'$ is the same as $G$ with only an edge added between
$v_1$ and $v_2$.  Then the welfare is at least $W-\c$ 
with respect to the \maxdisrupt adversary when an attack
starts at any vertex $v'\in V'$ in $G'$.
\end{lemma}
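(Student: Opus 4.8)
The plan is to fix an arbitrary starting vertex for the attack and show that passing from $G$ to $G'$ can decrease the post-attack welfare by at most the cost $\c$ of the single added edge. Throughout, let $W_G(v)$ (resp.\ $W_{G'}(v)$) denote the social welfare of $G$ (resp.\ $G'$) when the attack deterministically starts at vertex $v$; the hypothesis is that $W_G(v) \ge W$ for every $v \in V$, and we must establish $W_{G'}(v) \ge W - \c$ for every $v \in V' = V$.

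The crucial observation is that $v_1$ and $v_2$ are immunized, hence survive every attack and act as firewalls. Because the new edge $(v_1, v_2)$ joins two immune endpoints, it is never traversed by a spreading infection, so adding it leaves every vulnerable region unchanged. Consequently, for a fixed attack at $v$ the set of killed vertices --- and therefore the set $S$ of survivors --- is \emph{identical} in $G$ and $G'$. Writing $H = G[S]$ and $H' = G'[S]$ for the induced surviving subgraphs, and using that $v_1, v_2 \in S$, we get $H' = H + (v_1, v_2)$: the two surviving subgraphs differ by exactly one edge.

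Next I would compare total connectivity benefit. Killed vertices contribute zero in both graphs, and each survivor's benefit equals the size of its component in $H$ or $H'$. Inserting the edge $(v_1,v_2)$ either leaves the component structure of $H$ untouched (if $v_1$ and $v_2$ already share a component) or merges two components of sizes $a$ and $b$ into one of size $a+b$, raising the total benefit by $(a+b)^2 - a^2 - b^2 = 2ab \ge 0$. Thus the total post-attack connectivity benefit in $H'$ is at least that in $H$. Finally, the immunization costs in $G$ and $G'$ coincide, while the total edge expenditure in $G'$ exceeds that in $G$ by exactly $\c$. Since welfare equals total connectivity benefit minus total expenditure, for the fixed attack at $v$ we obtain
\[
W_{G'}(v) \;\ge\; W_G(v) - \c \;\ge\; W - \c,
\]
where the first inequality combines the (weakly larger) benefit with the single extra $\c$ of expenditure and the second is the hypothesis. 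As $v$ was arbitrary over $V = V'$, this proves the lemma. The argument is essentially self-contained; the only point demanding care is the firewall observation guaranteeing an identical survivor set, which is precisely what makes $H$ and $H'$ differ by a single edge and keeps the entire comparison exact.
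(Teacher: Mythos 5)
Your proof is correct and follows essentially the same route as the paper's: both fix the attack vertex, observe that the post-attack component structure either is unchanged or has two components merged by the new immunized--immunized edge (so the sum of squared component sizes cannot decrease), and charge the exactly $\c$ of extra expenditure against this. Your explicit justification that the survivor set is identical in $G$ and $G'$ (because the added edge joins two firewalls) is a point the paper leaves implicit, but the argument is the same.
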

\begin{proof}
The statement is trivial when $E=E\cup (v_1, v_2)$. So suppose $(v_1, v_2)\notin E$. 
\begin{itemize}
\item First, consider any vertex $v\in V$ that realizes the welfare
  $W$ post-attack to $v$ in $G$. Let $G_1, \ldots G_k$ be the
  connected components in $G\setminus \{v\}$. If $v_1$ and $v_2$ are
  in the same connected component, then $G_1, \ldots G_k$ would be the
  also the connected components in $G'\setminus \{v\}$. In such case
  the sum of connectivity benefits remains the same while the
  collective expenditure increases by $\c$; so the welfare is
  $W-\c$. Otherwise, $v_1$ and $v_2$ would be in different connected
  components in $G\setminus \{v\}$.  Without loss of generality, let $G_1$ and $G_2$ to be
  such components, respectively.  In this case
  $G_1\cup G_2, G_3, \ldots G_k$ would be connected components in
  $G'\setminus \{v\}$. This means that after the attack to $v$ in $G'$
  the sum of connectivity benefits only increases (since
  $|G_1+G_2|^2 > |G_1|^2 +|G_2|^2$).  Since the collective expenditure
  also increases by $\c$, then the welfare is strictly bigger than
  $W-\c$ in this case.
\item Second, consider any vertex $v\in V$ such that the welfare
  post-attack to $v$ in $G$ is $W_v > W$. Now similar to the above
  case after adding the edge $(v_1, v_2)$, the welfare in $G'$ after
  an attack to $v$ is either $W_v-\c$ or strictly bigger than
  $W_v-\c$. And both of these values are strictly bigger than $W-\c$
  as claimed.
\end{itemize}
\end{proof}

An immediate consequence of Lemma~\ref{lem:same-max} is the following
scenario. Let $G=(V,E)$ be a non-trivial 
and connected
equilibrium 
network with respect to the \maxdisrupt~adversary when $\c>1$.  
By Lemma~\ref{lem:tree3} we know that the number
of targeted regions (if there are any) in $G$ is strictly bigger than1. 
Furthermore, 
By Lemma~\ref{lem:singles-disruption} we know that 
when $G$ is connected 
the targeted
regions (if they exist) are singletons.  
So let $u_1$ and $u_2$ be two such targeted
vertices in $G$. Suppose there exist immunized vertices $v_1$ and
$v_2$ such that $v_1$ and $v_2$ remain in the same connected component
after an attack to $u_1$ but end up in different connected components
after an attack to $u_2$. Now consider the graph $G'=(V', E')$ where
$V'=V$ and $E'=E\cup (v_1, v_2)$ i.e., $G$ with added edge
$(v_1, v_2)$. Let $\T'\subseteq V'$ be the set of targeted vertices in
$G'$. Then $u_1\in \T'$ and $u_2\notin \T'$. Furthermore, every vertex
$v\in \T'$ was also a targeted vertex in $G$. So targeted regions in
$G'$ are also singletons.

Intuitively, the above describes a situation where after the deviation, 
the targeted regions could be identified
easily i.e., we only need to consider targeted regions before the
deviation and check which one of them remains targeted after the
deviation. We use this observation extensively in the proof of Theorem~\ref{thm:welfare-new-disruption}.



$\newline$
\noindent\emph{Proof of Theorem~\ref{thm:welfare-new-disruption}.}
First of all, by Lemma~\ref{lem:singles-disruption}  
the size
of targeted regions (if there are any) in $G$ is exactly $1$.
Also since
  there are at most $2n-4$ edges in $G$ by
  Theorem~\ref{thm:sparse-general} and the number of immunized
  vertices is at most $n$, the collective expenditure of vertices in
  $G$ is at most $\cm=(2n-4) \c + n \b$.

  Let $T=(B\cup C, E')$ be the block-cut tree decomposition of
  $G$.~\footnote{Recall that for any
    $v\in B\cup C$, we denote $T_v$ to be the subtree rooted at
    $v$. We define the size of $T_v$ (denoted by $|T_v|$)
    to be the cardinality of the union of all the blocks and cut
    vertices in $T_v$. In contrast to the standard convention that cut vertices are also part
of the blocks their removal would disconnect, we will assume throughout that cut vertices are not part of the blocks
to avoid overcounting.}  
    The decomposition has the nice property that
  an attack to a  targeted vertex in any block of $T$ leaves $G$ with a
  single connected component after attack. However, an attack to a
  targeted cut vertex of $T$ can disconnect $G$. 
  Due to the choice of the adversary, either targeted vertices are all in blocks, or all 
  cut vertices. In the former case, the statement of the theorem is immediate.
  In the latter case, to analyze the
  welfare we only consider the targeted cut vertices in $T$ and in
  particular we only focus on targeted cut vertices of $T$ with the
  property that the attack on such a vertex sufficiently reduces the size of the largest 
connected component in the resulting graph. More precisely, let
  $\epsilon=2 \sqrt{\c}/n^{1/3}$.  We refer to a targeted cut
  vertex $v$ as a \emph{heavy targeted cut vertex} (heavy for short when it is clear from the context) 
  if after an attack to $v$, the size of
  the largest connected component in $G\setminus\{v\}$ is strictly
  less than $(1-\epsilon)n$.  We then show that the total probability of
  attack to heavy targeted cut vertices is small if $G$ is a
  non-trivial equilibrium. This implies that with high probability (which we specify shortly) 
  the network retain a large connected component after the attack and, hence, the
  welfare is high.

  Root $T$ arbitrarily on some targeted cut vertex $r\in C$. 
  If there is no such cut vertex, then the size of largest connected
  component in $G$ after any attack is at least $n-1$. So the social welfare 
  in that case is at least $(n-1)^2-\cm$ and we are done.
  So assume $r$ exists and consider the
  set of cut vertices $\D_r\subseteq C$ such that for all $v\in \D_r$

\begin{enumerate}
\item[(a)] $v$ is targeted,
\item[(b)] $|T_v|\ge \epsilon  n$, and
\item[(c)] no targeted cut vertex $v' \in T_v \setminus \{ v \}$ has the property that
$|T_{v'}|\ge \epsilon  n$ (i.e., $v$ is the deepest vertex in the
tree $T_v$ that satisfies property (b)).
\end{enumerate}
Observe that each $v\in \D_r$ is a heavy targeted cut vertex (but there might be other 
heavy targeted cut vertices in $T$ that are not in $\D_r$).  We consider two
cases based on the size of $\D_r$: (1) $|\D_r|=1$ and (2)
$|\D_r| > 1$. 
We first outline the structure of the proof as follows.
\begin{enumerate}[label=(\arabic*)]
\item $|\D_r|=1$:
\begin{enumerate}[label=\alph*]
\item $\D_r = \{r\}$ where $r$ is the root of the tree: In this case we show that the welfare is as claimed.
\item $\D_r = \{v\}$ and $v \neq r$ where $r$ is the root of the tree: Let $p_v$ denote the probability of attack to heavy targeted cut vertices
on the path from $v$ to $r$ (including $r$ and $v$). Then
\begin{enumerate}[label=\arabic*]
\item $p_v\leq \sqrt{\c\b}n^{-1/3}$: In this case the welfare is as claimed.
\item $p_v> \sqrt{\c\b}n^{-1/3}$:
\begin{enumerate}[label=\arabic*]
\item There exists a non-heavy targeted cut vertex: In this case we propose a beneficial deviation for a vertex contradicting the assumption that the network was an equilibrium.
\item All the targeted cut vertices are heavy: In this case we propose a beneficial deviation for a vertex contradicting the assumption that the network was an equilibrium.
\end{enumerate}
\end{enumerate}
\end{enumerate}
\item $|\D_r| > 1$: Let $p_F$ denote the total probability of attack to all the heavy targeted cut vertices.
\begin{enumerate}[label=\alph*]
\item $p_F\leq 8\sqrt{\c}n^{-1/3}$: In this case we show the welfare is as claimed.
\item $p_F > 8\sqrt{\c}n^{-1/3}$: In this case we propose a beneficial deviation for a vertex contradicting the assumption that the network was an equilibrium.
\end{enumerate}
\end{enumerate}

Consider case (1) where $|\D_r|=1$. Let $\D_r=\{v\}$. Consider the following two cases:
1(a) $v=r$ and 1(b) $v\ne r$ where $r$ is the root of the tree.

\ifred
{\color{red} Case 1(a): $|\D_r|=1$ and $\D_r = \{r\}$ where $r$ is the root of the tree.}\newline
\fi
In case 1(a), let $p$
be the probability of attack to $v$.
Now consider the deviation in which $v$
immunizes but maintains the same edge purchases as in her current
strategy in $G$. Since $G$ is an equilibrium, $v$
(weakly) prefers her current strategy to this deviation.  The
connectivity of $v$ after an attack to any targeted vertex other than $v$
is at least $n-\epsilon n$.  So for $v$ to not prefer immunizing: $p \left(n-\epsilon
  n\right)\leq \b$.  Furthermore, if any vertex other than
$v$ is attacked, the size of the largest connected component after the
attack is at least $(1-\epsilon)n$. This implies the welfare is at least
\begin{align*}
\left(1-p\right)\left(\left(1-\epsilon\right)n\right)^2-\cm &> \left(1-\frac{\b}{(1-\epsilon)n}\right)\left(1-2\epsilon\right)n^2-\cm\\
&> n^2-4\sqrt{\c} n^{5/3} - \frac{n^{4/3}}{n^{1/3}-2\sqrt{\c}}-\cm = n^2-O(n^{5/3}).
\end{align*}

For case
$1(b)$, 
observe that the targeted cut vertices on the path from $v$ to
$r$ (the root of $T$) are the only possible heavy targeted cut vertices in the
network. \footnote{We count both $v$ and $r$ to be on the path.}  So
let $p_v$ denote the probability that some heavy targeted cut vertex on the path from
$v$ to $r$ (including $r$ and $v$) is attacked.  We consider two cases: 1(b1)
$ p_v \leq \sqrt{\b\c} n^{-1/3} $, and 1(b2)
$p_v > \sqrt{\b\c} n^{-1/3}$.  We show that in case 1(b1) the welfare
is as claimed in the statement of Theorem~\ref{thm:welfare-new} and
case 1(b2) cannot happen at equilibrium.

\ifred
{\color{red} Case 1(b1): $|\D_r|=1$ and 
$\D_r = \{v\}$ and $v \neq r$ where $r$ is the root of the tree. Let $p_v$ denote the probability of attack to heavy targeted cut vertices
on the path from $v$ to $r$ (including $r$ and $v$). Assume
$p_v\leq \sqrt{\c\b}n^{-1/3}$.}\newline
\fi
In case 1(b1), with probability $1-p_v$, the size of the largest
connected component after the attack is at least
$(1-\epsilon)n$. Hence the welfare in case 1(b1) is at least
\begin{align*}
\left(1-p_v\right)\left(\left(1-\epsilon\right)n\right)^2-\cm 
&\geq \left(1-\sqrt{\b\c} n^{-1/3}\right)\left(1-2\epsilon\right)n^2-\cm\\
&=\left(1-\sqrt{\b\c} n^{-1/3}\right)\left(1-\frac{4 \sqrt{\c}}{n^{1/3}}\right)n^2-\cm\\
&> n^2-\left(\sqrt{\b\c}+4\sqrt{\c}\right) n^{5/3}-\cm = n^2-O(n^{5/3}).
\end{align*}

In case 1(b2), consider two sub-cases: 1(b2-1) there exists a targeted
cut vertex $v'$ such that after an attack to $v'$ the size of the
largest connected component is at least $(1-\epsilon)n$ or 1(b2-2) not.

\ifred
{\color{red} Case 1(b2-1): $|\D_r|=1$ and 
$\D_r = \{v\}$ and $v \neq r$ where $r$ is the root of the tree. Let $p_v$ denote the probability of attack to heavy targeted cut vertices
on the path from $v$ to $r$ (including $r$ and $v$). Assume
$p_v> \sqrt{\c\b}n^{-1/3}$ and there exists a non-heavy targeted cut vertex.}\newline
\fi

\begin{figure}[h]
\centering\begin{minipage}[c]{.4\textwidth}
\scalebox{.5}{
\centering
\begin{tikzpicture}
[scale = 0.7, every node/.style={circle,draw=black, minimum size=0.7cm}, red node/.style = {circle, fill = red, draw},  gray node/.style = {circle, fill = blue, draw}]
\node [red node] (1) at  (0, 3){$v$};
\node [red node] (2) at  (0, 10){$$};
\node [red node] (10) at  (0, 13){$r$};
\node [red node] (3) at  (0, 7){$v^*$};
\node (12) at  (2.5, 10){$v'$};
\node [draw,rectangle,color=black,minimum width=1cm,minimum height=0.6cm,label=$$] (11) at (0, 11.5) {$$};
\node [draw,rectangle,color=black,minimum width=1cm,minimum height=0.6cm,label=$$] (4) at (0, 8.5) {$b^*$};
\node [draw,rectangle,color=black,minimum width=1cm,minimum height=0.6cm,label=$$] (13) at (2.5, 8.5) {$$};
\node [draw,rectangle,color=black,minimum width=1cm,minimum height=0.6cm,label=$$] (7) at (-2, 11.5) {$$};
\node [draw,rectangle,color=black,minimum width=1cm,minimum height=0.6cm,label=$$] (5) at (0, 5.5) {$$};
\node [draw,rectangle,color=black,minimum width=1cm,minimum height=0.6cm,label=$$] (6) at (0, 1.5) {$b$};
\draw (0,4.5)--(-2,0.9)--(2,0.9)--cycle;
\draw (1) to (6);\draw(2) to (4);\draw(4) to (3);\draw(3) to (5);\draw(1) to (5);
\draw (10) to (11); \draw (10) to (7);\draw (12) to (11);\draw(11) to (2);\draw (12) to (13);
\end{tikzpicture}}
\end{minipage}
\begin{minipage}[c]{.5\textwidth}
\centering
\caption{Case 1(b2-1): All the heavy targeted cut vertices are red and $v'$ is a non-heavy targeted cut vertex. The deviation involves an immunized
vertex in $b$ to purchase to an immunized vertex in $b^*$.
\label{fig:1b21}}
\end{minipage}
\end{figure}
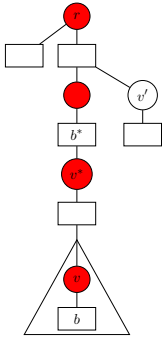

Consider first case 1(b2-1).  Notice that the heavy targeted cut vertices
from $v$ to $r$ are \emph{monotonic}, in that some prefix of the
targeted cut vertices on the path are heavy, and the remaining suffix
is not (though the suffix may be empty). Let $v^*$ be the most shallow
heavy targeted cut vertex on the path from $v$ to $r$: then, all targeted
vertices between $v$ and $v^*$ are heavy, and all heavy targeted cut vertices from
$v$ to $r$ are between $v$ and $v^*$. Since $v^*$ is a cut vertex, it
has at least two child blocks; let $b^*$ be the one which isn't on the
path from $v^*$ to $v$. Again, the connection of $v^*$ to $b^*$ must be through
an immunized vertex; call that vertex $w^*$. Then, $p_v$ is equal to the
probability that an attack will occur on the path from $v$ to $v^*$,
since all targeted vertices on that path are heavy (and all heavy
targeted cut vertices are on that path). 
Consider any child block $b$ of $v$ and let $w$ be an immunized
vertex in $b$ (again $w$ exists because $v$ is targeted).
We will argue about the deviation
of $w^*$ buying an edge to $w$. See Figure~\ref{fig:1b21}.
In the figures in the proof, circles denote targeted cut vertices 
and rectangles denote blocks. Triangles are used to represent 
the subtree of a heavy targeted cut vertex in $\D_r$.

Let $p$ be the probability of attack to any targeted vertex. Since $v$ is a heavy targeted cut vertex and $v$
did not immunize then $p \epsilon n\le \b$, so
$p\leq \b/(\epsilon n) = \b n^{-2/3}/(2\sqrt{\c})$, and the same
argument holds for any heavy targeted cut vertex on the path from $v$ to
$v^*$. Since $p_v > \sqrt{\b\c} n^{-1/3}$ there are
$p_v / p > 2\sqrt{\c/\b} n^{1/3}$ heavy targeted cut vertices (all on the path
from $v$ to $v^*$).  Let $u_{w^*}(x)$ denote the size of the connected
component of $w^*$ after an attack to vertex $x$.  The utility of $w^*$
before this deviation is
\[
\sum_{x \text{ is heavy}} p \cdot u_{w^*}(x) + \sum_{x \text{ is not heavy}}p\cdot  u_{w^*}(x)- \text{ expenditure of } w^*.
\]
After the deviation, Lemma~\ref{lem:same-max} will imply that all
targeted cut vertices between $v$ and $v^*$ will no longer be
targeted. All of these targeted cut vertices are heavy, so $w^*$'s utility will be
exactly

\begin{align*}
& \frac{1}{1-\Pr[\text{attack on path from } v \text{ to } v^*]}\sum_{x \text{ is not on the path from $v$ to $v^*$}}p \cdot u_{w^*}(x)- \text{ expenditure of } w^* -\c\\
&= \frac{1}{1-p_v}\sum_{x \text{ is not heavy}} p \cdot u_{w^*}(x)- \text{ expenditure of } w^* -\c
\end{align*}

where the equality follows from the fact that the set of heavy
targeted cut vertices is exactly the set of all targeted vertices on the
path from $v$ to $v^*$. The scaling in the utility after the deviation
was calculated by the observation that all non-heavy targeted cut
vertices before the deviation remain targeted and none of the targeted
vertices on the path from $v$ to $v^*$ remain targeted after the
deviation.

By definition, $u_{w^*}(x) > n-\epsilon n$ when $x$ is not heavy. So
\[
\sum_{x \text{ is not heavy}}p \cdot u_{w^*}(x) > \left(n-\epsilon n\right) \sum_{x \text{ is not  heavy}}p  = \left(n-\epsilon n\right)\left(1-p_v\right).
\]
Also
\[
\sum_{x \text{ is heavy }} p\cdot u_{w^*}(x) \leq p_v \left(n-\epsilon n\right) - p \sum_{i=1}^{p_v/p} (i-1)
\]
because there are $p_v/p$ heavy targeted cut vertices and each heavy targeted cut vertex cause a
loss in connectivity of at least $\epsilon n$. Furthermore, as we
traverse heavy targeted cut vertices from $v$ towards $v^*$, the amount of loss
in connectivity beyond $\epsilon n$ increases. We simply bound this
increment by $1$ when we go to the next heavy targeted cut vertex on the path
from $v$ to $v^*$ because there is at least one immunized vertex in each
block between consecutive targeted cut vertices.

We will now show that it is strictly beneficial for $w^*$ to buy this
edge. We subtract the utility $w^*$ gets without this deviation from the
(lower bound on the) utility she gets with the deviation, and will
show that this difference is strictly positive:
\begin{align*}
\left(\frac{1}{1-p_v}-1\right)&\sum_{x \text{ is not heavy}}p \cdot u_{w^*}(x) - \sum_{x \text{ is heavy }} p \cdot u_{w^*}(x) -\c \\ 
& > 
p_v\left(n-\epsilon n\right)  - \left(p_v \left(n-\epsilon n\right) - p \sum_{i=1}^{p_v/p} (i-1)\right)-\c
\end{align*}
After simplification and replacing the bounds for $p_v$ and $p_v/p$ we get
\[
p \sum_{i=1}^{p_v/p} (i-1)-\c = p \frac{\frac{p_v}{p}(\frac{p_v}{p}-1)}{2} -\c > \frac{p_v^2}{4p} -\c  = 2\c-\c = \c > 0,
\]
which shows that $w^*$ buying an edge to $w$ is strictly beneficial,
so this cannot be an equilibrium.

\ifred
{\color{red} Case 1(b2-2): $|\D_r|=1$ and 
$\D_r = \{v\}$ and $v \neq r$ where $r$ is the root of the tree. Let $p_v$ denote the probability of attack to heavy targeted cut vertices
on the path from $v$ to $r$ (including $r$ and $v$). Assume
$p_v> \sqrt{\c\b}n^{-1/3}$ and all the targeted cut vertices are heavy.}\newline
\fi

Consider case 1(b2-2). Since no targeted cut vertex leaves a connected
component of size at least $(1-\epsilon)n$, all targeted cut vertices are
heavy. Thus, if $p$ is the probability that some targeted vertex is
attacked, $1/p$ is the number of targeted vertices, which is also the
number of heavy targeted cut vertices.

Pick any child block of $r$ that is not in the same subtree as $v$ and
call it $b$; $b$ must exist because $r$ is a cut vertex so it
has at least $2$ child block. Let $w$ be an immunized vertex
in $b$; $w$ must exist because $r$ is a singleton targeted vertex and
can only be connected to non-targeted regions through immunized vertices.
Let
$v_1=v\rightarrow v_2 \rightarrow v_3 \rightarrow \ldots \rightarrow
v_k=r$
denote the path (ignoring the blocks and non-targeted cut vertices on
the path).  Consider the child block of
$v_{k/2}$ on the path from $v$ to $r$. Let $b'$ be such child block
and $w'$ be an immunized vertex in $b'$ (similar to the previous case,
$w'$ exists because the targeted regions are singletons).  Again, we
consider the deviation that $w$ purchase an edge to $w'$ and show that
it is beneficial. See Figure~\ref{fig:1b22}.

\begin{figure}[h]
\centering
\begin{minipage}[c]{.4\textwidth}
\centering
\scalebox{.6}{
\begin{tikzpicture}
[scale = 0.7, every node/.style={circle,draw=black, minimum size=0.7cm}, red node/.style = {circle, fill = red, draw},  gray node/.style = {circle, fill = blue, draw}]
\node [red node] (1) at  (0, 4){$v$};
\node [red node] (2) at  (0, 10){$r$};
\node [red node] (3) at  (0, 7){$$};
\node [draw,rectangle,color=black,minimum width=1cm,minimum height=0.6cm,label=$$] (4) at (0, 8.5) {$$};
\node [draw,rectangle,color=black,minimum width=1cm,minimum height=0.6cm,label=$$] (7) at (-2, 8.5) {$b$};
\node [draw,rectangle,color=black,minimum width=1cm,minimum height=0.6cm,label=$$] (5) at (0, 5.5) {$b'$};
\node [draw,rectangle,color=white,minimum width=0.2cm,minimum height=0.2cm,label=$v_1$] at (1.2,3.2) {};
\node [draw,rectangle,color=white,minimum width=0.2cm,minimum height=0.2cm,label=$v_k$] at (1.2,9.2) {};
\node [draw,rectangle,color=white,minimum width=0.2cm,minimum height=0.2cm,label=$v_{k/2}$] at (1.2,6.2) {};
\draw (0,5)--(-1,3.5)--(1,3.5)--cycle;
\draw(2) to (4);\draw(4) to (3);\draw(2) to (7);\draw(3) to (5);\draw(1) to (5);
\end{tikzpicture}}
\end{minipage}
\begin{minipage}[c]{.5\textwidth}
\centering
\caption{Case 1(b2-2): Heavy targeted cut vertices are in red and we denote such vertices
 on the path from $r$ to $v$ by $v_k$ to $v_1$. The deviation involves an immunized vertex 
 in $b$ to purchase an edge to an immunized vertex in $b'$.
\label{fig:1b22}}
\end{minipage}
\end{figure}
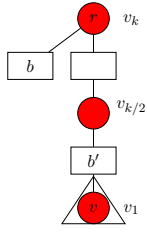

Notice that $k = 1/p$, and furthermore, since all targeted cut
vertices are heavy, $p_v = 1$.  After the deviation cut vertices
$v_{k/2}, \ldots, v_k$ become non-targeted and cut vertices
$v_1, \ldots v_{k/2-1}$ remain targeted by applying
Lemma~\ref{lem:same-max}. Furthermore, there are no other targeted
vertices.

Let $p$ be the probability of attack to any targeted vertex before the
deviation. Since $v$ is a heavy targeted cut vertex and $v$ did not immunize then $p \epsilon n\le \b$ or
$p\leq \b/(\epsilon n) = \b n^{-2/3}/(2\sqrt{\c})$. Since $p_v = 1$
there are $k = p_v / p > n^{2/3}2\sqrt{\c}/\b$ heavy targeted cut vertices
before the deviation.  Let $u_w(x)$ denote the size of the connected
component of $w$ after an attack to $x$.  The utility of $w$ before
the deviation is
$$
\sum_{x \text{ is targeted in } G} p \cdot u_w(x) - \text{ expenditure of } w.
$$
Let $p'$ be the probability of attack to any targeted vertex after the deviation. The utility of $w$
after the deviation is 
$$
\sum_{x \text{ is targeted in } G'}p' \cdot u_w(x)- \text{ expenditure of } w -\c.
$$
Note that by construction $p' \geq 2p$. Also
$u_w(v_{1+i}) - u_w(v_{k/2+1+i}) \geq k/2$ because there is at least a
block of size 1 between each consecutive targeted cut vertices before
the deviation. Combining these two facts and subtracting the second
term from the first term we get
\begin{align*}
& \sum_{x \text{ is targeted in } G'}p' \cdot u_w(x) - \sum_{x \text{ is targeted in } G} p \cdot u_w(x) - \c \\
& = p'  \sum_{x =  v_i, i\in [k/2, k]} u_w(x)  - p  \left( \sum_{x =  v_i, i\in [k/2, k]} u_w(x)  + \sum_{x =  v_i, i\in [1, k/2)}  u_w(x)\right) - \c\\
& \geq 2p  \sum_{x =  v_i, i\in [k/2, k]} u_w(x)  - p  \left( \sum_{x =  v_i, i\in [k/2, k]} u_w(x)  + \sum_{x =  v_i, i\in [1,k/2)}  u_w(x)\right) - \c\\
& \geq p  \sum_{x =  v_i, i\in [k/2, k]} u_w(x)  - p  \sum_{x =  v_i, i\in [1,k/2)}  u_w(x) - \c\\
& \geq 
p\sum_{i=1}^{1/p} \frac{k}{2} -\c = p\frac{\frac{1}{p}(\frac{1}{p}+1)}{2} \frac{k}{2} -\c \geq \frac{k}{4} -\c > 0,
\end{align*}
because $k$ is the number of heavy targeted cut vertices before the deviation
which is at least $n^{2/3}2\sqrt{\c}/\b$ (grows with $n$). So this deviation
is strictly beneficial for $w$; a contradiction.

In case (2), let $r'$ be a cut vertex that is the \emph{lowest common
  ancestor} of vertices in $\D_{r}$. If $r'\ne r$, we root the tree on
$r'$ and repeat the process of finding heavy targeted cut vertices. Note that
$\D_{r}\subseteq \D_{r'}$ since we might add some additional heavy targeted cut
vertices to $\D_{r'}$ (vertices that used to be ancestors of $r'$ in $T$).

Observe that the vertices in $\D_{r'}$ and
the targeted cut vertices on the path from some $v\in\D_{r'}$ to $r'$
(new root) are the only possible heavy targeted cut vertices in the network.
Let $p_F$ denote the total probability of attack to all the heavy targeted cut vertices.  
We consider two cases: 2(a)
$ p_F \leq 8\sqrt{\c} n^{-1/3}$, and 2(b)
$p_F > 8\sqrt{\c}n^{-1/3}$. We show that in case 2(a) the welfare
is as claimed in the statement of Theorem~\ref{thm:welfare-new}, and
that case 2(b) cannot happen.

\ifred
{\color{red} Case 2(a): $|\D_r| > 1$. Let $p_F$ denote the total probability of attack to all the heavy targeted cut vertices and $p_F\leq 8\sqrt{\c} n^{-1/3}$.}\newline
\fi
First consider case 2(a). In this case, with probability of
$1-p_F$, the size of the largest connected component after an attack
is at least $(1-\epsilon)n$. Hence the welfare in this case is at
least
\begin{align*}
\left(1-p_F\right)\left(\left(1-\epsilon\right)n\right)^2-\cm
&\geq
\left(1-8\sqrt{\c}n^{-1/3}\right)\left(1-2\epsilon\right)n^2-\cm\\
&=
\left(1-8\sqrt{\c}n^{-1/3}\right)\left(1-\frac{4 \sqrt{\c}}{n^{1/3}}\right)n^2-\cm\\
&> n^2-\left(8\sqrt{\c}+4\sqrt{\c}\right)n^{5/3}-\cm= n^2-O\left(n^{5/3}\right).
\end{align*}


\ifred
{\color{red} Case 2(b1): $|\D_r| > 1$. Let $p_F$ denote the total probability of attack to all the 
heavy targeted cut vertices and $p_F> 8\sqrt{\c} n^{-1/3}$.}\newline
\fi
Next, we consider case 2(b), where $p_F > 8\sqrt{\c}n^{-1/3}$. First, 
note that by re-rooting there are at least two sub-tress of the new root $r'$
that contain heavy targeted cut vertices in $\D_{r'}$. 
Second, observe that by pigeonhole principle there exists
a vertex $v\in\D_{r'}$ such that the probability of attack to (heavy) targeted 
cut vertices on the path from $v$ to $r'$ (counting both $v$ and $r'$) is at least $8\sqrt{\c} n^{-1/3}\epsilon \geq 16 \c n^{-2/3}>8\c n^{-2/3}$.
Let $b$ be any child block of $v$ and $w$ any immunized vertex in $b$.
We propose a deviation for $w$ that strictly increases her utility.

Before we describe the deviation, consider the subtree of $r'$ that contains $v$ and let $N$ denote the size of the 
subtree. We first assume $N < n - n^{5/6}$ and relax this assumption later. 
Intuitively, we would like to make targeted cut vertices that would cause $w$ a connectivity loss of strictly bigger than
$n^{5/6}/2$ non-targeted.\footnote{By connectivity loss of strictly bigger than
$n^{5/6}/2$ we mean that the size of the connected component of $w$ after the attack is 
strictly less than $n-n^{5/6}/2$.}
Observe that all the heavy targeted cut vertices on the path from $v$ to $r'$ satisfy this property.
Also other than the heavy targeted cut vertices on the path from $v$ to $r'$ 
there might be other vertices that would cause a connectivity loss of strictly bigger than 
$n^{5/6}/2$ for $w$. So $w$'s deviation should make such targeted vertices non-targeted as well.
Furthermore, we would like to ensure that any targeted vertex that would cause $w$ a connectivity loss of at most $n^{5/6}/2$
targeted before the deviation would remain targeted after the deviation. 
Let $p$ denote the probability of attack to a targeted cut vertex before the deviation.
Such deviation implies that the 
connectivity benefit of $w$ after the deviation is increased by at least 
\[
p \left (\frac{8\c n^{-2/3}}{p} \right)\frac{n^{5/6}}{2} = 4\c n^{1/6}
\]
because there were at least $8\c n^{-1/3}/p$ targeted cut vertices which cause $w$ to have a connectivity
loss of at least $n^{5/6}$ before the deviation (these are exactly the heavy targeted cut vertices on the path from $v$ to $r'$). 
Moreover, after the deviation the connectivity loss of $w$ in any attack 
is bounded by $n^{5/6}/2$.
Finally, we need to guarantee that the deviation does not involve buying a lot of additional edges for $w$. 
In what follows we propose a deviation with at most $2 n^{1/6}$ additional edge purchases (for a cost of $2\c n^{1/6}$) 
which shows that $w$'s deviation was beneficial; a contradiction to network being an equilibirum. 
\begin{figure}[h]
\centering
\begin{minipage}[c]{.4\textwidth}
\centering
\scalebox{.6}{
\begin{tikzpicture}
[scale = 0.55, every node/.style={circle,draw=black, minimum size=0.7cm}, red node/.style = {circle, fill = red, draw},  gray node/.style = {circle, fill = blue, draw}]
\node [red node] (1) at  (0, 3){$v$};
\node [red node] (2) at  (0, 10){$$};
\node [red node] (10) at  (-1, 13){$r$};
\node [red node] (3) at  (0, 7){$$};
\node (12) at  (-2, 10){$$};
\node [draw,rectangle,color=black,minimum width=1cm,minimum height=0.6cm,label=$$] (11) at (0, 11.5) {$$};
\node [draw,rectangle,color=black,minimum width=1cm,minimum height=0.6cm,label=$$] (4) at (0, 8.5) {$$};
\node [draw,rectangle,color=black,minimum width=1cm,minimum height=0.6cm,label=$$] (7) at (-2, 11.5) {$b'$};
\node [draw,rectangle,color=black,minimum width=1cm,minimum height=0.6cm,label=$$] (5) at (0, 5.5) {$$};
\node [draw,rectangle,color=black,minimum width=1cm,minimum height=0.6cm,label=$$] (6) at (0, 1.5) {$b$};
\draw (0,4.5)--(-2,0.9)--(2,0.9)--cycle;
\draw (-2,11)--(-3.2,9.4)--(-0.8,9.4)--cycle;
\draw (1) to (6);\draw(2) to (4);\draw(4) to (3);\draw(3) to (5);\draw(1) to (5);
\draw (10) to (11); \draw (10) to (7);\draw(11) to (2);\draw(12) to (7);
\end{tikzpicture}
}
\end{minipage}
\begin{minipage}[c]{.5\textwidth}
\centering
\caption{The deviation for $w\in b$ is to purchase an edge to $w'\in b'$. 
The red denote the  targeted cut vertices that would result in a connectivity
loss of strictly bigger than $n^{5/6}/2$ for $w$.
\label{fig:2b-1}}
\end{minipage}
\end{figure}
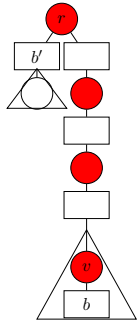
\begin{figure}[h]
\begin{minipage}[c]{.4\textwidth}
\centering
\scalebox{.6}{
\begin{tikzpicture}
[scale = 0.55, every node/.style={circle,draw=black, minimum size=0.7cm}, red node/.style = {circle, fill = red, draw},  gray node/.style = {circle, fill = blue, draw}]
\node [red node] (1) at  (0, 3){$v$};
\node [red node] (2) at  (0, 10){$$};
\node [red node] (10) at  (-2.5, 13){$r$};
\node [red node] (3) at  (0, 7){$$};
\node [draw,rectangle,color=black,minimum width=1cm,minimum height=0.6cm,label=$$] (11) at (0, 11.5) {$$};
\node [draw,rectangle,color=black,minimum width=1cm,minimum height=0.6cm,label=$$] (4) at (0, 8.5) {$$};
\node [draw,rectangle,color=black,minimum width=1cm,minimum height=0.6cm,label=$$] (5) at (0, 5.5) {$$};
\node [draw,rectangle,color=black,minimum width=1cm,minimum height=0.6cm,label=$$] (6) at (0, 1.5) {$b$};
\node  (201) at  (-5, 3){$$};
\node [red node] (202) at  (-5, 10){$$};
\node [red node] (203) at  (-5, 7){$u^*$};
\node [draw,rectangle,color=black,minimum width=1cm,minimum height=0.6cm,label=$$] (2011) at (-5, 11.5) {$$};
\node [draw,rectangle,color=black,minimum width=1cm,minimum height=0.6cm,label=$$] (204) at (-5, 8.5) {$$};
\node [draw,rectangle,color=black,minimum width=1cm,minimum height=0.6cm,label=$$] (205) at (-5, 5.5) {$b_{u^*}$};
\draw(202) to (204);\draw(204) to (203);\draw(203) to (205);\draw(201) to (205);
\draw (10) to (2011);\draw(2011) to (202);
\draw (-5,4.5)--(-6,2.4)--(-4,2.4)--cycle;
\draw (0,4.5)--(-2,0.9)--(2,0.9)--cycle;
\draw (1) to (6);\draw(2) to (4);\draw(4) to (3);\draw(3) to (5);\draw(1) to (5);
\draw (10) to (11);\draw(11) to (2);
\end{tikzpicture}
}
\end{minipage}
\begin{minipage}[c]{.5\textwidth}
\centering
\caption{The deviation for $w\in b$ is to purchase an edge to $w_{u^*} \in b_{u^*}$.
The red denote the  targeted cut vertices that would result in a connectivity
loss of strictly bigger than $n^{5/6}/2$ for $w$.
\label{fig:2b-2}}
\end{minipage}
\end{figure}
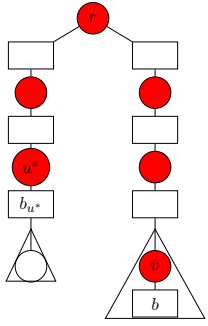

First consider the subtrees of $r'$ that do not contain $v$. If there are no targeted cut vertices that
would cause a connectivity loss of strictly bigger than $n^{5/6}/2$, $w$ would purchase an edge to an immunized vertex $w'$ in a child block $b'$ of 
$r'$ that is not on the path from $r'$ to $v$. $b'$ exists because $r'$ is a cut vertex and $w'\in b'$
exists because targeted regions are singletons. See  Figure~\ref{fig:2b-1}. This deviation by Lemma~\ref{lem:same-max}
would only cause all the heavy targeted cut vertices on the path from $v$ to $r'$ (including $v$ and $r'$) to become non-targeted 
without adding any new targeted vertex. Now suppose there are targeted cut vertices 
on the subtrees of $r'$ that do not contain $v$ which would cause $w$ to have a connectivity loss of strictly bigger than $n^{5/6}/2$.
Let $u$ be one such cut vertex. Clearly an attack to any targeted vertex on the path from $u$ to $r'$ would also cause $w$ 
to have a connectivity loss of strictly bigger than $n^{5/6}/2$. 
We define a targeted cut vertex $u^*$
in the subtrees of $r'$ that do not contain 
$v$ as \emph{troublesome} if (1) an attack to any targeted cut vertex in $T_{u^*}$ would cause $w$ a connectivity loss of at most 
$n^{5/6}/2$ and (2) an attack to $u^*$ (and any targeted cut vertex on the path from $u^*$ to $r'$) would cause 
a connectivity loss of strictly bigger than $n^{5/6}/2$ for $w$. Let $U^*$ be the set of all troublesome vertices in the subtrees of $r'$
that do not contain $v$. Observe that $|U^*|\le 2(n-N)/n^{5/6}$. This is because $|T_{u^*}| \geq n^{5/6}/2$ for all $u^*\in U^*$ and the
size of subtrees of $r'$ that contain $v$ is $n-N$. For any $u^*\in U^*$ let $b_{u^*}$ denote the child block of $u^*$
and $w_{u^*}$ denote any immunized vertex in $b_{u^*}$. Consider the deviation that $w$
purchases an edge to $w_{u^*}$. See Figure~\ref{fig:2b-2}. This deviation by Lemma~\ref{lem:same-max} would make all the targeted cut vertices
on the path from $v$ to $u^*$ (including both $v$ and $u^*$) non-targeted. Note that these are the only targeted vertices 
that would become non-targeted after the deviation. Furthermore, by Lemma~\ref{lem:same-max}
the deviation does not add any new targeted vertex. 
So $\max\{1, |U^*|\}$ edge purchases 
guarantee that in the subtrees of $r'$ that 
do not contain $v$ all attacks would cause a connectivity loss of at most $n^{5/6}/2$ for $w$.

\begin{figure}[h]
\centering\begin{minipage}[c]{.4\textwidth}
\centering
\scalebox{.5}{
\begin{tikzpicture}
[scale = 0.6, every node/.style={circle,draw=black, minimum size=0.7cm}, red node/.style = {circle, fill = red, draw},  gray node/.style = {circle, fill = blue, draw}]
\node [red node] (1) at  (0, 3){$v$};
\node [red node] (2) at  (0, 10){$$};
\node [red node] (10) at  (-1, 13){$r$};
\node [red node] (3) at  (0, 7){$$};
\node (16) at  (3, 3){$$};
\node (12) at  (-2, 10){$$};
\node [draw,rectangle,color=black,minimum width=1cm,minimum height=0.6cm,label=$$] (11) at (0, 11.5) {$$};
\node [draw,rectangle,color=black,minimum width=1cm,minimum height=0.6cm,label=$$] (4) at (0, 8.5) {$$};
\node [draw,rectangle,color=black,minimum width=1cm,minimum height=0.6cm,label=$$] (13) at (3, 8.5) {$$};
\node [draw,rectangle,color=black,minimum width=1cm,minimum height=0.6cm,label=$$] (7) at (-2, 11.5) {$$};
\node [draw,rectangle,color=black,minimum width=1cm,minimum height=0.6cm,label=$$] (5) at (0, 5.5) {$$};
\node [draw,rectangle,color=black,minimum width=1cm,minimum height=0.6cm,label=$$] (6) at (0, 1.5) {$b$};
\node [draw,rectangle,color=black,minimum width=1cm,minimum height=0.6cm,label=$$] (14) at (3, 5.5) {$b_{u}$};
\node [red node] (15) at  (3, 7){$u$};
\draw (0,4.5)--(-2,0.9)--(2,0.9)--cycle;
\draw (3,4.5)--(2,2.3)--(4,2.3)--cycle;
\draw (-2,11)--(-3.2,9.4)--(-0.8,9.4)--cycle;
\draw (1) to (6);\draw(2) to (4);\draw(4) to (3);\draw(3) to (5);\draw(1) to (5);
\draw (10) to (11); \draw (10) to (7);\draw(11) to (2);\draw(12) to (7);
\draw (13) to (15); \draw (14) to (15); 
\draw (14) to (16);
\draw (13) to (2);
\end{tikzpicture}}
\end{minipage}
\begin{minipage}[c]{.5\textwidth}
\caption{The red denote the  targeted cut vertices that would result in a connectivity
loss of strictly bigger than $n^{5/6}/2$ for $w$. The deviation is for $w\in b$ to purchase an edge to $w_{u} \in b_{u}$.
\label{fig:2b-3}}
\end{minipage}
\end{figure}
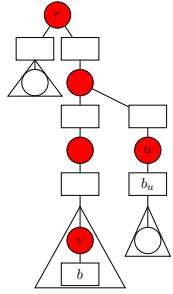

Now consider the subtree of $r'$ that contains $v$. In this subtree other than
the heavy targeted cut vertices on the path from $v$ to $r'$ there might be other targeted
cut vertices that would result in a connectivity loss of strictly bigger than $n^{5/6}/2$ to $w$.
See Figure~\ref{fig:2b-3}.
Again we define a targeted cut vertex $u$
in the subtree of $r'$ that contains
$v$ \emph{troublesome} if (1) an attack to any targeted cut vertex in $T_{u}$ would cause $w$ a connectivity loss of at most 
$n^{5/6}/2$, (2) an attack to $u$ (and any targeted cut vertex on the path from $u$ to $r'$) would cause 
a connectivity loss of strictly bigger than $n^{5/6}/2$ for $w$ and (3) $u$ is not on the path from $v$ to $r'$. 
Let $U$ be the set of all troublesome vertices in the subtree of $r$
that contains $v$. Observe that $|U|\le 2N/n^{5/6}$. This is because $|T_{u}| \geq n^{5/6}/2$ for all $u\in U$ and the
size of subtree of $r'$ that contains $v$ is $N$. For any $u\in U$, let $b_{u}$ denote the child block of $u$
and $w_{u}$ denote any immunized vertex in $b_{u}$. Consider the deviation that $w$
purchases an edge to $w_{u}$. See Figure~\ref{fig:2b-2}. This deviation by Lemma~\ref{lem:same-max} would make all the targeted cut vertices
on the path from $v$ to $u$ (including both $v$ and $u$) non-targeted. Note that these are the only targeted vertices 
that would become non-targeted after the deviation. Furthermore, by Lemma~\ref{lem:same-max}
the deviation does not add any new targeted vertices. 
So $|U|$ edge purchases guarantees that all attacks in the
subtree of $r'$ that cause $v$ would cause a connectivity loss of at most $n^{5/6}/2$ for $w$ after the deviation.

So the total number of edge purchases in the deviation described for $w$ 
is  $\max\{1, |U^*|\} + |U| \leq 2 n^1/6$, as claimed.

Finally, we show that the assumption we made about $N$ (the size of the subtree of $r'$
that contains $v$) can be made without loss of generality. In particular we show
that in case 2(b) there exists a targeted cut vertex $r^*$ such that if we root
the block-cut tree in $r^*$ then: (1) there exists a $v^*\in\D_{r^*}$
such that the probability of attack to the heavy targeted cut vertices on the path from
$v^*$ to $r^*$ is at least $8\c n^{-2/3}$ and (2) the size of the subtree
of $r^*$ that contains $v^*$ is less than $n-n^{5/6}$.

By contradiction, suppose not. Find a vertex $v\in H_{r'}$ such that the probability of
attack to (heavy) targeted cut vertices on the path from $v$ to $r'$ is at least $8\sqrt{\c} n^{-1/3}/|\D_{r'}| \geq 8\sqrt{\c} n^{-1/3}\epsilon =  16\c n^{-2/3}$
(such vertex exists by pigeonhole principle). 
The choice of re-rooting to $r'$ in case 2 guarantees that there exist a vertex $v'\in\D_{r'}$ ($v'\ne v$) such that 
$v$ and $v'$ are in different subtrees of $r'$. In particular the probability of attack on the path from $v'$ to $v$ is also at least $16\c n^{-2/3}$.
If $|T_{v}|$ (the size of subtree of $v$) is at least $n^{5/6}$ then we are done because $r^*=v$ and $v'$ witness the property
that we claim. Otherwise, we show that there exists a targeted cut vertex $r^*$ on the path from $v$ to $r'$ such that 
if we root the block cut-tree on $r^*$ then either $v$ or $v'$ satisfy the property we were looking for.

Note that in this case the size of subtree of $r'$ that contains $v$ is at least $n-n^{5/6}$. Furthermore, $|T_{v}| < n^{5/6}$.
Let $r_1, \ldots, r_k$ be the (heavy) targeted cut vertices on the path from $r'$ to $v$. 
As we move along from $r_1$ to $r_k$ (increasing the index $i$ in $r_i$ along the path) the size of the subtree of $r_i$ that contains $v$ decreases.
Furthermore, as we move along from $r_k$ to $r_1$ (decreasing the index $i$ in $r_i$ along the path) $|T_{r_i}|$ increases. So there 
exist $i$ and $j$ such that (1) the size of subtree of $r_i'$ that contains $v$ is less than $n-n^{5/6}$ for all $i'\geq i$ and (2) $|T_{r_{j'}}|> n^{5/6}$ for all $j'\leq j$.
Note that either the probability of attack on the path from $r_i$ to $r'$ or the probability of attack on the path from $r_j$ to $v$ is at least 
$16\c n^{-2/3}/2 = 8\c n^{-2/3}$ (remind that the probability of attack between $v$ and $v'$ is at least $16 \c n^{-2/3}$).
In the former case $r^* = r_i$ and $v$ witness the property we claimed. In the latter case $r^* = r_j$ and $v'$ witness the property we claimed.
\qed

\subsection{Proof of Lemma~\ref{ex:swap-disconnected}}
\noindent\emph{Proof of Lemma~\ref{ex:swap-disconnected}.}
\begin{figure}[h]
\centering\begin{minipage}[c]{.5\textwidth}
\centering
\scalebox{.5}{
\begin{tikzpicture}
[scale=0.6, every node/.style={circle,draw=black}, gray node/.style = {circle, fill = blue, draw}, red node/.style = {circle, fill = red, draw}]
\node [red node] (1) at  (0, 1){};
\node [red node] (3) at  (-2, 0){};\node [red node] (4) at  (-2, 1){};
\node [red node] (5) at  (-2, 3){};\node [red node] (6) at  (-2, 2){};
\node [red node] (7) at  (-2, 4){};\node [red node] (8) at  (-2, -1){};
\draw[->] (3) to (1);\draw[->] (4) to (1);\draw[->] (5) to (1);\draw[->] (6) to (1);\draw[->] (7) to (1);\draw[->] (8) to (1);
\node [red node] (11) at  (3, 1){};
\node [red node] (13) at  (1, 0){};\node [red node] (14) at  (1, 1){};
\node [red node] (15) at  (1, 3){};\node [red node] (16) at  (1, 2){};
\node [red node] (17) at  (1, 4){};\node [red node] (18) at  (1, -1){};
\draw[->] (13) to (11);\draw[->] (14) to (11);\draw[->] (15) to (11);\draw[->] (16) to (11);\draw[->] (17) to (11);\draw[->] (18) to (11);
\node [red node] (21) at  (0, -5){};
\node [red node] (23) at  (-2, -6){};\node [red node] (24) at  (-2, -5){};
\node [red node] (25) at  (-2, -3){};\node [red node] (26) at  (-2, -4){};
\node [red node] (27) at  (-2, -2){};\node [red node] (28) at  (-2, -7){};
\draw[->] (23) to (21);\draw[->] (24) to (21);\draw[->] (25) to (21);\draw[->] (26) to (21);\draw[->] (27) to (21);\draw[->] (28) to (21);
\node [red node] (111) at  (3, -5){};
\node [red node] (113) at  (1, -6){};\node [red node] (114) at  (1, -5){};
\node [red node] (115) at  (1, -3){};\node [red node] (116) at  (1, -4){};
\node [red node] (117) at  (1, -2){};\node [red node] (118) at  (1, -7){};
\draw[->] (113) to (111);\draw[->] (114) to (111);\draw[->] (115) to (111);\draw[->] (116) to (111);\draw[->] (117) to (111);\draw[->] (118) to (111);
\node [red node] (41) at  (7, -2){};
\node [gray node] (42) at  (6, -2){};\node [red node] (43) at  (5, -1){};\node [red node] (44) at  (5, -2){};\node [red node] (45) at  (5, -3){};
\node [gray node] (46) at  (8, -2){};\node [red node] (47) at  (9, -1){};\node [red node] (48) at  (9, -2){};\node [red node] (49) at  (9, -3){};
\draw[->] (41) to (42);\draw[->] (41) to (46);
\draw[->] (43) to (42);\draw[->] (44) to (42);\draw[->] (45) to (42);
\draw[->] (47) to (46);\draw[->] (48) to (46);\draw[->] (49) to (46);
\node [draw,rectangle,color=black,minimum width=0.5cm,minimum height=0.5cm,label=$e$] at (1,4) {};
\node [draw,rectangle,color=black,minimum width=0.5cm,minimum height=0.5cm,label=$d$] at (3,1) {};
\node [draw,rectangle,color=black,minimum width=0.5cm,minimum height=0.5cm,label=$a$] at (5,-1) {};
\node [draw,rectangle,color=black,minimum width=0.5cm,minimum height=0.5cm,label=$b$] at (7,-2) {};
\node [draw,rectangle,color=black,minimum width=0.5cm,minimum height=0.5cm,label=$c$] at (8,-2) {};
\end{tikzpicture}}
\end{minipage}
\begin{minipage}[c]{.45\textwidth}
\centering
\caption{An example of non-trivial swapstable (and hence linkstable) equilibrium network with respect to the \maxdisrupt~adversary with 
more that one connected component. $\c=1.5$ and $\b=6.5$.}
\label{fig:example-c-1-swap}
\end{minipage}
\end{figure}
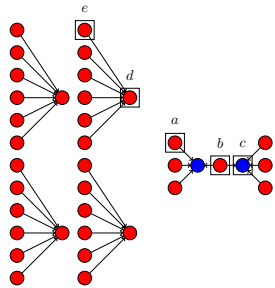
We consider the non-trivial network in Figure~\ref{fig:example-c-1-swap} and show that this network is a swapstable equilibrium network with respect to the \maxdisrupt~adversary when
$\c=1.5$ and $\b=6.5$.~\footnote{We point out that the network in Figure~\ref{fig:example-c-1-swap} is not a Nash equilibrium network with respect to the \maxdisrupt~adversary because
the immunized vertex of type (b) has a profitable Nash deviation which is to buy an edge to all the connected components with no immunization plus an additional edge 
to the other immunized vertex.}
By symmetry, we just consider the deviations in 5 types of vertices denoted by (a)-(e) in Figure~\ref{fig:example-c-1-swap}. 
We show that none of the deviations
can strictly increase the utility. 

For type (a) vertices the utility pre-deviation is is $9 (4/5) + 4 (1/5) - \c = 8-\c$. The deviations for such vertex are as follows:
\begin{enumerate}
\item dropping the purchased edge.
\item keeping the purchased edge and adding one more edge.
\item swapping the purchased edge.
\item immunizing.
\item dropping the purchased edge and immunizing.
\item keeping the purchased edge and adding one more edge and immunizing
\item swapping the purchased edge and immunizing.
\end{enumerate}

In case 1, the utility would become 1 after the deviation so as long as $\c < 7$, the deviation is not beneficial.

In case 2, adding an edge to any vertex outside of the current connected component, would result in the vertex of type (a) to
form the unique targeted region. So the utility would become $-2\c$ which is strictly less than the current utility. Adding an edge to a vertex inside 
of the current component can make vertex $(c)$ non-targeted. So the utility of type (a) vertex will become $9-2\c$ after the  deviation 
which is not beneficial since $\c=1.5$.

In case 3, swapping the edge to any vertex outside of the current connected component would result in 
the vertex of type (a) to
form the unique targeted region. So the utility would become $-\c$ which is strictly less than the current utility.
Swapping the edge to a vertex inside of the current component will either not change the utility or make (c) the 
unique targeted vertex. In the latter case, the utility of the vertex after the deviation would be at most $5-\c$ 
which is strictly smaller 
than the current utility of $8-\c$.

Type (a) vertices are not targeted, so immunization in case 4 only increases the cost without adding any benefit.

In case 5, after dropping the edge, the vertex remains non-targeted so the deviation in case 1 strictly
dominates the deviation in case 5.

In case 6, adding an edge to any vertex in a component outside of the current connected component, 
would result in that component to be the unique targeted region; so the added edge would not 
provide any direct benefits but it will cause the vertex of type $(c)$ to be non-targeted. 
However, we showed in
case 2 that this would not still be a beneficial deviation.
Adding an edge to a vertex inside of the current component of a type (a) vertex can make the vertex of type $(c)$ non-targeted. 
However, again, this deviation is strictly worse than the deviation in case 2 because the vertex remain non-targeted 
after the deviation so the immunization is not needed.

In case 7, swapping the edge to any vertex outside of the current connected component would result in 
that component to be the unique targeted region. So the utility of the type (a) vertex after the deviation 
would become $1-\c-\b$ which is strictly less than the 
current utility of $8-\c$.
Swapping the edge to a vertex inside of the current component will either not change the utility or make (c) the 
unique targeted vertex. In the latter case, the utility of the vertex after the deviation would be at most $5-\c-\b$ which is strictly smaller 
than the current utility of $8-\c$.

For type (b) vertices, the utility pre-deviation is is $9 (4/5) - \b = 0.7$ and the deviations are as follows:
\begin{enumerate}
\item changing the immunization (with or without adding an edge).
\item adding an edge.
\end{enumerate}

In case 1, the vertex would form the unique targeted region if she changes her immunization (regardless of whether 
she adds an edge or not). So this case will not happen as long as the current utility of type (b) vertex is strictly bigger than zero.
Since the utility of type (b) vertex is positive before the deviation, then the deviation is not beneficial. 

In case 2, adding an edge to any vertex outside of the current connected component, 
would result in that component to be the unique targeted region; so the added edge would not provide any benefit.
However, the added edge would result in vertex of 
The best edge in the current component to purchase an edge to is the other immunized vertex. In this case
the utility of type (b) vertex would become $9-\b-\c < 9(4/5)-\b$. So this deviation is not beneficial for the type (b) vertex either.

Next consider the only type (c) vertex. The utility of such vertex is $9(4/5)-2\c$. The deviations of such vertex are as follows:
\begin{enumerate}
\item dropping one of the purchased edges.
\item keeping the purchased edges and adding one more edge.
\item swapping one of the purchased edges.
\item immunizing.
\item dropping one of the purchased edges and immunizing.
\item keeping the purchased edges, adding one more edge and immunizing.
\item swapping one of the purchased edges and immunizing.
\end{enumerate} 

In case 1, dropping the edge would make (c) non-targeted but her benefit after the deviation would be
$5-\c$ which is strictly smaller than $9(4/5)-2\c$ when $\c=1.5$.

In case 2, adding an additional edge to any vertex outside of the current connected component, would make (c)
part of the unique targeted region; so (c)'s utility after the deviation would be $-3\c$ which is strictly less than her current
utility. Furthermore, in her current component, (c) is already connected to both type (b) vertices. So adding an edge 
to any of type (a) vertices again would result in (c) to form the unique targeted region; which is not beneficial.

In case 3, swapping one the edges to 
to any vertex outside of the current connected component, would make (c)
part of the unique targeted region; so (c)'s utility after the deviation would be $-2\c$ which is strictly less than her current
utility. In her connected component, if (c) swaps one of her edges to a type (a) vertex, then again she would 
form the unique targeted region; so this deviation is not beneficial as well.

In case 4, after the immunization, vertex (c) become non-targeted so her utility would be $9-2\c-\b$ after deviation
which is strictly less than $9(4/5)-2\c$ when $b=6.5$.

In case 5, after dropping an edge the vertex (c) becomes non-targeted. So this deviation is strictly 
dominated by the deviation in case 1, which is also not beneficial. 

In case 6, after immunization no vertex in the connected component of (c) is targeted so adding an edge
inside of her current component would be redundant. Outside of her current component, adding an edge
would form a unique targeted region which would result in only an additional edge cost without any 
connectivity benefit for (c). 

In case 7, swapping one the edges 
to any vertex outside of the current connected component of (c) would make that component the unique targeted region.
(c)'s utility in this case would be $5-2\c-\b$ which is strictly less than her current utility.
Furthermore, after immunization, no vertex in (c)'s connected component is targeted and (c) requires at least two 
edges to remain connected to all the vertices in her connected component. So swapping an edge in this case would 
not be beneficial as well.

For type (d) vertices, the utility pre-deviation is  $7(4/5)-\c$. The deviations of one such vertex are as follows:
\begin{enumerate}
\item dropping the purchased edge.
\item keeping the purchased edge and adding one more edge.
\item swapping the purchased edge.
\item immunizing
\item dropping the purchased edge and immunizing.
\item keeping the purchased edge and adding one more edge and immunizing.
\item swapping the purchased edge and immunizing.
\end{enumerate}

In case 1, the utility after the deviation would be $1$ which is strictly less than the current utility.

In case 2, adding an edge inside of her current connected component would be redundant. Adding
an edge to any other vertex would make the pre-deviation connected component of type (d) vertex as
the unique targeted region. In which case, her utility becomes $-2\c$ which is less than her current utility.

In case 4, after immunization, no vertex in the current connected component of the type (d) vertex remains 
targeted. Hence her utility would be $7-\c-\b$ after deviation which is less than her current utility.

In case 3, swapping the edge to any other vertex in her current connected component would 
not change (d)'s utility. Swapping her edge to any other connected component with no immunized
vertex would make (d) part of the unique targeted region. In which case her utility would be $-\c$; strictly less
than her current utility. Swapping her edge to any vertex in the connected component with the immunized
vertex would make (c) the unique targeted region. In which case (d)'s utility would be at most $5-\c$
which is strictly less than her current utility.

In case 5, after dropping the purchased edge, the type (d) vertex would not be targeted anymore. So this 
deviation is strictly dominated by the deviation in case 1 which is also not beneficial.

In case 6, after immunization, no vertex in the connected component of (d) remains targeted. 
Adding an edge inside of her current connected component would be redundant.
Adding an edge to a connected component with no immunization would make such connected component
the unique targeted region. In which case the type (d) vertex achieves no connectivity benefit and only suffers the
linkage cost. Adding an edge to any vertex in the connected component with immunization would make (c)
the unique targeted region. In this case (d)'s utility would be at most $11-2\c-\b$ which is still less
than her current utility.

In case 7, swapping the edge to any other vertex in her current connected component would 
not change (d)'s utility. Swapping her edge to any other connected component with no immunized
vertex would make that component the unique targeted region. In which case (d)'s utility would be $1-\c-\b$; 
strictly less than her current utility. Swapping her edge to any vertex in the connected component with the immunized
vertex would make (c) the unique targeted region. In which case (d)'s utility would be at most $5-\c-\b$
which is strictly less than her current utility.

For type (e) vertices, the utility pre-deviation is $7(4/5)$. The deviations of one such vertex are as follows.
\begin{enumerate}
\item adding an edge.
\item immunizing.
\item adding an edge and immunizing.
\end{enumerate}

In case 1, adding an edge inside of her current connected component would be redundant. Adding
an edge to any other vertex would make the pre-deviation connected component of type (e) vertex as
the unique targeted region. In which case, her utility becomes $-\c$ which is less than her current utility.

In case 2, after immunization, no vertex in the current connected component of the type (e) vertex remains 
targeted. Hence her utility would be $7-\b$ after deviation which is less than her current utility.

In case 3, after immunization, no vertex in the connected component of (e) remains targeted. 
Adding an edge inside of her current connected component would be redundant.
Adding an edge to a connected component with no immunization would make such connected component
the unique targeted region. In which case the type (e) vertex achieves no connectivity benefit and only suffers the
linkage cost. Adding an edge to any vertex in the connected component with immunization would make (c)
the unique targeted region. In this case (e)'s utility would be at most $11-\c-\b$ which is still less
than her current utility.
\qed
\section{Convergence of Best Response Dynamics}
\label{sec:br-cycles}
  We conjectured the general and fast convergence of swapstable (and linkstable) 
  dynamics with respect to \maxcarnage~adversary
  in Section~\ref{sec:exp}. However, this conjecture needs some specification. 
In this section, we show in Example~\ref{ex:nash-cycles} that Nash (and also swapstable and
linkstable) best response dynamics can cycle. 
We again 
focus on the \maxcarnage~adversary and show that cycles can happen in best response dynamics
when we start from a specific initial graph, the players best respond in a fix order and ties
are broken adversarially.
However, this construction heavily relies on a worst-case rule for breaking best response ties, and thus we suspect
the more natural variant with randomized ordering and randomized tie-breaking converges generally. Also, 
to our knowledge, standard potential
game arguments do not seem to apply here. 

\begin{figure}[h]
\centering
\begin{minipage}[c]{.15\textwidth}
\centering
\scalebox{.5}{
\begin{tikzpicture}
[scale = 0.45, every node/.style={circle, fill = red, draw=black, minimum size=0.7cm}]
\node (1) at  (0, 0){1};\node (2) at (2,0){2};\node (3) at (4, 0){3};\node (4) at  (6, 0){4};
\node (5) at  (0, 2){5};\node (6) at (2,2){6};\node (7) at (4, 2){7};\node (8) at  (6, 2){8};
\node (9) at  (0, 4){9};\node (10) at (2,4){10};\node (11) at (4, 4){11};\node (12) at  (6, 4){12};
\node (13) at  (0, 6){13};\node (14) at (2,6){14};\node (15) at (4, 6){15};\node (16) at  (6, 6){16};
\node (17) at  (0, 8){17};\node (18) at (2,8){18};\node (19) at (4, 8){19};\node (20) at  (6, 8){20};
\draw[->] (1)  to (2);\draw[->] (2)  to (3);\draw[->] (3)  to (4);
\draw[->] (5)  to (6);\draw[->] (6)  to (7);\draw[->] (7)  to (8);
\draw[->] (9)  to (10);\draw[->] (10)  to (11);\draw[->] (11)  to (12);
\draw[->] (13)  to (14);\draw[->] (14)  to (15);\draw[->] (15)  to (16);
\draw[->] (17)  to (18);\draw[->] (18)  to (19);\draw[->] (19)  to (20);
\end{tikzpicture}}
\end{minipage}
\begin{minipage}[c]{.3\textwidth}
\centering
\caption{Nash best response cycles with respect to \maxcarnage~adversary. $\c=7/6$ and $\b=20$.
\label{fig:br-cycles}}
\end{minipage}
\begin{minipage}[c]{.15\textwidth}
\centering
\scalebox{.50}{
\begin{tikzpicture}
[scale = 0.45, every node/.style={circle, draw=black}, red node/.style = {circle, fill=red, draw, minimum size=0.7cm}]
\node [red node](21) at  (0, -2){4};\node [red node] (22) at (2,-2){1};\node [red node] (23) at (4, -2){2};\node [red node] (24) at  (6, -2){3};
\node [red node] (1) at  (0, 0){2};\node [red node] (2) at (2,0){3};\node [red node] (3) at (4, 0){4};\node [red node] (4) at  (6, 0){1};
\node [red node] (5) at  (0, 2){2};\node [red node] (6) at (2,2){3};\node [red node] (7) at (4, 2){4};\node [red node] (8) at  (6, 2){1};
\node [red node] (9) at  (0, 4){3};\node [red node] (10) at (2,4){4};\node [red node] (11) at (4, 4){1};\node [red node] (12) at  (6, 4){2};
\node [red node] (13) at  (0, 6){4};\node [red node] (14) at (2,6){1};\node [red node] (15) at (4, 6){2};\node [red node] (16) at  (6, 6){3};
\node [red node] (17) at  (0, 8){1};\node [red node] (18) at (2,8){2};\node [red node] (19) at (4, 8){3};\node [red node] (20) at  (6, 8){4};
\draw[->] (1)  to (2);\draw[->] (2)  to (3);
\draw[->] (5)  to (6);\draw[->] (6)  to (7);\draw[->] (7)  to (8);
\draw[->] (9)  to (10);\draw[->] (10)  to (11);\draw[->] (11)  to (12);
\draw[->] (13)  to (14);\draw[->] (14)  to (15);\draw[->] (15)  to (16);
\draw[->] (17)  to (18);\draw[->] (18)  to (19);\draw[->] (19)  to (20);
\draw[->] (21)  to (22);\draw[->] (22)  to (23);\draw[->] (23)  to (24);
\node [draw=none,rectangle,minimum width=1cm,minimum height=0.5cm] at (-3,8) {round 1};
\node [draw=none,rectangle,minimum width=1cm,minimum height=0.5cm,label=] at (-3,6) {round 2};
\node [draw=none,rectangle,minimum width=1cm,minimum height=0.5cm,label=] at (-3,4) {round 3};
\node [draw=none,rectangle,minimum width=1cm,minimum height=0.5cm,label=] at (-3,2) {round 4};
\node [draw=none,rectangle,minimum width=1cm,minimum height=0.5cm,label=] at (-3,0) {round 5};
\node [draw=none,rectangle,minimum width=1cm,minimum height=0.5cm,label=] at (-3,-2) {round 6};
\end{tikzpicture}
}
\end{minipage}
\begin{minipage}[c]{.3\textwidth}
\centering
\caption{The status of the first component at the beginning of the first 6 rounds of best response dynamics.
\label{fig:br-cycles-2}}
\end{minipage}
\end{figure}

\begin{example}
\label{ex:nash-cycles}
Consider the network in Figure~\ref{fig:br-cycles} (with all vulnerable vertices) with $n=20$, $\c=7/6$ and $\b=20$ to be 
the initial configuration in running the Nash best response dynamics. If
the vertices Nash best respond in the increasing order of their labels, then there exists a tie breaking
rule which causes the best response dynamics to cycle with respect to a \maxcarnage~adversary.
\end{example}
\begin{proof}
  Since the components are symmetric, we only analyze one of the components.
  Vertices 1 and 2 are currently best responding (although each has a
  deviation with the same payoff but we break ties in favor of their
  current action).  Vertex 3's best response is to drop her
  edge. Vertex 4's best response is to connect back to the same
  component she was a part of before vertex 3's best response.  We
  break ties by forcing vertex 4 to purchase an edge to vertex 1.

After the first round, we are in the same pattern as before but the labels of the 
vertices are different. So in the next round vertex 2 would drop
her edge and vertex 3 would buy an edge to vertex 4. In the third
round, vertex 1 would drop her edge. 
In the fourth round, 
and vertex 2 would buy an edge to
vertex 3. In the fourth round, vertex 4 would drop her edge. 
In the fifth round, vertex
1 would buy an edge to 2, vertex 3 would drop her edge and vertex 4 would buy an edge to vertex 1. 
So we are back in the same configuration that we were at the beginning or round 2 (see Figure~\ref{fig:br-cycles}).
\end{proof}

Since we considered Nash best responses, but all the best responses
chosen by the adversary were linkstable deviations,
Example~\ref{ex:nash-cycles} also shows that swapstable and linkstable
best response dynamics can cycle with respect to the \maxcarnage adversray 
if the order of vertices who best
respond are fixed but the ties in the best responses of a vertex are broken
adversarially.

We suspect that this phenomenon is the result of adversarial
tie-breaking and/or the ordering on the vertices as a similar observation has been made for the
  convergence of Nash best responses in the original reachability
  game~\cite{BalaG00}. We point out that in our experimental results
  in Section~\ref{sec:exp}, we used a fixed tie-breaking rule and yet
  the simulations always converged to an equilibrium. 
\end{document}